\documentclass[11pt]{amsart}
\usepackage{graphicx}
\usepackage{amscd}
\usepackage{amsmath}
\usepackage{amsthm}
\usepackage{amsfonts}
\usepackage{amssymb}
\usepackage{setspace}
\setstretch{1.25}
\usepackage{enumerate} 
\usepackage{color}
\usepackage{url}
\usepackage{hyperref}
\usepackage{color}
\usepackage{comment}

\usepackage{geometry}
\geometry{a4paper,hcentering,vcentering,outer=2cm,top=2cm}

\theoremstyle{plain}
\newtheorem{theorem}{Theorem}[section]

\newtheorem{corollary}[theorem]{Corollary}

\newtheorem{lemma}[theorem]{Lemma}

\newtheorem{proposition}[theorem]{Proposition}
\newtheorem{definition}[theorem]{Definition}
\newtheorem{condition}[theorem]{Condition}
\newtheorem{assumption}[theorem]{Assumption}

\theoremstyle{remark}
\newtheorem{remark}[theorem]{Remark}
\newtheorem{example}[theorem]{Example}

\numberwithin{equation}{section}

\newcommand{\be}{\begin{equation}}
\newcommand{\ee}{\end{equation}}
\newcommand{\ba}{\begin{aligned}}
\newcommand{\ea}{\end{aligned}}

\newcommand{\ind}{\mathbf{1}}
\newcommand{\dbra}[1]{[\kern-0.15em[ #1 ]\kern-0.15em]}
\newcommand{\dbraco}[1]{[\kern-0.15em[ #1 [\kern-0.15em[}
\newcommand{\dbraoc}[1]{]\kern-0.15em] #1 ]\kern-0.15em]}
\newcommand{\dbraoo}[1]{]\kern-0.15em] #1 [\kern-0.15em[}
\newcommand{\rsto}{]\!\kern-1.8pt ]}
\newcommand{\lsto}{[\!\kern-1.7pt [}

\vfuzz2pt 
\hfuzz2pt 

\newcommand{\R}{\mathbb{R}}
\newcommand{\QQ}{\mathbb{Q}}
\newcommand{\PP}{\mathbb{P}}
\newcommand{\N}{\mathbb{N}}
\newcommand{\EE}{\mathbb{E}}
\newcommand{\FF}{\mathbb{F}}

\newcommand{\bF}{\mathbb{F}}
\newcommand{\cU}{\mathcal{U}}
\newcommand{\cA}{\mathcal{A}}
\newcommand{\cB}{\mathcal{B}}
\newcommand{\cD}{\mathcal{D}}
\newcommand{\cE}{\mathcal{E}}
\newcommand{\cF}{\mathcal{F}}

\newcommand{\cH}{\mathcal{H}}
\newcommand{\cI}{\mathcal{I}}

\newcommand{\cP}{\mathcal{P}}

\newcommand{\cS}{\mathcal{S}}

\newcommand{\cX}{\mathcal{X}}
\newcommand{\cXfin}{\cX^{\mathrm{fin}}}

\newcommand{\cZ}{\mathcal{Z}}

\newcommand{\ud}{{\rm d}}

\newcommand{\Lip}{{\rm Lip}}
\newcommand{\LG}{{\rm LG}}
\newcommand{\loc}{{\rm loc}}
\newcommand{\B}{{\rm B}}
\newcommand{\Id}{{\rm Id}}

\newcommand{\bbI}{\mathbf{1}}

\makeatletter
\newcommand{\subjclassname@JEL}{JEL Classification}
\makeatother

\begin{document}
\title[Real-world HJM semimartingale models for multiple term structures]{Real-world models for multiple term structures: \\ a unifying HJM semimartingale framework}

\author{Claudio Fontana}
\address[Claudio Fontana]{Department of Mathematics ``Tullio Levi-Civita'', University of Padova, via Trieste 63, Padova, Italy.}
\email{fontana@math.unipd.it}

\author{Eckhard Platen}
\address[Eckhard Platen]{School of Mathematical Sciences and Finance Discipline Group, University of Technology Sydney, Broadway NSW 2007, Sydney, Australia.}
\email{Eckhard.Platen@uts.edu.au}%

\author{Stefan Tappe}
\address[Stefan Tappe]{Department of Mathematical Stochastics, Albert Ludwig University of Freiburg, Ernst-Zermelo-Stra\ss e 1, D-79104 Freiburg, Germany.}
\email{stefan.tappe@math.uni-freiburg.de}%

\thanks{The authors are grateful to David Criens for fruitful discussions. 
Claudio Fontana gratefully acknowledges the support of the Bruti-Liberati visiting fellowship and the hospitality of the Quantitative Finance Research Centre at the Finance Discipline Group at the University of Technology Sydney, where this work was started, and also financial support from the Europlace Institute of Finance.
Stefan Tappe gratefully acknowledges financial support from the Deutsche Forschungsgemeinschaft (DFG, German Research Foundation) -- project number 444121509.}

\date{\today}

\begin{abstract}
We develop a unified framework for modeling multiple term structures arising in financial, insurance, and energy markets, adopting an extended Heath-Jarrow-Morton (HJM) approach under the real-world probability. We study market viability and characterize the set of local martingale deflators. 
We conduct an analysis of the associated stochastic partial differential equation (SPDE), addressing existence and uniqueness of solutions, invariance properties and existence of affine realizations.
\end{abstract}

\keywords{Heath-Jarrow-Morton framework; real-world probability; large financial market; local martingale deflator; stochastic partial differential equation; invariance; affine realization.}
\subjclass[2020]{60G44, 60H15, 91G15, 91G30. 
}
\maketitle
\tableofcontents

\section{Introduction}

In financial mathematics, a term structure is a family of stochastic processes representing the prices of contracts that deliver payoffs at different future dates (maturities). The canonical example is given by the term structure of interest rates, which encodes information about the value of money at different points in time. A term structure is an inherently complex mathematical object: first, it is infinite-dimensional, being a collection of prices indexed over a continuous maturity spectrum; second, it evolves randomly in time.
Continuous-time modeling of the term structure of interest rates started with the seminal work of Heath-Jarrow-Morton (HJM) \cite{HJM:92}, which also revealed a fundamental drift restriction that must be respected in order to ensure absence of arbitrage in the infinite-dimensional bond market. 
Such drift restriction makes the well-posedness of an HJM-type model highly non-trivial, requiring a careful analysis of the associated stochastic partial differential equation (see, e.g., \cite{F:01,fitate2010}).

In many contexts, multiple term structures coexist, as illustrated by the examples in Section \ref{sec:examples}. This introduces further elements of complexity into the analysis, since the stochastic evolution equations describing the individual term structures cannot be treated independently and, depending on the specific modeling context, are often required to respect certain ordering properties.
Moreover, absence of arbitrage must hold for all the term structures jointly considered, due to the possibility of simultaneously trading contracts referring to different term structures. 

In this paper, we aim at developing a general and unifying framework for multiple term structures, based on the HJM philosophy. 
We can outline as follows the main contributions of the paper:
\begin{enumerate}
\item[(i)] 
We introduce an abstract semimartingale parameterization for markets with multiple term structures and analyze it from the viewpoint of large financial markets with uncountably many assets, in the spirit of \cite{CKT16}. 
We work under the real-world probability and derive a version of the fundamental theorem of asset pricing which characterizes market viability (as introduced in \cite{KK07} for finite-dimensional semimartingale models) for large financial markets under an infinite time horizon.
Working under the real-world probability is practically relevant because viable market models may not admit a risk-neutral probability, while valuation and hedging problems can still be meaningfully posed and consistently solved, as shown for instance in \cite{FernholzKaratzas09, FR13, KKbook, PH}.
\item[(ii)] 
We develop a semimartingale HJM framework for multiple term structures under the real-world probability imposing only mild regularity conditions. We provide a complete characterization of the set of local martingale deflators, whose existence ensures market viability, and show that it can be explicitly characterized in terms of drift restrictions that generalize the classical HJM condition of \cite{HJM:92}. 
We also derive a simple condition which ensures ordered term structures.
As a special case of our general framework, we recover the risk-neutral setup and obtain risk-neutral HJM drift restrictions under conditions weaker than those found in the literature.
\item[(iii)] 
We prove a new existence and uniqueness theorem for a class of semilinear stochastic partial differential equations (SPDEs) with random locally Lipschitz coefficients, driven by a Brownian motion and a Poisson random measure. By relying on this general result, we establish the well-posedness of the HJM semimartingale dynamics by proving existence and uniqueness of solutions to the associated SPDE. 
Besides extending the results of \cite{fitate2010} to a multi-dimensional setting, we substantially weaken the technical requirements, thereby covering new classes of models that cannot be treated by the existing theory, also in the risk-neutral setup, as shown by means of an explicit example.
Finally, we analyze the invariance properties of the SPDE and provide conditions for the existence of finite-dimensional (affine) realizations.
\end{enumerate}
 
The paper is structured as follows. In Section \ref{sec:examples}, we present some examples showing how multiple term structures arise in different contexts. Section \ref{sec:NA} introduces an abstract parameterization of multiple term structures and analyzes the issue of market viability. In Section \ref{sec:HJM}, we develop a general HJM framework and characterize the set of local martingale deflators. Section \ref{sec:SPDE} contains the existence and uniqueness result for semilinear SPDEs and a study of the SPDEs arising in the HJM framework. 
The paper is completed by four appendices containing some proofs and technical results.

\subsection{Examples}\label{sec:examples}

In this section, we illustrate how multiple term structures arise naturally in different contexts in finance, insurance and energy markets. In the following examples, we assume the existence of a term structure of risk-free zero-coupon bond (ZCB) prices, denoted by $B^0(t,T)$, for $t\leq T$.

\subsubsection{Foreign exchange markets}\label{ex:FXmarkets}

A first example of multiple term structures arises from yield curves in international financial markets. Similarly to \cite{JarrowTurnbull98}, this example will inspire our general parameterization of multiple term structures.
We consider a domestic economy associated to a reference currency and a family $I=\{1,\ldots,m\}$ of foreign economies, each of them associated to a distinct foreign currency. For each $i\in I$, the value at time $t$ of one unit of the $i$-th foreign currency in units of the domestic currency is given by the spot exchange rate $S^i_t$, while the $i$-th yield curve is specified by foreign ZCB prices $B^i(t,T)$, representing the value at time $t$ in units of the $i$-th foreign currency of one unit of the same currency delivered at time $T\geq t$. 
From the perspective of a domestic investor, the value of the $i$-th foreign ZCB at time $t$ is $S^i_tB^i(t,T)$, for each $i\in I$. If international trading is allowed, foreign ZCBs constitute risky assets for domestic investors due to  currency risk, unlike domestic ZCBs. This framework yields $m+1$ term structures, corresponding to the domestic yield curve and the $m$ foreign yield curves, which must coexist in an arbitrage-free way.
HJM models for FX markets were first introduced in \cite{AminJarrow91} and later analyzed in a semimartingale setup in \cite{phdkoval}. 
Inflation-linked term structure models, as considered for instance in the seminal framework of \cite{JarrowYildirim03}, also naturally fit into this setting, since they involve the joint modeling of both nominal and real interest rate term structures.

\subsubsection{Interbank interest rates}\label{ex:multicurve}

Multiple term structures arise in interest rate markets when considering benchmark rates indexed by a family $I=\{\delta_1,\ldots,\delta_m\}$ of ordered tenors (see \cite{GR15} for an overview of multi-curve interest rate models). The most basic contract referencing a benchmark rate is a single-period swap (forward rate agreement), in which at maturity $T+\delta_i$ the benchmark rate $L_T(\delta_i)$ fixed at time $T$ for tenor $\delta_i$ is exchanged for a fixed rate $K$.
It has been shown in \cite{FGGS20} that the value of this contract at time $t\leq T$ equals $S^i_tB^i(t,T)-(1+\delta_i K)B^0(t,T+\delta_i)$, where $S^i_t:=(1+\delta L_t(\delta_i))B^0_t(t+\delta_i)$ is the multiplicative spread at time $t$ between the $\delta_i$-tenor rate and the simply compounded risk-free rate over $[t,t+\delta_i]$ and $B^i(t,T)$ is a term structure factor with $B^i(T,T)=1$, for all $i=1,\ldots,m$ and $T\geq0$.
The tenor dependence reflects the differing impact of credit, funding, and liquidity risks across tenors in the interbank market. Since longer tenors involve greater risks, $S^i_tB^i(t,T)$ is generally increasing in $i$.
The existence of swaps of different maturities referencing benchmark rates of different tenors, together with the risk-free term structure, leads to a market with $m+1$ term structures that must coexist in an arbitrage-free way. Multi-curve interest rate models have been studied in a semimartingale setup in \cite{CFG:16,CFGaffine,FGGS20}.
Similarly, multiple term structures also arise in the cross-currency HJM framework of \cite{GnoattoLavagnini26}, where each term structure is indexed by both a base and a collateral currency.

\subsubsection{Credit-risky bonds}\label{ex:credit}

Multiple term structures also emerge in financial markets where bonds issued by different entities are traded. Let $I$ denote a set of obligors with different credit quality and let $S^i_t$ represent the credit quality (default loss) indicator of obligor $i\in I$ at time $t$, with the convention that $S^i_t=1$ if obligor $i$ has perfect creditworthiness at time $t$, while $S^i_t=0$ if obligor $i$ has defaulted by time $t$. Let $B^i(t,T)$ denote the value at time $t$ of a ZCB maturing at $T$ issued by obligor $i$ under the assumption of perfect credit quality, so that $S^i_tB^i(t,T)$ represents its actual market value accounting for credit risk. This setting naturally gives rise to a multiplicity of term structures which must coexist in an arbitrage-free manner, since all obligors are issuing securities in the same financial market. 
The application of an HJM model to credit-risky term structures was first proposed in  \cite{JT:95}. An analogous situation arises when considering corporate bonds with different credit ratings (see \cite[Chapter 13]{BRbook}).

\subsubsection{Longevity bonds}\label{ex:longevity}
 
Multiple term structures also arise in insurance markets where longevity bonds are traded, i.e., zero-coupon bonds whose payoff depends on the value of a specified survivor index. We consider a collection of $m$ survivor indices, each corresponding to a different age group. For each $i=1,\ldots,m$, let $S^i_t$ denote the $i$-th survivor index at time $t$, defined as the proportion of individuals alive at time $t$ from the initial cohort belonging to the $i$-th age group (survival ratio). A longevity bond maturing at $T$ linked to the $i$-th survivor index has payoff $S^i_T$ at time $T$. The market value of this bond at time $t\leq T$ can be expressed as $S^i_tB^i(t,T)$, with $B^i(t,T)$ satisfying the terminal condition $B^i(T,T)=1$, for all $T\geq0$. 
Together with the risk-free term structure, this setup gives rise to $m+1$ term structures coexisting in the same market. Moreover, if the age groups are ordered increasingly, then $S^i_tB^i(t,T)$ should be decreasing in $i$, as longevity bonds linked to older cohorts should trade at lower prices.
Adopting this parameterization of longevity bonds, a HJM model for the term structure of longevity bonds was first developed in \cite{Barbarin08} (see also \cite{TappeWeber14} for a more general HJM formulation).

\subsubsection{Energy forward contracts}\label{ex:energy}

In energy markets, swap contracts for the delivery of energy over a specified future time interval $[T,T+\delta]$ are widely traded (this is for instance the case of flow forward contracts on electricity, gas and temperature), where $\delta$ is the delivery length. Let $I$ be an index set corresponding to all possible delivery lengths. For each $i\in I$, let $S^i_t$ denote the spot price of a swap contract with immediate delivery over the interval $[t,t+\delta_i]$, for all $t\geq0$. The market value at time $t$ of a swap contract delivering energy over $[T,T+\delta_i]$, with $T\geq t$, can then be expressed as $S^i_tB^i(t,T)$, with $B^i(t,T)$ representing a forward adjustment factor for maturity $T$ associated with delivery length $\delta_i$, satisfying $B^i(T,T)=1$.
This setup generates multiple term structures which must coexist in an arbitrage-free way, since swap contracts for different delivery lengths are traded in the same market. If the delivery lengths are increasingly ordered, then also $S^i_tB^i(t,T)$ should be increasing in $i$, thereby implying a monotonicity property among the term structures.
A HJM model for swap contracts with different delivery lengths has been proposed in \cite{BenthKoekebakker08} and then generalized in \cite{Benth-Kruehner}, \cite[Chapter 6]{BenthKruhnerbook} and \cite{Benth_et_al19}.

\section{Market viability with multiple term structures}	\label{sec:NA}

In this section, we address the issue of market viability for a general financial market comprising multiple term structures, abstracting from the specific examples discussed in Section \ref{sec:examples}. As mentioned in the introduction, we aim at a modeling framework that is based on the real-world probability and does not necessitate the existence of a risk-neutral measure. We shall therefore address  market viability in the sense of no unbounded profit with bounded risk (NUPBR, see Definition \ref{def:NUPBR} below).

\subsection{Abstract market setup}	\label{sec:market}

We start by describing a general setup for a financial market, covering all the examples discussed in Section \ref{sec:examples}. 
We work in an infinite time horizon on a probability space $(\Omega,\cF,\PP)$ endowed with a right-continuous filtration $\bF=(\cF_t)_{t\geq0}$ with respect to which all processes introduced below are assumed to be adapted. The measure $\PP$ stands for the real-world probability. 
We assume that prices are denominated in units of a fixed (but otherwise arbitrary) reference currency and assume the existence of a tradable {\em num\'eraire} with strictly positive price process $X^0$.
In this work, adopting a generic terminology inspired by the examples of Section \ref{sec:examples}, we assume the existence of
\begin{enumerate}
\item[(i)] a {\em riskless term structure}, represented by $B^0(t,T)$;
\item[(ii)] a family of {\em risky term structures}, indexed by elements $i$ of a set $I$ and represented by $B^i(t,T)$. 
In addition, for each $i\in I$, a {\em spot process} $S^i$ is associated to the $i$-th risky term structure.
\end{enumerate}  
The quantity $B^0(t,T)$ represents the value at time $t$ (in units of the reference currency) of one unit of currency delivered at time $T\geq t$. Therefore, $B^0(t,T)$ corresponds as usual to the price of a riskless zero-coupon bond with maturity $T$, satisfying the terminal condition $B^0(T,T)=1$, for all $T\in\R_+$.
For each $i\in I$, the quantity $B^i(t,T)$ represents the value at time $t$, expressed in units of the spot process $S^i_t$, of the future random payoff $S^i_T$ to be delivered at time $T\geq t$. Therefore, the value of this random payoff at time $t$, measured in units of the reference currency, is given by $S^i_tB^i(t,T)$. This definition implies that $B^i(T,T)=1$, for all $T\in\R_+$, and it is, therefore, natural to interpret $B^i(t,T)$ as a risky zero-coupon bond with maturity $T$ associated to $i$, since it delivers a payoff which is random from the perspective of the reference currency.
The index set $I$ is assumed to be a subset of $\R$ and may be either finite or not, depending on the specific market setting under consideration.
For notational convenience, we define $I_0:=I\cup\{0\}$, in order to identify the riskless term structure with index $i=0$.
It is easy to see that this abstract setup encompasses all the examples discussed in Section \ref{sec:examples}.

Motivated by the above description, we define as follows our abstract financial market.

\begin{definition}	\label{def:market}
The {\em financial market} consists of the following family of processes:
\[
\bigl\{X^0, \,B^0(\cdot,T), \,S^iB^i(\cdot,T); \text{ for all }i\in I\text{ and }T\in\R_+\bigr\}.
\] 
\end{definition}

We set $S^0\equiv 1$, in order to represent the family of $X^0$-discounted processes considered in Definition \ref{def:market} by the set $(X^0)^{-1}\{S^iB^i(\cdot,T):(i,T)\in I_0\times\R_+\}\cup\{1\}$. 
Without further mention, we shall work under the {\bf standing assumption} that every element of that set is a semimartingale on $(\Omega,\bF,\PP)$.

\subsection{Market viability}	\label{sec:FTAP}

In this section, we derive a version of the fundamental theorem of asset pricing that is applicable to an abstract financial market with multiple term structures as introduced above. We adopt a mild notion of market viability (Definition \ref{def:NUPBR}) corresponding to the no unbounded profit with bounded risk (NUPBR) condition of \cite{KK07}.\footnote{We refer to \cite{FGGS20} for a version of the fundamental theorem of asset pricing for multi-curve interest rate models based on the stronger notion of no asymptotic free lunch with vanishing risk introduced in \cite{CKT16}.}
The abstract financial market, as introduced in Definition \ref{def:market}, is a {\em large financial market} with uncountably many assets. In order to define and characterize market viability, we adopt the approach of \cite{CKT16} (see in particular Example 2.2 therein\footnote{We point out that the index set $I_0\times\R_+$ identifying the assets of the financial market of Definition \ref{def:market} can be mapped in a bijective way onto a subset of $\R_+$, therefore satisfying the requirements of a parameter space as considered in \cite{CKT16}.}), considering wealth processes that are limits in the semimartingale topology of admissible portfolios involving a finite number of arbitrarily chosen assets.
More precisely, for each $n\in\N$, we define 
\[
\cA^n := \{\text{all sets }A\subset I_0\times\R_+\text{ such that }|A|=n\},
\]
where $|A|$ denotes the cardinality of $A$, in such a way that $\cA^n$ represents the family of all subsets of the market containing $n$ assets.
For $A=\{(i_1,T_1),\ldots,(i_n,T_n)\}\in\cA^n$, for some $n\in\N$, we define the $n$-dimensional semimartingale $X^A=(X^{(i_1,T_1)}, \ldots,X^{(i_n,T_n)})$, where
\[
X^{(i_k,T_k)} := (X^0)^{-1}S^{i_k}B^{i_k}(\cdot,T_k),
\qquad\text{ for }k=1,\ldots,n,
\]
corresponding to the $X^0$-discounted price process (in the reference currency) of the bond with maturity $T_k$ associated with the $i_k$-th term structure.
We assume that trading in a subset $A$ of the market is done through self-financing $1$-admissible strategies. Denoting by $L(X^A)$ the set of $\R^{|A|}$-valued predictable $X^A$-integrable processes, this amounts to considering the following set of discounted wealth processes:
\[
\cX^A_1 := \bigl\{H\cdot X^A : H\in L(X^A)\text{ and }H\cdot X^A\geq-1\bigr\},
\]
where $H\cdot X^A$ denotes the stochastic integral of $H\in L(X^A)$ with respect to the semimartingale $X^A$.
Finally, we consider admissible portfolios involving arbitrary choices of any finite number of assets:
\[
\cX^n_1 := \bigcup_{A\in\cA^n}\cX^A_1
\qquad\text{ and }\qquad
\cX_1 := \overline{\bigcup_{n\geq1}\cX^n_1},
\]
where the bar denotes the closure in \'Emery's semimartingale topology.
As noted in \cite[Section 2.1]{CKT16}, this corresponds to considering $1$-admissible generalized strategies as initially introduced in \cite{DeDonnoPratelli06}.

\begin{definition}\label{def:NUPBR}
We say that the condition of {\em no unbounded profit with bounded risk (NUPBR)} holds if the set $\cX_1(T):=\{X_T : X\in\cX_1\}$ is bounded in probability, for every $T>0$.
\end{definition}

\begin{remark}
Definition \ref{def:NUPBR} extends \cite[Definition 4.1]{CKT16} to an infinite time horizon by requiring NUPBR to hold over every arbitrary finite time horizon. This extension is analogous to the one first adopted in \cite{Kardaras10} in the context of a finite-dimensional market. 
We remark that requiring NUPBR on $[0,T]$, for all $T>0$, is weaker than having NUPBR on $[0,+\infty)$ as considered in \cite{KK07} (see also \cite[Remark 5.3]{BalintSchweizer20}).
\end{remark}

In this setup, the dual elements are supermartingale deflators, as introduced in the next definition.

\begin{definition}\label{def:deflators}
A strictly positive c\`adl\`ag process $D$ with $D_0\leq 1$ is a {\em supermartingale deflator} if $D(1+X)$ is a supermartingale, for every $X\in\cX_1$. 
A strictly positive local martingale $D$ with $D_0=1$ is a {\em local martingale deflator (LMD)} if $DX^{(i,T)}$ is a local martingale, for every $(i,T)\in I_0\times\R_+$. 
\end{definition}

The next theorem is a version of the fundamental theorem of asset pricing based on the above notion of NUPBR. This result extends \cite[Theorem A.1]{CKT16} in two directions: first, by considering an infinite time horizon; second, by proving the existence of a supermartingale deflator which is the reciprocal of a wealth process  ({\em num\'eraire portfolio}, see \cite{KK07}).
Although the proof relies substantially on existing results such as \cite[Theorem 1.7]{Kardaras13b}, we are not aware of a formulation that directly applies to our setting. We therefore provide a complete proof in Appendix \ref{app:prop_FTAP}.
We note that, in the present setup, the process $1+\widehat{X}$ appearing in Theorem \ref{thm:FTAP} corresponds to the wealth process of the {\em growth-optimal portfolio}, a central object in the benchmark approach to finance (see, e.g., \cite{PH}).

\begin{theorem}\label{thm:FTAP}
For the financial market of Definition \ref{def:market}, NUPBR holds if and only if there exists an element $\widehat{X}\in\cX_1$ with $\widehat{X}>-1$ such that $1/(1+\widehat{X})$ is a supermartingale deflator.
Moreover, every local martingale deflator is a supermartingale deflator.
\end{theorem}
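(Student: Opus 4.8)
The plan is to establish the equivalence by invoking a known finite-dimensional (or at least countable) FTAP-type result for NUPBR and then passing to the large-market setting via the semimartingale-topology closure. First I would recall that NUPBR for the large market $\cX_1$ is equivalent, by \cite[Section 2]{CKT16} together with the stability of boundedness in probability under the semimartingale topology, to NUPBR holding for each finite sub-market $X^A$, $A\in\cA^n$, uniformly in a suitable sense — more precisely, to the boundedness in probability of $\bigcup_{n\geq1}\cX^n_1(T)$ for every $T>0$. The point is that $\cX_1(T)$ is the closure in probability of $\bigcup_n\cX^n_1(T)$ (convergence in the semimartingale topology implies u.c.p.\ convergence, hence convergence in probability at each fixed $T$), and closure preserves boundedness in probability. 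Thus NUPBR for the large market reduces to a statement about countably-generated finite-dimensional markets.

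Next I would apply the finite-dimensional theory. For a fixed finite asset family $A$, \cite[Theorem 1.7]{Kardaras13b} (or the classical Karatzas--Kardaras result) gives that NUPBR for $X^A$ on $[0,T]$ is equivalent to the existence of a numéraire portfolio, i.e.\ an element $\widehat{X}^A\in\cX^A_1$ with $\widehat{X}^A>-1$ such that $1/(1+\widehat{X}^A)$ is a supermartingale deflator for $\cX^A_1$. The main obstacle — and the reason a ``detailed proof'' is promised in the appendix — is the aggregation step: one must produce a \emph{single} process $\widehat{X}\in\cX_1$ (the global growth-optimal/numéraire portfolio) whose reciprocal deflates \emph{all} wealth processes in $\cX_1$ simultaneously, not merely those built from a fixed finite sub-family. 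Here I would use that the numéraire portfolios $\widehat{X}^A$ are consistent under enlargement of $A$ (the numéraire portfolio of a larger market, restricted, dominates that of a sub-market in the appropriate log-optimality sense), so that $(1+\widehat{X}^{A_n})$ along an increasing sequence $A_n$ exhausting a countable dense subset of assets converges, in the semimartingale topology, to a limit $1+\widehat{X}$ with $\widehat{X}\in\cX_1$; the supermartingale property of $1/(1+\widehat{X})$ against each $X\in\cX_1$ then follows by a Fatou-type argument (Fatou's lemma for conditional expectations, using that a limit in the semimartingale topology of supermartingales bounded below is a supermartingale) and the density of finitely-generated strategies. One must also check that $\widehat{X}>-1$ strictly, which follows since $1/(1+\widehat{X})$ is a strictly positive supermartingale by construction.

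For the second assertion, that every local martingale deflator $Z$ is a supermartingale deflator, the argument is short: if $Z$ is a strictly positive local martingale with $Z_0=1$ such that $ZX^{(i,T)}$ is a local martingale for every $(i,T)\in I_0\times\R_+$, then for any $A\in\cA^n$ and any $H\in L(X^A)$ with $H\cdot X^A\geq-1$, the process $Z(1+H\cdot X^A)$ is a nonnegative local martingale (using that $Z\cdot(H\cdot X^A)=H\cdot(Z X^A - X^A_-\cdot Z)$ up to the integration-by-parts terms, all of which are local martingales, so $Z + Z(H\cdot X^A)$ is a local martingale), hence a supermartingale by Fatou. This gives the supermartingale property on each $\cX^n_1$; to extend it to the closure $\cX_1$ one again uses that if $X^{(k)}\to X$ in the semimartingale topology with $Z(1+X^{(k)})$ supermartingales and $X^{(k)}\geq-1$, then $Z(1+X)$ is a nonnegative process which is a supermartingale by Fatou's lemma applied along the convergent subsequence. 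The only mild care needed is measurability/integrability of the limiting objects, which is routine given the semimartingale-topology convergence.
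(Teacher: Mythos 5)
There is a genuine gap in the forward implication, and it comes from the way you aggregate. You apply \cite[Theorem 1.7]{Kardaras13b} to each finite asset family $A$ and then try to pass to the large market by letting the numéraire portfolios $\widehat{X}^{A_n}$ converge along an increasing sequence of finite families. This step is unsupported: there is no consistency or monotonicity of numéraire portfolios under market enlargement that yields convergence in the semimartingale topology, and proving such a limit theorem would be a substantial task in itself. Moreover, the asset index set $I_0\times\R_+$ is uncountable and no continuity of $(i,T)\mapsto X^{(i,T)}$ is assumed, so the appeal to ``a countable dense subset of assets'' and to ``density of finitely-generated strategies'' has no basis in this abstract setup; a deflator built from a countable subfamily need not deflate wealth processes built from assets outside it. The aggregation over assets is in fact unnecessary: \cite[Theorem 1.7]{Kardaras13b} applies to any fork-convex set of wealth processes closed in the semimartingale topology, and the set $1+\cX_{1,[0,n]}$ (elements of $\cX_1$ stopped at $n$) has exactly these properties by construction (fork-convexity via \cite[Lemma 2.1]{CKT16}), which is how the paper invokes it directly on the large market.

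The second issue your proposal does not address is the infinite time horizon, which is the real content of the paper's appendix proof. Since NUPBR is only assumed on each finite horizon, the cited theorem produces, for each $n$, a numéraire portfolio $\widehat{X}^n$ on $[0,n]$, and one must still manufacture a single $\widehat{X}\in\cX_1$ whose reciprocal deflates all of $\cX_1$ on $[0,\infty)$. The paper does this by concatenating the deflators $Z^n=1/(1+\widehat{X}^n)$ multiplicatively over the intervals $(k-1,k]$, verifying the supermartingale property of the concatenation by induction, and then showing $1/Z\in 1+\cX_1$ using fork-convexity and closedness of $\cX_1$ in the semimartingale topology; none of this appears in your sketch. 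By contrast, your argument for the last assertion (every local martingale deflator is a supermartingale deflator) is essentially the paper's: $Z(1+H\cdot X^A)$ is a nonnegative local martingale (cf. \cite[Lemma 4.2]{Fontana15}), hence a supermartingale, and the property passes to the semimartingale-topology closure by Fatou's lemma; that part is fine.
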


We point out that, at the present level of generality, NUPBR does not imply the existence of LMDs. Indeed, \cite{CKT16} gives an explicit example showing that in large financial markets market viability (in the sense of NUPBR, but also in the stronger sense of no asymptotic free lunch with vanishing risk) does only ensure the existence of supermartingale deflators, and not necessarily of LMDs.\footnote{In the specific case of continuous-path processes NUPBR always implies the existence of an LMD, as shown in \cite{Kardaras24}.}
However, given a concrete model, it is usually difficult to characterize supermartingale deflators, since their definition is based on the set $\cX_1$ and does not give information on the basic assets included in Definition \ref{def:market}.
On the contrary, LMDs can be directly described in terms of the characteristics of the basic assets, as we are going to show in Section \ref{sec:HJM}.
Whenever it is nonempty, we denote by $\cD$ the set of all LMDs. 

\begin{remark}
Trading in markets with uncountably many assets can also be described through measure-valued strategies, as initially considered in the context of bond markets in \cite{BDMKR97,bjkaru97}. 
It can be shown that $\cD\neq\emptyset$ suffices to ensure NUPBR also with respect to measure-valued strategies. This is also related to the results of \cite{DeDonnoPratelli05}, who showed that stochastic integrals of $1$-admissible measure-valued strategies according to \cite{BDMKR97} are elements of $\cX_1$. In view of this observation, we have adopted the more general approach of \cite{CKT16}, where the set of $1$-admissible wealth processes is directly defined as $\cX_1$. 
\end{remark}

\section{A real-world HJM semimartingale framework}	\label{sec:HJM}

In this section, we develop and study a general modeling framework for multiple term structures based on the Heath-Jarrow-Morton approach under the real-world probability. The framework is described in Section \ref{sec:HJM_setup}, while Section \ref{sec:thm_HJM} contains the main result of this section, providing a complete description of the family of LMDs. In Section \ref{sec:monotone} we provide conditions ensuring the monotonicity of the risky term structures, while in Section \ref{sec:RN} we specialize our results to the risk-neutral setting.

\subsection{Probabilistic setup}	\label{sec:HJM_setup}

Let the filtered probability space $(\Omega,\cF,\FF,\PP)$ support a $d$-dimensional Brownian motion $W$ and an integer-valued random measure $\mu(\ud t,\ud x)$ on $\R_+\times E$ with compensator $\nu(\ud t,\ud x)=F_t(\ud x)\ud t$, where $E$ is a Polish space with its Borel sigma-field $\cB(E)$ and $F_t(\ud x)$ is a kernel from $(\Omega\times\R_+,\mathcal{P})$ into $(E,\cB(E))$, with $\mathcal{P}$ denoting the predictable sigma-field on $\Omega\times\R_+$. The compensated random measure is denoted by $\tilde{\mu}(\ud t,\ud x):=\mu(\ud t,\ud x)-\nu(\ud t,\ud x)$.
We denote by $L^2_{\rm loc}(W)$ the set of all progressively measurable $\R^d$-valued processes $\theta=(\theta_t)_{t\geq0}$ such that $\int_0^T\|\theta_t\|^2\ud t<+\infty$ a.s., for all $T>0$, and by $G_{\rm loc}(\mu)$ the set of all $\cP\otimes\cB(E)$-measurable functions $\varphi:\Omega\times\R_+\times E\to\R$ such that $\int_0^T\int_E((\varphi^i_t(x))^2\wedge|\varphi^i_t(x)|)F_t(\ud x)\ud t<+\infty$ a.s., for all $T>0$ (compare with \cite[Theorem II.1.33-c]{jashi03}).
We refer the reader to \cite{jashi03} for all unexplained notions of stochastic calculus.

The spot processes introduced in Section \ref{sec:market} are assumed to be non-negative semimartingales of the form $S^i=S^i_0\cE(Z^i)$, for all $i\in I$, where $\cE(Z^i)$ is the stochastic exponential of the special semimartingale 
\be	\label{eq:log_spread}
Z^i = \int_0^{\cdot}a^i_s\ud s + \int_0^{\cdot}b^i_s\ud W_s + \int_0^{\cdot}\int_Ec^i_s(x)\tilde{\mu}(\ud s,\ud x),
\ee
where $a^i$ is a real-valued adapted process satisfying $\int_0^T|a^i_t|\ud t<+\infty$ a.s. for all $T>0$, $b^i\in L^2_{\rm loc}(W)$ and $c^i\in G_{\rm loc}(\mu)$ with $c^i\geq-1$. These conditions are the minimal requirements for the well-posedness of \eqref{eq:log_spread}.
Note that if the set $\{(\omega,t)\in\Omega\times\R_+:\int_Ec^i_t(\omega,x)\mu(\omega;\{t\}\times \ud x)=-1\}$ is not evanescent, then $S^i$ can become null with positive probability. 
As discussed in Section \ref{sec:examples}, vanishing spot processes must be allowed if one wants to embed defaultable term structures into this general framework.

The num\'eraire $X^0$ introduced in Section \ref{sec:market} is assumed to be generated by a locally riskless interest rate $r=(r_t)_{t\geq0}$, which is a real-valued adapted process satisfying $\int_0^T|r_t|\ud t<+\infty$ a.s., for all $T>0$. The num\'eraire process $X^0$ is therefore given by $X^0=\exp(\int_0^{\cdot}r_t\ud t)$.

As explained in Section \ref{sec:market}, riskless and risky term structures can be represented by zero-coupon bond prices $B^i(t,T)$,  for $i\in I_0$ and $0\leq t\leq T<+\infty$, assumed to have the following structure: 
\[
B^i(t,T) = \exp\left(-\int_t^Tf^i(t,u)\ud u\right),
\]
where, for all $i\in I_0$ and $T>0$, the forward rate process $f^i(\cdot,T)=(f^i(t,T))_{t\in[0,T]}$ is given by
\be	\label{eq:fwd_rate}
f^i(t,T) = f^i(0,T) + \int_0^t\alpha^i(s,T)\ud s + \int_0^t\beta^i(s,T)\ud W_s + \int_0^t\int_E\gamma^i(s,T,x)\tilde{\mu}(\ud s,\ud x),
\ee
with $\alpha^i$, $\beta^i$ and $\gamma^i$ satisfying the mild technical requirements stated in the following assumption.

\begin{assumption}	\label{ass}
The following conditions hold a.s. for every $i\in I_0$:
\begin{enumerate}[(i)]
\item 
The {\em initial forward curve} $T\mapsto f^i(0, T)$ is $\cF_0\otimes\cB(\R_+)$-measurable, real-valued and satisfies $\int_0^T|f^i(0,t)|\ud t < + \infty$, for all $T>0$.
\item
The {\em drift process} $\alpha^i: \Omega \times\R_+^2\rightarrow\R$ is such that its restriction $\alpha^i|_{[0,t]}:\Omega\times[0,t]\times\R_+\rightarrow\R$ is $\cF_t\otimes \mathcal{B}([0,t])\otimes \mathcal{B}(\R_+)$-measurable, for every $t\in\R_+$. 
Moreover, $\alpha^i(t,T) = 0$ for all $t>T$ and
\[
\int_0^T\int_0^u |\alpha^i(s,u)|\ud s\ud u < + \infty,
\qquad\text{ for all }T>0.
\]
\item 
The {\em volatility process} $\beta^i:\Omega\times\R_+^2\rightarrow\R^d$ is such that its restriction $\beta^i|_{[0,t]}:\Omega\times[0,t]\times\R_+\rightarrow\R^d$ is $\cF_t\otimes \mathcal{B}([0,t])\otimes \mathcal{B}(\R_+)$-measurable, for every $t\in\R_+$. 
Moreover, $\beta^i(t,T) = 0$ for all $t>T$ and
\[
\sum_{j=1}^d\int_0^T\left(\int_0^u(\beta^{i,j}(s,u))^2\ud s\right)^{1/2}\ud u < + \infty,
\qquad\text{ for all }T>0.
\]
\item 
The {\em jump function} $\gamma^i:\Omega\times\R^2_+\times E\rightarrow\R$ is a $\mathcal{P}\otimes\mathcal{B}(\R_+)\otimes \mathcal{B}(E)$-measurable function. Moreover, $\gamma^i(t,T,x) = 0$ for all $t>T$ and $x \in E$, and
\[
\int_0^T\int_E  \int_0^T (\gamma^i(s,u,x))^2 \ud u\, F_s(\ud x)\ud s < +\infty,
\qquad\text{ for all }T>0.
\]
\end{enumerate}
\end{assumption}

Assumption \ref{ass} implies that the integrals appearing in the forward rate equation \eqref{eq:fwd_rate} are well-defined for a.e. $T$. In addition, the integrability requirements appearing in parts (ii)-(iv) of Assumption \ref{ass} ensure the applicability of ordinary and stochastic Fubini theorems, in the  versions of \cite[Theorem 2.2]{Veraar12} for the Brownian motion $W$ and \cite[Proposition A.2]{BDMKR97} for the compensated random measure $\tilde{\mu}$.
By \cite[Remark 2.1]{Veraar12}, the mild measurability requirement in parts (ii)-(iii) holds if the processes $\alpha^i$ and $\beta^i$ are $\mathrm{Prog}\otimes\mathcal{B}(\R_+)$-measurable, with $\mathrm{Prog}$ denoting the progressive sigma-field on $\Omega\times\R_+$.

\begin{remark}
(1) Even in the case of a single term structure (i.e., $I_0=\{0\}$), our setup generalizes the usual formulations of HJM semimartingale models found in the literature. In particular, Assumption \ref{ass} is weaker than the requirements on the forward rate dynamics stated in \cite[Assumption 5.1]{BDMKR97}.

(2) Some recent works (see for instance \cite{FGGS20,FGS24}) have considered HJM models that do not satisfy quasi-left-continuity (i.e., the set $\{(\omega,t)\in\Omega\times\R_+:\nu(\omega;\{t\}\times E)>0\}$ is not evanescent).
Theorem \ref{thm:HJM} below can be generalized to this situation with an analogous proof. However, we do not pursue this generalization here since the SPDE analysis of Section \ref{sec:SPDE} will necessitate quasi-left-continuity.

(3) The present setup can be extended to the case of an infinite-dimensional Brownian motion $W$, as considered for instance in \cite{F:01,fitate2010} in the context of HJM interest rate models. This generalization is straightforward and all results of our work continue to hold with almost identical proofs.
\end{remark}

For all $i\in I_0$, $x\in E$ and $0\leq t\leq T<+\infty$, let us define
\begin{align*}
\bar{\alpha}^i(t,T) &:= \int_t^T\alpha^i(t,u)\ud u,\\
\bar{\beta}^i(t,T) &:= \int_t^T\beta^i(t,u)\ud u,\\
\bar{\gamma}^i(t,T,x) &:= \int_t^T\gamma^i(t,u,x)\ud u.
\end{align*}
We recall from Section \ref{sec:market} that $S^0$ denotes the constant process equal to one. In analogy to above, this corresponds to $S^0=\cE(Z^0)$, with $Z^0$  given as in \eqref{eq:log_spread} for $i=0$, with $a^0:=0$, $b^0:=0$ and $c^0:=0$.

As a preliminary to Theorem \ref{thm:HJM}, in the following lemma we derive the stochastic exponential representation of the elements of the set $(X^0)^{-1}\{S^iB^i(\cdot,T) : (i,T)\in I_0\times\R_+\}$ (see Definition \ref{def:market}).

\begin{lemma}	\label{lem:int_fwd}
Suppose that Assumption \ref{ass} holds. Then, for every $i\in I_0$ and $T>0$, it holds that 
\[
(X^0)^{-1}S^iB^i(\cdot,T) = S^i_0B^i(0,T)\cE(Y^i(\cdot,T)),
\]
where $Y^i(\cdot,T)=(Y^i(t,T))_{t\in[0,T]}$ is a semimartingale given by
\begin{align*}
Y^i(t,T) &:= \int_0^t\bigl(b^i_s-\bar{\beta}^i(s,T)\bigr)\ud W_s
+\int_0^t\int_E\bigl(c^i_s(x)-\bar{\gamma}^i(s,T,x)\bigr)\tilde{\mu}(\ud s,\ud x)\\
&\quad
+\int_0^t\int_E\Bigl((1+c^i_s(x))(e^{-\bar{\gamma}^i(s,T,x)}-1)+\bar{\gamma}^i(s,T,x)\Bigr)\mu(\ud s,\ud x)	\\
&\quad
+\int_0^t\Bigl(
a^i_s -r_s + f^i(s,s) - \bar{\alpha}^i(s,T) + \frac{1}{2}\|\bar{\beta}^i(s,T)\|^2 -\bar{\beta}^i(s,T)^{\top}b^i_s\Bigr)\ud s.
\end{align*}
\end{lemma}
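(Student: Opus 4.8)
The plan is to compute the stochastic differential of $(X^0)^{-1}S^iB^i(\cdot,T)$ piece by piece and then recognize the result as $S^i_0B^i(0,T)$ times a stochastic exponential, reading off the logarithmic return process $Y^i(\cdot,T)$. First I would handle the bond price factor. Writing $B^i(t,T)=\exp(-\int_t^T f^i(t,u)\,\ud u)$, I need the dynamics of the exponent $\Lambda^i(t,T):=\int_t^T f^i(t,u)\,\ud u$. Plugging in the forward rate equation \eqref{eq:fwd_rate} and applying the stochastic Fubini theorem (justified by Assumption \ref{ass}, in the versions of \cite[Theorem 2.2]{Veraar12} and \cite[Proposition A.2]{BDMKR97} cited in the text) to interchange $\int_t^T\,\ud u$ with the stochastic integrals $\int_0^t$, one obtains
\[
\Lambda^i(t,T) = \int_0^T f^i(0,u)\,\ud u - \int_0^t f^i(s,s)\,\ud s + \int_0^t\bar\alpha^i(s,T)\,\ud s + \int_0^t\bar\beta^i(s,T)\,\ud W_s + \int_0^t\int_E\bar\gamma^i(s,T,x)\,\tilde\mu(\ud s,\ud x),
\]
where the $-\int_0^t f^i(s,s)\,\ud s$ term comes from differentiating the lower limit of $\int_t^T$. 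Then $B^i(t,T)=B^i(0,T)\,\exp(-\Lambda^i(t,T)+\int_0^T f^i(0,u)\,\ud u - \ldots)$; more precisely, applying Itô's formula to $e^{-\Lambda^i(\cdot,T)}$ gives the semimartingale decomposition of $B^i(\cdot,T)$, with the quadratic-variation correction $\tfrac12\|\bar\beta^i(s,T)\|^2\,\ud s$ and the jump correction $\int_E(e^{-\bar\gamma^i(s,T,x)}-1+\bar\gamma^i(s,T,x))\,\mu(\ud s,\ud x)$ (the compensator of the jump part being absorbed appropriately).

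Next I would assemble the three factors $(X^0)^{-1}=\exp(-\int_0^\cdot r_s\,\ud s)$, $S^i=S^i_0\cE(Z^i)$ with $Z^i$ as in \eqref{eq:log_spread}, and $B^i(\cdot,T)$. Since $(X^0)^{-1}$ is of finite variation and continuous, multiplying by it simply adds $-\int_0^\cdot r_s\,\ud s$ to the drift. For the product $S^i\cdot B^i(\cdot,T)$ I would use the product rule for stochastic exponentials: writing both factors as exponentials of semimartingales, $\cE(U)\cE(V)=\cE(U+V+[U,V])$. Here the continuous covariation contributes $-\bar\beta^i(s,T)^\top b^i_s\,\ud s$ (note the sign: $S^i$ has Brownian part $b^i$, $B^i(\cdot,T)$ has Brownian part $-\bar\beta^i(\cdot,T)$), and the jump covariation contributes the cross term that combines $c^i_s(x)$ with $(e^{-\bar\gamma^i(s,T,x)}-1)$, producing the $(1+c^i_s(x))(e^{-\bar\gamma^i(s,T,x)}-1)$ structure once one collects the pure-jump term of $B^i$ with this cross term. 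Collecting all drift contributions — $a^i_s$ from $Z^i$, $-r_s$ from $(X^0)^{-1}$, $f^i(s,s)$ and $-\bar\alpha^i(s,T)$ and $\tfrac12\|\bar\beta^i(s,T)\|^2$ from $B^i(\cdot,T)$, and $-\bar\beta^i(s,T)^\top b^i_s$ from the covariation — yields exactly the $\ud s$-integrand displayed in the statement, and collecting the continuous and discontinuous martingale parts yields the $\ud W$ and $\tilde\mu$ integrals, with the leftover $\mu(\ud s,\ud x)$ term as stated.

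The main obstacle I anticipate is bookkeeping rather than anything conceptually deep: one must be scrupulous about which terms are written against $\mu$ versus $\tilde\mu$ (the statement deliberately keeps a $\mu(\ud s,\ud x)$ integral rather than compensating it, so the jump term of $\cE(Y^i(\cdot,T))$ reproduces $(1+\Delta Z^i)(e^{-\bar\gamma^i}-1)$-type jumps correctly without spurious compensator terms), and about applying the stochastic Fubini theorem only where Assumption \ref{ass} licenses it — in particular that the interchange holds for a.e. $T$, which is all that is claimed. A secondary technical point is verifying that $Y^i(\cdot,T)$ is a well-defined semimartingale, i.e.\ that $b^i-\bar\beta^i(\cdot,T)\in L^2_{\rm loc}(W)$, that $c^i-\bar\gamma^i(\cdot,T,\cdot)\in G_{\rm loc}(\mu)$, and that the $\mu$-integrand and the $\ud s$-integrand are integrable; these follow by combining the standing conditions on $a^i,b^i,c^i$ with the integrability bounds in Assumption \ref{ass}(i)--(iv) and the Cauchy–Schwarz-type estimates already invoked for the applicability of the Fubini theorems. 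Once these verifications are in place, the identity $(X^0)^{-1}S^iB^i(\cdot,T)=S^i_0B^i(0,T)\cE(Y^i(\cdot,T))$ is immediate from the construction.
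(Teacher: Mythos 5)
Your proposal is correct and follows essentially the same route as the paper: stochastic Fubini applied to the forward-rate dynamics to get the exponential representation of $B^i(\cdot,T)$ (the step the paper delegates to the argument of \cite[Proposition 5.2]{BDMKR97}), conversion of the ordinary exponential into a stochastic exponential (the paper cites \cite[Theorem II.8.10]{jashi03}, which is the It\^o-formula step you describe), and Yor's formula to combine with $S^i=S^i_0\cE(Z^i)$ and the num\'eraire, with the same drift and jump bookkeeping. The only cosmetic difference is that the paper verifies the well-posedness of $\bar{\beta}^i(\cdot,T)$ and $\bar{\gamma}^i(\cdot,T,\cdot)$ explicitly via Minkowski's integral inequality and H\"older, which is what your ``Cauchy--Schwarz-type estimates'' remark amounts to.
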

\begin{proof}
Under Assumption \ref{ass} and by proceeding as in the proof of \cite[Proposition 5.2]{BDMKR97}, we have that
\be\label{eq:exp_bond}\ba
B^i(t,T)
= B^i(0,T)&
\exp\left(\int_0^tf^i(s,s)\ud s - \int_0^t\bar{\alpha}^i(s,T)\ud s - \int_0^t\bar{\beta}^i(s,T)\ud W_s\right.\\
&\left.\qquad
-\int_0^t\int_E\bar{\gamma}^i(s,T,x)\tilde{\mu}(\ud s,\ud x)\right),
\ea\ee
for all $i\in I_0$ and $0\leq t\leq T<+\infty$.
The well-posedness of the ordinary and stochastic integrals in \eqref{eq:exp_bond} is ensured by Assumption \ref{ass}. More precisely, the finiteness of $\int_0^t\bar{\alpha}^i(s,T)\ud s$ follows directly from condition  (ii) in Assumption \ref{ass}. 
Then, Minkowski's integral inequality and condition (iii)  imply that
\begin{align*}
\left(\int_0^T \|\bar{\beta}^i(s,T)\|^2 \ud s\right)^{1/2}
  & =  \left(\int_0^T \Big\|\int_s^T\beta^i(s,u)\ud u\Big\|^2 \ud s\right)^{1/2}\\
& \leq \sum_{j=1}^d\left(\int_0^T\biggl(\int_s^T|\beta^{i,j}(s,u)|\ud u\biggr)^2\ud s\right)^{1/2}	\\
&\leq \sum_{j=1}^d\int_0^T\left(\int_0^u(\beta^{i,j}(s,u))^2\ud s\right)^{1/2}\ud u < +\infty\qquad\text{a.s.},
  \end{align*}
so that $\bar{\beta}^i(\cdot,T)\in L^2_{\rm loc}(W)$. 
By H\"older's inequality and condition (iv) in Assumption \ref{ass}, it holds that 
\begin{align*}
\int_{0}^{T} \int_E (\bar{\gamma}^i(s,T,x))^2 \nu(\ud s,\ud x)
& = 	\int_{0}^{T} \int_E \Bigl( \int_s^T \gamma^i(s,u,x) \ud u \Bigr)^2 \nu(\ud s,\ud x)\\
& \leq   T\int_{0}^{T}\int_E \int_s^T (\gamma^i(s,u,x) )^2 \ud u\,\nu(\ud s,\ud x) < +\infty\qquad\text{a.s.},
\end{align*}
thus ensuring that the stochastic integral $\int_0^{\cdot}\int_E\bar{\gamma}^i(s,x,T)\tilde{\mu}(\ud s,\ud x)$ is well-defined as a local martingale, see \cite[Theorem II.1.33]{jashi03}. 
The finiteness of the integral $\int_0^{\cdot}f^i(s,s)\ud s$ follows similarly as above under the validity of Assumption \ref{ass}.
An application of \cite[Theorem II.8.10]{jashi03} to \eqref{eq:exp_bond} gives the representation 
\be\label{eq:stoch_bond}\ba
B^i(t,T)
= B^i(0,T)&\,
\cE\left(-\int_0^{\cdot}\bar{\beta}^i(s,T)\ud W_s
-\int_0^{\cdot}\int_E\bar{\gamma}^i(s,T,x)\tilde{\mu}(\ud s,\ud x)\right.\\
&\left.\quad\;\,
+\int_0^{\cdot}\int_E\bigl(e^{-\bar{\gamma}^i(s,T,x)}-1+\bar{\gamma}^i(s,T,x)\bigr)\mu(\ud s,\ud x)\right)_t	\\
&\times \exp\left(\int_0^tf^i(s,s)\ud s - \int_0^t\bar{\alpha}^i(s,T)\ud s + \frac{1}{2}\int_0^t\|\bar{\beta}^i(s,T)\|^2\ud s\right).
\ea\ee
The result of the lemma then follows by an application of Yor's formula (see, e.g., \cite[formula II.8.19]{jashi03}), making use of \eqref{eq:stoch_bond} and recalling that $S^i=S^i_0\cE(Z^i)$, where $Z^i$ is given by  \eqref{eq:log_spread}.
\end{proof}

\subsection{Characterization of LMDs}	\label{sec:thm_HJM}

The next theorem is the main result of Section \ref{sec:HJM} and provides necessary and sufficient conditions for the existence of LMDs, together with an explicit description of their structure. Besides considering multiple term structures, this result represents the first complete characterization of LMDs in the context of HJM-type semimartingale models. In a finite-dimensional setup, a related result is \cite[Lemma 2.11]{ChoulliSchweizer16}, from which some arguments in the proof of Theorem \ref{thm:HJM} are adapted.
Adopting the notation of \cite{jashi03}, we denote by the symbols $\cdot$ and $\ast$ stochastic integration with respect to a semimartingale and with respect to a random measure, respectively. 
We refer to Appendix \ref{app:Doleans} for the notion of the Dol\'eans measure $M_{\mu}$ on $(\Omega\times\R_+\times E,\cF\otimes\cB(\R_+)\otimes\cB(E))$ associated to $\mu$ and the corresponding conditional expectation with respect to the sigma-field $\widetilde{\cP}:=\cP\otimes\cB(E)$. 

\begin{theorem}	\label{thm:HJM}
Suppose that Assumption \ref{ass} holds. Then $\cD\neq\emptyset$ if and only if there exist $\lambda\in L^2_{\rm loc}(W)$ and $\psi\in G_{\rm loc}(\mu)$ with $\psi>-1$ such that, for all $i\in I_0$, $T>0$ and a.e. $t\in[0,T]$,
\be	\label{eq:int_jumps}
\int_{\mathcal{C}_{t,T}^i}\bigl((1+c^i_t(x))e^{-\bar{\gamma}^i(t,T,x)}-1\bigr)\bigl(1+\psi_t(x)\bigr)F_t(\ud x) < +\infty
\quad\text{a.s.},
\ee
where $\mathcal{C}_{t,T}^i:=\{x\in E : c^i_t(x)>2e^{\bar{\gamma}^i(t,T,x)}-1\}$, and the following two conditions hold a.s.: 
\begin{enumerate}
\item[(i)]
for all $i\in I_0$ and a.e. $t\in\R_+$, it holds that
\[
a^i_t = r_t - f^i(t,t) - \lambda_t^{\top}b^i_t - \int_Ec^i_t(x)\psi_t(x)F_t(\ud x);
\]
\item[(ii)]
for  all $i\in I_0$, $T>0$ and a.e. $t\in[0,T]$, it holds that
\[
\bar{\alpha}^i(t,T) = \frac{1}{2}\|\bar{\beta}^i(t,T)\|^2 - \bar{\beta}^i(t,T)^{\top}(b^i_t+\lambda_t)
+ \int_E\Bigl(\bigl(1+\psi_t(x)\bigr)\bigl(1+c^i_t(x)\bigr)\bigl(e^{-\bar{\gamma}^i(t,T,x)}-1\bigr)+\bar{\gamma}^i(t,T,x)\Bigr)F_t(\ud x).
\]
\end{enumerate}
Moreover, a strictly positive local martingale $D=(D_t)_{t\geq0}$ belongs to $\cD$ if and only if 
\be	\label{eq:deflator}
D = \cE\left(\lambda\cdot W + \psi\ast\tilde{\mu} + N\right),
\ee
where $\lambda\in L^2_{\rm loc}(W)$ and $\psi\in G_{\rm loc}(\mu)$ satisfy the above properties and $N=(N_t)_{t\geq0}$ is a local martingale with $N_0=0$, satisfying $\langle N,W^j\rangle=0$, for all $j=1,\ldots,d$, and
$M_{\mu}[\Delta N|\widetilde{\cP}]=0$.
\end{theorem}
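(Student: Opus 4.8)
The plan is to characterize the set $\cD$ of local martingale deflators by writing any strictly positive local martingale $Z$ with $Z_0 = 1$ in its canonical form and then imposing that $ZX^{(i,T)}$ be a local martingale for every $(i,T) \in I_0 \times \R_+$. First I would recall from Lemma \ref{lem:int_fwd} that $X^{(i,T)} = (X^0)^{-1}S^iB^i(\cdot,T) = S^i_0 B^i(0,T)\,\cE(Y^i(\cdot,T))$, so that $ZX^{(i,T)}$ is a local martingale if and only if $Z\,\cE(Y^i(\cdot,T))$ is, and by Yor's formula this is equivalent to $\cE(Y^i(\cdot,T) + [\,\cdot\,])$ being a local martingale where the bracket term comes from the product. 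The strategy is then: (1) represent $Z = \cE(M)$ for a local martingale $M$ with $M_0 = 0$, $\Delta M > -1$; (2) use the fact that $W$ and $\mu$ generate the relevant structure to decompose $M = \lambda\cdot W + \psi \ast \tilde\mu + N$ where $\lambda \in L^2_{\loc}(W)$, $\psi \in G_{\loc}(\mu)$ with $\psi > -1$ (this last inequality is forced by $\Delta M > -1$ together with $M_\mu[\Delta N | \widetilde{\cP}] = 0$), and $N$ is a local martingale strongly orthogonal to $W$ and with vanishing $\widetilde\cP$-conditional jump measure; (3) compute the drift of $Z X^{(i,T)}$ (equivalently of the log, or via the product rule for stochastic exponentials $\cE(U)\cE(V) = \cE(U + V + [U,V])$) and set it to zero.

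The key computation is the product formula. Writing $Z = \cE(M)$ and $X^{(i,T)}/X^{(i,T)}_0 = \cE(Y^i(\cdot,T))$, we have $ZX^{(i,T)} = X^{(i,T)}_0\,\cE(M + Y^i(\cdot,T) + [M, Y^i(\cdot,T)])$, which is a local martingale precisely when the predictable finite-variation part of $M + Y^i(\cdot,T) + [M, Y^i(\cdot,T)]$ vanishes. The continuous martingale parts contribute a quadratic covariation $\lambda^\top(b^i - \bar\beta^i(\cdot,T))\,\ud t$; the purely discontinuous parts contribute, through $[M, Y^i]$ and the compensation of the $\mu$-integral terms appearing in $Y^i(\cdot,T)$ (see the explicit form in Lemma \ref{lem:int_fwd}), an integral against $F_t(\ud x)\ud t$ involving $(1+\psi_t(x))(1+c^i_t(x))e^{-\bar\gamma^i(t,T,x)}$; and the $N$ part contributes nothing to the drift because $N$ is orthogonal to $W$ (so $[N, \int b^i\,\ud W - \int \bar\beta^i\,\ud W]$ has zero predictable part by orthogonality) and $M_\mu[\Delta N|\widetilde\cP] = 0$ kills its interaction with the $\tilde\mu$-integrals. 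Collecting the drift terms and separating the $T$-independent part (which gives condition (i) after using $B^i(T,T)=1$, i.e., differentiating in $T$ at $T = t$, equivalently the "$a^i_t - r_t + f^i(t,t)$" block) from the $T$-dependent part (which gives condition (ii), the generalized HJM drift condition) yields the two displayed equations; the integrability condition \eqref{eq:int_jumps} on $\cX^i_{t,T}$ is exactly what is needed for the compensator of the relevant $\mu$-integral to be well defined, and it is obtained by inspecting where the integrand $(1+c^i_t(x))e^{-\bar\gamma^i(t,T,x)} - 1$ fails to be $F_t(\ud x)$-integrable by itself.

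For the converse direction, given $\lambda$, $\psi$ satisfying \eqref{eq:int_jumps} and (i)–(ii), I would take any $N$ with the stated orthogonality properties, set $Z = \cE(\lambda\cdot W + \psi\ast\tilde\mu + N)$, check $Z$ is a strictly positive local martingale ($\psi > -1$ gives $\Delta(\psi\ast\tilde\mu) > -1$ on the support of $\mu$, and $M_\mu[\Delta N|\widetilde\cP] = 0$ combined with $\Delta N$ being a genuine jump of a local martingale forces $\Delta(\lambda\cdot W + \psi\ast\tilde\mu + N) > -1$), and run the product computation backwards to see that conditions (i)–(ii) are exactly the vanishing of the drift of $ZX^{(i,T)}$ for every $(i,T)$, hence $Z \in \cD$. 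Here I would lean on the arguments adapted from \cite[Lemma 2.11]{ChoulliSchweizer16} and on the Doléans-measure machinery of Appendix \ref{app:Doleans} to justify that the conditional-jump-measure condition $M_\mu[\Delta N|\widetilde\cP] = 0$ is the correct intrinsic way to say "$N$ does not charge the jumps of $\mu$ in a predictable sense". Finally, the "$\cD \neq \emptyset$ if and only if" part is immediate once both directions are established: existence of one LMD forces existence of $\lambda, \psi$ with the stated properties (e.g., by taking $N = 0$), and conversely such $\lambda, \psi$ yield $\cE(\lambda\cdot W + \psi\ast\tilde\mu) \in \cD$.

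The main obstacle I expect is the careful bookkeeping of the jump terms: the stochastic exponential $\cE(Y^i(\cdot,T))$ already contains a $\mu$-integral (not $\tilde\mu$) with integrand $(1+c^i_s(x))(e^{-\bar\gamma^i} - 1) + \bar\gamma^i$, so in forming $[M, Y^i(\cdot,T)]$ and then re-compensating everything to isolate the predictable drift one must track which pieces are already compensated and which are not, and must verify that the integrand $((1+\psi_t(x))(1+c^i_t(x))(e^{-\bar\gamma^i(t,T,x)} - 1) + \bar\gamma^i(t,T,x))$ appearing in condition (ii) is genuinely $F_t(\ud x)$-integrable — this is precisely where \eqref{eq:int_jumps} and the set $\cX^i_{t,T}$ enter, handling the region where $c^i_t(x)$ is large (so $(1+c^i_t(x))e^{-\bar\gamma^i} - 1$ is large positive) and the region where it is moderate separately. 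Getting the a.s.-for-a.e.-$T$ quantifiers right (the forward-rate equation \eqref{eq:fwd_rate} holds only for a.e. $T$ under Assumption \ref{ass}, so the drift condition can only be asserted for a.e. $t \in [0,T]$ and a.e. $T$, and one must argue that this suffices to pin down $\alpha^i$, hence $\bar\alpha^i$) is the other delicate point, handled by the ordinary and stochastic Fubini theorems whose applicability Assumption \ref{ass} was designed to guarantee.
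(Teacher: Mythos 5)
Your proposal follows essentially the same route as the paper's proof: represent $X^{(i,T)}$ via Lemma \ref{lem:int_fwd}, take the stochastic logarithm of $Z$ and decompose it as $\lambda\cdot W+\psi\ast\tilde{\mu}+N$ by \cite[Lemma III.4.24]{jashi03}, apply Yor's formula, identify membership in $\cD$ with the vanishing of the predictable finite-variation part (taking $T=t$ for (i), the remainder for (ii)), and obtain \eqref{eq:int_jumps} from the local integrability of the relevant compensator, with the converse handled by the same splitting of small and large jumps (Cauchy--Schwarz à la \cite{ChoulliSchweizer16} plus the drift identity) that the paper uses. The only imprecision is your claim that the orthogonality properties of $N$ ``force'' $\Delta(\lambda\cdot W+\psi\ast\tilde{\mu}+N)>-1$ — they do not for arbitrary $N$ — but this is immaterial since the ``moreover'' part of the theorem assumes $Z$ strictly positive and the existence direction can take $N\equiv 0$, exactly as in the paper.
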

\begin{proof}
As a preliminary, let us introduce a shorthand notation that will be used throughout the proof. For fixed but arbitrary elements $i\in I_0$ and $T>0$, we define $Y := Y^i(\cdot,T)$ (see Lemma \ref{lem:int_fwd}) and
\begin{align*}
\sigma_t &:= b^i_t-\bar{\beta}^i(t,T),\\
v_t(x) &:= c^i_t(x)-\bar{\gamma}^i(t,T,x),\\
u_t(x) &:= (1+c^i_t(x))(e^{-\bar{\gamma}^i(t,T,x)}-1)+\bar{\gamma}^i(t,T,x),\\
A_t &:= \int_0^t\Bigl(a^i_s -r_s + f^i(s,s) - \bar{\alpha}^i(s,T) + \frac{1}{2}\|\bar{\beta}^i(s,T)\|^2 -\bar{\beta}^i(s,T)^{\top}b^i_s\Bigr)\ud s,
\end{align*}
for all $t\in[0,T]$ and $x\in E$.
By Lemma \ref{lem:int_fwd}, the semimartingale $Y$ can be written as follows:
\be\label{eq:Y1}
Y = A + \sigma\cdot W + v\ast\tilde{\mu} + u\ast\mu.
\ee
To ease the presentation, we divide the proof into four steps.

{\em (1)}
Let us consider the function $h:\R\to[0,1]$ given by $h(x):=x\ind_{\{|x|\leq 1\}}$. By \cite[Theorem II.2.34]{jashi03}, the canonical representation of the semimartingale $Y$ corresponding to the truncation function $h$  is
\be\label{eq:Y2}
Y = B(h) + \sigma\cdot W + (h\circ(u+v))\ast\tilde{\mu} + \bigl(u+v-h\circ(u+v)\bigr)\ast\mu,
\ee
where $B(h)$ is a predictable process of finite variation. By comparing \eqref{eq:Y1} and \eqref{eq:Y2} we obtain that
\be\label{eq:Y3}
\bigl(v-h\circ(u+v)\bigr)\ast\mu
= A-B(h)+\bigl(v-h\circ(u+v)\bigr)\ast\tilde{\mu},
\ee
so that the finite variation process $(v-h\circ(u+v))\ast\mu$ is a special semimartingale and, therefore, of locally integrable variation (see, e.g., \cite[Proposition I.4.23]{jashi03}). In turn, this implies that the process $|v-h\circ(u+v)|\ast\nu$ is locally integrable and, hence, equation \eqref{eq:Y3} can be rewritten more simply as 
\be\label{eq:Bh}
B(h) = A - \bigl(v-h\circ(u+v)\bigr)\ast\nu.
\ee

{\em (2)}
Let $D\in\cD$. Since $D$ is a strictly positive local martingale with $D_0=1$, its stochastic logarithm $L:=D^{-1}_-\cdot D$ is a local martingale with $L_0=0$ and $\Delta L>-1$. By  \cite[Lemma III.4.24]{jashi03}, it holds that
\be\label{eq:L}
L = \lambda\cdot W + \psi\ast\tilde{\mu} + N,
\ee
for some $\lambda\in L^2_{\rm loc}(W)$, $\psi\in G_{\rm loc}(\mu)$ and a local martingale $N$ with $N_0=0$ satisfying $\langle N,W^j\rangle=0$, for all $j=1,\ldots,d$, and $M_{\mu}[\Delta N|\widetilde{\cP}]=0$.
Moreover, by part b) of \cite[Theorem III.4.20]{jashi03}, it holds that $\psi=M_{\mu}[\Delta L|\widetilde{\cP}]$. Since $\Delta L>-1$, this directly implies that the function $\psi$ takes values in $(-1,+\infty)$.
By Lemma \ref{lem:int_fwd} and an application of Yor's formula (see, e.g., \cite[formula II.8.19]{jashi03}) we obtain that
\[
D(X^0)^{-1}S^iB^i(\cdot,T) = S^i_0B^i(0,T)\cE(L+Y+[L,Y]).
\]
This shows that $D\in\cD$ if and only if $Y+[L,Y]$ is a local martingale. 
From \eqref{eq:Y2} and \eqref{eq:L} we obtain
\[
[L,Y] 
= \int_0^{\cdot}\lambda^{\top}_s\sigma_s\ud s + \sum_{s>0}\Delta L_s\Delta Y_s
= \int_0^{\cdot}\lambda^{\top}_s\sigma_s\ud s + \bigl((\psi+\Delta N)(u+v)\bigr)\ast\mu.
\]
Making use of \eqref{eq:Y2} again, we obtain that the local martingale property of  $Y+[L,Y]$ is equivalent to
\be\label{eq:FV}
B(h) + \int_0^{\cdot}\lambda^{\top}_s\sigma_s\ud s 
+ \bigl((u+v)(1+\psi+\Delta N)-h\circ(u+v)\bigr)\ast\mu
\ee
being a local martingale. 
In that case, the process $C(h):=((u+v)(1+\psi+\Delta N)-h\circ(u+v))\ast\mu$ is  a finite variation process of locally integrable variation. By Lemma \ref{lem:cond_expec}, its compensator is given by
\be\label{eq:Cp}\ba
C(h)^p
&= M_{\mu}\bigl[(u+v)(1+\psi+\Delta N)-h\circ(u+v)\big|\widetilde{\cP}\bigr]\ast\nu	\\
&= \bigl((u+v)(1+\psi+M_{\mu}[\Delta N|\widetilde{\cP}])-h\circ(u+v)\bigr)\ast\nu	\\
&= \bigl((u+v)(1+\psi)-h\circ(u+v)\bigr)\ast\nu,
\ea\ee
where we made use of the $\widetilde{\cP}$-measurability of the functions $u$, $v$ and $\psi$ together with the fact that $M_{\mu}[\Delta N|\widetilde{\cP}]=0$.
By compensating $C(h)$, we obtain that the process in \eqref{eq:FV} is a local martingale if and only if the process
$
H:=B(h) + \int_0^{\cdot}\lambda^{\top}_s\sigma_s\ud s + C(h)^p
$
is a local martingale. By \eqref{eq:Bh}, the process $H$ can be equivalently rewritten as
\be\label{eq:pred_FV}
H = A + \int_0^{\cdot}\lambda^{\top}_s\sigma_s\ud s
+ \bigl((u+v)\psi+u\bigr)\ast\nu.
\ee
Since $H$ is a predictable process of finite variation, it can be a local martingale if and only if it is equal to zero, up to an evanescent set (see \cite[Corollary I.3.16]{jashi03}). Recalling the notation introduced at the beginning of the proof, this means that, outside of a subset of $\Omega\times\R_+$ of $(\PP\otimes \ud t)$-measure zero,
\be\label{eq:drift0}\ba
&a^i_t -r_t+ f^i(t,t) - \bar{\alpha}^i(t,T) + \frac{1}{2}\|\bar{\beta}^i(t,T)\|^2 -\bar{\beta}^i(t,T)^{\top}b^i_t + \lambda^{\top}_t\bigl(b^i_t-\bar{\beta}^i(t,T)\bigr)	\\
&\; + \int_E\left(\bigl(1+c^i_t(x)\bigr)\bigl(1+\psi_t(x)\bigr)\bigl(e^{-\bar{\gamma}^i(t,T,x)}-1\bigr)+\bar{\gamma}^i(t,T,x)+c^i_t(x)\psi_t(x)\right)F_t(\ud x)
=0.
\ea\ee
Since \eqref{eq:drift0} must hold for all $T\in\R_+$, we can take $T=t$ and obtain condition (i) in the statement of the theorem. In turn, inserting condition (i) into \eqref{eq:drift0} gives condition (ii).

{\em (3)}
To complete the first part of the proof, it remains to show that \eqref{eq:int_jumps} holds. Since the compensator $C(h)^p$ introduced above is a predictable process of finite variation, its variation
\[
\bigl|(u+v)(1+\psi)-h\circ(u+v)\bigr|\ast\nu
\]  
is locally integrable. In particular, noting that $u+v\geq-1$, this implies that the increasing process
\[
\bigl((u+v)(1+\psi)\ind_{\{u+v>1\}}\bigr)\ast\nu
= \int_0^{\cdot}\int_{\mathcal{C}^i_{s,T}}\bigl((1+c^i_s(x))e^{-\bar{\gamma}^i(s,T,x)}-1\bigr)\bigl(1+\psi_s(x)\bigr)F_s(\ud x)\ud s
\]
is locally integrable, where the set $\mathcal{C}^i_{s,T}\subset E$ has been defined as in the statement of theorem. This proves the validity of the integrability condition \eqref{eq:int_jumps} for a.e. $t\in[0,T]$ whenever $\cD\neq\emptyset$.

{\em (4)}
Conversely, suppose there exist $\lambda\in L^2_{\rm loc}(W)$ and $\psi\in G_{\rm loc}(\mu)$ with $\psi>-1$ such that \eqref{eq:int_jumps} and conditions (i)-(ii) in the statement of the theorem hold. We shall prove that $\cD\neq\emptyset$. To this effect, let us define a strictly positive local martingale $D$ as in \eqref{eq:deflator}, i.e., 
\[
D := \cE(L),
\qquad\text{ with } L:=\lambda\cdot W + \psi\ast\tilde{\mu} + N,
\]
for some local martingale $N$ with $N_0=0$ and satisfying $\langle N,W^j\rangle=0$, for all $j=1,\ldots,d$, and $M_{\mu}[\Delta N|\widetilde{\cP}]=0$.
We shall first prove that \eqref{eq:int_jumps} together with conditions (i)-(ii) implies that the finite variation process $C(h)=((u+v)(1+\psi+\Delta N)-h\circ(u+v))\ast\mu$ is of locally integrable variation.
Observe that, using the notation introduced above and proceeding as in the proof of \cite[Lemma 2.11]{ChoulliSchweizer16},
\be\label{eq:estimates}\ba
\bigl|h\circ(u+v)(\psi+\Delta N)\bigr|\ast\mu
&= \sum_{s>0}|\Delta Y_s|\ind_{\{|\Delta Y_s|\leq1\}}|\Delta L_s|	\\
&\leq \biggl(\sum_{s>0}\Delta Y_s^2\ind_{\{|\Delta Y_s|\leq 1\}}\biggr)^{1/2}\biggl(\sum_{s>0}\Delta L^2_s\biggr)^{1/2}	\\
&\leq \left(\ind_{\{|\Delta Y|\leq 1\}}\cdot[Y]\right)^{1/2}\biggl(\sum_{s>0}\Delta L^2_s\biggr)^{1/2}.
\ea\ee
The process $\ind_{\{|\Delta Y|\leq1\}}\cdot[Y]$ is locally bounded, being an increasing process with bounded jumps, while the process $(\sum_{s>0}\Delta L_s^2)^{1/2}$ is locally integrable, due to the fact that $L$ is a local martingale (see, e.g., \cite[Corollary I.4.55]{jashi03}). This implies that the increasing process $|h\circ(u+v)(\psi+\Delta N)|\ast\mu$ is locally integrable.
We then proceed by showing that the increasing process $|\ind_{\{u+v>1\}}(u+v)(1+\psi+\Delta N)|\ast\mu$ is locally integrable too. Notice that $u+v\geq-1$ and also $1+\psi+\Delta N>0$, since $\Delta L>-1$. Therefore, by Lemma \ref{lem:cond_expec} and using the fact that $M_{\mu}[\Delta N|\widetilde{\cP}]=0$, the process $(\ind_{\{u+v>1\}}(u+v)(1+\psi+\Delta N))\ast\mu$ is locally integrable if and only if $(\ind_{\{u+v>1\}}(u+v)(1+\psi))\ast\nu$ is locally integrable.
To prove that the latter property holds, observe first that \eqref{eq:int_jumps} corresponds to the following condition, for a.e. $t\in[0,T]$:
\[
\int_E\ind_{\{x:u_t(x)+v_t(x)>1\}}\bigl(u_t(x)+v_t(x)\bigr)\bigl(1+\psi_t(x)\bigr)F_t(\ud x)<+\infty.
\]
Moreover, conditions (i)-(ii) together imply that equation \eqref{eq:drift0} is satisfied, which leads to
\be\label{eq:use_drift}\ba
&\int_E\ind_{\{x:u_t(x)+v_t(x)>1\}}\bigl(u_t(x)+v_t(x)\bigr)\bigl(1+\psi_t(x)\bigr)F_t(\ud x)	\\
& = -\dot{A}-\lambda^{\top}_t\sigma_t
-\int_E\Bigl(h\bigl(u_t(x)+v_t(x)\bigr)\bigl(1+\psi_t(x)\bigr)-v_t(x)\Bigr)F_t(\ud x)	\\
&= -\dot{A}-\lambda^{\top}_t\sigma_t
-\int_Eh\bigl(u_t(x)+v_t(x)\bigr)\psi_t(x)F_t(\ud x)
+\int_E\bigl(v_t(x)-h\bigl(u_t(x)+v_t(x)\bigr)\bigr)F_t(\ud x)	\\
&= -\dot{B}(h)-\lambda^{\top}_t\sigma_t
-\int_Eh\bigl(u_t(x)+v_t(x)\bigr)\psi_t(x)F_t(\ud x),
\ea\ee
where we have denoted by $\dot{A}$ and $\dot{B}(h)$ the densities of the absolutely continuous processes $A$ and $B(h)$ with respect to the Lebesgue measure, respectively. 
In the second equality of \eqref{eq:use_drift}, we made use of the fact that the increasing process $|v-h\circ(u+v)|\ast\nu$ is locally integrable and, therefore, the integral $\int_E(v_t(x)-h(u_t(x)+v_t(x)))F_t(\ud x)$ is always finite outside a set of $(\PP\otimes\ud t)$-measure zero.
Since $\lambda$ and $\sigma$ belong to $L^2_{\rm loc}(W)$, the Cauchy-Schwarz inequality implies that $\int_0^T|\lambda^{\top}_t\sigma_t|\ud t<+\infty$ a.s. 
Moreover, the same estimates as in \eqref{eq:estimates} together with an application of Lemma \ref{lem:cond_expec} allow to show that $(h\circ(u+v)\psi)\ast\nu$ is well-defined as a predictable process of finite variation. In view of \eqref{eq:use_drift}, these facts enable us to deduce that $(\ind_{\{u+v>1\}}(u+v)(1+\psi))\ast\nu$ is well-defined as an increasing process.
Since every predictable increasing process is locally integrable (see, e.g., \cite[Lemma I.3.10]{jashi03}) and
\[
\bigl|(u+v)(1+\psi+\Delta N)-h\circ(u+v)\bigr|\ast\mu
= \bigl|h\circ(u+v)(\psi+\Delta N)\bigr|\ast\mu
+ \bigl(\ind_{\{u+v>1\}}(u+v)(1+\psi+\Delta N)\bigr)\ast\mu,
\]
we have thus proved that the finite variation process $C(h)$ is of locally integrable variation. Therefore, its compensator $C(h)^p$ exists and is given by \eqref{eq:Cp}. Since conditions (i)-(ii) together imply that the process $H$ in \eqref{eq:pred_FV} vanishes, the process \eqref{eq:FV} results to be a local martingale. As explained above, the latter property is equivalent to the local martingale property of $Y+[L,Y]$. Since $D$ is strictly positive, this suffices to conclude that $D\in\cD$, thus completing the proof.
\end{proof}

Theorem \ref{thm:HJM} generalizes and unifies existing results on absence of arbitrage for HJM semimartingale models in the quasi-left-continuous case, even for a single term structure (see also Remark \ref{rem:comparison} below). Conditions (i)-(ii) of the theorem share the same structural form of conditions (i)-(ii) in \cite[Corollary 3.10]{FGGS20}. However, unlike \cite[Corollary 3.10]{FGGS20}, which imposes more stringent technical requirements and can only be applied (with $\mathbb{Q}=\mathbb{P}$) to verify whether a given local martingale is an LMD,  Theorem \ref{thm:HJM} provides a complete characterization of the whole set of LMDs under minimal technical conditions.

\begin{remark}
Theorem \ref{thm:HJM} implies that, whenever $\cD\neq\emptyset$, it always holds that $\widehat{D}:=\cE(\lambda\cdot W + \psi\ast\tilde{\mu})\in\cD$, where $\lambda$ and $\psi$ satisfy \eqref{eq:int_jumps} and conditions (i)-(ii) of the theorem. The LMD $\widehat{D}$ corresponds to taking $N\equiv 0$ in \eqref{eq:deflator} and can be regarded as the {\em minimal} LMD, in the spirit of the minimal martingale measure introduced in \cite{FollmerSchweizer91} (see also \cite[Corollary 2.14]{ChoulliSchweizer16}
in a finite-dimensional semimartingale setup).
\end{remark}

\begin{remark}	\label{rem:HJM}
Let us briefly comment on conditions (i)-(ii) in Theorem \ref{thm:HJM}. Condition (i), for $i=0$, reduces to the classical consistency condition $r_t=f^0(t,t)$ between the locally riskless short rate and the short end of the riskless forward rate. 
For $i\in I$, condition (i) requires that, at the short end, the instantaneous yield of $S^iB^i(\cdot,T)$ equals the  riskless short rate $r_t$ plus a risk premium term which depends on the volatility of the spot process $S^i$. 
For $i=0$, differentiating condition (ii) leads to
\[
\alpha^0(t,T) = \beta^0(t,T)^{\top}\bigl(\bar{\beta}^0(t,T)-\lambda_t\bigr)+\int_E\gamma^0(t,T,x)\left(1-\bigl(1+\psi_t(x)\bigr)e^{-\bar{\gamma}^0(t,T,x)}\right)F_t(\ud x),
\]
provided that $\int_E\sup_{T\geq t}|\gamma^0(t,T,x)(1-(1+\psi_t(x))\exp(-\bar{\gamma}^0(t,T,x)))|F_t(\ud x)<+\infty$, so that we are allowed to differentiate under the integral sign.
This condition represents the real-world HJM drift restriction for the riskless term structure and has already been derived in \cite{BLNSP10} and \cite[Section 3]{PT15}, albeit under more stringent technical requirements.
For $i\in I$, differentiating condition (ii) leads to the real-world HJM drift restriction for the risky term structures:
\[
\alpha^i(t,T) = \beta^i(t,T)^{\top}\bigl(\bar{\beta}^i(t,T)-b^i_t-\lambda_t\bigr) + \int_E\gamma^i(t,T,x)\left(1-\bigl(1+\psi_t(x)\bigr)\bigl(1+c^i_t(x)\bigr)e^{-\bar{\gamma}^i(t,T,x)}\right)F_t(\ud x),
\]
provided that we are allowed to differentiate under the integral sign, similarly as above.
\end{remark}

\subsection{Monotonicity of term structures}
\label{sec:monotone}

The examples discussed in Section \ref{sec:examples} show that, in many cases, there exists a natural ordering among the risky term structures according to their level of risk. It is, therefore, of interest to provide clear conditions ensuring that, for each $T>0$, the family $\{S^iB^i(\cdot,T);i\in I\}$ is ordered with respect to $i$. 
Here, we do not aim at a general characterization of ordered term structures, but rather at simple criteria that can be easily verified in concrete specifications. To this end, we introduce the following condition.

\begin{condition}	\label{cond:order}
Outside some subset of $\Omega\times\R_+$ of $(\PP\otimes\ud t)$-measure zero, it holds that, for all $t\in\R_+$,
\[
\int_0^ta^i_u\ud u \leq \int_0^ta^j_u\ud u,
\qquad\text{ for all $i,j\in I$ with $i<j$},
\]
and there exists an element $i_0\in I$ such that $b^i_t=b^{i_0}_t$ and $c^i_t(x)=c^{i_0}_t(x)$, for all $x\in E$ and $i\in I$.
\end{condition}

By \eqref{eq:log_spread} and a straightforward comparison argument, Condition \ref{cond:order} ensures that $\cE(Z^i)\leq\cE(Z^j)$ up to an evanescent set, for all $i,j\in I$ with $i<j$. 
In turn, this implies that the spot processes are also ordered with respect to $i$, provided that their initial values $\{S^i_0;i\in I\}$ respect the same ordering.

We now introduce a simple property that will enable us to transfer the ordering of the spot processes onto the whole family of risky term structures.

\begin{definition}\label{def:fair}
Let $D\in\cD$ and $i\in I_0$. We say that the term structure indexed by $i$ is {\em fairly priced} by $D$ if the process $D(X^0)^{-1}S^iB^i(\cdot,T)$ is a true martingale, for every $T>0$.
\end{definition}

\begin{remark}
In view of Definition \ref{def:deflators}, the process $D(X^0)^{-1}S^iB^i(\cdot,T)$ is a local martingale, for every $D\in\cD$ and $(i,T)\in I_0\times\R_+$. Definition \ref{def:fair} strengthens this property into a true martingale property. 
This corresponds to a generalization of the concept of fair pricing in the context of the benchmark approach, where a price process is said to be fair if it is a true  martingale when denominated in units of the growth-optimal portfolio (see \cite[Chapter 10]{PH} and \cite[Section 2.2]{BLNSP10} for term structure models).
\end{remark}

\begin{proposition}	\label{prop:order}
Suppose that Assumption \ref{ass} holds and $\cD\neq\emptyset$. For $D\in\cD$ and $i,j\in I$ with $i<j$, assume that the term structures indexed by $i$ and $j$ are fairly priced by $D$, in the sense of Definition \ref{def:fair}. Assume furthermore that Condition \ref{cond:order} holds and $S^i_0\leq S^j_0$.
Then, it holds a.s. that
\be\label{eq:order}
S^i_tB^i(t,T) \leq S^j_tB^j(t,T),
\qquad\text{ for all $t\in[0,T]$ and $T>0$.}
\ee
\end{proposition}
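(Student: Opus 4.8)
The plan is to reduce the claimed ordering of the discounted bond prices to the ordering of the spot processes, using the true martingale property supplied by fair pricing. First I would record that the spot processes are ordered: by Condition~\ref{cond:order} the semimartingales $Z^i$ and $Z^j$ in \eqref{eq:log_spread} share the same local martingale part $M:=b^{i_0}\cdot W+c^{i_0}\ast\tilde\mu$ and differ only through their drifts, with $\int_0^\cdot a^i_u\,\ud u\le\int_0^\cdot a^j_u\,\ud u$; by Yor's formula this gives $\cE(Z^i)_t=\exp\bigl(\int_0^t a^i_u\,\ud u\bigr)\cE(M)_t\le\exp\bigl(\int_0^t a^j_u\,\ud u\bigr)\cE(M)_t=\cE(Z^j)_t$ up to an evanescent set, since $\cE(M)\ge0$. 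Combined with $0\le S^i_0\le S^j_0$ and $S^i=S^i_0\cE(Z^i)$, this yields $S^i_t\le S^j_t$ for all $t\ge0$, $\PP$-a.s. (this is precisely the assertion recorded right after Condition~\ref{cond:order}, which I would simply invoke).

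Next, fix $T>0$. By the fair-pricing hypothesis, the non-negative processes $Z(X^0)^{-1}S^iB^i(\cdot,T)$ and $Z(X^0)^{-1}S^jB^j(\cdot,T)$ are true martingales on $[0,T]$, and since $B^i(T,T)=B^j(T,T)=1$ their values at $T$ are $Z_T(X^0_T)^{-1}S^i_T$ and $Z_T(X^0_T)^{-1}S^j_T$. Hence, for every $t\in[0,T]$,
\[
Z_t(X^0_t)^{-1}S^i_tB^i(t,T)=\EE\bigl[Z_T(X^0_T)^{-1}S^i_T\bigm|\cF_t\bigr]\le\EE\bigl[Z_T(X^0_T)^{-1}S^j_T\bigm|\cF_t\bigr]=Z_t(X^0_t)^{-1}S^j_tB^j(t,T),
\]
the inequality coming from $S^i_T\le S^j_T$ $\PP$-a.s. and $Z_T,X^0_T>0$. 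Since $Z>0$ (being a strictly positive local martingale) and $X^0=\exp(\int_0^\cdot r_s\,\ud s)>0$, I divide by $Z_t(X^0_t)^{-1}$ to obtain $S^i_tB^i(t,T)\le S^j_tB^j(t,T)$, and by the c\`adl\`ag regularity of all processes involved this holds simultaneously for all $t\in[0,T]$, $\PP$-a.s.

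Finally, I would upgrade this to a statement valid, on a single $\PP$-full set, for all $t\in[0,T]$ and all $T>0$. Applying the previous step along a countable dense set $D\subset\R_+$ of maturities and intersecting the exceptional sets, one gets $S^i_tB^i(t,T)\le S^j_tB^j(t,T)$ for all $T\in D$ and $t\in[0,T]$. For a general $T>0$ and $t<T$, picking $T_n\in D$ with $T_n\downarrow T$ and using that $T\mapsto B^k(t,T)=\exp(-\int_t^Tf^k(t,u)\,\ud u)$ is continuous — which holds under Assumption~\ref{ass}, since then $u\mapsto f^k(t,u)$ is locally integrable — I let $n\to\infty$; the boundary case $t=T$ is immediate from $S^i_T\le S^j_T$ and $B^i(T,T)=B^j(T,T)=1$. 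I expect the main obstacle to be exactly this maturity-wise patching, i.e.\ passing from the ``for each fixed $T$'' statement to one holding simultaneously in $T$ on one exceptional set; the rest is routine, except that one must be careful to use the \emph{true} martingale property granted by fair pricing rather than merely the local martingale property of Definition~\ref{def:deflators}, since a non-negative local martingale is only a supermartingale and would give the inequality in the wrong direction.
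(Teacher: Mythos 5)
Your proof is correct and follows essentially the same route as the paper: order the spot processes via Condition \ref{cond:order}, use the fair-pricing (true martingale) property and $B^k(T,T)=1$ to write $S^k_tB^k(t,T)$ as a conditional expectation, and compare. The extra density-in-$T$ patching you add is a harmless refinement of what the paper leaves implicit.
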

\begin{proof}
The assumptions imply that the process $D(X^0)^{-1}S^kB^k(\cdot,T)$ is a true martingale, for $k\in\{i,j\}$ and for all $T>0$. Therefore, making use of the terminal condition $B^k(T,T)=1$, it follows that
\[
S_t^kB^k(t,T)
= \frac{\EE[D_T(X_T^0)^{-1}S_T^k|\cF_t]}{D_t(X_t^0)^{-1}}
\quad\text{ a.s.,}
\]
for all $T>0$, $t\in[0,T]$ and $k\in\{i,j\}$. Inequality \eqref{eq:order} follows by noticing that, as explained above, Condition \ref{cond:order} and the assumption that $S^i_0\leq S^j_0$ together imply that $S^i_T\leq S^j_T$ a.s.
\end{proof}

\begin{remark}
The ordering of the risky term structures is especially relevant for multi-curve interest rate models (see Section \ref{ex:multicurve}). In that context, sufficient conditions for ordered term structures have been derived in \cite[Corollary 3.17]{CFG:16}, which can be recovered as a special case of Proposition \ref{prop:order} above. 
\end{remark}

In Section \ref{sec:SPDE_invariant}, we shall establish property \eqref{eq:order}  by studying the invariance properties of the SPDE associated to the model without requiring the fair pricing condition (see Proposition \ref{prop-monotonicity} below).

\subsection{The risk-neutral setup}
\label{sec:RN}

We have so far developed a general framework which does not rely on the existence of a risk-neutral probability measure, being based on a weaker concept of market viability (see Section \ref{sec:FTAP}).
Given the widespread use of risk-neutral valuation in finance, it is nevertheless useful to specialize Theorem \ref{thm:HJM} to the risk-neutral setup.
Adopting the notation of Section \ref{sec:FTAP}, we say that a probability measure $\QQ\sim\PP$ on $(\Omega,\cF)$ is a {\em risk-neutral probability} if $X^{(i,T)}$ is a $\QQ$-local martingale, for every $(i,T)\in I_0\times\R_+$.
We recall from Girsanov's theorem (see \cite[Theorems III.3.11 and III.3.17]{jashi03}) that, if $\QQ\sim\PP$, there exist $\lambda \in L^2_{\rm loc}(W)$ and $\psi\in G_{\rm loc}(\mu)$ with $\psi>-1$ such that $W^{\QQ}:=W+\int_0^{\cdot}\lambda_s\ud s$ is a $d$-dimensional Brownian motion under $\QQ$ and the compensator of $\mu$ under $\QQ$ is given by
\[
\nu^{\QQ}(\ud t,\ud x)=F^{\QQ}_t(\ud x)\ud t
=\bigl(1+\psi_t(x)\bigr)F_t(\ud x)\ud t.
\]
We denote by $\tilde{\mu}^{\QQ}(\ud t,\ud x):=\mu(\ud t,\ud x)-\nu^{\QQ}(\ud t,\ud x)$ the compensated random measure under $\QQ$.

The next result is a consequence of Theorem \ref{thm:HJM} and provides necessary and sufficient conditions for the existence of a risk-neutral probability.

\begin{corollary}\label{cor:RN}
Suppose that Assumption \ref{ass} holds. Then there exists a risk-neutral probability $\QQ$ if and only if there exist $\lambda \in L^2_{\rm loc}(W)$, $\psi\in G_{\rm loc}(\mu)$ and a local martingale $N=(N_t)_{t\geq0}$ satisfying all conditions of Theorem \ref{thm:HJM} such that $\cE(\lambda\cdot W + \psi\ast\tilde{\mu}+N)$ is a strictly positive uniformly integrable martingale.
In that case, for all $i\in I_0$, it holds that
\be	\label{eq:spread_RN}
Z^i = \int_0^{\cdot}a^{\QQ,i}_s\ud s + \int_0^{\cdot}b^i_s\ud W^{\QQ}_s + \int_0^{\cdot}\int_Ec^i_s(x)\tilde{\mu}^{\QQ}(\ud s,\ud x),
\ee
with
\be	\label{eq:drift_spread_RN}
a^{\QQ,i}_t = r_t - f^i(t,t),
\ee
and, if $\int_0^T\int_{\{x\in E:|\gamma^i(t,T,x)|>1\}}|\gamma^i(t,T,x)\psi_t(x)|F_t(\ud x)\ud t<+\infty$ a.s., for all $T>0$, it also holds that
\be	\label{eq:fwd_RN}
f^i(\cdot,T) = f^i(0,T) + \int_0^{\cdot}\alpha^{\QQ,i}(s,T)\ud s + \int_0^{\cdot}\beta^i(s,T)\ud W^{\QQ}_s + \int_0^{\cdot}\int_E\gamma^i(s,T,x)\tilde{\mu}^{\QQ}(\ud s,\ud x),
\ee
where
\be	\label{eq:drift_fwd_RN}
\bar{\alpha}^{\QQ,i}(t,T) = \frac{1}{2}\|\bar{\beta}^i(t,T)\|^2 - (b^i_t)^{\top}\bar{\beta}^i(t,T) + \int_E\bigl((1+c^i_t(x))(e^{-\bar{\gamma}^i(t,T,x)}-1)+\bar{\gamma}^i(t,T,x)\bigr)F^{\QQ}_t(\ud x).
\ee
\end{corollary}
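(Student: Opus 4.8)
The plan is to derive Corollary \ref{cor:RN} directly from Theorem \ref{thm:HJM} by exploiting the well-known equivalence between risk-neutral probabilities and uniformly integrable martingale deflators. First I would recall that $\QQ\sim\PP$ is a risk-neutral probability if and only if its density process $Z^{\QQ}_t := \EE[\ud\QQ/\ud\PP\,|\,\cF_t]$ is a strictly positive uniformly integrable martingale such that $Z^{\QQ}X^{(i,T)}$ is a $\PP$-local martingale for every $(i,T)\in I_0\times\R_+$; the latter property says precisely that $Z^{\QQ}\in\cD$. Conversely, any strictly positive uniformly integrable martingale $Z$ with $Z_0=1$ that lies in $\cD$ defines via $\ud\QQ/\ud\PP:=Z_{\infty}$ a risk-neutral probability. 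Combining this with the characterization of $\cD$ in Theorem \ref{thm:HJM}, a risk-neutral probability exists if and only if one can choose $\lambda\in L^2_{\rm loc}(W)$, $\psi\in G_{\rm loc}(\mu)$ and a local martingale $N$ satisfying \eqref{eq:int_jumps} and conditions (i)--(ii) of Theorem \ref{thm:HJM}, together with the additional requirement that $\cE(\lambda\cdot W+\psi\ast\tilde\mu+N)$ be a strictly positive uniformly integrable martingale. This gives the first (``if and only if'') assertion of the corollary.

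For the second part, assuming such a $\QQ$ exists, I would apply Girsanov's theorem exactly as stated in the paragraph preceding the corollary: the associated $\lambda$ and $\psi$ are the ones for which $W^{\QQ}=W+\int_0^{\cdot}\lambda_s\,\ud s$ is a $\QQ$-Brownian motion and $\nu^{\QQ}(\ud t,\ud x)=(1+\psi_t(x))F_t(\ud x)\,\ud t$ is the $\QQ$-compensator of $\mu$. Substituting $W=W^{\QQ}-\int_0^{\cdot}\lambda_s\,\ud s$ and $\tilde\mu=\tilde\mu^{\QQ}-\psi F\,\ud t$ into the representation \eqref{eq:log_spread} of $Z^i$ yields \eqref{eq:spread_RN} with the new drift $a^{\QQ,i}_t=a^i_t-\lambda_t^{\top}b^i_t-\int_E c^i_t(x)\psi_t(x)F_t(\ud x)$; then condition (i) of Theorem \ref{thm:HJM} collapses this to $a^{\QQ,i}_t=r_t-f^i(t,t)$, which is \eqref{eq:drift_spread_RN}. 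Performing the same substitution in the forward-rate equation \eqref{eq:fwd_rate} produces \eqref{eq:fwd_RN} with $\alpha^{\QQ,i}(t,T)=\alpha^i(t,T)-\beta^i(t,T)^{\top}\lambda_t-\int_E\gamma^i(t,T,x)\psi_t(x)F_t(\ud x)$; integrating in $u$ over $[t,T]$ and inserting condition (ii) of Theorem \ref{thm:HJM} (written in terms of $\bar\alpha^i,\bar\beta^i,\bar\gamma^i$) yields \eqref{eq:drift_fwd_RN} after the $\psi$-dependent terms recombine into an integral against $F^{\QQ}_t$. The integrability caveat in the statement is exactly what is needed to move the $\psi$-term inside the maturity integral, i.e. to justify interchanging $\int_t^T\ud u$ with $\int_E F_t(\ud x)$ via a Fubini argument.

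The main obstacle I anticipate is the bookkeeping around integrability under the measure change, rather than anything conceptually deep. One must verify that after passing to $\QQ$ the processes $b^i$, $\beta^i$, $c^i$, $\gamma^i$ still lie in the appropriate classes $L^2_{\rm loc}(W^{\QQ})$ and $G_{\rm loc}(\mu)$ with respect to $\nu^{\QQ}$, and that the new drifts $a^{\QQ,i}$ and $\alpha^{\QQ,i}$ satisfy the pathwise integrability requirements in Assumption \ref{ass}; here the condition $\psi>-1$ (so $F^{\QQ}$ is a genuine nonnegative kernel) and the local integrability statements already extracted in the proof of Theorem \ref{thm:HJM} do most of the work, but the forward-rate case requires care because the drift restriction \eqref{eq:drift_fwd_RN} is stated only for the \emph{integrated} quantity $\bar\alpha^{\QQ,i}$ and its differentiated form \eqref{eq:fwd_RN} holds only under the extra integrability hypothesis on $\gamma^i\psi$. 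I would therefore present \eqref{eq:spread_RN}--\eqref{eq:drift_spread_RN} unconditionally and flag \eqref{eq:fwd_RN}--\eqref{eq:drift_fwd_RN} as holding precisely under the stated Fubini-type condition, mirroring the structure already used in Remark \ref{rem:HJM}.
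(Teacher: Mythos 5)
Your overall strategy coincides with the paper's: identify risk-neutral measures with density processes that are uniformly integrable martingales lying in $\cD$, invoke Theorem \ref{thm:HJM} for the first assertion, and then obtain \eqref{eq:spread_RN}--\eqref{eq:drift_fwd_RN} by rewriting \eqref{eq:log_spread} and \eqref{eq:fwd_rate} under $\QQ$ via Girsanov. However, as written your central cancellation fails because of a sign inconsistency. If the density is $Z=\cE(\lambda\cdot W+\psi\ast\tilde\mu+N)$ with $\lambda,\psi$ exactly as in Theorem \ref{thm:HJM}, then Girsanov gives $W^{\QQ}=W-\int_0^{\cdot}\lambda_s\,\ud s$ (so $\ud W=\ud W^{\QQ}+\lambda\,\ud t$) and $\nu^{\QQ}=(1+\psi)F\,\ud t$, hence $\tilde\mu=\tilde\mu^{\QQ}+\psi F\,\ud t$, not $\tilde\mu^{\QQ}-\psi F\,\ud t$. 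The drift correction is therefore $a^{\QQ,i}_t=a^i_t+\lambda_t^{\top}b^i_t+\int_E c^i_t(x)\psi_t(x)F_t(\ud x)$, which condition (i) collapses to $r_t-f^i(t,t)$; with the signs you state, condition (i) would instead give $a^{\QQ,i}_t=r_t-f^i(t,t)-2\lambda_t^{\top}b^i_t-2\int_E c^i_t\psi_t F_t(\ud x)$, and the analogous mismatch ruins the derivation of \eqref{eq:drift_fwd_RN} from condition (ii). (The paper's paragraph before the corollary uses the opposite convention $W^{\QQ}=W+\int\lambda\,\ud s$, i.e.\ its Girsanov parameter is the negative of the one in the stochastic exponential; you must fix one convention and use it consistently in conditions (i)--(ii) as well. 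Since $\lambda$ does not appear in the final formulas, this is repairable, but the step as written is wrong.)

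The more substantive gap is that \eqref{eq:spread_RN} is not a formal substitution: splitting $c^i\ast\tilde\mu$ into $c^i\ast\tilde\mu^{\QQ}$ plus an absolutely continuous term requires $Z^i$ to be a special semimartingale under $\QQ$, equivalently that $c^i\ind_{\{c^i>1\}}\ast\nu^{\QQ}$ be of finite variation, and this is precisely where the paper's proof does its work. There, \eqref{eq:int_jumps} evaluated at $T=t$ yields $\int_{\{c^i_t>1\}}c^i_t(1+\psi_t)\,F_t(\ud x)<+\infty$, and then condition (i), the special semimartingale property of $Z^i$ under $\PP$ and a Cauchy--Schwarz bound on the small-jump part are combined to express $\int_{\{c^i>1\}}c^i\,F^{\QQ}(\ud x)$ as a sum of $\ud t$-integrable terms. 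Your appeal to ``the local integrability statements already extracted in the proof of Theorem \ref{thm:HJM}'' does not by itself deliver this: $\psi\in G_{\rm loc}(\mu)$ and $c^i\in G_{\rm loc}(\mu)$ give only compensated or truncated integrability, and $\int_E|c^i_t\psi_t|F_t(\ud x)$ need not be finite on the large-jump set without the above argument, so presenting \eqref{eq:spread_RN}--\eqref{eq:drift_spread_RN} ``unconditionally'' requires supplying it. Relatedly, the role of the extra hypothesis on $\gamma^i\psi$ is slightly misattributed: it is needed primarily to obtain the special semimartingale property of $f^i(\cdot,T)$ under $\QQ$, i.e.\ to make the decomposition \eqref{eq:fwd_RN} meaningful (this does not follow from the theorem's conditions, unlike the case of $Z^i$), and only secondarily to justify the maturity-integral interchange leading to \eqref{eq:drift_fwd_RN}.
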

\begin{proof}
Let $\QQ$ be a probability measure on $(\Omega,\cF)$ with $\QQ\sim\PP$ and density process $D=(D_t)_{t\geq0}$, i.e., $D_t=\ud\QQ|_{\cF_t}/\ud\PP|_{\cF_t}$, for all $t\geq0$. \cite[Lemma III.5.17]{jashi03} implies that $D$ admits a representation of the form \eqref{eq:deflator}. By definition, $\QQ$ is a risk-neutral probability if and only if $D\in\cD$.
In view of Theorem \ref{thm:HJM}, we only have to prove that \eqref{eq:spread_RN}--\eqref{eq:drift_fwd_RN} hold under $\QQ$.
For every $i\in I$, the process $Z^i$ can be represented as in \eqref{eq:spread_RN} if and only if it is a special semimartingale under $\QQ$ and this holds if and only if $c^i\ind_{\{c^i>1\}}\ast\nu^{\QQ}$ is of finite variation. 
Condition \eqref{eq:int_jumps} for $t=T$ implies that $\int_{\{x\in E:c^i_t(x)>1\}}c^i_t(x)F^{\QQ}_t(\ud x)<+\infty$ a.s. for a.e. $t\in\R_+$.
Making use of condition (i) in Theorem \ref{thm:HJM} and the fact that $Z^i$ is a special semimartingale under $\PP$, as a consequence of \eqref{eq:log_spread}, we can compute
\begin{align*}
& \int_{\{x\in E:c^i_t(x)>1\}}c^i_t(x)F^{\QQ}_t(\ud x)	\\
&\; = \int_{\{x\in E:c^i_t(x)>1\}}c^i_t(x)F_t(\ud x)
+ \int_Ec^i_t(x)\psi_t(x)F_t(\ud x)
- \int_{\{x\in E:c^i_t(x)\leq 1\}}c^i_t(x)\psi_t(x)F_t(\ud x)	\\
&\;= \int_{\{x\in E:c^i_t(x)>1\}}c^i_t(x)F_t(\ud x)
+r_t-f^i(t,t)- \lambda_t^{\top}b^i_t-a^i_t
- \int_{\{x\in E:c^i_t(x)\leq 1\}}c^i_t(x)\psi_t(x)F_t(\ud x).
\end{align*}
Since all processes appearing on the last line of the above equation are integrable with respect to $\ud t$, this implies that $c^i\ind_{\{c^i>1\}}\ast\nu^{\QQ}$ is of finite variation. Equation \eqref{eq:drift_spread_RN} then follows as a direct consequence of condition (i) of Theorem \ref{thm:HJM} and Girsanov's theorem.
The proof of \eqref{eq:fwd_RN}-\eqref{eq:drift_fwd_RN} is analogous, but in this case the additional integrability requirement $\int_0^T\int_{\{x\in E:|\gamma^i(t,T,x)|>1\}}|\gamma^i(t,T,x)\psi_t(x)|F_t(\ud x)\ud t<+\infty$ a.s. is necessary, since it does not follow automatically from the conditions of Theorem \ref{thm:HJM}.
\end{proof}

\begin{remark}	\label{rem:comparison}
To the best of our knowledge, Corollary \ref{cor:RN} provides the most general characterization of risk-neutral probabilities for HJM semimartingale quasi-left-continuous models. In particular:
\begin{enumerate}
\item[a)] Conditions \eqref{eq:drift_spread_RN} and \eqref{eq:drift_fwd_RN} in Corollary \ref{cor:RN} correspond respectively to conditions (i) and (ii) in \cite[Corollary 3.10]{FGGS20} in the quasi-left-continuous case. However, our Corollary \ref{cor:RN} holds under less stringent requirements on the processes $\psi$, $c^i$, $\gamma^i$, for $i\in I$.
\item[b)] Under the additional assumptions that $\mu$ is a Poisson random measure corresponding to the jump measure of an $\R^d$-valued It\^o-semimartingale $X$ and that $\gamma^i(t,T,x)=\beta^i(t,T)x$, for all $i\in I_0$ and $(t,T,x)\in\R^2_+\times\R^d$, it can be easily checked that conditions \eqref{eq:drift_spread_RN} and \eqref{eq:drift_fwd_RN} in Corollary \ref{cor:RN} reduce respectively to the {\em consistency} and {\em drift} conditions of \cite[Theorem 3.15]{CFG:16} in the context of risk-neutral HJM multi-curve models driven by It\^o-semimartingales.
\item[c)] In the case $I_0=\{0\}$, corresponding to a HJM model with a single term structure, Corollary \ref{cor:RN} allows to recover the results of \cite[Propositions 5.3 and 5.6]{BDMKR97}, while requiring the weaker integrability condition \eqref{eq:int_jumps}.
\item[d)] When applied to foreign exchange models (see Section \ref{ex:FXmarkets}), Corollary \ref{cor:RN} extends the result of \cite[Proposition 2.1.15]{phdkoval} by requiring substantially weaker integrability properties.
\end{enumerate}
\end{remark}

\subsection{An example: the minimal market model}

We present a simple example of a model which falls into the real-world HJM setup developed above, while not admitting a risk-neutral probability. The example corresponds to an extension of the {\em minimal market model} (MMM), see \cite[Chapter 13]{PH} and \cite[Section 3.2]{BLNSP10}. 
Denoting by $X^0=\exp(\int_0^{\cdot}r_s\ud s)$ the num\'eraire, we recall that in the stylized MMM the $X^0$-discounted growth-optimal portfolio (GOP) process $\bar{X}^*=(\bar{X}^*_t)_{t\geq0}$ satisfies the SDE
\[
\ud \bar{X}_t^* = \alpha^*(t) \ud t + \sqrt{ \bar{X}_t^*\, \alpha^*(t) } \, \ud W_t,
\qquad \bar{X}^*_0=1,
\]
with a one-dimensional Brownian process $W$ and where the drift $\alpha^*$ is modeled as a function of the form $\alpha^*(t)=\alpha_0\exp(\eta t)$, for all $t\geq0$.
The model is parameterized by a positive initial value $\alpha_0 > 0$ and a constant net growth rate $\eta > 0$. 
We let $\{S^i;i\in I\}$ be a family of strictly positive processes representing the spot processes.
Similarly to \cite[Section 3]{Miller-Platen}, we impose the following assumption.\footnote{We recall that two processes $X=(X_t)_{t\geq0}$ and $Y=(Y_t)_{t\geq0}$ on $(\Omega,\bF,\PP)$ are said to be $\bF$-conditionally independent if
\[
\EE[f(X_T) g(Y_T) | \cF_t] = \EE[f(X_T) | \cF_t]  \EE[g(Y_T)| \cF_t], 
\qquad\text{ for all } 0\leq t\leq T<+\infty,
\]
for all measurable functions $f,g : \R \to \R_+$. This property is obviously satisfied if one of the two processes is deterministic.}

\begin{assumption}\label{ass-GOP}
The following conditions hold:
\begin{enumerate}
\item the processes $\bar{X}^*$ and $X^0$ are $\bF$-conditionally independent;
\item the processes $X^*:=X^0\bar{X}^*$ and $S^i$ are $\bF$-conditionally independent, for each $i\in I$.
\end{enumerate}
\end{assumption}

It is well-known that the MMM does not admit a risk-neutral probability (see \cite[Section 13.3]{PH}). Therefore, prices are computed by relying on the real-world pricing formula (see \cite[Section 10.4]{PH}). In particular, the price $B^0(t,T)$ of a riskless bond is given by
\begin{align}\label{MMM-bond-0}
B^0(t,T) = \EE \bigg[ \frac{X_t^*}{X_T^*} \bigg| \cF_t \bigg] 
= \EE \bigg[ \frac{X^0_t}{X^0_T}  \frac{\bar{X}_t^*}{\bar{X}_T^*} \bigg| \cF_t \bigg] 
= \EE \bigg[ \frac{X^0_t}{X^0_T} \bigg| \cF_t \bigg]  \EE \bigg[ \frac{\bar{X}_t^*}{\bar{X}_T^*}  \bigg| \cF_t \bigg] 
= G(t,T)  M(t,T),
\end{align}
where the \emph{short rate contribution} to the riskless bond price is defined as $G(t,T):=\EE[\exp(-\int_t^T r_s \ud s)|\cF_t]$ and the \emph{market price of risk contribution} is defined as $M(t,T):=\EE [ \bar{X}_t^*/\bar{X}_T^*| \cF_t ]$, for all $0\leq t\leq T$.
According to equation (13.3.4) in \cite{PH}, it holds that
\begin{align}\label{MMM-contribution}
M(t,T) = 1 - \exp \bigg( - \frac{\bar{X}_t^*}{2(\varphi(T) - \varphi(t))} \bigg),
\end{align}
where the transformation $\varphi : \R_+ \to \R_+$ is given by
\[
\varphi(t) = \frac{\alpha_0}{4 \eta} ( e^{\eta t} - 1 ), 
\qquad \text{ for all }t \geq 0.
\]

Risky bond prices $B^i(t,T)$, for $i\in I$, are also computed by relying on the real-world pricing formula:
\be\label{MMM-bond-i}
B^i(t,T) 
= \EE \bigg[ \frac{X_t^*}{X_T^*}  \frac{S^i_T}{S^i_t} \bigg| \cF_t \bigg] 
= \EE \bigg[ \frac{X_t^*}{X_T^*}  \bigg| \cF_t \bigg]  \frac{\EE [ S^i_T | \cF_t ]}{S^i_t}
= B^0(t,T)  \frac{\EE [ S^i_T | \cF_t ]}{S^i_t}.
\ee
We can immediately observe that the definition of riskless and risky bond prices through the real-world pricing formulas \eqref{MMM-bond-0} and \eqref{MMM-bond-i} implies that $1/\bar{X}^*\in\cD$. Under the additional assumption that $\bF$ is the $\PP$-augmented natural filtration of $W$, it can also be shown that $\cD=\{1/\bar{X}^*\}$.
The fact that $1/\bar{X}^*\in\cD$ implies that the simple model described above satisfies the conditions of Theorem \ref{thm:HJM}, regardless of the specification of $\{S^i;i\in I\}$, as can be also verified by means of direct computations.

We proceed with the calculation of the forward rate processes. From \eqref{MMM-bond-0}, we obtain
\[
f^0(t,T) = -\frac{\partial}{\partial T} \ln(B(t,T)) = g(t,T) + m(t,T),
\]
where the short rate contribution to the forward rate is given by $g(t,T):=-\partial_T\ln(G(t,T))$ and the market price of risk contribution is given by $m(t,T):=-\partial_T\ln(M(t,T))$.
We refer to equation (13.3.9) in \cite{PH} for an explicit formula for $m(t,T)$. 
For every $i\in I$, we have from \eqref{MMM-bond-0} and \eqref{MMM-bond-i} that
\[
f^i(t,T) = -\frac{\partial}{\partial T} \ln \bigg( \frac{B^0(t,T)  \EE [ S^i_T | \cF_t ]}{S^i_t}  \bigg) 
= f^0(t,T) + s^i(t,T),
\]
where $s^i(t,T):=-\partial_T\ln(\EE [ S^i_T | \cF_t ])$, for all $t\leq T$.

\begin{remark}\label{rem:deterministic}
Assumption \ref{ass-GOP} is in particular satisfied if the riskless short rate $r$ is deterministic and 
\[
Z^i = \int_0^{\cdot} a_s^i \ud s, 
\]
for a deterministic function $a^i : \R_+ \to \R$, for every $i\in I$. 
In this case, equation \eqref{MMM-bond-0} reduces to
\[
B^0(t,T) = \exp \bigg( -\int_t^T r_s \ud s \bigg) M(t,T)
\]
with $M(t,T)$ given as in \eqref{MMM-contribution}.
Moreover, we have that $\EE[S^i_T|\cF_t]=S^i_0\exp(\int_0^Ta^i_s\ud s)$, for every $i\in I$.
The forward rate processes are then given by
\[
f^0(t,T) = r_T + m(t,T)
\qquad\text{and}\qquad
f^i(t,T) = r_T + m(t,T) - a_T^i = f^0(t,T) - a_T^i.
\]
This simple example admits obvious conditions ensuring the monotonicity of the risky term structures. Indeed, monotonicity holds if $a^i \leq a^j$ and $S^i_0 \leq S^j_0$, for all $i,j\in I$ with $i < j$.
In particular, $a^i \leq a^j$ implies that $f^i(t,T) \geq f^j(t,T)$. As will become clear in Section \ref{sec:SPDE_invariant}, this case corresponds to the situation where the closed convex cone \eqref{cone} is invariant for the real-world HJMM SPDE \eqref{eq:SPDE}.  
\end{remark}

\begin{remark}
The example described in this section illustrates a generic way of constructing real-world models satisfying market viability, without requiring risk-neutral probabilities. One can start from a strictly positive local martingale $D$, a num\'eraire $X^0$ and a family $\{S^i:i\in I\}$ of strictly positive spot processes. Generalizing \eqref{MMM-bond-0} and \eqref{MMM-bond-i}, one can then specify riskless and risky bond prices as
\begin{equation}	\label{eq:model_construction}
B^i(t,T) := \EE\left[\frac{D_T}{D_t}\frac{X^0_t}{X^0_T}\frac{S^i_T}{S^i_t}\bigg|\cF_t\right],
\qquad\text{ for all }0\leq t\leq T<+\infty\text{ and }i\in I_0,
\end{equation}
where $S^0\equiv 1$. 
By construction, $D$ is a LMD and, therefore, the conditions of Theorem \ref{thm:HJM} are satisfied. Moreover, all term structures are fairly priced by $D$, in the sense of Definition \ref{def:fair}.  
\end{remark}

\section{The real-world HJMM SPDE}
\label{sec:SPDE}

In this section, we study existence and uniqueness of solutions to the stochastic partial differential equation (SPDE) arising in the general HJM model for multiple term structures developed in Section \ref{sec:HJM}. In Section \ref{sec-SPDE} we prove a general existence result for SPDEs with random locally Lipschitz coefficients. This result is new in the literature and of independent interest.
In Section \ref{sec-appl-rw-HJMM}, we rely on this result to prove existence and uniqueness of the solution to the real-world Heath-Jarrow-Morton-Musiela (HJMM) SPDE under suitable regularity conditions, extending in several directions the results of \cite{fitate2010} (see Remark \ref{rem:extend_fitate} below).
Section \ref{sec:SPDE_invariant} provides conditions ensuring that the SPDE generates ordered term structures, while Section \ref{sec:affine} addresses the issue of the existence of affine realizations.

\subsection{A general existence and uniqueness result for SPDEs}\label{sec-SPDE}

In this section, we establish a new existence and uniqueness theorem for SPDEs with locally Lipschitz coefficients, where the Lipschitz constants are allowed to be stochastic (Theorem \ref{thm-SPDE}). Our general strategy for the proof is as follows. First, making use of the theory of \cite{Metivier}, we establish an existence and uniqueness result for infinite-dimensional SDEs (Theorem \ref{thm-SDE}). Then, we rely on the method of the ``moving frame'', which has originally been introduced in \cite{FTT:10b}, in order to transfer this existence and uniqueness result to SPDEs in the framework of the semigroup approach (Theorem \ref{thm-SPDE}).

Let $(\Omega,\cF,\bF,\PP)$ be a filtered probability space as introduced in Section \ref{sec:HJM_setup} and supporting a $d$-dimensional Brownian motion $W$ and a homogeneous Poisson random measure $\mu(\ud t,\ud x)$ on $\R_+\times E$ with compensator $\nu(\ud t,\ud x)=F(\ud x)\ud t$, where $E$ is a locally compact space equipped with its Borel sigma-field $\cB(E)$ and $F$ is a sigma-finite measure on $(E,\cB(E))$.  The compensated random measure is denoted by $\tilde{\mu}(\ud t,\ud x) := \mu(\ud t,\ud x) - F(\ud x)\ud t$.

Let $\cH$ be a separable Hilbert space and denote by $L_2(\R^d,\cH)$ the space of  Hilbert-Schmidt operators from $\R^d$ to $\cH$. Let
\begin{align}\label{coeff-a}
&a: \R_+ \times \Omega \times \cH \to \cH,
\\ \label{coeff-b} &b: \R_+ \times \Omega \times \cH \to L_2(\R^d,\cH),
\\ \label{coeff-c} &c: \R_+ \times \Omega \times \cH \times E \to \cH
\end{align}
be mappings such that $a$ and $b$ are $\cP \otimes \cB(\cH)$-measurable and $c$ is $\cP \otimes \cB(\cH) \otimes \cB(E)$-measurable. 

We consider the $\cH$-valued SDE
\begin{equation}	\label{SDE}
dY_t = a(t,Y_t) \ud t + b(t,Y_t) \ud W_t + \int_E c(t,Y_{t-},x) \tilde{\mu}(\ud t,\ud x),
\qquad Y_0 = y_0.
\end{equation}
Given an $\cF_0$-measurable random variable $y_0 : \Omega \to \cH$, an $\cH$-valued c\`{a}dl\`{a}g adapted process $Y = (Y_t)_{t \geq0}$ is called a \emph{strong solution} to the SDE \eqref{SDE} with $Y_0 = y_0$ if it holds a.s. for all $t\in\R_+$ that
\[
\int_0^t \bigg( \| a(s,Y_s) \|_{\cH} + \| b(s,Y_s) \|_{L_2(\R^d,\cH)}^2 + \int_E \| c(s,Y_{s-},x) \|_{\cH}^2 F(\ud x) \bigg) \ud s < +\infty
\]
and
\[
Y_t = y_0 + \int_0^t a(s,Y_s) \ud s + \int_0^t b(s,Y_s) \ud W_s + \int_0^t c(s,Y_{s-},x) \tilde{\mu}(\ud s,\ud x).
\]

\begin{theorem}	\label{thm-SDE}
Suppose that, for each $r \in \R_+$, there exists an optional locally bounded non-negative process $L^{r}$ such that, for all $(t,\omega) \in \R_+ \times \Omega$ and all $y,z \in \cH$ with $\| y \|_{\cH}\vee \| z \|_{\cH} \leq r$, we have
\begin{align}\label{Lip-a}
\| a(t,\omega,y) - a(t,\omega,z) \|_{\cH}^2 &\leq L_t^{r}(\omega) \| y-z \|_{\cH}^2,
\\ \label{Lip-b} \| b(t,\omega,y) - b(t,\omega,z) \|_{L_2(\R^d,\cH)}^2 &\leq L_t^{r}(\omega) \| y-z \|_{\cH}^2,
\\ \label{Lip-c} \int_E \| c(t,\omega,y,x) - c(t,\omega,z,x) \|_{\cH}^2 F(\ud x) &\leq L_t^{r}(\omega) \| y-z \|_{\cH}^2.
\end{align}
Suppose in addition that there exists an optional locally bounded non-negative process $L$ such that, for all $(t,\omega,y) \in \R_+ \times \Omega \times \cH$, we have
\begin{align}\label{LG-a}
\| a(t,\omega,y) \|_{\cH}^2 &\leq L_t(\omega) ( 1 + \| y \|_{\cH}^2 ),
\\ \label{LG-b} \| b(t,\omega,y) \|_{L_2(\R^d,\cH)}^2 &\leq L_t(\omega) ( 1 + \| y \|_{\cH}^2 ),
\\ \label{LG-c} \int_E \| c(t,\omega,y,x) \|_{\cH}^2 F(\ud x) &\leq L_t(\omega) ( 1 + \| y \|_{\cH}^2 ).
\end{align}
Then, for every $\cF_0$-measurable random variable $y_0 : \Omega \to \cH$, there exists a unique strong solution $Y$ to the SDE \eqref{SDE} with $Y_0 = y_0$.
\end{theorem}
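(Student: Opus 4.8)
The plan is to reduce the locally Lipschitz case to the globally Lipschitz case via a stopping-time localization and pasting argument, which is a standard but somewhat delicate procedure in the presence of jumps and stochastic (rather than deterministic) Lipschitz constants. The first step is to establish the result under the stronger hypothesis that the coefficients are \emph{globally} Lipschitz with an optional locally bounded Lipschitz process, i.e., that \eqref{Lip-a}--\eqref{Lip-c} hold with $L^r$ replaced by a single process $L$ not depending on $r$. For this we would invoke the theory of \cite{Metivier}: by a further localization in time against the stopping times $\tau_n := \inf\{t : L_t \geq n\}$ (which exhaust $\R_+$ since $L$ is locally bounded), on each stochastic interval $[\![0,\tau_n]\!]$ the coefficients satisfy a genuine global Lipschitz and linear growth condition with deterministic constants, so the classical infinite-dimensional SDE existence and uniqueness theory applies; pasting these solutions and using uniqueness on the overlaps yields a global strong solution. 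This is essentially the content of the remark in the excerpt that ``making use of the theory of \cite{Metivier}, we establish an existence and uniqueness result for infinite-dimensional SDEs.''

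\textbf{From global to local Lipschitz.} With the globally Lipschitz case in hand, I would introduce for each $r \in \R_+$ a smooth (or merely Lipschitz) cutoff function $\chi_r : \cH \to [0,1]$ with $\chi_r \equiv 1$ on the ball $\{\|y\|_{\cH} \leq r\}$ and $\chi_r \equiv 0$ outside $\{\|y\|_{\cH} \leq r+1\}$, and define truncated coefficients $a^r(t,\omega,y) := \chi_r(y)\, a(t,\omega,y)$, and similarly $b^r$ and $c^r$. Using the local Lipschitz bounds \eqref{Lip-a}--\eqref{Lip-c} together with the linear growth bounds \eqref{LG-a}--\eqref{LG-c}, one checks that $a^r, b^r, c^r$ are globally Lipschitz in $y$ with optional locally bounded Lipschitz process $\tilde{L}^r$ (the Lipschitz constant of the product of a bounded compactly supported Lipschitz function with a function that is Lipschitz on the relevant ball and grows at most linearly — the linear growth is what controls the ``$|\chi_r(y)-\chi_r(z)|\cdot\|a(t,\omega,z)\|$'' cross term outside the ball), and they of course satisfy linear growth. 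Hence the globally Lipschitz result gives a unique strong solution $Y^r$ to \eqref{SDE} with coefficients $a^r,b^r,c^r$ and initial condition $y_0$.

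\textbf{Patching via exit times.} Next I would define $\tau_r := \inf\{t \geq 0 : \|Y^r_t\|_{\cH} > r\}$ (or $\geq r$, taking care with the jump at the exit time, which is why one truncates at $r+1$ rather than at $r$). On $[\![0, \tau_r[\![$ the coefficients $a^r,b^r,c^r$ coincide with $a,b,c$, so $Y^r$ solves \eqref{SDE} there; by uniqueness for the $(r+1)$-truncated equation one obtains the consistency $Y^{r} = Y^{r+1}$ on $[\![0, \tau_r[\![$ and in particular $\tau_r \leq \tau_{r+1}$, so the stopping times are nondecreasing. Set $\tau_\infty := \sup_r \tau_r = \lim_{r\to\infty}\tau_r$ and define $Y$ on $[\![0,\tau_\infty[\![$ by $Y_t := Y^r_t$ for $t < \tau_r$; this is well-defined and is a strong solution to \eqref{SDE} on $[\![0,\tau_\infty[\![$. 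The crux is then to prove nonexplosion, i.e. $\tau_\infty = \infty$ a.s. \textbf{This is the main obstacle.} The argument goes by a Gronwall-type a priori estimate: using It\^o's formula for $\|Y^r_{t \wedge \tau_r}\|_{\cH}^2$ in the Hilbert space $\cH$, the linear growth bounds \eqref{LG-a}--\eqref{LG-c} (which, crucially, hold with the \emph{single} process $L$ independent of $r$), the Burkholder--Davis--Gundy inequality for the stochastic integrals, and a localization against $\sigma_n := \inf\{t : \int_0^t L_s\,\ud s \geq n\}$ to handle the stochastic coefficient $L$, one derives $\EE[\sup_{s \leq t \wedge \tau_r \wedge \sigma_n}\|Y^r_s\|_{\cH}^2] \leq C_n(1 + \EE[\|y_0\|^2])e^{C_n t}$ uniformly in $r$ (first supposing $y_0$ bounded, then removing this by splitting $\Omega$ into the $\cF_0$-sets $\{k \leq \|y_0\| < k+1\}$). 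This uniform bound forces $\PP(\tau_r \leq t \wedge \sigma_n) \to 0$ as $r \to \infty$ for each fixed $t,n$, hence $\tau_\infty \geq \sigma_n$ a.s., and since $\sigma_n \uparrow \infty$ (as $L$ is locally bounded, so $\int_0^\cdot L_s\,\ud s$ is finite-valued) we conclude $\tau_\infty = \infty$. Uniqueness of the global solution follows from the uniqueness of each $Y^r$ together with a localization argument comparing any two solutions up to their common exit times from balls.
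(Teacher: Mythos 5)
Your proposal is correct in outline, but it follows a genuinely different route from the paper. You reduce the locally Lipschitz case by hand: truncate the coefficients with a Lipschitz cutoff in the space variable, handle the random Lipschitz/growth processes by stopping where they are bounded, solve the resulting globally Lipschitz equations, patch along exit times, and prove nonexplosion via an It\^o--BDG--Gronwall estimate based on \eqref{LG-a}--\eqref{LG-c}. The paper instead invokes M\'etivier's general theory for SDEs driven by Hilbert-space-valued semimartingales essentially as a black box: it introduces a control process $A$ for the semimartingale $\overline{W}_t=(t,W_t)$ and the functional associated with $\tilde{\mu}$, rewrites \eqref{Lip-a}--\eqref{LG-c} as the domination conditions required on the sets $\{\sup_{s<t}(\|Y_s\|_{\cH}\vee\|Z_s\|_{\cH})\leq r\}$ (these theorems already accommodate random, merely locally bounded Lipschitz processes, so no truncation, exit-time patching or explicit nonexplosion argument is needed), and then only has to check that M\'etivier's stochastic integral with respect to $W$ agrees with the usual isometric one. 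What each buys: the paper's argument is short and delegates all localization issues to \cite{Metivier} at the cost of that machinery; yours is more elementary and self-contained but must carry out the delicate step you yourself flag, namely nonexplosion, where two details deserve care --- you should localize so that $L$ itself (not merely $\int_0^\cdot L_s\,\ud s$) is bounded before applying Gronwall to the deterministic function $t\mapsto\EE[\sup_{s\leq t\wedge\tau_r\wedge\sigma_n}\|Y^r_s\|^2_{\cH}]$, and you must first justify finiteness of that quantity, since \eqref{LG-c} controls the jumps of $Y^r$ only in an $L^2(F)$ sense, so the solution is not pathwise bounded at the exit time (your truncation at radius $r+1$ does not cap the overshoot). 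Both points are standard fixes, so I regard your strategy as a valid, if longer, alternative proof.
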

\begin{proof}
Let $\overline{W}$ be the $\R^{d+1}$-valued semimartingale given by $\overline{W}_t := (t,W_t)$, for all $t \in \R_+$. Let $r \in \R_+$ be arbitrary and let $A$ be a control process for $\overline{W}$ in the sense of \cite[Definition 23.13]{Metivier}, which exists by \cite[Theorem 23.11]{Metivier}. Since $L^{r}$ is locally bounded, we can define the process $\bar{L}^{r}$ by $\bar{L}^r_t := 3 \int_0^t L_s^{r} \ud A_s$, for all $t \in \R_+$. Clearly, $\bar{L}^{r}$ is c\`{a}dl\`{a}g, increasing and non-negative. Let $\lambda$ and $\beta$ be the functional and the increasing process associated to the compensated Poisson random measure $\tilde{\mu}$ as in \cite[Theorem 31.9]{Metivier}. 
For an $\cH$-valued c\`{a}dl\`{a}g adapted process $Y$, we will use the abbreviations $a(t,Y)$, $b(t,Y)$ and $c(t,Y,x)$ for $a(t,\omega,Y_{t-}(\omega))$, $b(t,\omega,Y_{t-}(\omega))$ and $c(t,\omega,Y_{t-}(\omega),x)$, respectively. Moreover, we will use the shorthand notation $\bar{c}(t,Y)$ for the map $x \mapsto c(t,\omega,Y_{t-}(\omega),x)$. According to the Remark on \cite[page 245]{Metivier}, for all $\cH$-valued c\`{a}dl\`{a}g adapted processes $Y,Z$ and all $t \in \R_+$, it holds that
\[
\int_0^t \lambda_s ( \bar{c}(s,Y) - \bar{c}(s,Z) ) \ud \beta(s) 
= \int_0^t \int_E \| c(s,Y,x) - c(s,Z,x) \|^2_{\cH} F(\ud x) \ud s.
\]
Furthermore, for every Hilbert-Schmidt operator $T \in L_2(\R^d,\cH)$ we have that $T \in L(\R^d,\cH)$ and
\[
\| T \|_{L(\R^d,\cH)} \leq \| T \|_{L_2(\R^d,\cH)}.
\]
Hence, using \eqref{Lip-a}--\eqref{Lip-c}, for all $\cH$-valued c\`{a}dl\`{a}g adapted processes $Y,Z$ and $t \in \R_+$, we obtain
\begin{align*}
&\int_0^t \big( \| a(s,Y) - a(s,Z) \|_{\cH}^2 + \| b(s,Y) - b(s,Z) \|_{L(\R^d,\cH)}^2 \big) \ud A_s + \int_0^t \lambda_s ( \bar{c}(s,Y) - \bar{c}(s,Z) ) \ud \beta(s)
\\ &\leq 3 \int_0^t L_s^r \| Y_{s-} - Z_{s-} \|_{\cH}^2 \ud A_s = \int_0^t \| Y_{s-} - Z_{s-} \|_{\cH}^2 \ud \bar{L}_s^{r} \leq \int_0^t \sup_{u < s} \| Y_u - Z_u \|_{\cH}^2 \ud \bar{L}_s^{r}
\end{align*}
on the set $\{ \sup_{s < t} ( \| Y_s \|_{\cH} \vee \| Z_s \|_{\cH} ) \leq r \}$. 
Similarly, we define the process $\bar{L} := 3 \int_0^{\cdot} L_s \ud A_s$ and using \eqref{LG-a}--\eqref{LG-c}, for every $\cH$-valued c\`{a}dl\`{a}g adapted process $Y$ and $t \in \R_+$, we obtain that
\begin{align*}
&\int_0^t \big( \| a(s,Y) \|_{\cH}^2 + \| b(s,Y) \|_{L(\R^d,\cH)}^2 \big) \ud A_s + \int_0^t \lambda_s ( \bar{c}(s,Y) ) \ud \beta(s)
\\ &\leq 3 \int_0^t L_s ( 1 + \| Y_{s-} \|_{\cH}^2 ) \ud A_s = \int_0^t ( 1 + \| Y_{s-} \|_{\cH}^2 ) \ud \bar{L}_s \leq \int_0^t \Big( 1 + \sup_{u < s} \| Y_u \|_{\cH}^2 \Big) \ud \bar{L}_s.
\end{align*}
By \cite[Theorems 34.7 and 35.2]{Metivier}, for each $\cF_0$-measurable random variable $y_0 : \Omega \to \cH$ there exists a unique strong solution $Y$ to the SDE \eqref{SDE} with $Y_0 = y_0$, where the stochastic integral with respect to the Brownian motion process $W$ is understood in the sense of \cite[Chapter 26]{Metivier}.

In order to complete the proof, it remains to prove that the stochastic integral with respect to $W$ coincides with the isometric stochastic integral. To this effect, note that the process $A$ given by $A_t := 2(t+1)$, for all $t \in \R_+$, is a control process of $W$ in the sense of \cite[Definition 23.13]{Metivier}. Indeed, let $G$ be a separable Hilbert space, $\Phi$ an elementary $L(\R^d,G)$-valued predictable process and  $\tau$ a stopping time. Then, by Doob's $L^2$-inequality and the It\^{o} isometry, we have that
\begin{align*}
\EE \Bigg[ \sup_{t < \tau} \bigg\| \int_0^t \Phi_s \ud W_s \bigg\|_G^2 \Bigg] \leq 4 \, \EE \bigg[ \int_0^{\tau} \| \Phi_s \|_{L_2(\R^d,G)}^2 \ud s \bigg] \leq \EE \bigg[ A_{\tau} \int_0^{\tau} \| \Phi_s \|_{L_2(\R^d,G)}^2 \ud A_s \bigg].
\end{align*}
Furthermore, by the linear growth condition \eqref{LG-b}, for every $\cH$-valued c\`{a}dl\`{a}g adapted process $Y$ we have that
\[
\int_0^t \| b(s,Y) \|_{L_2(\R^d,\cH)}^2 \ud s < +\infty\text{ a.s., for all } t \in \R_+.
\]
In view of \cite[Remark 26.2]{Metivier}, this suffices to complete the proof.
\end{proof}

We shall now make use of Theorem \ref{thm-SDE} in order to prove a general existence and uniqueness result for SPDEs with random locally Lipschitz coefficients. 
Let $H$ be a separable Hilbert space and $A$ the generator of a $C_0$-semigroup $(\cS_t)_{t \geq 0}$ on $H$. Furthermore, let
\begin{align*}
&\alpha: \R_+ \times \Omega \times H \to H,
\\ &\beta: \R_+ \times \Omega \times H \to L_2(\R^d,H),
\\ &\gamma: \R_+ \times \Omega \times H \times E \to H
\end{align*}
be mappings such that $\alpha$ and $\beta$ are $\cP \otimes \cB(H)$-measurable and $\gamma$ is $\cP \otimes \cB(H) \otimes \cB(E)$-measurable. 
We consider the following $H$-valued SPDE:
\begin{equation}\label{SPDE}
dX_t = \bigl( A X_t + \alpha(t,X_t) \bigr) \ud t + \beta(t,X_t) \ud W_t + \int_E \gamma(t,X_{t-},x) \tilde{\mu}(\ud t,\ud x),
\qquad X_0 =  x_0.
\end{equation}
Given an $\cF_0$-measurable random variable $x_0 : \Omega \to H$, an $H$-valued c\`{a}dl\`{a}g adapted process $X = (X_t)_{t \geq0}$ is called a \emph{mild solution} to the SPDE \eqref{SPDE} with $X_0 = x_0$ if it holds a.s. for all $t\in\R_+$ that
\[
\int_0^t \bigg( \| \alpha(s,X_s) \|_H + \| \beta(s,X_s) \|_{L_2(\R^d,H)}^2 + \int_E \| \gamma(s,X_{s-},x) \|_{H}^2 F(\ud x) \bigg) \ud s < +\infty,
\]
and
\[
X_t = \cS_t x_0 + \int_0^t \cS_{t-s} \alpha(s,X_s) \ud s + \int_0^t \cS_{t-s} \beta(s,X_s) \ud W_s
 + \int_0^t \cS_{t-s} \gamma(s,X_{s-},x) \tilde{\mu}(\ud s,\ud x).
\]

\begin{theorem}\label{thm-SPDE}
Suppose that the semigroup $(\cS_t)_{t \geq 0}$ is pseudo-contractive, i.e., there exists a constant $\eta \geq 0$ such that
\begin{align}\label{pseudo-contr}
\| \cS_t \| \leq e^{\eta t}, 
\quad \text{for all $t \geq 0$.}
\end{align}
Suppose in addition that, for each $r \in \R_+$, there exists an optional locally bounded non-negative process $L^{r}$ such that, for all $(t,\omega) \in \R_+ \times \Omega$ and $h,g \in H$ with $\| h \|_{H}\vee \| g \|_{H} \leq r$,  we have
\begin{align}\label{Lip-1}
\| \alpha(t,\omega,h) - \alpha(t,\omega,g) \|_{H}^2 &\leq L_t^{r}(\omega) \| h-g \|_{H}^2,
\\ \label{Lip-2} \| \beta(t,\omega,h) - \beta(t,\omega,g) \|_{L_2(\R^d,H)}^2 &\leq L_t^{r}(\omega) \| h-g \|_{H}^2,
\\ \label{Lip-3} \int_E \| \gamma(t,\omega,h,x) - \gamma(t,\omega,g,x) \|_{H}^2 F(\ud x) &\leq L_t^{r}(\omega) \| h-g \|_{H}^2.
\end{align}
Suppose in addition that there exists an optional locally bounded non-negative process $L$ such that, for all $(t,\omega,h) \in \R_+ \times \Omega \times H$, we have
\begin{align}\label{LG-1}
\| \alpha(t,\omega,h) \|_{H}^2 &\leq L_t(\omega) ( 1 + \| h \|_{H}^2 ),
\\ \label{LG-2} \| \beta(t,\omega,h) \|_{L_2(\R^d,H)}^2 &\leq L_t(\omega) ( 1 + \| h \|_{H}^2 ),
\\ \label{LG-3} \int_E \| \gamma(t,\omega,h,x) \|_{H}^2 F(\ud x) &\leq L_t(\omega) ( 1 + \| h \|_{H}^2 ).
\end{align}
Then, for every $\cF_0$-measurable random variable $x_0 : \Omega \to H$, there exists a unique mild solution $X$ to the SPDE \eqref{SPDE} with $X_0 = x_0$.
\end{theorem}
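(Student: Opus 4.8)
The plan is to reduce the SPDE \eqref{SPDE} to an SDE of the form \eqref{SDE} on an enlarged Hilbert space via the method of the moving frame of \cite{FTT:10b}, and then to invoke Theorem \ref{thm-SDE}. Since the semigroup $(S_t)_{t\geq0}$ is pseudo-contractive, the dilation result of \cite{FTT:10b} provides a separable Hilbert space $\cH$, a $C_0$-group $(U_t)_{t\in\R}$ on $\cH$ satisfying $\|U_t\|\leq e^{\eta|t|}$ for all $t\in\R$, and continuous linear operators $\ell\in L(H,\cH)$ and $\pi\in L(\cH,H)$ with $\pi\circ\ell=\Id_H$ and $\pi\circ U_t\circ\ell=S_t$ for every $t\geq0$.

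Next I would define the $\cH$-valued coefficients $\bar a(t,\omega,y):=U_{-t}\ell\,\alpha(t,\omega,\pi U_t y)$, $\bar b(t,\omega,y):=U_{-t}\ell\,\beta(t,\omega,\pi U_t y)$ and $\bar c(t,\omega,y,x):=U_{-t}\ell\,\gamma(t,\omega,\pi U_t y,x)$, which are $\cP\otimes\cB(\cH)$- resp.\ $\cP\otimes\cB(\cH)\otimes\cB(E)$-measurable thanks to the strong continuity of $(U_t)$. The crucial step is to verify that $\bar a,\bar b,\bar c$ satisfy the assumptions of Theorem \ref{thm-SDE}. The linear growth bounds (\ref{LG-1})--(\ref{LG-3}) transfer immediately: using $\|U_{\pm t}\|\leq e^{\eta t}$ and $\|\pi U_t y\|_H\leq\|\pi\|e^{\eta t}\|y\|_{\cH}$ one obtains, e.g., $\|\bar a(t,\omega,y)\|_{\cH}^2\leq\bar L_t(\omega)(1+\|y\|_{\cH}^2)$ with $\bar L_t:=e^{2\eta t}\bigl(1\vee\|\pi\|^2e^{2\eta t}\bigr)\|\ell\|^2 L_t$, which is again optional and locally bounded, and analogously for $\bar b$ and $\bar c$. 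For the local Lipschitz bounds (\ref{Lip-1})--(\ref{Lip-3}) a little more care is needed, since $\pi U_t$ maps the ball $\{\|y\|_{\cH}\leq r\}$ of $\cH$ into the ball of radius $\|\pi\|e^{\eta t}r$ of $H$, whose radius grows with $t$; this is the main technical obstacle. It can be handled by a patching construction: assuming (as one may, after replacing $L^r$ by $\max_{1\leq n\leq\lceil r\rceil}L^n$) that $r\mapsto L^r$ is nondecreasing, the process $\bar L^r_t:=\|\ell\|^2\|\pi\|^2\sum_{n\geq1}\ind_{(n-1,n]}(t)\,e^{4\eta n}L^{\|\pi\|e^{\eta n}r}_t$ is optional and locally bounded and dominates the Lipschitz constants of $\bar a(t,\omega,\cdot)$, $\bar b(t,\omega,\cdot)$ and $\bar c(t,\omega,\cdot,x)$ on $\{\|y\|_{\cH}\vee\|z\|_{\cH}\leq r\}$ for every $t$.

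Theorem \ref{thm-SDE}, applied with the $\cF_0$-measurable initial value $\ell x_0$, then yields a unique strong solution $Y$ to $\ud Y_t=\bar a(t,Y_t)\ud t+\bar b(t,Y_t)\ud W_t+\int_E\bar c(t,Y_{t-},x)\tilde\mu(\ud t,\ud x)$ with $Y_0=\ell x_0$. I would set $X_t:=\pi U_t Y_t$: this is an $H$-valued, c\`adl\`ag, adapted process, and its mild-solution integrability follows from the linear growth of $\alpha,\beta,\gamma$ and the pathwise local boundedness of $Y$. Commuting the bounded operator $\pi U_t$ with the Bochner and stochastic integrals in the equation for $Y$, and using $\pi U_{t-s}\ell=S_{t-s}$ for $0\leq s\leq t$ together with $\pi U_s Y_{s-}=X_{s-}$ (which follows from the strong continuity of $(U_t)$ and $Y$ being c\`adl\`ag), one recovers the mild-solution identity
\[
X_t=S_t x_0+\int_0^t S_{t-s}\alpha(s,X_s)\ud s+\int_0^t S_{t-s}\beta(s,X_s)\ud W_s+\int_0^t\int_E S_{t-s}\gamma(s,X_{s-},x)\tilde\mu(\ud s,\ud x),
\]
so that $X$ is a mild solution of \eqref{SPDE}. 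For uniqueness, given any mild solution $X$ of \eqref{SPDE} I would reverse the transformation, putting $Y_t:=\ell x_0+\int_0^t U_{-s}\ell\alpha(s,X_s)\ud s+\int_0^t U_{-s}\ell\beta(s,X_s)\ud W_s+\int_0^t\int_E U_{-s}\ell\gamma(s,X_{s-},x)\tilde\mu(\ud s,\ud x)$; using the mild-solution identity one checks that $\pi U_t Y_t=X_t$ for all $t$, hence $\bar a(s,Y_s)=U_{-s}\ell\alpha(s,X_s)$ and likewise for $\bar b,\bar c$, so that $Y$ is a strong solution of the above SDE. By the uniqueness part of Theorem \ref{thm-SDE}, $Y$ is unique, and therefore so is $X=\pi U_\cdot Y_\cdot$. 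Besides the radius-expansion point in the Lipschitz estimate, the only genuinely delicate part is the routine bookkeeping needed to move $\pi U_t$ through the Bochner and stochastic integrals and to handle the left limits in the jump term.
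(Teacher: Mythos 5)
Your proposal is correct and follows essentially the same route as the paper: the moving-frame dilation of $(S_t)_{t\geq 0}$ from \cite{FTT:10b}, the transformed coefficients $U_{-t}\ell\,\alpha(t,\cdot,\pi U_t\,\cdot)$ (and analogously for $\beta$, $\gamma$), and a reduction to Theorem \ref{thm-SDE}. The differences are only in implementation and both work: you absorb the time-dependent radius expansion by patching the Lipschitz processes over unit intervals (which does require a short localization argument for the local boundedness of the patched process), whereas the paper works on each finite horizon $[0,T]$ with $\bar L^r=L^{e^{\eta T}r}$ and concatenates solutions, exploiting $\|U_t\|=e^{\eta t}$ so that the exponential factors cancel exactly; and you verify the strong-SDE/mild-SPDE correspondence (including the left-limit identity $\pi U_sY_{s-}=X_{s-}$ and the uniqueness transfer) directly, where the paper delegates this to the results of \cite{Tappe-YW} in Proposition \ref{prop-X-Y}.
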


The result of Theorem \ref{thm-SPDE} will follow directly from Propositions \ref{prop-SDE-trans} and \ref{prop-X-Y} below. As a preliminary, we state the following lemma, which is a consequence of \cite[Proposition 8.7]{FTT:10b} and its proof.

\begin{lemma}\label{prop-diagram}
Suppose that there exists a constant $\eta \geq 0$ such that \eqref{pseudo-contr} holds. Then there exist another separable Hilbert space $\cH$, a $C_0$-group $(\mathcal{U}_t)_{t \in \mathbb{R}}$ on $\cH$ satisfying
\begin{align}\label{pseudo-contr-U}
\| \mathcal{U}_t \| = e^{\eta t}, \quad \text{for all $t \in \R$,}
\end{align}
and an isometric embedding $\ell \in L(H,\cH)$ such that
\[
\pi\, \mathcal{U}_t \ell = \cS_t, \quad \text{for all $t \in \mathbb{R}_+$,}
\]
where $\pi := \ell^*$ denotes the orthogonal projection from $\cH$ onto $H$.
\end{lemma}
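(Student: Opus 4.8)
The plan is to reduce the claim to the classical unitary dilation of a strongly continuous contraction semigroup, which is the substance of the cited \cite[Proposition 8.7]{FTT:10b} and its proof; the pseudo-contractivity \eqref{pseudo-contr} enters only through a scalar rescaling, applied once at the beginning and once at the end. Concretely, I would first pass to the rescaled family $T_t := e^{-\eta t}S_t$, $t\geq0$. The semigroup property of $(S_t)_{t\geq0}$ together with multiplicativity of $t\mapsto e^{-\eta t}$ shows that $(T_t)_{t\geq0}$ is again a $C_0$-semigroup on $H$, and \eqref{pseudo-contr} gives $\|T_t\|\leq1$ for all $t\geq0$, so $(T_t)_{t\geq0}$ is a contraction semigroup.

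Next I would invoke the unitary dilation theorem for $C_0$-contraction semigroups (this is the content extracted from the proof of \cite[Proposition 8.7]{FTT:10b}): there exist a separable Hilbert space $\cH$, an isometric embedding $\ell\in L(H,\cH)$ with orthogonal projection $\pi:=\ell^*\in L(\cH,H)$, and a $C_0$-group $(\mathcal{U}_t)_{t\in\R}$ of unitary operators on $\cH$ such that $\pi\mathcal{U}_t\ell=T_t$ for all $t\geq0$. Separability of $\cH$ is obtained by passing to the minimal dilation, i.e. by restricting to the closed linear span of $\{\mathcal{U}_t\ell h:t\in\R,\ h\in H\}$, and $\|\mathcal{U}_t\|=1$ for every $t\in\R$ because each $\mathcal{U}_t$ is unitary.

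Finally I would set $U_t:=e^{\eta t}\mathcal{U}_t$ for $t\in\R$ and verify the asserted properties. Since $e^{\eta(s+t)}=e^{\eta s}e^{\eta t}$, one has $U_{s+t}=U_sU_t$, so $(U_t)_{t\in\R}$ is a one-parameter group; strong continuity is inherited from $(\mathcal{U}_t)_{t\in\R}$ together with continuity of the scalar factor, so $(U_t)_{t\in\R}$ is a $C_0$-group. Moreover $\|U_t\|=e^{\eta t}\|\mathcal{U}_t\|=e^{\eta t}$ for all $t\in\R$, which is \eqref{pseudo-contr-U}, and for $t\geq0$ one computes
\[
\pi U_t\ell=e^{\eta t}\,\pi\mathcal{U}_t\ell=e^{\eta t}T_t=e^{\eta t}e^{-\eta t}S_t=S_t,
\]
which is the remaining assertion.

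The step I expect to be the main obstacle is the dilation itself: producing the group $(\mathcal{U}_t)_{t\in\R}$, verifying its strong continuity, and arranging that $\cH$ can be taken separable is where the real work lies, and it is precisely what \cite[Proposition 8.7]{FTT:10b} and its proof supply. By contrast, the two scalar rescalings surrounding it, and the verification that $U_t=e^{\eta t}\mathcal{U}_t$ is a $C_0$-group of the required norm with the intertwining property, are routine.
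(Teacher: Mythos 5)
Your proposal is correct and follows essentially the same route as the paper, which proves nothing new here but simply invokes \cite[Proposition 8.7]{FTT:10b} and its proof — i.e. exactly the rescaling $T_t=e^{-\eta t}S_t$ to a contraction semigroup, the Sz.-Nagy unitary dilation (made separable by passing to the minimal dilation), and the rescaling back $U_t=e^{\eta t}\mathcal{U}_t$, which indeed yields $\|U_t\|=e^{\eta t}$ for all $t\in\R$ and $\pi U_t\ell=S_t$ for $t\geq0$.
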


We now consider the $\cH$-valued SDE \eqref{SDE} with coefficients \eqref{coeff-a}--\eqref{coeff-c} given by
\begin{align}\label{a-def}
a(t,\omega,y) &:= \cU_{-t} \ell \alpha(t,\omega,\pi \cU_t y),
\\ \label{b-def} b(t,\omega,y) &:= \cU_{-t} \ell \beta(t,\omega,\pi \cU_t y),
\\ \label{c-def} c(t,\omega,y,x) &:= \cU_{-t} \ell \gamma(t,\omega,\pi \cU_t y,x).
\end{align}
Note that $a$ and $b$ are $\cP \otimes \cB(\cH)$-measurable and $c$ is $\cP \otimes \cB(\cH) \otimes \cB(E)$-measurable. This follows because the projection $(t,\omega) \mapsto t$ is $\cP / \cB(\R_+)$-measurable and the mapping $(t,y) \mapsto \cU_t y$ is continuous.

\begin{proposition}\label{prop-SDE-trans}
Suppose that the assumptions of Theorem \ref{thm-SPDE} are satisfied. Then, for every $\cF_0$-measurable random variable $y_0 : \Omega \to \cH$, there exists a unique strong solution $Y$ to the SDE \eqref{SDE} with coefficients given as in \eqref{a-def}--\eqref{c-def} and with $Y_0 = y_0$.
\end{proposition}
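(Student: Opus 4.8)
The plan is to deduce the statement directly from Theorem \ref{thm-SDE}, applied to the $\cH$-valued SDE (\ref{SDE}) whose coefficients $a,b,c$ are given by (\ref{a-def})--(\ref{c-def}). The required $\cP\otimes\cB(\cH)$- resp. $\cP\otimes\cB(\cH)\otimes\cB(E)$-measurability of these coefficients has already been recorded above, so the whole task reduces to verifying, for the transformed coefficients, the locally Lipschitz conditions (\ref{Lip-a})--(\ref{Lip-c}) and the linear growth conditions (\ref{LG-a})--(\ref{LG-c}). The basic tools will be the operator bounds coming from Lemma \ref{prop-diagram}: $\ell$ is an isometric embedding, $\pi=\ell^*$ satisfies $\|\pi\|\le1$, and by (\ref{pseudo-contr-U}) one has $\|U_t\|=e^{\eta t}$ and $\|U_{-t}\|=e^{-\eta t}$ for all $t\ge0$, so that $\|U_{-t}\|\,\|U_t\|=1$; moreover $\|U_{-t}\ell\,T\|_{L_2(\R^d,\cH)}\le e^{-\eta t}\|T\|_{L_2(\R^d,H)}$ for every $T\in L_2(\R^d,H)$.

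For the Lipschitz part I would fix $r\in\R_+$ and observe that $\|y\|_\cH\vee\|z\|_\cH\le r$ forces $\|\pi U_ty\|_H\vee\|\pi U_tz\|_H\le e^{\eta t}r$; applying (\ref{Lip-1}) to $\alpha$ with radius $e^{\eta t}r$ and using the above bounds then gives
\[
\|a(t,\omega,y)-a(t,\omega,z)\|_\cH^2\le e^{-2\eta t}L_t^{e^{\eta t}r}(\omega)\,\|\pi U_t(y-z)\|_H^2\le e^{-2\eta t}e^{2\eta t}L_t^{e^{\eta t}r}(\omega)\|y-z\|_\cH^2=L_t^{e^{\eta t}r}(\omega)\|y-z\|_\cH^2,
\]
and the same computation handles $b$ and $c$ via (\ref{Lip-2})--(\ref{Lip-3}). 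The delicate point, which I expect to be the only genuinely non-routine step, is that the natural Lipschitz process here, $t\mapsto L_t^{e^{\eta t}r}$, carries a time-dependent radius. I would resolve this as follows: it suffices to verify the conditions for $r\in\N$ (replace a real radius by its ceiling), and after replacing $L^r$ by $\max(L^1,\dots,L^r)$ one may assume $r\mapsto L^r_t(\omega)$ is non-decreasing; then, writing $r_n:=e^{\eta n}r$, the process $\widetilde{L}^r:=\sum_{n\ge1}L^{r_n}\ind_{[n-1,n)}$ is optional, is locally bounded (localize each $L^{r_n}$ by $\sigma^n_j\uparrow\infty$ and set $\tau_j:=j\wedge\bigwedge_{n\le j}\sigma^n_j$), and dominates $t\mapsto L_t^{e^{\eta t}r}$ by monotonicity, so (\ref{Lip-a})--(\ref{Lip-c}) hold with $\widetilde{L}^r$.

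For the linear growth part no radius issue arises, and (\ref{LG-1}) yields
\[
\|a(t,\omega,y)\|_\cH^2\le e^{-2\eta t}L_t(\omega)\bigl(1+\|\pi U_ty\|_H^2\bigr)\le e^{-2\eta t}L_t(\omega)\bigl(1+e^{2\eta t}\|y\|_\cH^2\bigr)\le 2L_t(\omega)\bigl(1+\|y\|_\cH^2\bigr),
\]
since $e^{-2\eta t}(1+e^{2\eta t})\le2$; the analogous bounds for $b$ and $c$ follow from (\ref{LG-2})--(\ref{LG-3}), so (\ref{LG-a})--(\ref{LG-c}) hold with the optional, locally bounded process $2L$. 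Having verified all hypotheses of Theorem \ref{thm-SDE}, that theorem provides, for every $\cF_0$-measurable $y_0:\Omega\to\cH$, a unique strong solution $Y$ of (\ref{SDE}) with the coefficients (\ref{a-def})--(\ref{c-def}) and $Y_0=y_0$, which is exactly the assertion. Everything except the time-dependent-radius bookkeeping is an immediate consequence of the operator norm identities for $U$, $\ell$ and $\pi$.
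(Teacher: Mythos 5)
Your proposal is correct in substance and uses exactly the same ingredients as the paper's proof: the operator bounds from Lemma \ref{prop-diagram} ($\ell$ isometric, $\| U_{\pm t} \| = e^{\pm\eta t}$), the resulting cancellation in the Lipschitz and growth estimates, and a final appeal to Theorem \ref{thm-SDE}. Where you genuinely differ is in how the time-dependent radius $e^{\eta t} r$ is handled. The paper works on a finite horizon $[0,T]$, where the fixed radius $e^{\eta T} r$ allows it to take $\bar{L}^{r} := L^{\exp(\eta T) r}$ (one of the processes given by the hypotheses, so nothing extra to verify), obtains the solution there, and then concatenates the solutions $Y^N$ over the intervals $(N-1,N]$. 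You instead build a single piecewise-in-time Lipschitz process $\widetilde{L}^r = \sum_{n\ge1} L^{e^{\eta n} r}\ind_{[n-1,n)}$ and apply Theorem \ref{thm-SDE} once, globally. Your route avoids the paper's (slightly informal) step of invoking Theorem \ref{thm-SDE} on a finite horizon and gluing solutions, at the price of having to verify that $\widetilde{L}^r$ is optional and locally bounded.

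Two details in that verification need repair, though neither affects the viability of the approach. First, the localizing sequence you propose, $\tau_j := j\wedge\bigwedge_{n\le j}\sigma^n_j$, need not increase to infinity: the terms with $n$ close to $j$ can remain small (for instance, $\sigma^n_j$ of order $j/n$ is a legitimate localizing sequence for each fixed $n$, yet then $\bigwedge_{n\le j}\sigma^n_j \le 1$ for every $j$). One must choose the localizing indices level by level, e.g.\ pick $j(n,k)$ with $\PP(\sigma^n_{j(n,k)} < k)\le 2^{-n-k}$ and set $\tau_k := k\wedge\inf_{n\ge1}\sigma^n_{j(n,k)}$; Borel--Cantelli then gives $\tau_k = k$ for all large $k$ a.s., while the stopped process is bounded by the maximum of finitely many constants. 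Second, the phrase ``dominates $t\mapsto L_t^{e^{\eta t} r}$ by monotonicity'' presupposes monotonicity in real radii, which your max-construction arranges only over integer radii; this is immaterial, since on the block $[n-1,n)$ you can apply hypotheses (\ref{Lip-1})--(\ref{Lip-3}) directly at radius $e^{\eta n} r$ (because $\|\pi U_t y\|_H \le e^{\eta t} r \le e^{\eta n} r$), so (\ref{Lip-a})--(\ref{Lip-c}) hold with $\widetilde{L}^r$ without introducing the diagonal process at all. With these adjustments the argument is complete; the measurability and linear-growth parts are exactly as in the paper.
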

\begin{proof}
It suffices to prove that, for each $T > 0$, there exists a unique strong solution $Y^T = (Y_t^T)_{t \in [0,T]}$ to the SDE \eqref{SDE} with $Y_0^T = y_0$. Indeed, in this case the process
\[
Y := y_0 \bbI_{\{ 0 \} \times \Omega} + \sum_{N=1}^{+\infty} Y^N \bbI_{(N-1,N] \times \Omega}
\]
is the unique strong solution to the SDE \eqref{SDE} with $Y_0 = y_0$.
For $T > 0$ and $r \in \R_+$ arbitrary, let us define the process $\bar{L}^{r} := L^{\exp(\eta T) r}$, where the constant $\eta \geq 0$ stems from \eqref{pseudo-contr}.
Let $(t,\omega) \in [0,T] \times \Omega$ and $y,z \in \cH$ with $\| y \|_{\cH}\vee \| z \|_{\cH} \leq r$ be arbitrary. Then, by \eqref{pseudo-contr-U} we have that
\[
\| \pi \cU_t y \|_H\vee \| \pi \cU_t z \|_H \leq e^{\eta T} r.
\]
Therefore, taking into account \eqref{a-def}, \eqref{pseudo-contr-U} and \eqref{Lip-1}, we obtain
\begin{align*}
\| a(t,\omega,y) - a(t,\omega,z) \|_{\cH}^2 &= e^{-2\eta t} \| \alpha(t,\omega,\pi \cU_t y) - \alpha(t,\omega,\pi \cU_t z) \|_H^2
\\ &\leq e^{-2\eta t} \bar{L}_t^{r}(\omega) \| \pi \cU_t y - \pi \cU_t z \|_H^2 \leq \bar{L}_t^{r}(\omega) \| y-z \|_{\cH}^2,
\end{align*}
showing that condition \eqref{Lip-a} is satisfied with $L^{r}$ replaced by $\bar{L}^{r}$. Similarly, we can show that conditions \eqref{Lip-b} and \eqref{Lip-c} are satisfied with $L^{r}$ replaced by $\bar{L}^{r}$. Now, let $(t,\omega,y) \in [0,T] \times \Omega \times \cH$ be arbitrary.
Taking into account \eqref{a-def}, \eqref{pseudo-contr-U} and \eqref{LG-1}, we obtain
\begin{align*}
\| a(t,\omega,y) \|_{\cH}^2 &= e^{-2\eta t} \| \alpha(t,\omega,\pi \cU_t y) \|_H^2 \leq e^{-2\eta t} L_t(\omega) (1 + \| \pi \cU_t y \|_H^2 ) 
\\ &\leq e^{-2\eta t} L_t(\omega) (1 + e^{2 \eta t} \| y \|_{\cH}^2 ) \leq L_t(\omega) (1 + \| y \|_{\cH}^2),
\end{align*}
showing that condition \eqref{LG-a} is satisfied. In an analogous way we can show that conditions \eqref{LG-b} and \eqref{LG-c} are satisfied as well. The claim then follows directly by an application of Theorem  \ref{thm-SDE}.
\end{proof}

\begin{proposition}\label{prop-X-Y}
Suppose that, for every $\cF_0$-measurable random variable $y_0 : \Omega \to \cH$, there exists a unique strong solution $Y$ to the SDE \eqref{SDE} with coefficients given as in \eqref{a-def}--\eqref{c-def} and with $Y_0 = y_0$. Then, for every $\cF_0$-measurable random variable $x_0 : \Omega \to H$, there exists a unique mild solution $X$ to the SPDE \eqref{SPDE} with $X_0 = x_0$.
\end{proposition}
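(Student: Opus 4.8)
\noindent\emph{Proof proposal.}

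The plan is to transfer existence and uniqueness from the SDE (\ref{SDE}) with the transformed coefficients (\ref{a-def})--(\ref{c-def}) to the SPDE (\ref{SPDE}) by means of the ``moving frame'' furnished by Lemma \ref{prop-diagram}, in the spirit of \cite{FTT:10b}. I keep the notation $\cH$, $(U_t)_{t\in\R}$, $\ell$ and $\pi=\ell^*$ of that lemma, so that $\pi U_r\ell=S_r$ for $r\ge0$, $\pi\ell=\id_H$, and $\|U_r\|=e^{\eta r}$ for all $r\in\R$; recall that $\ell$ is isometric (also as a map $L_2(\R^d,H)\to L_2(\R^d,\cH)$) and that a fixed bounded operator may be pulled inside Lebesgue, Brownian and Poisson integrals. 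From $a(s,\omega,y)=U_{-s}\ell\,\alpha(s,\omega,\pi U_sy)$ (and the analogues for $b,c$) and the local boundedness of $s\mapsto\|U_{\pm s}\|$ one sees that the integrability requirements in the definitions of a strong solution of (\ref{SDE}) and a mild solution of (\ref{SPDE}) transfer freely along the correspondence described below (e.g.\ $\|\alpha(s,X_s)\|_H\le e^{\eta s}\|a(s,Y_s)\|_{\cH}$ whenever $X_s=\pi U_sY_s$, and conversely $\|a(s,Z_s)\|_{\cH}\le\|\alpha(s,X_s)\|_H$ whenever $X_s=\pi U_sZ_s$).

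For existence, given an $\cF_0$-measurable $x_0:\Omega\to H$, I would set $y_0:=\ell x_0$, which is $\cF_0$-measurable and $\cH$-valued, take the unique strong solution $Y$ of (\ref{SDE}) with coefficients (\ref{a-def})--(\ref{c-def}) and $Y_0=y_0$ provided by the hypothesis, and define $X_t:=\pi U_tY_t$. Then $X$ is $H$-valued, adapted and c\`{a}dl\`{a}g (using strong continuity and local boundedness of $(U_t)$ together with the c\`{a}dl\`{a}g property of $Y$), $X_0=\pi\ell x_0=x_0$, and $X_{s-}=\pi U_sY_{s-}$. Applying the fixed bounded operator $\pi U_t$ to the integral equation satisfied by $Y$, commuting it with the three integrals, and using $\pi U_tU_{-s}\ell=\pi U_{t-s}\ell=S_{t-s}$ for $0\le s\le t$ together with $a(s,Y_s)=U_{-s}\ell\,\alpha(s,X_s)$ (and likewise for $b,c$), the identity for $Y$ turns into
\[ X_t=S_tx_0+\int_0^tS_{t-s}\alpha(s,X_s)\,\ud s+\int_0^tS_{t-s}\beta(s,X_s)\,\ud W_s+\int_0^t\int_ES_{t-s}\gamma(s,X_{s-},x)\,\tilde\mu(\ud s,\ud x), \]
which is the mild solution identity; the integrability requirement transfers as noted. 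Hence $X$ is a mild solution of (\ref{SPDE}) with $X_0=x_0$.

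For uniqueness, let $X$ be an arbitrary mild solution of (\ref{SPDE}) with $X_0=x_0$. The idea is \emph{not} to transport $X$ by $U_{-\cdot}\ell$ — which does not solve the SDE, because $(U_t)$ does not leave $\ell(H)$ invariant — but to introduce the $\cH$-valued process
\[ Z_t:=\ell x_0+\int_0^tU_{-s}\ell\,\alpha(s,X_s)\,\ud s+\int_0^tU_{-s}\ell\,\beta(s,X_s)\,\ud W_s+\int_0^t\int_EU_{-s}\ell\,\gamma(s,X_{s-},x)\,\tilde\mu(\ud s,\ud x), \]
which is well defined (the integrands are predictable, since $(s,T)\mapsto U_{-s}T$ is continuous, and the integrability conditions follow from those of $X$), adapted and c\`{a}dl\`{a}g. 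The key computation is that applying $\pi U_t$ to $Z_t$ and using $\pi U_tU_{-s}\ell=S_{t-s}$ reproduces exactly the right-hand side of the mild equation for $X$, so that $\pi U_tZ_t=X_t$ for all $t$ (first for fixed $t$ a.s., then for all $t$ a.s.\ by c\`{a}dl\`{a}g-ness); in particular $X_{s-}=\pi U_sZ_{s-}$. Substituting $X_s=\pi U_sZ_s$ and $X_{s-}=\pi U_sZ_{s-}$ back into the definition of $Z$ shows that $Z$ is a strong solution of (\ref{SDE}) with coefficients (\ref{a-def})--(\ref{c-def}) and $Z_0=\ell x_0$. By the uniqueness part of the hypothesis, $Z$ coincides with the process $Y$ of the existence step, so $X_t=\pi U_tZ_t=\pi U_tY_t$ is uniquely determined.

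The step I expect to be the real obstacle is precisely the inversion of the moving frame in the uniqueness argument: the obvious candidate $U_{-\cdot}\ell X$ fails, and one must recognise that the correct object is the process $Z$ built from the SDE dynamics at $X$ itself, whose projection $\pi U_\cdot Z$ recovers $X$ because $X$ satisfies the mild equation. Once this is in place, the remaining points — predictability of the transformed integrands, the transfer of the integrability conditions, and the adaptedness and c\`{a}dl\`{a}g regularity of $\pi U_\cdot Y$ and of $Z$ — are routine bookkeeping. Combining this proposition with Proposition \ref{prop-SDE-trans} then yields Theorem \ref{thm-SPDE}.
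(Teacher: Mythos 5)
Your argument is correct and takes essentially the same route as the paper: the paper's proof simply invokes Corollaries 3.9 and 3.11 of \cite{Tappe-YW} (noting they carry over verbatim to the setting with an additional Poisson random measure), and those corollaries encode exactly the two transfers you perform -- projecting the strong SDE solution via $\pi U_t$ to get existence, and reconstructing an $\cH$-valued strong solution from the mild solution's own dynamics (your process $Z$) to get uniqueness. Your proposal is thus a written-out version of the cited argument, including the correct handling of the key point that one cannot simply transport $X$ by $U_{-\cdot}\ell$.
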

\begin{proof}
Corollaries 3.9 and 3.11 from \cite{Tappe-YW} also hold true in the present setup with an additional Poisson random measure, with identical proofs. Combining these two results completes the proof.
\end{proof}

At this stage, the proof of Theorem \ref{thm-SPDE} is a direct consequence of Propositions \ref{prop-SDE-trans} and \ref{prop-X-Y}. 

We close this section with an auxiliary result about the space of Hilbert-Schmidt operators, which will be useful later. For $k \in \N$, we denote by $H^k$ the $k$-fold cartesian product $H \times \ldots \times H$, which is a separable Hilbert space when endowed with the norm
\[
\| h \|_{H^k} := \bigg( \sum_{i=1}^k \| h^i \|_H^2 \bigg)^{1/2}, \quad h \in H^k.
\]
Similarly, the space $H^{k \times d}$, endowed with the Frobenius norm
\[
\| h \|_{H^{k \times d}} := \bigg( \sum_{i=1}^k \sum_{j=1}^d \| h^{ij} \|_H^2 \bigg)^{1/2}, \quad h \in H^{k \times d},
\]
is also a separable Hilbert space. Moreover, we recall that the space $L_2(\R^d,H^k)$ of Hilbert-Schmidt operators, equipped with the Hilbert-Schmidt norm
\[
\| T \|_{L_2(\R^d,H^k)} := \bigg( \sum_{j=1}^d \| T e_j \|_{H^k}^2 \bigg)^{1/2},
\quad T\in L_2(\R^d,H^k),
\]
is also a separable Hilbert space.

\begin{lemma}\label{lemma-HS}
$L_2(\R^d,H^k)\cong H^{k \times d}$ (i.e., $L_2(\R^d,H^k)$ and $H^{k \times d}$ are isometrically isomorphic). 
\end{lemma}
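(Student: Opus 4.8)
The plan is to exhibit an explicit isometric isomorphism between $L_2(\R^d,H^k)$ and $H^{k\times d}$ and verify it preserves norms. Let $\{e_1,\ldots,e_d\}$ denote the standard orthonormal basis of $\R^d$. Given $T\in L_2(\R^d,H^k)$, each image $Te_j$ lies in $H^k$, so I write $Te_j=(T^{1}e_j,\ldots,T^{k}e_j)$ with $T^{i}e_j\in H$ for $i=1,\ldots,k$. The map $\Phi:L_2(\R^d,H^k)\to H^{k\times d}$ is then defined by $\Phi(T):=(T^{i}e_j)_{1\le i\le k,\,1\le j\le d}$. This is clearly linear, and I would check it is a bijection: linearity and injectivity are immediate since $T$ is determined by its values on the basis $e_1,\ldots,e_d$, and surjectivity follows because any family $(h^{ij})\in H^{k\times d}$ arises from the operator $T$ sending $e_j\mapsto(h^{1j},\ldots,h^{kj})$ and extended linearly, which is automatically Hilbert-Schmidt since $\R^d$ is finite-dimensional.

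The key computation is the norm identity. By definition of the Hilbert-Schmidt norm relative to the orthonormal basis $\{e_j\}$ of $\R^d$, and then unfolding the $H^k$-norm,
\[
\| T \|_{L_2(\R^d,H^k)}^2 = \sum_{j=1}^d \| Te_j \|_{H^k}^2 = \sum_{j=1}^d \sum_{i=1}^k \| T^{i}e_j \|_H^2 = \sum_{i=1}^k \sum_{j=1}^d \| T^{i}e_j \|_H^2 = \| \Phi(T) \|_{H^{k\times d}}^2,
\]
where the interchange of the two finite sums is trivial and the last equality is the definition of the Frobenius norm on $H^{k\times d}$. Hence $\Phi$ is an isometry, and combined with bijectivity and linearity it is an isometric isomorphism, proving $L_2(\R^d,H^k)\cong H^{k\times d}$.

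There is essentially no obstacle here: the statement is a routine unpacking of definitions, exploiting that the domain $\R^d$ is finite-dimensional (so every linear operator is Hilbert-Schmidt and the Hilbert-Schmidt norm reduces to a finite sum over a fixed orthonormal basis). The only point requiring a sentence of care is that the Hilbert-Schmidt norm is, a priori, defined via a sum over an arbitrary orthonormal basis of the domain; since $\R^d$ is finite-dimensional this sum is finite and one may freely use the standard basis $\{e_j\}$, and the value is basis-independent by the usual argument. Everything else is bookkeeping with finite double sums.
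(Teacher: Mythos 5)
Your proposal is correct and is essentially the same argument as the paper's proof, which simply exhibits the inverse of your map $\Phi$, namely $T(h)e_j:=(h^{ij})_{i=1,\ldots,k}$ for $h\in H^{k\times d}$; your norm computation and bijectivity check are the routine details the paper leaves implicit. Note also that the paper defines $\|\cdot\|_{L_2(\R^d,H^k)}$ directly via the standard basis $\{e_j\}$, so even your remark on basis-independence is not strictly needed in this setting.
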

\begin{proof}
For each $h \in H^{k \times d}$, we assign $T(h) \in L_2(\R^d,H^k)$ as
\[
T(h) e_j := (h^{ij})_{i=1,\ldots,k}, 
\qquad \text{for $j=1,\ldots,d$}, 
\]
which provides an isometric isomorphism $T : H^{k \times d} \to L_2(\R^d,H^k)$.
\end{proof}

\subsection{Well-posedness of the real-world HJMM SPDE}
\label{sec-appl-rw-HJMM}

In this section, we establish a general existence and uniqueness result for the SPDE arising in the HJM framework developed in Section \ref{sec:HJM}, under the assumption that there exists an LMD (i.e., $\cD\neq\emptyset$) and making use of Theorem \ref{thm-SPDE}.

We adopt the Musiela parametrization (see, e.g., \cite{F:01}) for the instantaneous forward rates and write 
\[
\eta^i_t(\xi):=f^i(t,t+\xi),
\qquad\text{ for all $(t,\xi)\in\R^2_+$ and $i\in I_0$}. 
\]
From now on, we assume that the set $I_0$ is finite and identify it by $\{0,1,\ldots,m\}$, for some $m\in\N$.
The evolution of the $(m+1)$-dimensional family of forward curves is described by the process $\eta=(\eta^0,\eta^1,\ldots,\eta^m)$, taking values in a suitable Hilbert space $H$ of functions $h:\R_+\rightarrow\R^{m+1}$ that will be specified below.
According to this parameterization and assuming continuity of the map $T\mapsto f^i(t,T)$, for all $i=0,1,\ldots,m$, equation \eqref{eq:fwd_rate} can be written in vector form as follows:
\be	\label{eq:shift}
\eta_t(\xi) = \cS_t\eta_0(\xi) + \int_0^t\cS_{t-s}\alpha(s,s+\xi)\ud s + \int_0^t\cS_{t-s}\beta(s,s+\xi)\ud W_s + \int_0^t\int_E\cS_{t-s}\gamma(s,s+\xi,x)\tilde{\mu}(\ud s,\ud x),
\ee
for all $t\geq0$, where $(\cS_t)_{t\geq0}$ is the shift semigroup acting on the second time argument of $\alpha$, $\beta$ and $\gamma$, 
so that $\cS_t \alpha(s,\cdot)=\alpha(s,t+\cdot)$, and similarly for the other terms appearing in \eqref{eq:shift}.
As in Section \ref{sec-SPDE}, we work under the standing assumption that $\mu(\ud t,\ud x)$ is a homogeneous Poisson random measure with compensator $\nu(\ud t,\ud x)=F(\ud x)\ud t$ and $E$  is a locally compact space.

We consider diffusive and jump volatilities $\beta$ and $\gamma$ with the following structure:
\be	\label{eq:volatilities}
\beta(t,t+\xi) = \beta(\eta_{t-})(\xi)
\qquad\text{and}\qquad
\gamma(t,t+\xi,x) = \gamma(\eta_{t-},x)(\xi),
\ee
for all $(t,\xi)\in\R^2_+$ and $x\in E$, where the functions $\beta$ and $\gamma$ on the right-hand sides in \eqref{eq:volatilities} will satisfy the requirements of Assumption \ref{ass:SPDE} below.
For each $i=0,1,\ldots,m$, the $i$-th components $\beta^i(\eta_{t-})(\xi)$ and $\gamma^i(\eta_{t-},x)(\xi)$ are respectively a $d$-dimensional vector and a scalar, for all $t>0$, $\xi\in\R_+$ and $x\in E$, and we write $\beta^{i,j}(\eta_{t-})(\xi)$ for the $j$-th component of the vector $\beta^i(\eta_{t-})(\xi)$, for $j=1,\ldots,d$.

\begin{remark}
The structure \eqref{eq:volatilities} allows for state-dependent volatilities, meaning that the volatility of each forward rate can depend on the whole family $\eta_{t-}=(\eta^0_{t-},\eta^1_{t-},\ldots,\eta^m_{t-})$ of forward curves. It is possible to consider volatilities that depend on additional sources of randomness beyond the forward curves, in particular on the family $\{S^i_{t-};i\in I\}$ of spot processes. The results of this section can be extended to this more general setting with no changes in the proofs (see also Remark \ref{rem:random_drift} below).
\end{remark}

In this setting, an $H$-valued stochastic process $\eta$ satisfying \eqref{eq:shift} with volatilities as in \eqref{eq:volatilities} is a {\em mild solution} to the following Heath-Jarrow-Morton-Musiela (HJMM) SPDE:
\be	\label{eq:SPDE}
\ud\eta_t = \left(\frac{\partial}{\partial\xi}\eta_t + \alpha(t,\eta_{t-})\right)\ud t + \beta(\eta_{t-})\ud W_t + \int_E\gamma(\eta_{t-},x)\tilde{\mu}(\ud t,\ud x).
\ee
We refer to \eqref{eq:SPDE} as the {\em real-world} HJMM SPDE, since it is formulated under the real-world probability $\PP$ (in contrast, standard formulations of the HJMM SPDE are under a risk-neutral probability $\QQ$, see for instance \cite{fitate2010}).
By Theorem \ref{thm:HJM}, the existence of LMDs (which in turn ensures market viability) corresponds to the existence of $\lambda\in L^2_{\rm loc}(W)$ and $\psi\in G_{\rm loc}(\mu)$ with $\psi>-1$ such that the drift term 
\[
\alpha(t,\eta_{t-})=(\alpha^0(t,\eta_{t-}),\alpha^1(t,\eta_{t-}),\ldots,\alpha^m(t,\eta_{t-}))
\] 
has the following structure (see Remark \ref{rem:HJM}):
\be	\label{eq:drift_HJM}	\begin{aligned}
\alpha^i(t,\eta_{t-})
&= \beta^i(\eta_{t-})^{\top}\bar{\beta}^i(\eta_{t-})
- (\lambda_t+b^i_t)^{\top}\beta^i(\eta_{t-})	\\
&\quad+ \int_E\gamma^i(\eta_{t-},x)\left(1-e^{-\bar{\gamma}^i(\eta_{t-},x)}\bigl(1+\psi_t(x)\bigr)\bigl(1+c^i_t(x)\bigr)\right)F(\ud x),
\end{aligned}	\ee
for each $i=0,1,\ldots,m$, where we set $b^0\equiv 0$ and $c^0\equiv0$ and make use of the notation 
\[
\bar{\beta}^i(\eta_{t-})(\cdot):=\int_0^{\cdot}\beta^i(\eta_{t-})(u)\ud u
\qquad\text{ and }\qquad
\bar{\gamma}^i(\eta_{t-},x)(\cdot):=\int_0^{\cdot}\gamma^i(\eta_{t-},x)(u)\ud u.
\]

\begin{remark}	\label{rem:random_drift}
Observe that, for each $i=0,1,\ldots,m$, the drift term $\alpha^i(t,\eta_{t-})$ depends on $\lambda_t$ and $\psi_t$ as well as on the processes $b^i_t$ and $c^i_t$ which are associated to the $i$-th spot process $S^i_t$. This implies that the drift term is not entirely determined by the forward curves $\eta_{t-}$, rather it depends on additional sources of randomness, unlike in the classical setting of the HJMM SPDE formulated under a risk-neutral measure with a single term structure, as considered in \cite{fitate2010}. 
This explains why, in the context of the real-world HJM framework of Section \ref{sec:HJM}, we are naturally led to consider SPDEs with random locally Lipschitz coefficients (see Theorem \ref{thm-SPDE}), which are not covered by the existing theory.
\end{remark}

To proceed, we need to define the space of functions on which we shall study the SPDE \eqref{eq:SPDE}. To this effect, we fix an arbitrary constant $\rho>0$ and denote by $H_{\rho}^k$, for $k\in\N$, the space of all absolutely continuous functions $h:\R_+\rightarrow\R^k$ such that
\be	\label{eq:norm}
\|h\|_{\rho,k} := \left(|h(0)|_k^2+\int_{\R_+}|h'(s)|_k^2e^{\rho s}\ud s\right)^{1/2} < +\infty,
\ee
where $|\cdot|_k$ denotes the Euclidean norm in $\R^k$.  Moreover, we define
\[
H^{0,k}_{\rho} := \{h\in H^k_{\rho} : |h(\infty)|_k=0\}.
\]
The space $H^k_{\rho}$ has been already considered in \cite[Section 4.1]{CFG:16} and represents an extension to the multi-dimensional setting of the so-called Filipovi\'c space first introduced in \cite{F:01} in the one-dimensional case. We collect in Appendix \ref{app-space} several technical properties of the space $H_{\rho} = H_{\rho}^1$.

\begin{remark}\label{rem-product-space}
Note that the space $H^k_{\rho}$ corresponds to the $k$-fold cartesian product $H_{\rho} \times \ldots \times H_{\rho}$ and the norm defined in \eqref{eq:norm} can be expressed as $\| h \|_{\rho,k} = ( \sum_{i=1}^k \| h^i \|_{\rho}^2 )^{1/2}$, for  $h \in H_{\rho}^{k}$.
\end{remark}

The main goal of this section is to establish existence and uniqueness of a  mild solution to the real-world HJMM SPDE \eqref{eq:SPDE} with drift \eqref{eq:drift_HJM} on the space $H:=H^{m+1}_{\rho}$. 
We are interested in {\em global solutions}, i.e., $H$-valued stochastic processes that solve \eqref{eq:SPDE} on arbitrary time intervals. 

We now introduce a set of assumptions which will ensure existence and uniqueness of a global mild solution to the HJMM SPDE \eqref{eq:SPDE}. As in Appendix \ref{app-space}, for each $\rho'>\rho$ we introduce the constant
\[
K_{\rho,\rho'} := \bigg( 1 + \frac{1}{\sqrt{\rho}} \bigg) \sqrt{\frac{1}{\rho'(\rho' - \rho)}}.
\]
For any constant $K > 0$, we denote by $W_K : \R_+ \to \R_+$ the inverse of the strictly increasing function
\begin{align}\label{V_K_def}
V_K : \R_+ \to \R_+, \quad V_K(r) := r (1+r) \exp(Kr).
\end{align}
It can be easily seen that $W_K(r) \leq r$, for all $r \in \R_+$.
As usual, for $p\in(0,+\infty)$, we denote by $L^p(F)$ the space of all measurable functions $f:E\to\R$ such that $\int_E|f(x)|^pF(\ud x)<+\infty$.

\begin{assumption}	\label{ass:SPDE}
There exist $\rho'>\rho$, an optional locally bounded non-negative process $\Lambda$, a non-negative function $\kappa \in L^1(F) \cap L^2(F) \cap L^3(F)$, a constant $M_{\beta} \in \R_+$, an increasing function $M_{\gamma} : \R_+ \to \R_+$ and a constant $N_{\gamma} \in \R_+$ such that the following hold:
\begin{enumerate}[(i)]
\item
the processes $\lambda : \Omega\times\R_+  \to \R^d$ and $b :\Omega\times \R_+ \to \R^{(m+1) \times d}$ are predictable and locally bounded;
\item
the functions $\psi : \Omega\times\R_+  \times E \to (-1,+\infty)$ and $c : \Omega\times\R_+  \times E \to (-1,+\infty)^{m+1}$ are $\cP \otimes \cB(E)$-measurable and, for all $i=0,1,\ldots,m$, it holds that
\begin{align}\label{est-psi-c}
\big| \big( 1+\psi_t(\omega,x) \big) \big( 1 + c_t^i(\omega,x) \big) \big| \leq \Lambda_t(\omega) \kappa(x), 
\qquad \text{ for all }(\omega,t,x) \in \Omega\times \R_+ \times E;
\end{align}
\item 
the function $\beta : H_{\rho}^{m+1} \to H_{\rho}^{(m+1) \times d}$ satisfies
\begin{align}\label{beta-loc-Lip}
\beta &\in \Lip^{\loc} \big( H_{\rho}^{m+1},H_{\rho}^{0,(m+1) \times d} \big),
\\ \label{LG-beta-i} \|\beta(h)\|_{\rho,(m+1) \times d} &\leq M_{\beta} \sqrt{1+\|h\|_{\rho,m+1}}, 
\qquad\text{ for all } h\in H^{m+1}_{\rho};
\end{align}
\item the function $\gamma : H_{\rho}^{m+1} \times E \to H_{\rho}^{m+1}$ is $\cB(H_{\rho}^{m+1}) \otimes \cB(E)$-measurable and, for each $r \in \R_+$ and $h,g \in H_{\rho}^{m+1}$ with $\| h \|_{\rho,m+1}\vee \| g \|_{\rho,m+1} \leq r$, it holds that
\begin{align}\label{int-E-2-ass}
\| \gamma(h,x) - \gamma(g,x) \|_{\rho',m+1} &\leq \kappa(x) M_{\gamma}(r) \| h - g \|_{\rho,m+1},
 \qquad\text{for all $x \in E$},
\end{align}
and
\begin{align}\label{int-E-3-ass}
\| \gamma(h,x) \|_{\rho',m+1} &\leq W_{K_{\rho,\rho'}} \big( N_{\gamma} \kappa(x)(1+\| h \|_{\rho,m+1}) \big),
\qquad \text{ for all } (h,x)\in H_{\rho}^{m+1} \times E.
\end{align}
\end{enumerate}
\end{assumption}

\begin{remark}
Note that condition \eqref{int-E-2-ass} implies that
\begin{align*}
\gamma(\cdot,x) \in \Lip^{\loc}(H^{m+1}_{\rho},H^{0,m+1}_{\rho'}),
 \qquad\text{for all $x \in E$.}
\end{align*}
\end{remark}

\begin{remark}	\label{rem:extend_fitate}
Beyond the extension to a multi-dimensional setup, Assumption \ref{ass:SPDE} significantly weakens the requirements of \cite{fitate2010} by replacing global Lipschitz continuity and the boundedness condition imposed in \cite[Assumption 3.1]{fitate2010} with local Lipschitz continuity and a growth condition, respectively.
In particular, we explicitly allow for unbounded coefficients $\beta$ and $\gamma$. Example \ref{example:unbounded} provides a model with unbounded $\beta$ and $\gamma$ for which Assumption \ref{ass:SPDE} is fulfilled (and, therefore, Theorem \ref{thm-HJMM-SPDE} applies). This example covers the classical risk-neutral setup and thus demonstrates that our results strictly extend those of \cite{fitate2010}, even within the traditional risk-neutral setup.
Besides its mathematical interest, this generalization is essential to accommodate models that may fail to admit a risk-neutral probability. Indeed, imposing the requirements of \cite{fitate2010} under the real-world measure would typically translate into restrictive assumptions on the market prices of risk, often forcing the true martingale property of an LMD and, hence, the existence of a risk-neutral measure. Our framework avoids this structural limitation while still covering the classical risk-neutral setup as a special instance.
\end{remark}


As shown in the following proposition, Assumption \ref{ass:SPDE} suffices to ensure that the drift term \eqref{eq:drift_HJM} of SPDE \eqref{eq:SPDE} satisfies a linear growth condition and is locally Lipschitz. This represents a crucial step towards the applicability of the general existence and uniqueness result of Theorem \ref{thm-SPDE} to the real-world HJMM SPDE \eqref{eq:SPDE}.

\begin{proposition}	\label{prop:alpha}
Suppose that Assumption \ref{ass:SPDE} holds.
Then the following holds:
\begin{enumerate}
\item the function $\alpha$ takes values in $H^{0,m+1}_{\rho}$ and is $\cP \otimes \cB(H^{m+1}_{\rho})$-measurable;
\item for each $r\in\R_+$, there exist an optional locally bounded non-negative process $L^r$ and a constant $K(r) \in \R_+$ such that, for all $(\omega,t) \in \Omega\times\R_+ $ and $h,g\in H^{m+1}_{\rho}$ with $\|h\|_{\rho,m+1}\vee \|g\|_{\rho,m+1}\leq r$, we have
\begin{align}\label{eq:alpha_Lip}
\|\alpha(t,\omega,h)-\alpha(t,\omega,g)\|_{\rho,m+1}^2 &\leq L_t^r(\omega) \|h-g\|_{\rho,m+1}^2,
\\ \label{eq:beta_Lip} \| \beta(h) - \beta(g) \|_{L_2(\R^d,H_{\rho}^{m+1})}^2 &\leq K(r) \|h-g\|_{\rho,m+1}^2,
\\ \label{eq:gamma_Lip} \int_E \| \gamma(h,x) - \gamma(g,x) \|_{\rho,m+1}^2 F(\ud x) &\leq K(r) \|h-g\|_{\rho,m+1}^2;
\end{align}
\item there exist an optional locally bounded non-negative process $L$ and a constant $K \in \R_+$ such that, for all $(\omega,t,h) \in \Omega\times\R_+ \times H$, we have
\begin{align}\label{eq:alpha_LG}
\|\alpha(t,\omega,h)\|_{\rho,m+1}^2 &\leq L_t(\omega)(1+\|h\|_{\rho,m+1}^2),
\\ \label{eq:beta_LG} \|\beta(h)\|_{L_2(\R^d,H_{\rho}^{m+1})}^2 &\leq K(1+\|h\|_{\rho,m+1}^2),
\\ \label{eq:gamma_LG} \int_E \|\gamma(h,x)\|_{\rho,m+1}^2 F(\ud x) &\leq K(1+\|h\|_{\rho,m+1}^2).
\end{align}
\end{enumerate}
\end{proposition}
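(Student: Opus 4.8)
The plan is to reduce the whole statement to estimates for the drift $\alpha$, since the bounds \eqref{eq:beta_Lip}, \eqref{eq:gamma_Lip}, \eqref{eq:beta_LG} and \eqref{eq:gamma_LG} follow almost directly from parts (iii)--(iv) of Assumption \ref{ass:SPDE}. Indeed, by Lemma \ref{lemma-HS} we may identify $L_2(\R^d,H^{m+1}_{\rho})$ with $H^{(m+1)\times d}_{\rho}$, so that \eqref{LG-beta-i} yields \eqref{eq:beta_LG} (with $K=M_{\beta}^2$) and the local Lipschitz property \eqref{beta-loc-Lip} yields \eqref{eq:beta_Lip}. Since $\|\cdot\|_{\rho,k}\leq\|\cdot\|_{\rho',k}$ because $\rho<\rho'$, estimate \eqref{int-E-2-ass} gives $\int_E\|\gamma(h,x)-\gamma(g,x)\|_{\rho,m+1}^2F(\ud x)\leq M_{\gamma}(r)^2\,\|\kappa\|_{L^2(F)}^2\,\|h-g\|_{\rho,m+1}^2$, which proves \eqref{eq:gamma_Lip}, while \eqref{int-E-3-ass} together with $w_K(r)\leq r$ gives $\|\gamma(h,x)\|_{\rho,m+1}\leq\kappa(x)(1+\|h\|_{\rho,m+1})$, whence \eqref{eq:gamma_LG} follows upon integrating and using $\kappa\in L^2(F)$.

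For the drift, I would fix $i\in\{0,1,\ldots,m\}$ and, writing $1-e^{-\bar{\gamma}^i(h,x)}(1+\psi_t(x))(1+c^i_t(x))=(1-e^{-\bar{\gamma}^i(h,x)})+e^{-\bar{\gamma}^i(h,x)}(1-(1+\psi_t(x))(1+c^i_t(x)))$, split $\alpha^i=\alpha^i_1-\alpha^i_2+\alpha^i_3+\alpha^i_4$ with
\begin{align*}
\alpha^i_1(h) &:= \beta^i(h)^{\top}\bar{\beta}^i(h), \\
\alpha^i_2(t,\omega,h) &:= \bigl(\lambda_t(\omega)+b^i_t(\omega)\bigr)^{\top}\beta^i(h), \\
\alpha^i_3(h) &:= \int_E\gamma^i(h,x)\bigl(1-e^{-\bar{\gamma}^i(h,x)}\bigr)F(\ud x), \\
\alpha^i_4(t,\omega,h) &:= \int_E\gamma^i(h,x)e^{-\bar{\gamma}^i(h,x)}\bigl(1-(1+\psi_t(\omega,x))(1+c^i_t(\omega,x))\bigr)F(\ud x).
\end{align*}
The HJM-type terms $\alpha^i_1$ and $\alpha^i_3$ are handled through the structural properties of the Filipovi\'c-type space collected in Appendix \ref{app-space}: the bilinear map $(f,g)\mapsto f\cdot\bar{g}$, with $\bar{g}(\cdot):=\int_0^{\cdot}g(u)\ud u$, is bounded (hence locally Lipschitz) from $H^0_{\rho}\times H^0_{\rho}$ into $H^0_{\rho}$, which takes care of $\alpha^i_1$ once composed with $\beta$; and the map $g\mapsto g\cdot(e^{-\bar{g}}-1)$ is well-defined and locally Lipschitz from $H^0_{\rho'}$ into $H^0_{\rho}$, with the sharp bound $\|g(e^{-\bar{g}}-1)\|_{\rho}\leq V_{K_{\rho,\rho'}}(\|g\|_{\rho'})$, the constant $K_{\rho,\rho'}$ arising from the embedding estimate $\|\bar{g}\|_{\infty}\leq K_{\rho,\rho'}\|g\|_{\rho'}$. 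Since $V_{K_{\rho,\rho'}}$ is increasing and $V_{K_{\rho,\rho'}}\circ W_{K_{\rho,\rho'}}=\mathrm{id}$, constraint \eqref{int-E-3-ass} gives $V_{K_{\rho,\rho'}}(\|\gamma^i(h,x)\|_{\rho'})\leq\kappa(x)(1+\|h\|_{\rho,m+1})$, so that $\alpha^i_3$ is a well-defined $H^0_{\rho}$-valued Bochner integral with $\|\alpha^i_3(h)\|_{\rho}\leq\|\kappa\|_{L^1(F)}(1+\|h\|_{\rho,m+1})$; the corresponding local Lipschitz estimate follows analogously from \eqref{int-E-2-ass} and $\kappa\in L^1(F)$.

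For $\alpha^i_2$, which is linear in $\beta^i(h)$, the estimates follow from the local boundedness of $\lambda$ and $b$ (Assumption \ref{ass:SPDE}(i)) --- this supplies the optional locally bounded factor entering $L^r$ and $L$ --- together with the local Lipschitz property and the growth bound \eqref{LG-beta-i} of $\beta$. For $\alpha^i_4$, I would use $\|\gamma^i(h,x)e^{-\bar{\gamma}^i(h,x)}\|_{\rho}\leq\|\gamma^i(h,x)\|_{\rho}+V_{K_{\rho,\rho'}}(\|\gamma^i(h,x)\|_{\rho'})$, the pointwise bound $|1-(1+\psi_t(x))(1+c^i_t(x))|\leq 1+\Lambda_t(\omega)\kappa(x)$ from \eqref{est-psi-c}, and \eqref{int-E-3-ass}; integrating against $F$, the growth estimate consumes $\kappa\in L^1(F)\cap L^2(F)$, while the local Lipschitz estimate --- which involves a difference of the triple product $\gamma^i(h,x)\cdot e^{-\bar{\gamma}^i(h,x)}\cdot(1+\psi_t(x))(1+c^i_t(x))$ and hence a factor $\kappa^3$ after using \eqref{int-E-2-ass} and the fact that $e^{-\bar{\gamma}^i}$ defines a bounded multiplier on $H_{\rho}$, with norm controlled on bounded sets in terms of $\kappa$ --- consumes $\kappa\in L^3(F)$. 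Collecting the four estimates, and using that sums, products and squares of optional locally bounded non-negative processes are again of this type, produces $L^r$ and $L$ of the required form; measurability and the $H^{0,m+1}_{\rho}$-range then follow since each of $\alpha^i_1,\ldots,\alpha^i_4$ is built by composition from $\cP\otimes\cB(H^{m+1}_{\rho})$-measurable maps taking values in $H^0_{\rho}$ (the HJM-type terms land in $H^0_{\rho}$ by Appendix \ref{app-space}, and $\alpha^i_2(h)\in H^0_{\rho}$ because $\beta(h)\in H^{0,(m+1)\times d}_{\rho}$).

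I expect the main obstacle to be the analysis of the nonlinearity $g\mapsto g(e^{-\bar{g}}-1)$ underlying $\alpha^i_3$ (and of the multiplier $e^{-\bar{\gamma}^i}$ entering $\alpha^i_4$): establishing that it maps $H^0_{\rho'}$ into $H^0_{\rho}$, is locally Lipschitz, and obeys the sharp bound $\|g(e^{-\bar{g}}-1)\|_{\rho}\leq V_{K_{\rho,\rho'}}(\|g\|_{\rho'})$ requires the multiplication and Sobolev-type embedding properties of the Filipovi\'c-type space in a somewhat delicate combination, and it is precisely this sharp bound --- together with the tailored definitions of $V_K$, $W_K$, $w_K$ and $K_{\rho,\rho'}$ --- that makes the subsequent integration against $F$ go through using only $\kappa\in L^1(F)\cap L^2(F)\cap L^3(F)$. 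Everything else reduces to routine applications of the triangle and Cauchy--Schwarz inequalities.
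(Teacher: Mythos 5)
Your proposal is correct in substance and reaches the same estimates, but it organizes the argument differently from the paper. The paper splits $\alpha^i$ into only three pieces, $\alpha^i_1=\sum_j\beta^{i,j}\bar\beta^{i,j}$, $\alpha^i_2=\sum_j(\lambda^j+b^{i,j})\beta^{i,j}$ and a single jump term $\alpha^i_3=\int_E\gamma^i(h,x)\bigl(1-(1+\psi_t)(1+c^i_t)e^{-\bar\gamma^i(h,x)}\bigr)F(\ud x)$, and then delegates all the delicate Filipovi\'c-space work to the standalone Propositions \ref{prop-beta}, \ref{prop-gamma-1} and \ref{prop-gamma} of Appendix \ref{app-space}: the stochastic multiplier $(1+\psi_t(x))(1+c^i_t(x))$ is treated as a parameter $\lambda(z,x)$ with $|\lambda(z,x)|\le\Lambda(z)\kappa(x)$ (via \eqref{est-psi-c}), so that Lipschitz and boundedness functions of the whole map $h\mapsto\gamma(h,x)(1-\lambda e^{-\cI\gamma(h,x)})$ come out bounded by $\kappa,\kappa^2,\kappa^3$ after the key inequality $V_{K_{\rho,\rho'}}(B_{\gamma(\cdot,x)}(r))\le\kappa(x)(1+r)$ from \eqref{int-E-3-ass}; the estimates \eqref{eq:gamma_Lip}, \eqref{eq:gamma_LG} are then read off the same proposition rather than derived separately. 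You instead split the jump drift additively into a deterministic part $\gamma^i(1-e^{-\bar\gamma^i})$ and a stochastic-multiplier part $\gamma^ie^{-\bar\gamma^i}\bigl(1-(1+\psi)(1+c^i)\bigr)$, which is a perfectly viable alternative and arguably more transparent about where each power of $\kappa$ and the process $\Lambda$ enter; the price is that you must reprove, essentially from scratch, the analogue of Proposition \ref{prop-gamma-1} (your ``main obstacle''), i.e.\ that $g\mapsto g(e^{-\bar g}-1)$ maps $H^{0}_{\rho'}$ into $H^{0}_{\rho}$ locally Lipschitzly with a $V_{K_{\rho,\rho'}}$-type bound, using Lemmata \ref{lemma-H-algebra}, \ref{lemma-int-operator}, \ref{lemma-exp}. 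Two cautions on the quantitative side. First, your ``sharp'' bound $\|g(e^{-\bar g}-1)\|_{\rho}\le V_{K_{\rho,\rho'}}(\|g\|_{\rho'})$ with constant one is not what the appendix yields (the algebra constant $\|m\|$ and the constant in Lemma \ref{lemma-exp} intervene) and is not needed: a bound $\le C\,V_{K_{\rho,\rho'}}(\|g\|_{\rho'})$ serves equally well, since a multiplicative constant does not change the $\kappa$-bookkeeping. Second, in the Lipschitz estimate for your $\alpha^i_4$ you must keep the combination $B_{\gamma}(1+B_{\gamma})e^{K_{\rho,\rho'}B_{\gamma}}=V_{K_{\rho,\rho'}}(B_{\gamma})$ intact before invoking \eqref{int-E-3-ass}; if you bound the exponential factor and $B_{\gamma}$ separately by $1+\kappa(x)(1+r)$ and $\kappa(x)(1+r)$, you generate a spurious $\kappa^4$ after multiplying by $|1-(1+\psi)(1+c^i)|\le 1+\Lambda_t\kappa(x)$, which $\kappa\in L^3(F)$ does not cover. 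Your stated accounting ($\kappa^3$ at worst for the Lipschitz part, $\kappa^2$ for the growth part) is the correct one and matches the paper's; just make sure the estimate is carried out in the grouped form. Finally, for $\alpha^i_1$ the diagonal statement of Lemma \ref{lemma-mapping-S} (the map $\cS h=h\cdot\cI h$) is all that is required, since the same index $j$ appears in both factors, so you need not invoke the bilinear boundedness of $(f,g)\mapsto f\cdot\cI g$.
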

\begin{proof}
Taking into account Lemma \ref{lemma-HS} and Remark \ref{rem-product-space}, the estimates \eqref{eq:beta_Lip} and \eqref{eq:beta_LG} follow from \eqref{beta-loc-Lip} and \eqref{LG-beta-i}, respectively. We structure the remaining part of the proof in a similar way to \cite[Proposition 3.1]{fitate2010}.
 Let $i\in\{0,1,\ldots,m\}$ and write
\be	\label{eq:alpha_dec}
\alpha^i(t,h) = \alpha^i_1(h)-\alpha^i_2(t,h)+\alpha^i_3(t,h),
\ee
where, for all $(\omega,t,h) \in \Omega\times\R_+ \times H^{m+1}_{\rho}$, 
\[	\begin{aligned}
\alpha^i_1(h) &:= \sum_{j=1}^d\beta^{i,j}(h)\bar{\beta}^{i,j}(h),\\
\alpha^i_2(t,\omega,h) &:= \sum_{j=1}^d\bigl(\lambda^j_t(\omega)+b^{i,j}_t(\omega)\bigr)\beta^{i,j}(h),\\
\alpha^i_3(t,\omega,h) &:= \int_E\gamma^i(h,x)\Bigl(1-e^{-\bar{\gamma}^i(h,x)}\bigl(1+\psi_t(\omega,x)\bigr)\bigl(1+c^i_t(\omega,x)\bigr)\Bigr)F(\ud x).
\end{aligned}	\]
Throughout the proof, for simplicity of notation we denote $\|h\|:=\|h\|_{\rho,1}$, for $h \in H_{\rho}^{1}$. 
Taking into account \eqref{beta-loc-Lip} and \eqref{LG-beta-i} and making use of the notation introduced in Appendix \ref{app-space}, Proposition \ref{prop-beta} implies that
\begin{align}\label{alpha-i-1}
\alpha_1^i \in \Lip^{\loc}(H_{\rho}^{m+1},H_{\rho}^{0,1}) \cap \LG(H_{\rho}^{m+1},H_{\rho}^{0,1}).
\end{align}
Moreover, we have $\alpha_2^i : \R_+ \times \Omega \times H^{m+1}_{\rho} \to H^{0,1}_{\rho}$ and $\alpha_2^i$ is $\cP \otimes \cB(H^{m+1}_{\rho})$-measurable, because $\lambda^j$ and $b^{i,j}$ are predictable processes, for all $j=1,\ldots,d$. Using \eqref{beta-loc-Lip}, there exists an increasing function $L : \R_+ \to \R_+$ such that, for all $j=1,\ldots,d$,  $r \in \R_+$ and $h,g \in H_{\rho}^{m+1}$ with $\| h \|_{\rho,m+1}\vee\| g \|_{\rho,m+1} \leq r$, we have
\[
\| \beta^{i,j}(h) - \beta^{i,j}(g) \| \leq L(r) \| h-g \|_{\rho,m+1}.
\]
Let $r \in \R_+$ be arbitrary. By Assumption \ref{ass:SPDE}, the non-negative process $L^{i,r} := L(r) \sum_{j=1}^d |\lambda^j_t+b^{i,j}_t |$ is optional and locally bounded. For all $(\omega,t) \in \Omega\times\R_+ $ and $h,g \in H_{\rho}^{m+1}$ with $\| h \|_{\rho,m+1}\vee \| g \|_{\rho,m+1} \leq r$, we obtain
\begin{align}\label{alpha-i-2a}
\| \alpha^i_2(t,\omega,h) - \alpha^i_2(t,\omega,g) \| \leq L_t^{i,r}(\omega) \| h-g \|_{\rho,m+1}.
\end{align}
By \eqref{LG-beta-i} there exists a constant $N_{\beta} \in \R_+$ such that, for all $j=1,\ldots,d$ and $h \in H_{\rho}$, it holds that
\[
\| \beta^{i,j}(h) \| \leq N_{\beta} ( 1 + \| h \|_{\rho,m+1}).
\]
The non-negative process $L^{i} := N_{\beta} \sum_{j=1}^d |\lambda^j_t+b^{i,j}_t |$ is optional and locally bounded. For all $(\omega,t,h) \in \Omega\times\R_+ \times H_{\rho}^{m+1}$, we obtain
\begin{align}\label{alpha-i-2b}
\| \alpha^i_2(t,\omega,h) \| \leq L_t^{i}(\omega) ( 1 + \| h \|_{\rho,m+1}).
\end{align}
Taking into account \eqref{est-psi-c}, \eqref{int-E-2-ass} and \eqref{int-E-3-ass}, we can apply Proposition \ref{prop-gamma} with $(Z,\cZ) = (\R_+ \times \Omega, \cP)$. As a consequence, we have $\alpha_3^i : \R_+ \times \Omega \times H^{m+1}_{\rho} \to H^{0,1}_{\rho}$ and $\alpha_3^i$ is $\cP \otimes \cB(H^{m+1}_{\rho})$-measurable, because the functions $\psi$ and $c^i$ are $\cP \otimes \cB(E)$-measurable. Furthermore, there exists an increasing function $L_1^i : \R_+ \to \R_+$ such that, for all $r \in \R_+$ and $h,g \in H_{\rho}^{m+1}$ with $\| h \|_{\rho,m+1}\vee \| g \|_{\rho,m+1} \leq r$, we have
\begin{align}\label{gamma-i-1}
\int_E \| \gamma^i(h,x) - \gamma^i(g,x) \|^2 F(\ud x) &\leq L_1^i(r) \| h-g \|_{\rho,m+1}^2,
\\ \label{alpha-i-3a} \| \alpha^i_3(t,\omega,h) - \alpha^i_3(t,\omega,g) \| &\leq L_1^i(r)(1 + \Lambda_t(\omega)) \| h-g \|_{\rho,m+1}, 
\quad \text{ for all }(\omega,t) \in \Omega\times\R_+,
\end{align}
and there exists a constant $L_2^i \in \R_+$ such that, for all $h \in H_{\rho}^{m+1}$, we have
\begin{align}\label{gamma-i-2}
\int_E \| \gamma^i(h,x) \|^2 F(\ud x) &\leq L_2^i (1+\|h\|_{\rho,m+1}^2),
\\ \label{alpha-i-3b} \| \alpha^i_3(t,\omega,h) \| &\leq L_2^i(1 + \Lambda_t(\omega)) (1+\|h\|_{\rho,m+1}), 
\quad \text{ for all }(\omega,t) \in \Omega\times\R_+.
\end{align}
Taking into account Remark \ref{rem-product-space}, the estimates \eqref{eq:gamma_Lip} and \eqref{eq:gamma_LG} concerning $\gamma$ follow from \eqref{gamma-i-1} and \eqref{gamma-i-2}. Furthermore, we obtain $\alpha : \R_+ \times \Omega \times H^{m+1}_{\rho} \to H^{0,m+1}_{\rho}$ and $\alpha$ is $\cP \otimes \cB(H^{m+1}_{\rho})$-measurable. The estimates \eqref{eq:alpha_Lip} and \eqref{eq:alpha_LG} concerning $\alpha$ follow from \eqref{alpha-i-1}, \eqref{alpha-i-2a}, \eqref{alpha-i-2b}, \eqref{alpha-i-3a} and \eqref{alpha-i-3b}.
\end{proof}

By relying on Assumption \ref{ass:SPDE} and Proposition \ref{prop:alpha}, we are in a position to prove the main result of this section, which establishes existence and uniqueness of the solution to the real-world HJMM SPDE \eqref{eq:SPDE} with drift \eqref{eq:drift_HJM}. As noted in Remark \ref{rem:extend_fitate}, this result extends \cite[Theorem 3.2]{fitate2010}, which, to the best of our knowledge, represents the most general available result on the well-posedness of the HJMM-SPDE. 
We recall that we are considering the state space $H = H_{\rho}^{m+1}$.

\begin{theorem}\label{thm-HJMM-SPDE}
Suppose that Assumption \ref{ass:SPDE} holds. Then, for every initial family of curves $h_0\in H$, there exists a unique global mild solution $(\eta_t)_{t\geq0}$ with c\`adl\`ag paths to \eqref{eq:SPDE} with $\eta_0=h_0$.
\end{theorem}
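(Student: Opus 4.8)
The plan is to verify that the abstract existence and uniqueness result of Theorem~\ref{thm-SPDE} applies verbatim to the real-world HJMM SPDE \eqref{eq:SPDE} with drift \eqref{eq:drift_HJM}, taken on the state space $H = H_{\rho}^{m+1}$ with $A = \partial/\partial\xi$ the generator of the shift semigroup $(S_t)_{t\geq0}$ acting componentwise on the $(m+1)$ copies of $H_{\rho}$. First I would recall from Appendix~\ref{app-space} that on the Filipovi\'c-type space $H_{\rho}$ the shift semigroup is a $C_0$-semigroup which is pseudo-contractive, i.e., $\|S_t\|\leq e^{\eta t}$ for some $\eta\geq0$. Since $H=H_{\rho}^{m+1}$ is the $(m+1)$-fold Cartesian product of $H_{\rho}$ (Remark~\ref{rem-product-space}) and the semigroup acts diagonally, the same bound holds on $H$, so the pseudo-contractivity hypothesis \eqref{pseudo-contr} of Theorem~\ref{thm-SPDE} is satisfied.

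Next I would check that the coefficients $\alpha$, $\beta$ and $\gamma$ of \eqref{eq:SPDE}--\eqref{eq:drift_HJM} fulfill the remaining hypotheses of Theorem~\ref{thm-SPDE}. For measurability and the value spaces: Proposition~\ref{prop:alpha}(1) gives that $\alpha$ is $\cP\otimes\cB(H)$-measurable with values in $H^{0,m+1}_{\rho}\subset H$; the maps $\beta$ and $\gamma$ do not depend on $(\omega,t)$, hence are trivially $\cP\otimes\cB(H)$- (resp. $\cP\otimes\cB(H)\otimes\cB(E)$-) measurable by Assumption~\ref{ass:SPDE}(iii)--(iv), and via the isometric identification of Lemma~\ref{lemma-HS} and the continuous embedding $H_{\rho'}\hookrightarrow H_{\rho}$ (valid since $\rho'>\rho$) they take values in $L_2(\R^d,H)$, respectively $H$. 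The local Lipschitz estimates \eqref{Lip-1}--\eqref{Lip-3} and the linear growth estimates \eqref{LG-1}--\eqref{LG-3} are precisely the content of Proposition~\ref{prop:alpha}(2)--(3): for $\alpha$ they hold with optional locally bounded processes $L^r$, resp.\ $L$, and for $\beta$ and $\gamma$ they hold with finite constants $K(r)$, resp.\ $K$, which are constant and hence optional locally bounded processes.

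Having verified all the hypotheses, I would apply Theorem~\ref{thm-SPDE} with the $\cF_0$-measurable (indeed deterministic) initial datum $x_0:=h_0$ to obtain a unique c\`adl\`ag mild solution $(\eta_t)_{t\geq0}$ to \eqref{eq:SPDE} on all of $\R_+$; by definition this is exactly a stochastic process satisfying the integral identity \eqref{eq:shift} with the shift semigroup, so it is the desired global mild solution, and uniqueness is inherited from Theorem~\ref{thm-SPDE}. I do not expect any genuine obstacle here: the only points requiring a little care are the matching of the function spaces ($H^{0,m+1}_{\rho'}\subset H^{0,m+1}_{\rho}\subset H$ and the Hilbert--Schmidt identification) and the pseudo-contractivity of the shift semigroup, which are standard facts collected in Appendix~\ref{app-space}, while the substantive analytic work — establishing that the HJM drift \eqref{eq:drift_HJM} is locally Lipschitz and of linear growth with optional locally bounded constants — has already been carried out in Proposition~\ref{prop:alpha}.
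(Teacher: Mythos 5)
Your proposal is correct and follows essentially the same route as the paper's proof: pseudo-contractivity of the shift semigroup on $H=H_{\rho}^{m+1}$, Proposition \ref{prop:alpha} supplying conditions \eqref{Lip-1}--\eqref{LG-3}, and a direct application of Theorem \ref{thm-SPDE} with the deterministic initial datum $h_0$. The only cosmetic difference is that the paper obtains pseudo-contractivity by citing \cite[Lemma 3.5]{Benth-Kruehner} rather than Appendix \ref{app-space}, which does not state that property explicitly.
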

\begin{proof}
As a consequence of Theorem \ref{thm-space-forward} and Remark \ref{rem-product-space}, the space $(H,\|\cdot\|_H)$ is a separable Hilbert space and the shift semigroup $(\cS_t)_{t \geq 0}$ is a $C_0$-semigroup on $H$ with generator $A$ given by $Ah = h'$, for all $h \in D(A)$. Moreover, \cite[Lemma 3.5]{Benth-Kruehner} implies that the shift semigroup $(\cS_t)_{t \geq 0}$ is pseudo-contractive.
By Proposition \ref{prop:alpha}, conditions \eqref{Lip-1}--\eqref{LG-3} are satisfied. The claim then follows by Theorem \ref{thm-SPDE}.
\end{proof}

\begin{example}	\label{example:unbounded}
We provide an example of a class of models with unbounded coefficients $\beta$ and $\gamma$ for which Assumption \ref{ass:SPDE} is fulfilled. 
For simplicity, we consider the situation $d=1$ (one-dimensional Wiener process) and $m=0$ (single curve setting) and agree on the notation $H_{\rho} := H_{\rho}^1$, in accordance with Appendix \ref{app-space}.
First of all, let $\lambda$ and $\psi$ be any processes such that items (i)-(ii) of Assumption \ref{ass:SPDE} are satisfied. In particular, we can have $\lambda \equiv 0$ and $\psi \equiv 0$, which corresponds to the situation under a risk-neutral probability $\QQ$ (see Section \ref{sec:RN}). In that case, this example yields a class of risk-neutral models not covered by the results of \cite{fitate2010}, due to the unboundedness of $\beta$ and $\gamma$.
 
We assume that the diffusive coefficient $\beta : H_{\rho} \to H_{\rho}$ is given by a constant direction volatility
\begin{align}\label{beta-in-example}
\beta(h) = \Phi_{\beta}(h) \cdot \lambda_{\beta}, 
\quad\text{ for } h \in H_{\rho}
\end{align}
with $\lambda_{\beta} \in H_{\rho}^0$ and a locally Lipschitz functional $\Phi_{\beta} : H_{\rho} \to \R$ satisfying the growth condition
\begin{align}\label{growth-Phi}
| \Phi_{\beta}(h) | \leq C_1 \sqrt{ 1 + \| h \|_{\rho} }, 
\quad \text{ for all $h \in H$,}
\end{align}
for some constant $C_1 > 0$. Then condition \eqref{beta-loc-Lip} is satisfied and \eqref{LG-beta-i} is fulfilled with $M_{\beta} = C_1 \| \lambda_{\beta} \|_{\rho}$.
The functional $\Phi_{\beta}$ can for instance be of the form $\Phi_{\beta} = f \circ \ell$ with a continuous linear functional $\ell \in L(H_{\rho},\R)$ and a locally Lipschitz function $f : \R \to \R$ satisfying the growth condition
\begin{align}\label{growth-square-root}
|f(\xi)| \leq C_2 \sqrt{1 + |\xi|}, 
\quad \text{ for all $\xi \in \R$,}
\end{align}
for some constant $C_2 > 0$. Indeed, in this case the functional $\Phi_{\beta}$ is locally Lipschitz and the estimate \eqref{growth-Phi} is satisfied with $C_1 = C_2 \sqrt{1 + \| \ell \|}$.
Examples for the linear functional $\ell$ are the point evaluations
\begin{align*}
\mathcal{J}_x : H_{\rho} \to \R, \quad 
\mathcal{J}_x(h) := h(x)
\end{align*}
for any $x \in \R_+$, and the integral functionals
\begin{align*}
\mathcal{I}_u : H_{\rho} \to \R, \quad \mathcal{I}_u(h) := \int_0^u h(x) \, \ud x.
\end{align*}
for any $u \in \R_+$, see \cite[Theorem 5.1.1 and Lemma 4.3.1]{F:01}. Furthermore, for any fixed constant $\eta > 0$ an example of a locally Lipschitz function $f : \R \to \R$ satisfying \eqref{growth-square-root} is given by
\begin{align*}
f(\xi) = \frac{|\xi|}{\sqrt{\eta}} \bbI_{\{ |\xi| \leq \eta \}} + \sqrt{|\xi|} \bbI_{\{ |\xi| > \eta \}}, 
\quad\text{ for } \xi \in \R.
\end{align*}

In order to specify the jump coefficient $\gamma$, we choose another constant $\rho' > \rho$. Let $W : \R_+ \to \R_+$ be a Lipschitz continuous function such that
\begin{align}\label{W-inequ}
W(r) \leq W_{K_{\rho,\rho'}}(r), 
\quad \text{ for all $r \in \R_+$,}
\end{align}
where we recall that $W_{K_{\rho,\rho'}} : \R_+ \to \R_+$ denotes the inverse of the strictly increasing function \eqref{V_K_def} with $K = K_{\rho,\rho'}$. We assume that the jump coefficient $\gamma : H_{\rho} \times E \to H_{\rho'}$ is given by
\begin{align*}
\gamma(h,x) = \Phi_{\gamma}(h,x) \cdot \lambda_{\gamma}, 
\quad \text{ for }(h,x) \in H_{\rho} \times E,
\end{align*}
with $\lambda \in H_{\rho'}^0$ and a mapping $\Phi_{\gamma} : H_{\rho} \times E \to \R_+$ given by
\begin{align}\label{Phi-gamma-in-example}
\Phi_{\gamma}(h,x) = c \cdot W( \kappa(x) \Psi(h) ), 
\quad\text{ for } (h,x) \in H_{\rho} \times E,
\end{align}
where the constant $c \in \R_+$ satisfies $c \leq \| \lambda_{\gamma} \|_{\rho'}^{-1}$ and the function $\Psi : H_{\rho} \to \R_+$ is locally Lipschitz and satisfies the linear growth condition. Then, $\gamma$ is $\cB(H_{\rho}) \otimes \cB(E)$-measurable. Furthermore, using \eqref{W-inequ} and the monotonicity of $W_{K_{\rho,\rho'}}$, we have that
\begin{align*}
\| \gamma(h,x) \|_{\rho'} &= \| \lambda_{\gamma} \|_{\rho'} \, | \Phi_{\gamma}(h,x) | \leq W( \kappa(x) \Psi(h) ) \leq W_{K_{\rho,\rho'}}( \kappa(x) \Psi(h) ) \leq W_{K_{\rho,\rho'}} \big( C_{\Psi} \kappa(x) (1 + \| h \|_{\rho}) \big),
\end{align*}
for all $h \in H_{\rho}$ and $x \in E$, where $C_{\Psi} \in \R_+$ denotes a growth constant of $\Psi$. 
This shows that condition \eqref{int-E-3-ass} holds with $N_{\gamma} = C_{\Psi}$. Moreover, for $r \in \R_+$ and $h,g \in H_{\rho}$ with $\| h \|_{\rho}\vee \| g \|_{\rho} \leq r$, it holds that
\begin{align*}
\| \gamma(h,x) - \gamma(g,x) \|_{\rho'} &= \| \lambda_{\gamma} \|_{\rho'} \, | \Phi_{\gamma}(h,x) - \Phi_{\gamma}(g,x) | \leq | W(\kappa(x) \Psi(h)) - W(\kappa(x) \Psi(g)) |
\\ &\leq L_W | \kappa(x) \Psi(h) - \kappa(x) \Psi(g) |
= L_W \kappa(x) | \Psi(h) - \Psi(g) |
\leq L_W \kappa(x) L_{\Psi}(r) \| h-g \|_{\rho},
\end{align*}
where $L_W \in \R_+$ denotes a Lipschitz constant of $W$ and $L_{\Psi}$ denotes a Lipschitz function of $\Psi$. This shows the validity of condition \eqref{int-E-2-ass} with $M_{\gamma}(r) = L_W \cdot L_{\Psi}(r)$.
We have therefore shown that all requirements of Assumption \ref{ass:SPDE} are satisfied.

To make the choice of the function $W$ in \eqref{Phi-gamma-in-example} more concrete, let us consider two examples of Lipschitz continuous functions $W : \R_+ \to \R_+$ satisfying \eqref{W-inequ}:
\begin{enumerate}
\item we can simply take $W := W_{K_{\rho,\rho'}}$, which obviously satisfies condition \eqref{W-inequ}. Moreover, by the Leibniz rule, letting $K = K_{\rho,\rho'}$, we have that
\begin{align*}
V_K'(r) = r(1+r) K e^{Kr} + 2r e^{Kr} + (1+r) e^{Kr} \geq 1, 
\quad \text{ for all $r \in \R_+$,}
\end{align*}
where we recall that $V_K$ is given by \eqref{V_K_def}. Thus, by the inverse mapping theorem for differentiation we have $W'(r) \leq 1$ for all $r \in \R_+$, which proves that $W$ is Lipschitz continuous.
\item A drawback of example (1) is that $W_{K_{\rho,\rho'}}$ is not available in closed form. For a more practical choice, set $C :=  2 (K_{\rho,\rho'}\vee 1)$ and define the Lipschitz continuous function $W : \R_+ \to \R_+$ as
\begin{align*}
W(r) := \frac{\ln(r+1)}{C}, 
\quad\text{ for } r \in \R_+.
\end{align*}
Obviously, $W$ is the inverse of the strictly increasing function $V : \R_+ \to \R_+$ given by
\begin{align*}
V(r) := \exp(Cr) - 1, 
\quad\text{ for } r \in \R_+.
\end{align*}
Setting $M := C/2=K_{\rho,\rho'}\vee 1$ and recalling \eqref{V_K_def}, we can compute
\begin{align*}
1 + V_{K_{\rho,\rho'}}(r) &\leq 1 + r(1+r)\exp(Mr) = 1 + \sum_{k=0}^{\infty} \frac{M^k r^{k+1}}{k!} + \sum_{k=0}^{\infty} \frac{M^k r^{k+2}}{k!}
\\ &= 1 + \sum_{k=1}^{\infty} \frac{M^{k-1} r^{k}}{k!} + \sum_{k=2}^{\infty} \frac{M^{k-2} r^{k}}{k!} = 1 + \frac{1}{M} \sum_{k=1}^{\infty} \frac{M^{k} r^{k}}{k!} + \frac{1}{M^2} \sum_{k=2}^{\infty} \frac{M^{k} r^{k}}{k!}
\\ &\leq 1 + \sum_{k=1}^{\infty} \frac{2^k M^{k} r^{k}}{k!} 
= \exp(2Mr) = \exp(Cr),
\end{align*}
for all $r\in\R_+$.
Therefore, we obtain $V_{K_{\rho,\rho'}}(r) \leq V(r)$ for all $r \in \R_+$, which implies \eqref{W-inequ}.
\end{enumerate}
\end{example}

\subsection{Monotonicity of term structures}
\label{sec:SPDE_invariant}

In Section \ref{sec:monotone}, we have derived abstract conditions ensuring the monotonicity of the term structures. In this section, we address the issue of monotonicity from an SPDE viewpoint, relating monotonicity to suitable invariance properties of the real-world SPDE \eqref{eq:SPDE}.
This approach has the advantage of providing sufficient conditions for monotonicity without requiring the martingale property of Definition \ref{def:fair}.
We work in the framework of Section \ref{sec-appl-rw-HJMM}, supposing in particular that Assumption \ref{ass:SPDE} holds. 
Consider the closed convex cone $K\subset H$ defined as
\begin{align}\label{cone}
K := \bigl\{ h = (h^0,h^1,\ldots,h^m) \in H : h^1 \geq h^2 \geq \ldots \geq h^m \bigr\},
\end{align}
where we recall that the state space is given by $H = H_{\rho}^{m+1}$. We say that the closed convex cone $K$ is \emph{invariant} for the SPDE \eqref{eq:SPDE} if, for each $h_0 \in K$, we have $\eta \in K$ up to an evanescent set, where $\eta$ denotes the mild solution to the SPDE \eqref{eq:SPDE} with $\eta_0 = h_0$. Note that the cone $K$ is invariant for the shift semigroup; that is $\cS_t K \subset K$ for all $t \geq 0$. The invariance of closed convex cones for SPDEs has been investigated in a general setup in \cite{TappeCones17,TappeCones24} and in \cite{fitate2010} for the special case where $K$ is the cone of non-negative functions. The proof of the following theorem is analogous to that of \cite[Corollary 8.6]{TappeCones24} and, therefore, omitted.

\begin{theorem}\label{thm-cone-inv}
Suppose that, for all $h\in K$,
\begin{align}\label{cone-1}
h + \gamma(h,x) \in K \quad \text{$F(\ud x)$-a.e.}
\end{align}
and, for all $\xi \in \R_+$ and $i,j = 1,\ldots,m$ with $i < j$ such that $h^i(\xi) = h^j(\xi)$, we have
\begin{align}\label{cone-2}
&\beta^i(h)(\xi) = \beta^j(h)(\xi),
\\ \label{cone-3}  &\alpha^i(t,\omega,h)(\xi) - \alpha^j(t,\omega,h)(\xi)
\\ \notag &\quad - \int_E \big( \gamma^i(h,x)(\xi) - \gamma^j(h,x)(\xi) \big) F(\ud x) \geq 0, 
\quad \text{outside of a set of $(\PP\otimes \ud t)$-measure zero.}
\end{align}
Then, the closed convex cone $K$ is invariant for the SPDE \eqref{eq:SPDE}.
\end{theorem}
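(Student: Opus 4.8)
The plan is to deduce Theorem \ref{thm-cone-inv} from the general characterization of invariant closed convex cones for semilinear jump SPDEs established in \cite{TappeCones24}, of which \cite[Corollary 8.6]{TappeCones24} is a special case. Well-posedness of \eqref{eq:SPDE} has already been established above (under Assumption \ref{ass:SPDE}, via Proposition \ref{prop:alpha} and Theorem \ref{thm-SPDE}), so for every $h_0\in H$ there is a unique mild solution $\eta$ with $\eta_0=h_0$, and it only remains to verify that the hypotheses of the abstract invariance result are met for the cone $K$ in \eqref{cone}. Recall that, in the present setting, that result characterizes invariance of a closed convex cone $K\subset H$ through: (a) the semigroup condition $S_tK\subseteq K$ for all $t\geq0$; and (b) at each boundary point $h\in\partial K$ and each $\zeta\in K^{*}$ with $\langle\zeta,h\rangle=0$, the three conditions $\langle\zeta,\beta(h)e_k\rangle=0$ for $k=1,\ldots,d$ (stochastic invariance), $h+\gamma(h,x)\in K$ for $F(\ud x)$-a.e.\ $x$ (jump condition), and $\langle\zeta,\alpha(t,\omega,h)-\int_E\gamma(h,x)F(\ud x)\rangle\geq0$ outside a $(\PP\otimes\ud t)$-null set (first-order drift condition).

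The first step is to identify the dual cone $K^{*}$. Since point evaluations $h\mapsto h(\xi)$ are bounded linear functionals on the Filipovi\'c-type space $H_\rho$ (see Appendix \ref{app-space}), the cone $K$ can be written as an intersection of closed half-spaces, $K=\bigcap_{i=1}^{m-1}\bigcap_{\xi\in\R_+}\{h\in H:\ell_{i,\xi}(h)\geq0\}$, where $\ell_{i,\xi}(h):=h^i(\xi)-h^{i+1}(\xi)$; note that no constraint is placed on the component $h^0$. Consequently $K^{*}$ is the closed convex cone generated by the functionals $\{\ell_{i,\xi}:1\leq i\leq m-1,\ \xi\in\R_+\}$, and $h\in K$ lies on $\partial K$ precisely when $\ell_{i,\xi}(h)=0$ for some pair $(i,\xi)$. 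Moreover, by the ordering built into $K$, if $h\in K$ and $h^i(\xi)=h^j(\xi)$ for some $i<j$, then $h^i(\xi)=h^{i+1}(\xi)=\ldots=h^j(\xi)$; hence conditions \eqref{cone-2} and \eqref{cone-3} as stated for arbitrary $i<j$ are equivalent to the same conditions for consecutive indices, which is the form needed to match the boundary functionals $\ell_{i,\xi}$.

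The second step is to verify (a) and (b) for $K$. The semigroup condition is immediate, since $(S_th)^i(\xi)=h^i(t+\xi)$ and shifting preserves the pointwise ordering; this is what absorbs the unbounded operator $A=\partial/\partial\xi$ into the analysis, so that no explicit condition involving $A$ is imposed (the contribution $\partial_\xi(h^i-h^j)(\xi)$ is in fact $\geq0$ on $\partial K$, vanishing at interior minima $\xi>0$ and non-negative at $\xi=0$, consistently with invariance). At a boundary point $h$ with $\ell_{i,\xi}(h)=0$, the stochastic invariance condition $\ell_{i,\xi}(\beta(h)e_k)=\beta^{i,k}(h)(\xi)-\beta^{i+1,k}(h)(\xi)=0$ is exactly \eqref{cone-2}; the jump condition $h+\gamma(h,x)\in K$ for $F(\ud x)$-a.e.\ $x$ is exactly \eqref{cone-1}; and the first-order drift condition $\ell_{i,\xi}\bigl(\alpha(t,\omega,h)-\int_E\gamma(h,x)F(\ud x)\bigr)\geq0$ is exactly \eqref{cone-3}. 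The subtraction of the compensator $\int_E\gamma(h,x)F(\ud x)$ appears because $\gamma\ast\tilde{\mu}=\gamma\ast\mu-\gamma\ast\nu$, so that the predictable drift of $\eta^i-\eta^j$ that must point into $K$ is $\alpha^i-\alpha^j$ minus the compensator of the jumps (the jumps themselves being $K$-compatible by \eqref{cone-1}). Applying \cite{TappeCones24} then yields that $K$ is invariant for \eqref{eq:SPDE}, completing the proof.

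The main point requiring care is precisely this drift condition together with the role of the unbounded shift generator: one has to invoke the moving-frame reformulation of mild solutions used in \cite{TappeCones24} to make rigorous that $S_tK\subseteq K$ together with \eqref{cone-1}--\eqref{cone-3} (with the compensator correction)---rather than a naive ``inward drift'' inequality involving $A$ on a domain that does not contain $K$---is what characterizes invariance. The remaining steps, namely the description of $K^{*}$, the reduction to consecutive indices, and the verification that (b) reproduces \eqref{cone-1}--\eqref{cone-3}, are the routine identifications indicated above.
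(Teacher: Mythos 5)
Your proposal is correct and follows essentially the same route as the paper, which omits the proof precisely because it is analogous to \cite[Corollary 8.6]{TappeCones24}: one checks shift-invariance of $K$, identifies the generating boundary functionals $h\mapsto h^i(\xi)-h^{i+1}(\xi)$ (point evaluations being continuous on $H_\rho^{m+1}$), and matches \eqref{cone-1}--\eqref{cone-3} to the jump, diffusion and compensated-drift conditions of the abstract invariance criterion. Your reduction to consecutive indices and the observation that the hypotheses are stated $(\PP\otimes\ud t)$-a.e.\ to accommodate the random drift are consistent with the intended adaptation of that cited argument.
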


Taking into account the specific structure of the drift $\alpha$ given by \eqref{eq:drift_HJM}, we can now provide sufficient conditions for the invariance of the closed convex cone $K$ for the real-world HJMM SPDE \eqref{eq:SPDE}.

\begin{assumption}\label{ass-cone-inv}
Condition \eqref{cone-1} is satisfied. Furthermore, we assume that for all $h \in K$, $\xi \in \R_+$ and $i,j = 1,\ldots,m$ with $i < j$ such that $h^i(\xi) = h^j(\xi)$, the following conditions are satisfied:
\begin{enumerate}[(i)]
\item for all $k=1,\ldots,d$, it holds that
\begin{align}\label{cone-2-1}
&\beta^{i,k}(h)(\xi) = \beta^{j,k}(h)(\xi),
\\ \label{cone-2-2} &\beta^{i,k}(h)(\chi) \geq \beta^{j,k}(h)(\chi), \quad \chi \in [0,\xi), \quad \text{if $\beta^{i,k}(h)(\xi) > 0$,}
\\ \label{cone-2-3} &\beta^{i,k}(h)(\chi) \leq \beta^{j,k}(h)(\chi), \quad \chi \in [0,\xi), \quad \text{if $\beta^{i,k}(h)(\xi) < 0$;}
\end{align}
\item for $F(\ud x)$-a.e. $x \in E$, it holds that
\begin{align}\label{cone-4-1}
&\gamma^i(h,x)(\xi) = \gamma^j(h,x)(\xi),
\\ \label{cone-4-2} &\gamma^i(h,x)(\chi) \geq \gamma^j(h,x)(\chi), \quad \chi \in [0,\xi), \quad \text{if $\gamma^i(h,x)(\xi) > 0$,}
\\ \label{cone-4-3} &\gamma^i(h,x)(\chi) \leq \gamma^j(h,x)(\chi), \quad \chi \in [0,\xi), \quad \text{if $\gamma^i(h,x)(\xi) < 0$.}
\end{align}
\end{enumerate}
\end{assumption}

\begin{proposition}\label{prop-cone-inv}
If Assumption \ref{ass-cone-inv} and Condition \ref{cond:order} hold, then the closed convex cone $K$ is invariant for the real-world HJMM SPDE \eqref{eq:SPDE}.
\end{proposition}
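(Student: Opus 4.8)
The plan is to verify the three hypotheses \eqref{cone-1}, \eqref{cone-2} and \eqref{cone-3} of Theorem \ref{thm-cone-inv}. Condition \eqref{cone-1} is part of Assumption \ref{ass-cone-inv}, and \eqref{cone-2} follows immediately by collecting the componentwise identities \eqref{cone-2-1} over $k=1,\ldots,d$. For \eqref{cone-3}, observe first that at any point $\xi\in\R_+$ with $h^i(\xi)=h^j(\xi)$ and $i<j$ the identity \eqref{cone-4-1} forces $\gamma^i(h,x)(\xi)=\gamma^j(h,x)(\xi)$ for $F(\ud x)$-a.e.\ $x$, so the correction integral $\int_E(\gamma^i(h,x)(\xi)-\gamma^j(h,x)(\xi))F(\ud x)$ appearing in \eqref{cone-3} vanishes; it therefore suffices to show that $\alpha^i(t,\omega,h)(\xi)\ge\alpha^j(t,\omega,h)(\xi)$ at such points, which I would do by inserting the explicit HJM drift \eqref{eq:drift_HJM} and treating the three resulting summands separately.

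The market-price-of-risk difference $-(\lambda_t+b^i_t)^{\top}\beta^i(h)(\xi)+(\lambda_t+b^j_t)^{\top}\beta^j(h)(\xi)$ vanishes, since Condition \ref{cond:order} gives $b^i_t=b^j_t$ (both equal $b^{i_0}_t$) and \eqref{cone-2-1} gives $\beta^i(h)(\xi)=\beta^j(h)(\xi)$. The quadratic difference equals, again using \eqref{cone-2-1},
\[
\bigl(\beta^i(h)^{\top}\bar{\beta}^i(h)\bigr)(\xi)-\bigl(\beta^j(h)^{\top}\bar{\beta}^j(h)\bigr)(\xi)=\sum_{k=1}^d\beta^{i,k}(h)(\xi)\int_0^{\xi}\bigl(\beta^{i,k}(h)(u)-\beta^{j,k}(h)(u)\bigr)\,\ud u,
\]
and each summand is non-negative: if $\beta^{i,k}(h)(\xi)>0$ the inner integral is non-negative by \eqref{cone-2-2}, if $\beta^{i,k}(h)(\xi)<0$ it is non-positive by \eqref{cone-2-3}, and otherwise the summand vanishes. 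Finally, using \eqref{cone-4-1} together with $c^i_t(x)=c^j_t(x)$ from Condition \ref{cond:order}, the difference of the jump-compensator terms (i.e.\ $\alpha^i_3-\alpha^j_3$ in the notation of the proof of Proposition \ref{prop:alpha}) evaluated at $\xi$ reduces to
\[
-\int_E g(x)\bigl(1+\psi_t(x)\bigr)\bigl(1+c^i_t(x)\bigr)\bigl(e^{-\bar{\gamma}^i(h,x)(\xi)}-e^{-\bar{\gamma}^j(h,x)(\xi)}\bigr)F(\ud x),
\]
where $g(x):=\gamma^i(h,x)(\xi)$; here $1+\psi_t(x)>0$ and $1+c^i_t(x)>0$, while \eqref{cone-4-2} gives $\bar{\gamma}^i(h,x)(\xi)\ge\bar{\gamma}^j(h,x)(\xi)$ on $\{g>0\}$ (hence $e^{-\bar{\gamma}^i(h,x)(\xi)}\le e^{-\bar{\gamma}^j(h,x)(\xi)}$) and \eqref{cone-4-3} gives the reverse inequalities on $\{g<0\}$, so the integrand is non-negative $F(\ud x)$-a.e. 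Summing the three contributions yields $\alpha^i(t,\omega,h)(\xi)\ge\alpha^j(t,\omega,h)(\xi)$ outside a set of $(\PP\otimes\ud t)$-measure zero, which is \eqref{cone-3}, and Theorem \ref{thm-cone-inv} then gives the invariance of $K$.

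The hard part is purely a matter of bookkeeping: one has to keep the sign conventions straight in the cases $\beta^{i,k}(h)(\xi)\gtrless 0$ and $g(x)\gtrless 0$, where the one-sided inequalities in \eqref{cone-2-2}--\eqref{cone-2-3} and \eqref{cone-4-2}--\eqref{cone-4-3} change direction according to the sign of the corresponding ``diagonal'' value, and one must invoke Proposition \ref{prop:alpha} (valid under Assumption \ref{ass:SPDE}) to guarantee that $\alpha^i$ and its jump-compensator part are well-defined elements of $H^{0,1}_{\rho}$, so that the termwise estimates above legitimately add up. Apart from the insertion of the explicit drift \eqref{eq:drift_HJM}, the verification parallels the structure of the proof of \cite[Corollary 8.6]{TappeCones24}.
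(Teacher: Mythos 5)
Your proof is correct and follows essentially the same route as the paper: verify the hypotheses of Theorem \ref{thm-cone-inv} by inserting the explicit drift \eqref{eq:drift_HJM}, using Condition \ref{cond:order} to cancel the $b^i$- and $c^i$-dependence and \eqref{cone-2-1}, \eqref{cone-4-1} to cancel the market-price-of-risk and jump-correction terms, and then using the one-sided conditions \eqref{cone-2-2}--\eqref{cone-2-3} and \eqref{cone-4-2}--\eqref{cone-4-3} together with $1+\psi>0$, $1+c^i\geq 0$ for the remaining sign argument. The only difference is that you spell out the componentwise case analysis that the paper leaves implicit.
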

\begin{proof}
Let us consider $h \in K$, $\xi \in \R_+$ and $i,j = 1,\ldots,m$ with $i < j$ such that $h^i(\xi) = h^j(\xi)$. Condition \eqref{cone-2} immediately follows from \eqref{cone-2-1}. Taking into account Condition \ref{cond:order} as well as conditions \eqref{cone-2-1} and \eqref{cone-4-1}, the structure \eqref{eq:drift_HJM} of the drift $\alpha$ shows that \eqref{cone-3} is satisfied if and only if
\begin{align*}
&\beta^i(h)(\xi)^{\top} \bar{\beta}^i(h)(\xi) - \int_E \gamma^i(h,x)(\xi) \Big( e^{-\bar{\gamma}^i(h,x)(\xi)} \big( 1 + \psi_t(\omega,x) \big) \big( 1 + c_t^i(\omega,x) \big) \Big) F(\ud x)
\\ &\geq \beta^i(h)(\xi)^{\top} \bar{\beta}^j(h)(\xi) - \int_E \gamma^i(h,x)(\xi) \Big( e^{-\bar{\gamma}^j(h,x)(\xi)} \big( 1 + \psi_t(\omega,x) \big) \big( 1 + c_t^i(\omega,x) \big) \Big) F(\ud x)
\end{align*}
for $(\PP\otimes\ud t)$-a.e. $(\omega,t) \in \Omega\times\R_+$, which is equivalent to
\begin{align*}
&\beta^i(h)(\xi)^{\top} \big( \bar{\beta}^i(h)(\xi) - \bar{\beta}^j(h)(\xi) \big)
\\ &- \int_E \gamma^i(h,x)(\xi) \Big( \big( e^{-\bar{\gamma}^i(h,x)(\xi)} - e^{-\bar{\gamma}^j(h,x)(\xi)} \big) \big( 1 + \psi_t(\omega,x) \big) \big( 1 + c_t^i(\omega,x) \big) \Big) F(\ud x) \geq 0
\end{align*}
for $(\PP\otimes\ud t)$-a.e. $(\omega,t) \in \Omega\times\R_+$. Recalling that $\psi$ is $(-1,+\infty)$-valued and $c^i$ is $[-1,+\infty)$-valued, conditions \eqref{cone-2-2}, \eqref{cone-2-3}, \eqref{cone-4-2}, \eqref{cone-4-3} imply that \eqref{cone-3} is fulfilled. Consequently, in view of Theorem \ref{thm-cone-inv}, the closed convex cone $K$ is invariant for the HJMM SPDE \eqref{eq:SPDE}.
\end{proof}

Proposition \ref{prop-cone-inv} can be applied to deduce sufficient conditions ensuring the monotonicity of the term structures, without relying on the martingale property of Definition \ref{def:fair}. 
In particular, note that the conditions stated in Assumption \ref{ass-cone-inv} are entirely deterministic, making them more tractable for verification in concrete model specifications compared to the martingale property of Definition \ref{def:fair}.
For an initial family of curves $h_0 \in H$, we recall that
\begin{align}\label{spread-cone}
S^i_tB^i(t,T)
= S^i_t\exp \bigg( -\int_0^{T-t} \eta_t^i(\xi) \ud \xi \bigg),
 \qquad \text{ for all }i=1,\ldots,m,
\end{align}
where $\eta = (\eta_t)_{t \geq 0}$ denotes the mild solution to the real-world HJMM SPDE \eqref{eq:SPDE} with $\eta_0 = h_0$.

\begin{proposition}\label{prop-monotonicity}
Suppose that Condition \ref{cond:order}  and Assumption \ref{ass-cone-inv}  hold. If  $h_0 \in K$ and $S^i_0\leq S^j_0$, for all $i,j=1,\ldots,m$ with $i<j$, then, for all $0\leq t\leq T< +\infty$, it holds that
\[
S^i_tB^i(t,T) \leq S^j_tB^j(t,T),
\qquad\text{ for all }i,j=1,\ldots,m\text{ with }i<j.
\]
\end{proposition}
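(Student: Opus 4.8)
The plan is to assemble the statement from two ingredients already available: the cone-invariance result of Proposition~\ref{prop-cone-inv} for the forward curves, and the comparison argument for the spot processes recorded in the discussion following Condition~\ref{cond:order}. First I would invoke Proposition~\ref{prop-cone-inv}: since Condition~\ref{cond:order} and Assumption~\ref{ass-cone-inv} are in force, the closed convex cone $K$ of~\eqref{cone} is invariant for the real-world HJMM SPDE~\eqref{eq:SPDE}, so the mild solution $\eta=(\eta^0,\eta^1,\ldots,\eta^m)$ with $\eta_0=h_0\in K$ satisfies $\eta\in K$ up to an evanescent set. Fixing an $\omega$ outside the corresponding $\PP$-null set, this gives $\eta^i_t(\xi)\geq\eta^j_t(\xi)$ for all $t\geq0$, all $\xi\in\R_+$ and all $i,j\in\{1,\ldots,m\}$ with $i<j$; the inequality holds for every $\xi$ (not merely for a.e. $\xi$) because the elements of the Filipovi\'c-type space $H_\rho$ are absolutely continuous, hence continuous.

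Next I would pass from forward curves to bond prices. For fixed $0\leq t\leq T$ and $i<j$, integrating $\eta^i_t(\xi)\geq\eta^j_t(\xi)$ over $\xi\in[0,T-t]$ and using the identity $B^i(t,T)=\exp(-\int_0^{T-t}\eta^i_t(\xi)\,\ud\xi)$ (the Musiela reformulation underlying~\eqref{spread-cone}), monotonicity of the exponential yields $B^i(t,T)\leq B^j(t,T)$ a.s. Then I would handle the spot processes: by Condition~\ref{cond:order}, the special semimartingales $Z^i$ and $Z^j$ share the same Brownian and jump parts (since $b^i=b^j$ and $c^i=c^j$) while $\int_0^\cdot a^i_s\,\ud s\leq\int_0^\cdot a^j_s\,\ud s$, so a straightforward comparison argument (as noted after Condition~\ref{cond:order}) gives $\cE(Z^i)\leq\cE(Z^j)$ up to an evanescent set, whence $S^i_t=S^i_0\,\cE(Z^i)_t\leq S^j_0\,\cE(Z^j)_t=S^j_t$ a.s., using $S^i_0\leq S^j_0$ and non-negativity of the stochastic exponentials. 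Finally, since $B^i(t,T)>0$, I would chain
\[
S^i_tB^i(t,T)\leq S^j_tB^i(t,T)\leq S^j_tB^j(t,T)
\qquad\text{a.s.,}
\]
and, discarding a single $\PP$-null set containing the exceptional sets above, obtain the inequality pathwise; right-continuity in $t$ and continuity in $\xi$ then upgrade it to hold simultaneously for all $0\leq t\leq T$ and all $i<j$.

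There is no substantial obstacle remaining in this proof: the genuine work has been carried out in Proposition~\ref{prop-cone-inv} (and hence in Theorem~\ref{thm-cone-inv}), where the invariance of $K$ is deduced from the drift structure~\eqref{eq:drift_HJM}. The only point requiring a little care is the coordination of the exceptional sets — the null set where cone invariance fails, the null set where the spot comparison fails, and the a.e.-$\xi$ versus every-$\xi$ distinction for the forward curves — but this is routine. I would also emphasize, as the paper stresses, that this argument bypasses the martingale (fair pricing) property of Definition~\ref{def:fair}: the ordering is obtained purely from the deterministic conditions in Assumption~\ref{ass-cone-inv} together with Condition~\ref{cond:order}.
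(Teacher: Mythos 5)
Your proposal is correct and follows essentially the same route as the paper's proof: invoke Proposition \ref{prop-cone-inv} to get $\eta\in K$ (ordered forward curves), use the representation \eqref{spread-cone} with $S^i_t=S^i_0\,\cE(Z^i)_t$, and combine the ordering $\cE(Z^i)\leq\cE(Z^j)$ from Condition \ref{cond:order} (as in Proposition \ref{prop:order}) with $S^i_0\leq S^j_0$ and monotonicity of the exponential. Your additional remarks on exceptional sets and the pointwise-in-$\xi$ ordering are harmless elaborations of what the paper states more compactly.
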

\begin{proof}
According to Proposition \ref{prop-cone-inv}, it holds that $\eta \in K$, meaning that
\[
\eta_t^1(\xi) \geq \eta_t^2(\xi) \geq \ldots \geq \eta_t^m(\xi), 
\qquad \text{for all $(t,\xi)\in\R^2_+$.}
\]
In view of \eqref{spread-cone}, we have that
\[
S^i_tB^i(t,T) = S^i_0\, \cE(Z^i)_t  \exp \bigg( -\int_0^{T-t} \eta_t^i(\xi) \ud \xi \bigg), 
\qquad \text{ for all }i=1,\ldots,m.
\]
Similarly as in the proof of Proposition \ref{prop:order}, Condition \ref{cond:order} implies that $\cE(Z^1) \leq \ldots \leq \cE(Z^m)$, which suffices to prove the desired monotonicity property.
 \end{proof}

\subsection{Existence of affine realizations}	\label{sec:affine}

The concept of finite-dimensional realizations allows reducing the inherently infinite-dimensional structure of HJM models to finite-dimensional factor models, which are more tractable for practical applications. The existence of finite-dimensional realizations has been the subject of substantial investigation in the context of one-dimensional HJM interest rate models under the risk-neutral setup (see, e.g., \cite{Bjoerk-Svensson, Filipovic-Teichmann} and also \cite{bjork2004} for an overview on the topic). For multi-curve HJM interest rate models, conditions for the existence of finite-dimensional realizations have been recently obtained in \cite{FLM25} in the risk-neutral setup and assuming continuous paths.
In this section, we extend these results by providing conditions for the existence of affine finite-dimensional realizations in the context of L\'evy-driven HJM models for multiple term structures under the real-world probability. 

We start by considering an SPDE of the following form on a separable Hilbert space $H$:
\begin{equation}\label{SPDE-Levy}
dX_t = \bigl(A X_t + \alpha(t,X_t)\bigr) \ud t + \beta(t,X_{t-}) \ud L_t,
\qquad X_0 = x_0.
\end{equation}
where $L$ is an $\R^d$-valued L\'{e}vy process, $A$ denotes the generator of a pseudo-contractive $C_0$-semigroup $(\cS_t)_{t \geq 0}$ on $H$ and $\alpha : \Omega\times\R_+ \times H \to H$ and $\beta :\Omega\times \R_+  \times H \to L_2(\R^d,H)$. For each $j=1,\ldots,d$, we introduce the mapping $\beta^j :\Omega\times \R_+  \times H \to H$ given by $\beta^j(\omega,t,h) := \beta(\omega,t,h)e_j$. We assume that conditions \eqref{Lip-1}, \eqref{Lip-2} and \eqref{LG-1}, \eqref{LG-2} from Theorem \ref{thm-SPDE} are satisfied, which ensures existence and uniqueness of mild solutions to the L\'{e}vy-driven SPDE \eqref{SPDE-Levy}.

\begin{definition}
Let $\phi = (\phi^{x_0})_{x_0 \in D(A)}$ be a family of $D(A)$-valued mappings $\phi^{x_0} \in C^1(\R_+;H)$ and $T \in L(\R^n,V)$ an isomorphism, where $V \subset D(A)$ is an $n$-dimensional subspace, for some $n\in\N$. The SPDE \eqref{SPDE-Levy} admits an \emph{$n$-dimensional affine realization} induced by $(\phi,T)$ if, for every $x_0 \in D(A)$, there exists an $\R^n$-valued c\`{a}dl\`{a}g adapted process $Y^{x_0}$ such that the $D(A)$-valued process $X^{x_0}$ given by
\begin{align}\label{X-real}
X_t^{x_0} := \phi^{x_0}(t) + T Y_t^{x_0}, 
\qquad \text{ for all }t \in \R_+,
\end{align}
is a strong solution to the SPDE \eqref{SPDE-Levy} with $X_0^{x_0} = x_0$.
\end{definition}

The notion of affine realization is closely connected to invariant foliations (see, e.g., \cite{Tappe-Wiener, Tappe-Levy}, where the existence of affine realizations has been treated for one-dimensional HJM interest rate models under the risk-neutral setup). In this section, we do not pursue a systematic investigation of affine realizations in our general setup, rather we establish sufficient conditions which apply to the real-world HJMM SPDE \eqref{eq:SPDE} later on. We introduce the mapping $\nu : \Omega\times\R_+ \times D(A) \to H$ as $\nu(\omega,t,h) := Ah + \alpha(\omega,t,h)$.

\begin{proposition}\label{prop-FDR-coeff}
Let $\phi = (\phi^{x_0})_{x_0 \in D(A)}$ be a family of $D(A)$-valued mappings $\phi^{x_0} \in C^1(\R_+;H)$ with $\phi^{x_0}(0) = x_0$ such that $A \phi^{x_0} : \R_+ \to H$ is continuous for each $x_0 \in D(A)$, and let $V \subset D(A)$ be an $n$-dimensional subspace. Suppose that, for all $x_0 \in D(A)$, $(\omega,t) \in\Omega\times \R_+$ and $v \in V$, we have
\begin{align}\label{inv-cond-V-1}
\nu(\omega,t,\phi^{x_0}(t)+v) &\in \frac{\ud}{\ud t} \phi^{x_0}(t) + V,
\\ \label{inv-cond-V-2} 
\beta^j (\omega,t,\phi^{x_0}(t)+v) &\in V, 
\quad \text{ for all }j=1,\ldots,d.
\end{align}
Then, for every isomorphism $T \in L(\R^n,V)$, the SPDE \eqref{SPDE-Levy} admits an $n$-dimensional affine realization induced by $(\phi,T)$.
\end{proposition}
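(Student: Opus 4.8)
The plan is to reduce the infinite-dimensional SPDE \eqref{SPDE-Levy} to a genuine finite-dimensional SDE on $\R^n$ via the isomorphism $T$, using the invariance conditions \eqref{inv-cond-V-1}--\eqref{inv-cond-V-2}, and then to read off the realization by substitution. Fix $x_0 \in D(A)$. Since $V \subset D(A)$ is a finite-dimensional subspace, the restriction $A|_V : V \to H$ is a bounded linear operator, and since $T : \R^n \to V$ is an isomorphism, $T^{-1} : V \to \R^n$ is bounded. Condition \eqref{inv-cond-V-1} says precisely that $\nu(\omega,t,\phi^{x_0}(t)+v) - \tfrac{\ud}{\ud t}\phi^{x_0}(t) \in V$ for every $v \in V$, so we may define
\[
a(\omega,t,y) := T^{-1}\Bigl( \nu\bigl(\omega,t,\phi^{x_0}(t)+Ty\bigr) - \tfrac{\ud}{\ud t}\phi^{x_0}(t) \Bigr), \qquad (\omega,t,y) \in \Omega\times\R_+\times\R^n,
\]
and, using \eqref{inv-cond-V-2}, $b^j(\omega,t,y) := T^{-1}\beta^j(\omega,t,\phi^{x_0}(t)+Ty)$ for $j=1,\ldots,d$, with $b := (b^1,\ldots,b^d)$ regarded as an $L(\R^d,\R^n)$-valued map. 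These are $\cP \otimes \cB(\R^n)$-measurable, using the measurability of $\alpha$, $\beta$, the continuity of $t \mapsto A\phi^{x_0}(t)$ and $t \mapsto \tfrac{\ud}{\ud t}\phi^{x_0}(t)$, the continuity of $y \mapsto ATy$, and the boundedness of $T^{-1}|_V$.

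Next I would check that $a$ and $b$ inherit the structural hypotheses required for finite-dimensional SDE theory. Because $\phi^{x_0}$ is merely continuous on $\R_+$ (not necessarily bounded), one argues on an arbitrary compact interval $[0,N]$: there $c_N := \sup_{s\le N}\|\phi^{x_0}(s)\|_H < +\infty$, so for $|y|\vee|z| \le r$ the arguments $\phi^{x_0}(t)+Ty$ and $\phi^{x_0}(t)+Tz$ remain in the ball of radius $r' := c_N + \|T\|\,r$. Writing $\nu(\omega,t,h)-\nu(\omega,t,g) = A(h-g) + \alpha(\omega,t,h)-\alpha(\omega,t,g)$ and noting $h-g = T(y-z) \in V$, conditions \eqref{Lip-1}--\eqref{Lip-2} and the boundedness of $A|_V$, $T$, $T^{-1}|_V$ yield estimates of the form $|a(\omega,t,y)-a(\omega,t,z)|^2 \le \widetilde{L}^{r'}_t(\omega)|y-z|^2$ and $|b(\omega,t,y)-b(\omega,t,z)|^2 \le \widetilde{L}^{r'}_t(\omega)|y-z|^2$, where $\widetilde{L}^{r'}$ is built from $L^{r'}$ and constants and is therefore optional and locally bounded; similarly \eqref{LG-1}--\eqref{LG-2}, together with the (deterministic, hence locally bounded) continuous paths $t\mapsto\|\phi^{x_0}(t)\|_H$, $t\mapsto\|A\phi^{x_0}(t)\|_H$, $t\mapsto\|\tfrac{\ud}{\ud t}\phi^{x_0}(t)\|_H$, give a linear growth bound $|a(\omega,t,y)|^2 + |b(\omega,t,y)|^2 \le \widetilde{L}_t(\omega)(1+|y|^2)$ with $\widetilde{L}$ optional and locally bounded. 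Consequently, the well-posedness theory underlying \eqref{SPDE-Levy} (applied in the degenerate case $H = \R^n$, $A = 0$, i.e.\ to a finite-dimensional Lévy-driven SDE) provides a unique càdlàg adapted strong solution $Y^{x_0}$ to
\[
\ud Y_t = a(t,Y_t)\,\ud t + b(t,Y_{t-})\,\ud L_t, \qquad Y_0 = 0.
\]

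Finally I would verify that $X_t^{x_0} := \phi^{x_0}(t) + TY_t^{x_0}$, which is $D(A)$-valued since $\phi^{x_0}(t)\in D(A)$ and $TY_t^{x_0}\in V\subset D(A)$, solves \eqref{SPDE-Levy}. Writing $\phi^{x_0}(t) = x_0 + \int_0^t \tfrac{\ud}{\ud s}\phi^{x_0}(s)\,\ud s$ and inserting the SDE for $Y^{x_0}$ (with $Y_0^{x_0}=0$), one uses $T a(s,Y_s^{x_0}) + \tfrac{\ud}{\ud s}\phi^{x_0}(s) = \nu(s,\phi^{x_0}(s)+TY_s^{x_0}) = A X_s^{x_0} + \alpha(s,X_s^{x_0})$ for the drift and $T b(s,Y_{s-}^{x_0}) = \beta(s,\phi^{x_0}(s)+TY_{s-}^{x_0}) = \beta(s,X_{s-}^{x_0})$ for the stochastic integrand (the last identity thanks to the continuity of $\phi^{x_0}$), obtaining
\[
X_t^{x_0} = x_0 + \int_0^t \bigl( A X_s^{x_0} + \alpha(s,X_s^{x_0}) \bigr)\,\ud s + \int_0^t \beta(s,X_{s-}^{x_0})\,\ud L_s, \qquad X_0^{x_0} = x_0,
\]
where the integrability conditions in the definition of a strong solution follow from \eqref{LG-1}--\eqref{LG-2} and the path-wise local boundedness of the càdlàg process $X^{x_0}$. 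This is exactly the representation \eqref{X-real}, so $(\phi,T)$ induces an $n$-dimensional affine realization. The only delicate point I anticipate is the second step: carefully tracking that, because $\phi^{x_0}$ and $A\phi^{x_0}$ are only continuous rather than globally bounded, the Lipschitz and growth coefficients of $a$ and $b$ are time-dependent optional locally bounded processes, which forces the compact-interval argument; once this is in place, the remaining substitution is routine.
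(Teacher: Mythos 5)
Your proposal is correct and follows essentially the same route as the paper: define the $\R^n$-valued coefficients through the invariance conditions \eqref{inv-cond-V-1}--\eqref{inv-cond-V-2} (your $a,b$ are exactly the paper's $\mu^{x_0},\gamma^{x_0}$, written via $T^{-1}$), check that they inherit the random local Lipschitz and linear growth conditions, solve the resulting finite-dimensional L\'evy-driven SDE by the paper's Theorem \ref{thm-SDE}, and verify \eqref{X-real} by substitution. The only difference is that you spell out the compact-interval argument handling the merely continuous (unbounded) $\phi^{x_0}$ and $A\phi^{x_0}$, a step the paper's proof states without detail.
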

\begin{proof}
Let $T \in L(\R^n,V)$ be an arbitrary isomorphism and $x_0 \in D(A)$. By \eqref{inv-cond-V-1} and \eqref{inv-cond-V-2}, there exist mappings $\mu^{x_0} : \Omega\times\R_+ \times \R^n \to \R^n$ and $\gamma^{x_0} :\Omega\times \R_+  \times \R^n \to L_2(\R^d,\R^n)$ such that, for all $(\omega,t,y) \in \Omega\times\R_+ \times \R^n$, we have
\begin{align}\label{inv-cond-coeff-1}
\nu(\omega,t,\phi^{x_0}(t)+Ty) &= \frac{\ud}{\ud t} \phi^{x_0}(t) + T \mu^{x_0}(\omega,t,y),
\\ \label{inv-cond-coeff-2} \beta^j(\omega,t,\phi^{x_0}(t)+Ty) &= T \gamma^{x_0,j}(\omega,t,y), 
\quad \text{ for all }j=1,\ldots,d.
\end{align}
Noting that $A \phi^{x_0} : \R_+ \to H$ is continuous, from \eqref{inv-cond-coeff-1} and \eqref{inv-cond-coeff-2} we deduce that $\mu^{x_0}$ and $\gamma^{x_0}$ satisfy the local Lipschitz conditions \eqref{Lip-a}, \eqref{Lip-b} and the linear growth conditions \eqref{LG-a}, \eqref{LG-b}. Hence, by Theorem \ref{thm-SDE} there exists a unique strong solution $Y^{x_0}$ to the $\R^n$-valued SDE
\begin{equation}\label{SDE-Levy}
dY_t^{x_0} = \mu^{x_0}(t,Y_t) \ud t + \gamma^{x_0}(t,Y_{t-}) \ud L_t,
\qquad Y_0^{x_0} = 0.
\end{equation}
Let us now define the $D(A)$-valued process $X^{x_0}$ by \eqref{X-real}. In view of \eqref{inv-cond-coeff-1} and \eqref{inv-cond-coeff-2}, we obtain
\begin{align*}
X_t^{x_0} &= \phi^{x_0}(t) + T Y_t^{x_0} 
\\ &= \phi^{x_0}(0) + \int_0^t \frac{\ud}{\ud s} \phi^{x_0}(s) \ud s + \int_0^t T \mu^{x_0}(s,Y_s^{x_0} )\ud s + \int_0^t T \gamma^{x_0}(s,Y_{s-}^{x_0} ) \ud L_s
\\ &= x_0 + \int_0^t \nu(s,X_s^{x_0} )\ud s + \int_0^t \beta(s,X_{s-}^{x_0} ) \ud L_s, 
\qquad \text{ a.s. for all }t \in \R_+,
\end{align*}
thus proving that $X^{x_0}$ is a strong solution to the SPDE \eqref{SPDE-Levy} with $X_0^{x_0} = x_0$.
\end{proof}

We point out that the proof of Proposition \ref{prop-FDR-coeff} is constructive in the sense that the state processes $Y^{x_0}$, for each $x_0 \in D(A)$, is given by the solution to the $\R^n$-valued SDE \eqref{SDE-Levy}.

\begin{proposition}\label{prop-real-suff}
Let $V \subset D(A)$ be an $n$-dimensional $A$-invariant subspace such that the following conditions are fulfilled:
\begin{enumerate}
\item for all $(\omega,t,h) \in \Omega\times\R_+ \times H$, we have
\begin{align}\label{beta-in-V}
\beta^j(\omega,t,h) \in V, 
\quad\text{ for all } j=1,\ldots,d;
\end{align}
\item there exists a $D(A)$-valued mapping $\alpha_1 \in C^1(\R_+;H)$ such that $A \alpha_1 : \R_+ \to H$ is continuous and a mapping $\alpha_2 :\Omega\times \R_+ \times H \to V$ such that
\begin{align}\label{decomp-alpha}
\alpha(\omega,t,h) = \alpha_1(t) + \alpha_2(\omega,t,h), 
\quad\text{ for all } (\omega,t,h) \in\Omega\times \R_+  \times H.
\end{align}
\end{enumerate}
Define the family $\phi = (\phi^{x_0})_{x_0 \in D(A)}$ by
\[
\phi^{x_0}(t) := \cS_t x_0 + \int_0^t \cS_{t-s} \alpha_1(s) \ud s, 
\quad \text{ for all }t \in \R_+.
\]
Then, for every isomorphism $T \in L(\R^n,V)$, the SPDE \eqref{SPDE-Levy} has an $n$-dimensional affine realization induced by $(\phi,T)$.
\end{proposition}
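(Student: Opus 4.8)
The plan is to reduce the statement to Proposition \ref{prop-FDR-coeff}: I would show that the family $\phi = (\phi^{x_0})_{x_0 \in D(A)}$ together with the subspace $V$ satisfies the hypotheses of that proposition, and then invoke it directly to obtain the affine realization for any isomorphism $T \in L(\R^n,V)$.

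The first step is to establish the required regularity of each $\phi^{x_0}$. Since $x_0 \in D(A)$ and $\alpha_1$ is a $D(A)$-valued map in $C^1(\R_+;H)$ with $A\alpha_1$ continuous, the function $\phi^{x_0}(t) = S_t x_0 + \int_0^t S_{t-s}\alpha_1(s)\,\ud s$ is the classical solution of the inhomogeneous Cauchy problem $\dot u = Au + \alpha_1$, $u(0) = x_0$; by the standard theory of $C_0$-semigroups this yields $\phi^{x_0} \in C^1(\R_+;H)$, $\phi^{x_0}(t) \in D(A)$ for every $t$, $\phi^{x_0}(0) = x_0$, the map $t \mapsto A\phi^{x_0}(t)$ continuous, and the identity
\[
\frac{\ud}{\ud t}\phi^{x_0}(t) = A\phi^{x_0}(t) + \alpha_1(t), \qquad t \in \R_+.
\]
This is the only part of the argument that is not purely algebraic, and it is precisely where the regularity assumptions on $\alpha_1$ and the condition $x_0 \in D(A)$ enter; I expect this to be the main, though standard, obstacle.

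With this in hand, the two invariance conditions of Proposition \ref{prop-FDR-coeff} follow by linear algebra. Fixing $x_0 \in D(A)$, $(\omega,t) \in \Omega \times \R_+$ and $v \in V$, and recalling $\nu(\omega,t,h) = Ah + \alpha(\omega,t,h)$, the $A$-invariance of $V$ (so $Av \in V$), the splitting $\alpha(\omega,t,h) = \alpha_1(t) + \alpha_2(\omega,t,h)$ with $\alpha_2$ $V$-valued, and the identity above give
\[
\nu(\omega,t,\phi^{x_0}(t)+v) = \Big( A\phi^{x_0}(t) + \alpha_1(t) \Big) + \Big( Av + \alpha_2(\omega,t,\phi^{x_0}(t)+v) \Big) = \frac{\ud}{\ud t}\phi^{x_0}(t) + \Big( Av + \alpha_2(\omega,t,\phi^{x_0}(t)+v) \Big),
\]
and the last bracketed term lies in $V$, which is \eqref{inv-cond-V-1}; condition \eqref{inv-cond-V-2} is immediate from \eqref{beta-in-V}, since $\beta^j(\omega,t,h) \in V$ holds for every $h \in H$, in particular for $h = \phi^{x_0}(t)+v$. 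An application of Proposition \ref{prop-FDR-coeff} then shows that, for every isomorphism $T \in L(\R^n,V)$, the SPDE \eqref{SPDE-Levy} admits an $n$-dimensional affine realization induced by $(\phi,T)$, which completes the proof.
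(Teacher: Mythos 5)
Your proposal is correct and follows essentially the same route as the paper's proof: identify $\phi^{x_0}$ as the classical solution of the inhomogeneous Cauchy problem $\dot\phi = A\phi + \alpha_1$, $\phi(0)=x_0$ via standard $C_0$-semigroup theory (the paper cites Pazy, Theorem 4.2.4), verify conditions \eqref{inv-cond-V-1} and \eqref{inv-cond-V-2} by the same algebraic computation using the $A$-invariance of $V$ and the decomposition \eqref{decomp-alpha}, and conclude by Proposition \ref{prop-FDR-coeff}. No gaps.
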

\begin{proof}
For $x_0 \in D(A)$, the mapping $\phi^{x_0}$ is a solution to the inhomogeneous initial value problem
\[
\frac{\ud}{\ud t} \phi(t) = A \phi(t) + \alpha_1(t),
\qquad \phi(0) = x_0,
\]
see for instance \cite[Theorem 4.2.4]{Pazy}. In particular, it follows that $A \phi^{x_0} : \R_+ \to H$ is continuous. Condition \eqref{inv-cond-V-2} is satisfied due to \eqref{beta-in-V}. Furthermore, for all $(\omega,t) \in \Omega\times\R_+ $ and $v \in V$, we obtain
\begin{align*}
\nu(\omega,t,\phi^{x_0}(t)+v) &= A \phi^{x_0}(t) + Av + \alpha_1(t) + \alpha_2(\omega,t,\phi^{x_0}(t)+v)
\\ &= \frac{\ud}{\ud t} \phi^{x_0}(t) + A v + \alpha_2(\omega,t,\phi^{x_0}(t)+v) \in \frac{\ud}{\ud t} \phi^{x_0}(t) + V,
\end{align*}
where in the last step we have used that $V$ is $A$-invariant. This shows that condition \eqref{inv-cond-V-1} holds. The claim then follows by applying Proposition \ref{prop-FDR-coeff}.
\end{proof}

\begin{proposition}\label{prop-real-trans}
Suppose that the SPDE \eqref{SPDE-Levy} has an $n$-dimensional affine realization induced by $(\phi,T)$. Let $\ell \in L(H,\R^n)$ be such that the linear operator $U := \ell T \in L(\R^n)$ is an isomorphism and let $R \in L(\R^n,V)$ be the isomorphism $R := T U^{-1}$. Then, for each $x_0 \in D(A)$, the following hold:
\begin{enumerate}
\item the strong solution $X^{x_0}$ to the SPDE \eqref{SPDE-Levy} with $X_0^{x_0} = x_0$ satisfies
\begin{align}\label{X-real-trans}
X_t^{x_0} = \varphi^{x_0}(t) + R Z_t^{x_0}, 
\quad\text{ for all } t \in \R_+,
\end{align}
where $Z^{x_0} := \ell(X^{x_0})$, and the mapping $\varphi^{x_0} : \R_+ \to H$ is given by
\begin{align}\label{varphi-trans}
\varphi^{x_0}(t) = (\Id - R \ell) \phi^{x_0}(t), 
\quad\text{ for all } t \in \R_+;
\end{align}
\item the process $Z^{x_0}$ is the strong solution to the $\R^n$-valued SDE
\begin{equation}\label{SDE-Z-Levy}
dZ_t^{x_0} = \bar{\mu}^{x_0}(t,Z_t^{x_0}) \ud t + \bar{\gamma}^{x_0}(t,Z_{t-}^{x_0}) \ud L_t,
\qquad Z_0^{x_0} = \ell(x_0),
\end{equation}
where $\bar{\mu}^{x_0} : \Omega\times\R_+ \times \R^n \to \R^n$ and $\bar{\gamma}^{x_0} : \Omega\times\R_+ \times \R^n \to L_2(\R^d,\R^n)$ are given by
\begin{align}\label{mu-trans}
\bar{\mu}^{x_0}(\omega,t,z) &= \ell \nu(\omega,t,\varphi^{x_0}(t) + R z),
\\ \label{gamma-trans} \bar{\gamma}^{x_0}(\omega,t,z) &= \ell \beta(\omega,t,\varphi^{x_0}(t) + R z).
\end{align}
\end{enumerate}
\end{proposition}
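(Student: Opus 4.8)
The plan is to reduce the proposition to two elementary identities for the operators involved plus one commutation argument for the stochastic integral; there is no deep difficulty here, the content is essentially bookkeeping. First I would record the linear algebra. Since $\ell T = S$ and $R = TS^{-1}$, one has $\ell R = (\ell T)S^{-1} = \Id_{\R^n}$, and for every $v = Tw \in V$ one has $R\ell v = R(\ell T)w = RSw = Tw = v$; hence $R\ell \in L(H)$ is a bounded projection of $H$ with range $V$ and $\Id - R\ell$ is the complementary projection. In particular $\varphi^{x_0} := (\Id - R\ell)\phi^{x_0}$ lies in $C^1(\R_+;H)$ and is $D(A)$-valued, because $\phi^{x_0}$ is $D(A)$-valued and $R$ maps into $V \subset D(A)$; also $R$ is an isomorphism onto $V$ as a composition of isomorphisms.

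For part (1), I would start from the affine-realization representation $X_t^{x_0} = \phi^{x_0}(t) + TY_t^{x_0}$ of the (unique) strong solution, apply $\ell$ to get $Z_t^{x_0} = \ell\phi^{x_0}(t) + SY_t^{x_0}$, solve for $Y_t^{x_0} = S^{-1}\big(Z_t^{x_0} - \ell\phi^{x_0}(t)\big)$, and substitute back to obtain
\[
X_t^{x_0} = \phi^{x_0}(t) + TS^{-1}\big(Z_t^{x_0} - \ell\phi^{x_0}(t)\big) = (\Id - R\ell)\phi^{x_0}(t) + RZ_t^{x_0} = \varphi^{x_0}(t) + RZ_t^{x_0},
\]
which is precisely \eqref{X-real-trans}--\eqref{varphi-trans}; càdlàg-adaptedness of $Z^{x_0} = \ell(X^{x_0})$ and the initial value $Z_0^{x_0} = \ell(x_0)$ follow at once from boundedness of $\ell$ and $X_0^{x_0} = x_0$. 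For part (2), I would use that $X^{x_0}$, being a strong solution of \eqref{SPDE-Levy}, satisfies the integral equation $X_t^{x_0} = x_0 + \int_0^t \nu(s,X_s^{x_0})\,\ud s + \int_0^t \beta(s,X_{s-}^{x_0})\,\ud L_s$ and apply the fixed bounded operator $\ell$ to both sides; since $\ell$ commutes with the Bochner integral and with the stochastic integral against $L$, this gives $Z_t^{x_0} = \ell(x_0) + \int_0^t \ell\nu(s,X_s^{x_0})\,\ud s + \int_0^t \ell\beta(s,X_{s-}^{x_0})\,\ud L_s$. Substituting $X_s^{x_0} = \varphi^{x_0}(s) + RZ_s^{x_0}$ from part (1) (and $X_{s-}^{x_0} = \varphi^{x_0}(s) + RZ_{s-}^{x_0}$, as $\varphi^{x_0}$ is continuous and $R$ bounded) and invoking the definitions \eqref{mu-trans}--\eqref{gamma-trans} turns the two integrands into $\bar\mu^{x_0}(s,Z_s^{x_0})$ and $\bar\gamma^{x_0}(s,Z_{s-}^{x_0})$, so $Z^{x_0}$ solves the integral equation associated to \eqref{SDE-Z-Levy} with $Z_0^{x_0} = \ell(x_0)$, while the integrability requirements for a strong solution of the finite-dimensional SDE transfer from those of $X^{x_0}$ through boundedness of $\ell$.

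To justify the article \emph{the} in the statement, I would finally check that $\bar\mu^{x_0}$ and $\bar\gamma^{x_0}$ are locally Lipschitz in the state variable, so that \eqref{SDE-Z-Levy} has at most one solution: writing $\bar\mu^{x_0}(\omega,t,\cdot) = \ell A\varphi^{x_0}(t) + (\ell AR)(\cdot) + \ell\alpha\big(\omega,t,\varphi^{x_0}(t)+R\,\cdot\big)$, the first summand is independent of the state, the second is a bounded linear map (here $AR \in L(\R^n,H)$ is bounded since $V$ is finite-dimensional), and the third is locally Lipschitz because $\alpha$ is and $z \mapsto \varphi^{x_0}(t)+Rz$ is affine with bounded linear part; the argument for $\bar\gamma^{x_0} = \ell\beta(\,\cdot\,,\varphi^{x_0}(\cdot)+R\,\cdot\,)$ is the same, using \eqref{Lip-2}. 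Combined with the existence of the global solution $Z^{x_0}$ already exhibited, this gives uniqueness. The only genuinely delicate point in the whole argument is the justification that the deterministic bounded operator $\ell$ may be pulled inside the stochastic integral against $L$; I would handle this by the standard approximation-by-simple-integrands argument, relying on the Lévy--Itô decomposition of $L$ into a Brownian part and a compensated Poisson part as used in Section \ref{sec-SPDE}.
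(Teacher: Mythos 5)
Your proposal is correct and follows essentially the same route as the paper's proof: apply $\ell$ to the affine representation $X_t^{x_0}=\phi^{x_0}(t)+TY_t^{x_0}$, solve for $Y^{x_0}$ and substitute back to get \eqref{X-real-trans}, then apply $\ell$ to the integral equation of the strong solution and identify the coefficients via \eqref{mu-trans}--\eqref{gamma-trans}. The extra points you make explicit (commuting $\ell$ with the stochastic integral, and the local Lipschitz check for $\bar\mu^{x_0},\bar\gamma^{x_0}$ justifying uniqueness of the finite-dimensional SDE) are left implicit in the paper's two-line proof but are sound.
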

\begin{proof}
Applying the operator $\ell$ to equation \eqref{X-real} we see that
\[
Y_t^{x_0} = U^{-1} \ell ( X_t^{x_0} - \phi^{x_0}(t) ), 
\quad\text{ for all } t \in \R_+.
\]
Inserting this expression for $Y^{x_0}$ into \eqref{X-real} we arrive at \eqref{X-real-trans}. Furthermore, we have
\begin{align*}
\ell(X_t^{x_0}) &= \ell(x_0) + \int_0^t \ell \nu(s,X_s^{x_0}) \ud s + \int_0^t \ell \beta(s,X_{s-}^{x_0}) \ud L_s,
\quad\text{ for all } t \in \R_+.
\end{align*}
Together with \eqref{X-real-trans}, this completes the proof.
\end{proof}

After these preparations, we are now in a position to study the existence of affine realizations for the real-world HJMM SPDE \eqref{eq:SPDE} on the state space $H := H_{\rho}^{m+1}$, for some $\rho > 0$. For simplicity, we first assume that the SPDE \eqref{eq:SPDE} is driven by a one-dimensional Brownian motion $W$. Furthermore, we assume that the volatility $\beta$ is constant and given by
\begin{align}\label{beta-Vasicek}
\beta = \big( c_i e^{\delta_i \cdot} \big)_{i=0,1,\ldots,m}
\end{align}
with constants $c_i \in \R$ and $\delta_i < - \rho / 2$ for $i=0,1,\ldots,m$. In view of \eqref{eq:drift_HJM}, the drift $\alpha$ is then given by
\[
\alpha^i(\omega,t,h) = \beta^i \bar{\beta}^i - ( \lambda_t(\omega) + b_t^i(\omega) ) \beta^i, 
\qquad\text{ for } i=0,1,\ldots,m,
\]
where $b^0 \equiv 0$. We define the $(m+1)$-dimensional $D(\ud / \ud \xi)$-invariant subspace $V \subset D(\ud / \ud \xi)$ as
\begin{align}\label{V-subspace}
V := \big\{ \big( v^i e^{\delta_i \cdot} \big)_{i = 0,1,\ldots,m} : v = (v^0,v^1,\ldots,v^m) \in \R^{m+1} \big\}.
\end{align}
The drift $\alpha$ admits then a decomposition of the form \eqref{decomp-alpha} with
\[
\alpha_1 = \big( \beta^i \bar{\beta}^i \big)_{i=0,1,\ldots,m}
\qquad\text{ and } \qquad
\alpha_2(t,\omega) = \big( -( \lambda_t(\omega) + b_t^i(\omega) ) \beta^i \big)_{i=0,1,\ldots,m}.
\]
In view of Proposition \ref{prop-real-suff}, the real-world HJMM SPDE \eqref{eq:SPDE} admits an $(m+1)$-dimensional affine realization. We are especially interested in constructing a realization where the state process is given by the short end of the forward curves, in analogy to the concept of short-rate realization in the context of interest rate models (see, e.g., \cite{Bjoerk-Svensson}).
For this purpose, let $\ell \in L(H,\R^{m+1})$ be given by
\[
\ell(h) = (h^0(0),\ldots,h^m(0)), 
\qquad\text{ for } h \in H.
\]

\begin{proposition}\label{prop-Vasicek}
For each $h_0 \in D(\ud / \ud \xi)$, the following hold:
\begin{enumerate}
\item the strong solution $\eta^{h_0}$ to the real-world HJMM SPDE \eqref{eq:SPDE} with $\eta_0^{h_0} = h_0$ satisfies
\[
\eta_t^{h_0} = \varphi^{h_0}(t) + \big( Z_t^{h_0,i} e^{\delta_i \cdot} \big)_{i=0,1,\ldots,m}, 
\qquad \text{ for all }t \in \R_+,
\]
where $Z^{h_0} := \ell(\eta^{h_0})$ and the mapping $\varphi^{h_0} : \R_+ \to H$ has  components
\begin{align}\label{varphi-Vasicek}
\varphi^{h_0,i}(t) := \cS_t h_0^i - h_0^i(t) e^{\delta_i \cdot} + \frac{c_i^2}{2 \delta_i^2} \big( e^{2 \delta_i t} - 1 \big) \big( e^{\delta_i \cdot} - 1 \big) e^{\delta_i \cdot}, 
\;\text{ for all } t \in \R_+\text{ and }i=0,1,\ldots,m;
\end{align}
\item the state process $Z^{h_0}$ is the strong solution to the $\R^{m+1}$-valued SDE
\[
dZ_t^{h_0} = \bar{\mu}^{h_0}(t,Z_t^{h_0}) \ud t + c \, \ud W_t,
\qquad Z_0^{h_0} = \ell(h_0),
\]
where $c := (c_0,c_1,\ldots,c_m) \in \R^{m+1}$ and the coefficient $\bar{\mu}^{h_0} :\Omega\times \R_+ \times \R^{m+1} \to \R^{m+1}$ has components
\[
\bar{\mu}^{h_0,i}(\omega,t,z) := -c_i(\lambda_t(\omega) + b_t^i(\omega)) + \kappa^{h_0,i}(t) + \delta_i z^i, 
\qquad\text{ for all } i=0,1,\ldots,m,
\]
where, for each $i=0,1,\ldots,m$, the mapping $\kappa^{h_0,i} : \R_+ \to \R$ is given by
\[
\kappa^{h_0,i}(t) = \frac{\ud}{\ud t} h_0^i(t) - \delta_i h_0^i(t) + \frac{c_i^2}{2 \delta_i} \big( e^{2 \delta_i t} - 1 \big), 
\qquad\text{ for all } t \in \R_+.
\]
\end{enumerate}
\end{proposition}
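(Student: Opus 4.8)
The plan is to derive the statement from Propositions~\ref{prop-real-suff} and \ref{prop-real-trans}, specialised to the constant volatility \eqref{beta-Vasicek} and to the ``short-end'' evaluation functional, and then to carry out the (elementary) explicit computations. As already noted in the paragraph preceding the statement, Proposition~\ref{prop-real-suff} applies here: the constant map $\beta = (c_i e^{\delta_i \cdot})_{i=0,\ldots,m}$ takes values in the subspace $V$ of \eqref{V-subspace} (so \eqref{beta-in-V} holds, with $d=1$), and the drift admits the decomposition \eqref{decomp-alpha} with $\alpha_1 = (\beta^i\bar\beta^i)_{i}$ and $\alpha_2(\omega,t,\cdot) = (-(\lambda_t(\omega)+b^i_t(\omega))\beta^i)_{i}$, where $\alpha_1$ is constant in $t$ with value in $D(A)$ and $A\alpha_1$ is continuous (here $A=\ud/\ud\xi$ is the shift generator, and the memberships use $\delta_i<-\rho/2$, which gives $e^{\delta_i\cdot},e^{2\delta_i\cdot}\in D(A)\subset H_\rho$), while $\alpha_2$ is $V$-valued. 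Thus, for the canonical isomorphism $T\in L(\R^{m+1},V)$ given by $Tv := (v^ie^{\delta_i\cdot})_i$, the real-world HJMM SPDE \eqref{eq:SPDE} has an $(m+1)$-dimensional affine realization induced by $(\phi,T)$ with $\phi^{h_0}(t) = S_th_0 + \int_0^tS_{t-s}\alpha_1\,\ud s$.

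Next I would apply Proposition~\ref{prop-real-trans} with $\ell(h) := (h^0(0),\ldots,h^m(0))$. Since $\ell T v = ((v^ie^{\delta_i\cdot})(0))_i = v$, the operator $S := \ell T$ is the identity on $\R^{m+1}$, hence an isomorphism, and $R := TS^{-1} = T$. Part~(1) of Proposition~\ref{prop-real-trans} then yields $\eta^{h_0}_t = \varphi^{h_0}(t) + TZ^{h_0}_t$ with $Z^{h_0} := \ell(\eta^{h_0})$ (so $Z^{h_0,i}_t = \eta^{h_0,i}_t(0)$ is the short end of the $i$-th forward curve) and $\varphi^{h_0}(t) = (\Id - T\ell)\phi^{h_0}(t)$; since the $i$-th component of $TZ^{h_0}_t$ is $Z^{h_0,i}_te^{\delta_i\cdot}$, this is exactly the asserted representation. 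Part~(2) yields that $Z^{h_0}$ is the strong solution of $\ud Z^{h_0}_t = \bar\mu^{h_0}(t,Z^{h_0}_t)\,\ud t + \bar\gamma^{h_0}(t,Z^{h_0}_{t-})\,\ud W_t$ with $Z^{h_0}_0 = \ell(h_0)$, where $\bar\mu^{h_0}(\omega,t,z) = \ell\nu(\omega,t,\varphi^{h_0}(t)+Tz)$ and $\bar\gamma^{h_0}(\omega,t,z) = \ell\beta(\omega,t,\varphi^{h_0}(t)+Tz)$. Since $\beta$ is the constant function $(c_ie^{\delta_i\cdot})_i$, one gets at once $\bar\gamma^{h_0}(\omega,t,z) = (c_i)_i = c$, which is the claimed diffusion coefficient.

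It then remains to make $\varphi^{h_0}$ and $\bar\mu^{h_0}$ explicit. For $\varphi^{h_0}$, I would compute the convolution $\int_0^tS_{t-s}\alpha_1\,\ud s$ componentwise: using $\beta^i(\xi)\bar\beta^i(\xi) = \tfrac{c_i^2}{\delta_i}(e^{2\delta_i\xi}-e^{\delta_i\xi})$ and $(S_{t-s}\alpha_1^i)(\xi) = \alpha_1^i(t-s+\xi)$, the substitution $u=t-s$ and elementary integration give $\phi^{h_0,i}(t)(\xi) = h^i_0(t+\xi) + \tfrac{c_i^2}{2\delta_i^2}(e^{2\delta_it}-1)e^{2\delta_i\xi} - \tfrac{c_i^2}{\delta_i^2}(e^{\delta_it}-1)e^{\delta_i\xi}$; since $(T\ell\phi^{h_0}(t))^i(\xi) = \phi^{h_0,i}(t)(0)\,e^{\delta_i\xi}$, subtracting and performing the obvious cancellations of the $e^{\delta_i\xi}$-terms (and factoring the remaining $e^{2\delta_i\xi}$ and $e^{\delta_i\xi}$ contributions) produces precisely \eqref{varphi-Vasicek}. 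For $\bar\mu^{h_0}$, since $\nu(\omega,t,h) = Ah + \alpha(\omega,t,h)$ and $\ell$ is evaluation at $\xi=0$, the $i$-th component equals $\partial_\xi\varphi^{h_0,i}(t)(\xi)\big|_{\xi=0} + \delta_iz^i + \alpha^i(\omega,t,\varphi^{h_0}(t)+Tz)(0)$; using $\bar\beta^i(0)=0$ and the constancy of $\beta$ gives $\alpha^i(\omega,t,\cdot)(0) = -c_i(\lambda_t(\omega)+b^i_t(\omega))$ (state-independent), and differentiating \eqref{varphi-Vasicek} at $\xi=0$ gives $\partial_\xi\varphi^{h_0,i}(t)(\xi)|_{\xi=0} = (h^i_0)'(t) - \delta_ih^i_0(t) + \tfrac{c_i^2}{2\delta_i}(e^{2\delta_it}-1) = \kappa^{h_0,i}(t)$, yielding the asserted form of $\bar\mu^{h_0,i}$. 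Finally, the strong solution constructed by the realization is a fortiori a mild solution, so by uniqueness of mild solutions — which holds here since $\beta$ is constant and $\alpha$ is affine with predictable locally bounded coefficients, so \eqref{Lip-1}, \eqref{Lip-2}, \eqref{LG-1}, \eqref{LG-2} are trivially satisfied — it coincides with $\eta^{h_0}$. I do not expect a genuine obstacle: the structural content is carried entirely by Propositions~\ref{prop-real-suff} and \ref{prop-real-trans}, and the only points requiring care are the bookkeeping in the convolution integral together with the cancellation that yields the compact form \eqref{varphi-Vasicek}, and reading $\ell\circ A$ correctly as ``differentiate in $\xi$, then set $\xi=0$''.
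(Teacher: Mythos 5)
Your proposal is correct and follows essentially the same route as the paper's proof: apply Proposition \ref{prop-real-suff} with $T y = (y^i e^{\delta_i \cdot})_i$, then Proposition \ref{prop-real-trans} with $\ell$ the short-end evaluation (so $S=\ell T=\Id$, $R=T$), and carry out the convolution integral, the cancellation yielding \eqref{varphi-Vasicek}, and the evaluation of $A h+\alpha$ at $\xi=0$ (using $\bar\beta^i(0)=0$) to obtain $\bar\mu^{h_0}$ and $\bar\gamma^{h_0}=c$. All the explicit computations you outline check out, and your closing identification via uniqueness of mild solutions is consistent with the standing assumptions the paper imposes on \eqref{SPDE-Levy}.
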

\begin{proof}
Let $T \in L(\R^{m+1},V)$ be the linear isomorphism given by
\[
T y := \big( y^i e^{\delta_i \cdot} \big)_{i=0,1,\ldots,m}.
\]
Using Proposition \ref{prop-real-suff}, the real-world HJMM SPDE \eqref{eq:SPDE} has an $(m+1)$-dimensional affine realization induced by $(\phi,T)$, where the family $\phi = (\phi^{h_0})_{h_0 \in D(\ud / \ud \xi)}$ is given by
\[
\phi^{h_0}(t) = \cS_t h_0 + \int_0^t \cS_{t-s} \alpha_1 \, \ud s, 
\qquad\text{ for  all } t \in \R_+.
\]
Note that $U := \ell T \in L(\R^{m+1})$ is simply given by $U = \Id$. The isomorphism $R := T U^{-1} \in L(\R^{m+1},V)$ then reduces to $R = T$ and, therefore, $R \ell \in L(H,V)$ is given by
\[
R \ell h = \big( h^i(0) e^{\delta_i \cdot} \big)_{i=0,1,\ldots,m}, 
\quad\text{ for } h \in H. 
\]
A straightforward calculation shows that
\[
\int_0^t \cS_{t-s} \alpha_1 \, \ud s = \bigg( \frac{c_i^2}{2 \delta_i^2} \Big( \big( e^{2 \delta_i t} - 1 \big) e^{\delta_i \cdot} - 2 \big( e^{\delta_i t} - 1 \big) \Big) e^{\delta_i \cdot} \bigg)_{i=0,1,\ldots,m}.
\]
Therefore, we have that
\[
R \ell \int_0^t \cS_{t-s} \alpha_1 \, \ud s = \bigg( \frac{c_i^2}{2 \delta_i^2} \big( e^{2 \delta_i t} - 2 e^{\delta_i t} + 1 \big) e^{\delta_i \cdot} \bigg)_{i=0,1,\ldots,m}.
\]
In turn, this implies that the mapping $\varphi^{h_0} : \R_+ \to H$ defined according to \eqref{varphi-trans} is given by \eqref{varphi-Vasicek}. Furthermore, the mapping $\bar{\gamma}^{h_0} : \Omega\times\R_+ \times \R^{m+1} \to \R^{m+1}$ defined according to \eqref{gamma-trans} is given by
\[
\bar{\gamma}^{h_0}(\omega,t,z) = \ell \beta(\omega,t,\varphi^{h_0}(t) + Rz) = \ell \big( c_i e^{\delta_i \cdot} \big)_{i=0,1,\ldots,m} = c
\]
and the mapping $\bar{\mu}^{h_0} :\Omega\times \R_+ \times \R^{m+1} \to \R^{m+1}$ defined according to \eqref{mu-trans} is given by
\begin{align*}
\bar{\mu}^{h_0}(\omega,t,z) &= \ell \nu(\omega,t,\varphi^{h_0}(t) + Rz) = \ell \frac{\ud}{\ud \xi} ( \varphi^{h_0}(t) + Rz ) + \ell \alpha_2(\omega,t,\varphi^{h_0}(t) + Rz)
\\ &= \ell \frac{\ud}{\ud \xi} \varphi^{h_0}(t) + \big( \delta_i z^i \big)_{i=0,1,\ldots,m} + \big( -c_i (\lambda_t(\omega) + b_t^i(\omega)) \big)_{i=0,1,\ldots,m}. 
\end{align*}
Taking into account \eqref{varphi-Vasicek}, we have
\begin{align*}
\frac{\ud}{\ud \xi} \varphi^{h_0}(t) = \bigg( \cS_t \frac{\ud}{\ud \xi} h_0^i - \delta_i h_0^i(t) e^{\delta_i \cdot} + \frac{c_i^2}{2 \delta_i} \big( e^{2 \delta_i t} - 1 \big) \big( 2 e^{2 \delta_i \cdot} - e^{\delta_i \cdot} \big) \bigg)_{i=0,1,\ldots,m},
\end{align*}
and, therefore,
\[
\ell \frac{\ud}{\ud \xi} \varphi^{h_0}(t) = \bigg( \frac{\ud}{\ud t} h_0^i(t) - \delta_i h_0^i(t) + \frac{c_i^2}{2 \delta_i} ( e^{2 \delta_i t} - 1 ) \bigg)_{i=0,1,\ldots,m}.
\]
At this stage, the result follows by an application of Proposition \ref{prop-real-trans}.
\end{proof}

\begin{remark}
In the particular case $m=0$ and $\lambda \equiv 0$, the real-world HJMM SPDE \eqref{eq:SPDE} reduces to the Hull-White extension of the Vasi\v{c}ek model under a  risk-neutral probability. In this case, it is straightforward to check that the result of Proposition \ref{prop-Vasicek} coincides with the result obtained when performing the inversion of the yield curve (see, for instance, \cite[Section 5.4.5]{fil09}).
\end{remark}

Proposition \ref{prop-real-suff} can also be applied to the real-world HJMM SPDE \eqref{eq:SPDE} driven by a general L\'{e}vy process. However, the results of \cite{PT15} indicate that some restrictions will arise on the functions $\psi$ and $c$, already in the one-dimensional case. For simplicity, let $L$ be a one-dimensional L\'{e}vy process of the form $L = W + x * (\mu^L - \nu)$, where $W$ is a Brownian motion and $\nu(\ud t, \ud x) = F(\ud x) \ud t$ is the compensator of $\mu^L$ with L\'{e}vy measure $F$. Similarly as above, we assume that the volatility $\beta$ is constant and given by \eqref{beta-Vasicek}. Furthermore, we assume that $\psi$ and $c$ are deterministic. In view of \eqref{eq:drift_HJM}, the drift term $\alpha$ is then given by
\[
\alpha^i(\omega,t,h) = \beta^i \bar{\beta}^i - \bigl( \lambda_t(\omega) + b_t^i(\omega) \bigr) \beta^i
+ \beta^i \int_{\R} x \big( 1 - e^{-x \bar{\beta}^i} ( 1 + \psi_t(x) ) ( 1 + c_t^i(x) ) \big) F(\ud x),
\]
for all $ i=0,1,\ldots,m$, where $b^0 \equiv 0$ and $c^0 \equiv 0$. Let $V \subset D(\ud / \ud \xi)$ be the $(m+1)$-dimensional $D(\ud / \ud \xi)$-invariant subspace given by \eqref{V-subspace}. Since $\psi$ and $c$ are deterministic, the drift $\alpha$ admits a decomposition of the form \eqref{decomp-alpha} with
\begin{align*}
\alpha_1(t) &= \bigg( \beta^i \bar{\beta}^i + \beta^i \int_{\R} x \big( 1 - e^{-x \bar{\beta}^i} ( 1 + \psi_t(x) ) ( 1 + c_t(x) ) \big) F(\ud x) \bigg)_{i=0,1,\ldots,m},
\\ \alpha_2(\omega,t) &= \big( -( \lambda_t(\omega) + b_t^i(\omega) ) \beta^i \big)_{i=0,1,\ldots,m}.
\end{align*}
In view of Proposition \ref{prop-real-suff}, we can conclude that the real-world HJMM SPDE \eqref{eq:SPDE} admits an $(m+1)$-dimensional affine realization, provided that $\alpha_1$ is a $D(\ud / \ud \xi)$-valued mapping of class $C^1(\R_+;H)$ such that $\frac{\ud}{\ud \xi} \alpha_1 : \R_+ \to H$ is continuous.

\begin{appendix}

\section{Proof of Theorem \ref{thm:FTAP}}\label{app:prop_FTAP}

\begin{proof}
Let us define $\cXfin_1:=\bigcup_{n\geq1}\cX^n_1$, so that $\cX_1=\overline{\cXfin_1}$, with the bar denoting the closure in \'Emery's semimartingale topology. For each  $n\in\N$, we denote by $\cXfin_{1,[0,n]}:=\{X_{\cdot\wedge n}:X\in\cXfin_1\}$ the set of all elements of $\cXfin_1$ stopped at $n$ and we define $\cX_{1,[0,n]}$ analogously. 
Since the set $\cXfin_1$ is stable under stopping and stopping is a continuous operation in the semimartingale topology (see \cite[Remark II.6.c]{Memin80}), it holds that $\overline{\cXfin_{1,[0,n]}}=\cX_{1,[0,n]}$. 
Moreover, it is easy to see that the sets $1+\cXfin_{1,[0,n]}$ satisfy the requirements of \cite[Definition 1.1]{Kardaras13b}, since fork-convexity can be shown analogously to \cite[Lemma 2.1]{CKT16}.
Suppose first that NUPBR holds. Then, for each $n\in\N$, the set $\cX_1(n)$ is bounded in probability and therefore, by \cite[Theorem 1.7]{Kardaras13b}, there exists $\widehat{X}^n\in\cX_{1,[0,n]}$ with $\widehat{X}^n>-1$ such that $(1+X)/(1+\widehat{X}^n)$ is a supermartingale, for every $X\in\cX_{1,[0,n]}$.
For each $n\in\N$, let us denote $D^n:=1/(1+\widehat{X}^n)$. Similarly as in the proof of \cite[Proposition 1]{CCFM17}, we show that the elements $D^n$, $n\in\N$, can be concatenated into a supermartingale deflator. 
For all $t\geq0$, let $n(t):=\min\{n\in\N : n>t\}$ and define the c\`adl\`ag process 
\be\label{eq:defZ}
D_t := \prod_{k=1}^{n(t)}\frac{D^k_{k\wedge t}}{D^k_{(k-1)\wedge t}},
\qquad\text{ for all }t\geq0.
\ee
Observe that this definition implies that, if $t\in(m-1,m]$ for some $m\in\N$, then $D_t=\prod_{k=1}^m(D^k_{k\wedge t}/D^k_{k-1})$.
This follows from \eqref{eq:defZ} by noting that if $t\in(m-1,m)$ then $n(t)=m$, while if $t=m$ then $n(t)=m+1$.
Let $X\in\cX_1$ and $s<t$. 
Suppose first that $t\in(s,n(s)]\subseteq(n(s)-1,n(s)]$. In this case, we can compute
\begin{align*}
\EE\left[D_t(1+X_t)|\cF_s\right]
&= \EE\left[\prod_{k=1}^{n(s)}\frac{D^k_{k\wedge t}}{D^k_{k-1}}(1+X_t)\Bigg|\cF_s\right]	\\
&= \prod_{k=1}^{n(s)-1}\frac{D^k_k}{D^k_{k-1}}\,
\EE\left[\frac{D^{n(s)}_t}{D^{n(s)}_{n(s)-1}}(1+X_t)\Bigg|\cF_s\right]	\\
&\leq \prod_{k=1}^{n(s)-1}\frac{D^k_k}{D^k_{k-1}}
\frac{D^{n(s)}_s(1+X_s)}{{D^{n(s)}_{n(s)-1}}}	
= \prod_{k=1}^{n(s)}\frac{D^k_{k\wedge s}}{D^k_{k-1}}(1+X_s)
= D_s(1+X_s),
\end{align*}
where the inequality follows from the fact that $D^{n(s)}(1+X)$ is a supermartingale on $[0,n(s)]$. 
This shows that the supermartingale property holds for $t\in(s,n(s)]$. 
To prove that $D$ is a supermartingale deflator, it suffices to show that the supermartingale property holds for $t\in(s,n(s)+l]$, for all $l\in\N$. We proceed by induction on $l$. Suppose that the supermartingale property holds for $t\in(s,n(s)+l]$, for some $l \in \mathbb{N}$. By the induction hypothesis, it suffices to prove the property for $t \in (n(s)+l, n(s)+l+1]$.
To this effect, we compute
\begin{align*}
\EE\left[D_t(1+X_t)|\cF_s\right]
&= \EE\left[\EE\left[D_t(1+X_t)|\cF_{n(s)+l}\right]\big|\cF_s\right]	\\
&= \EE\left[D_{n(s)+l}\,\EE\left[\frac{D^{n(s)+l+1}_t}{D^{n(s)+l+1}_{n(s)+l}}(1+X_t)\Bigg|\cF_{n(s)+l}\right]\Bigg|\cF_s\right]	\\
&\leq \EE\left[D_{n(s)+l}(1+X_{n(s)+l})|\cF_s\right]
\leq D_s(1+X_s),
\end{align*}
where the first inequality follows from the supermartingale property of $D^{n(s)+l+1}(1+X)$ on the interval $[0,n(s)+l+1]$ and the second inequality from the induction step. Since $D_0=1$, this completes the proof that the process $D$  defined in \eqref{eq:defZ} is a supermartingale deflator in the sense of Definition \ref{def:deflators}.
We now prove that $1/D\in 1+\cX_1$. To this effect, note that
\[
\widehat{X}_t 
:= \frac{1}{D_t}-1 
= \prod_{k=1}^{n(t)}\frac{1+\widehat{X}^k_{k\wedge t}}{1+\widehat{X}^k_{(k-1)\wedge t}}-1,
\qquad\text{ for all }t\geq0.
\]
Observe first that $\widehat{X}_{\cdot\wedge 1}=\widehat{X}^1\in\cX_{1,[0,1]}\subseteq\cX_1$. Proceeding by induction, suppose that $\widehat{X}_{\cdot\wedge n}\in\cX_{1,[0,n]}$, for some $n\in\N$. The definition of $\widehat{X}$ implies that
\begin{align}
1+\widehat{X}_{t\wedge(n+1)}
&= \ind_{\{t<n\}}(1+\widehat{X}_{t\wedge n})
+ \ind_{\{t\geq n\}}\prod_{k=1}^{n(t)}\frac{1+\widehat{X}^k_{k\wedge t\wedge (n+1)}}{1+\widehat{X}^k_{(k-1)\wedge t\wedge (n+1)}}	\notag\\
&= \ind_{\{t<n\}}(1+\widehat{X}_{t\wedge n})
+ \ind_{\{t\geq n\}}\frac{1+\widehat{X}_n}{1+\widehat{X}^{n+1}_n}(1+\widehat{X}^{n+1}_{t}).
\label{eq:construction_defl}
\end{align}
As argued at the beginning of this proof, the sets $1+\cXfin_{1,[0,n]}$, for $n\in\N$, satisfy the requirements of \cite[Definition 1.1]{Kardaras13b}. Therefore, by the arguments given in \cite[pages 1358--1359]{Kardaras13b}, fork-convexity also holds for their closures $1+\cX_{1,[0,n]}$, for all $n\in\N$.
Since $\widehat{X}_{\cdot\wedge n}\in\cX_{1,[0,n]}\subseteq\cX_{1,[0,n+1]}$ and $\widehat{X}^{n+1}\in\cX_{1,[0,n+1]}$, equation \eqref{eq:construction_defl} and fork-convexity of $1+\cX_{1,[0,n+1]}$ imply that $\widehat{X}_{\cdot\wedge(n+1)}\in\cX_{1,[0,n+1]}\subseteq\cX_1$. 
We have thus shown that $\widehat{X}_{\cdot\wedge n}\in\cX_1$, for all $n\in\N$. Since $\widehat{X}$ is a semimartingale and $\widehat{X}_{\cdot\wedge n}=\ind_{\dbra{0,n}}\cdot\widehat{X}$, for all $n\in\N$, the closedness of $\cX_1$ in the semimartingale topology implies that $\widehat{X}\in\cX_1$.

Conversely, if a process $D$ is a supermartingale deflator and $X\in\cX_1$, then the supermartingale property of $D(1+X)$ together with the fact that $D_0\leq 1$ implies that $0\leq\EE[D_T(1+X_T)]\leq 1$, for all $T\in\R_+$. This implies that the set $D_T\cX_1(T)$ is bounded in probability and, hence, NUPBR holds.

Finally, suppose that $D$ is a local martingale deflator and let $n\in\N$ and $A\in\cA^n$. By definition, $DX^A$ is an $\R^n$-valued local martingale. For every $H\in L(X^A)$ with $X:=H\cdot X^A\geq-1$, the process $DX$ is a local martingale (see, e.g., \cite[Lemma 4.2]{Fontana15}). Since every non-negative local martingale with integrable initial value is a supermartingale, this implies that $D(1+X)$ is a supermartingale, meaning that $D$ is a supermartingale deflator for the set $\bigcup_{n\geq1}\cX^n_1$. By Fatou's lemma, the same property holds for the closure $\cX_1$, thus proving that $D$ is a supermartingale deflator.
\end{proof}

\section{Conditional expectation with respect to a Dol\'eans measure}
\label{app:Doleans}

We recall the notion of conditional expectation with respect to a Dol\'eans measure, as introduced in \cite{Jacod1976} (see also \cite[Section III.3c]{jashi03}).
Let  $\mu$ be an integer-valued random measure on $\R_+\times E$ with compensator $\nu$. 
The positive Dol\'eans measure $M_{\mu}$ on $(\Omega\times\R_+\times E,\cF\otimes\cB(\R_+)\otimes\cB(E))$ is defined by 
\[
M_{\mu}[\varphi]:=\EE\left[\int_0^{\infty}\int_E\varphi_t(x)\mu(\ud t,\ud x)\right],
\qquad\text{for all measurable functions $\varphi:\Omega\times\R_+\times E\to\R_+$.}
\] 
We denote by $M_{\mu}[\varphi|\widetilde{\cP}]$ the conditional expectation relative to $M_{\mu}$ of a measurable function $\varphi$ with respect to the sigma-field $\widetilde{\cP}:=\cP\otimes\cB(E)$. 
The conditional expectation is well-defined for every non-negative measurable function $\varphi$ and can be extended to real-valued measurable functions $\varphi$ as long as the measure $|\varphi_t(\omega,x)|\mu(\omega;\ud t,\ud x)$ is $\widetilde{\cP}$-$\sigma$-finite, which in particular holds if the process $|\varphi|\ast\mu$ is locally integrable.
By definition, $M_{\mu}[\varphi|\widetilde{\cP}]$ is the $M_{\mu}$-a.e. unique $\widetilde{\cP}$-measurable function $U$ such that
\be\label{eq:cond_expec}
M_{\mu}[\varphi V] = M_{\mu}[UV],
\qquad\text{ for all $\widetilde{\cP}$-measurable bounded functions $V$}.
\ee

In the proof of Theorem \ref{thm:HJM} we make use of the following lemma, which can be deduced from \cite{Jacod1976}. We provide a self-contained proof which relies only on standard notions found in \cite{jashi03}.

\begin{lemma}	\label{lem:cond_expec}
Let $\varphi:\Omega\times\R_+\times E\to\R$ be a measurable function. The  increasing process $|\varphi|\ast\mu$ is locally integrable if and only if $M_{\mu}[|\varphi||\widetilde{\cP}]\ast\nu$ is locally integrable. In this case, the  compensator of the finite variation process $\varphi\ast\mu$ is given by $M_{\mu}[\varphi|\widetilde{\cP}]\ast\nu$.
\end{lemma}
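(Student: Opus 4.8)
The plan is to prove Lemma~\ref{lem:cond_expec} by reducing everything to the defining property \eqref{eq:cond_expec} of the conditional expectation relative to the Dol\'eans measure, combined with the characterization of compensators of random measures. First I would recall the relevant facts from \cite{jashi03}: for a non-negative $\widetilde{\cP}$-measurable function $\varphi$, the compensator of the (possibly infinite) increasing process $\varphi\ast\mu$ is exactly $\varphi\ast\nu$; and the MÞÞÞÞ-a.e. identity $M_\mu[\varphi]=M_\nu[\varphi]$ for $\widetilde{\cP}$-measurable $\varphi\geq 0$ reflects the fact that $\nu$ is the $\widetilde{\cP}$-projection of $\mu$. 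The key observation tying these together is that for an arbitrary non-negative measurable $\varphi$, applying \eqref{eq:cond_expec} with $V=\ind_{\lsto 0,t\rsto\times E}$ (more precisely with $V$ of the form $\ind_A(\omega)\ind_{[0,t]}(s)$ for $A\in\cF$, then using a monotone class argument) yields $\EE[(\varphi\ast\mu)_t] = \EE[(M_\mu[\varphi|\widetilde{\cP}]\ast\mu)_t] = \EE[(M_\mu[\varphi|\widetilde{\cP}]\ast\nu)_t]$, where the last step uses that $M_\mu[\varphi|\widetilde{\cP}]$ is $\widetilde{\cP}$-measurable and non-negative so its $\mu$- and $\nu$-integrals have equal expectation.

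The proof of the equivalence then runs as follows. For the ``only if'' direction, suppose $|\varphi|\ast\mu$ is locally integrable, with localizing sequence of stopping times $(\tau_k)$. Stopping at $\tau_k$ and using the identity above applied to $|\varphi|\ind_{\lsto 0,\tau_k\rsto}$ (which is again non-negative measurable, and whose conditional expectation is $M_\mu[|\varphi||\widetilde{\cP}]\ind_{\lsto 0,\tau_k\rsto}$ since $\lsto 0,\tau_k\rsto$ is $\widetilde{\cP}$-measurable) gives $\EE[(M_\mu[|\varphi||\widetilde{\cP}]\ast\nu)_{\tau_k}] = \EE[(|\varphi|\ast\mu)_{\tau_k}] < +\infty$, so $M_\mu[|\varphi||\widetilde{\cP}]\ast\nu$ is locally integrable along the same sequence. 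The ``if'' direction is symmetric: if $M_\mu[|\varphi||\widetilde{\cP}]\ast\nu$ is locally integrable, the same chain of equalities read in reverse shows $|\varphi|\ast\mu$ is locally integrable. Here I would use that $M_\mu[|\varphi||\widetilde{\cP}]$ is well-defined precisely in this setting (as recalled in Appendix~\ref{app:Doleans}), so no circularity arises.

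Once local integrability of $|\varphi|\ast\mu$ is established, $\varphi\ast\mu$ is a finite variation process of locally integrable variation, hence admits a (predictable) compensator. To identify it, decompose $\varphi=\varphi^+-\varphi^-$; by linearity of compensators it suffices to treat $\varphi\geq 0$. For $\varphi\geq0$ with $\varphi\ast\mu$ locally integrable, the compensator is characterized (see \cite[Theorem I.3.17]{jashi03} or the defining property of the predictable projection) as the unique predictable increasing process $C$ such that $\EE[(W\cdot C)_\infty]=\EE[(W\cdot(\varphi\ast\mu))_\infty]$ for every non-negative predictable process $W$; equivalently, by the product structure of $\widetilde{\cP}$, such that for every non-negative $\widetilde{\cP}$-measurable $V$ one has $M_\nu[VC'] = M_\mu[V\varphi]$ where $C=C'\ast\nu$. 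But $M_\mu[V\varphi]=M_\mu[V\,M_\mu[\varphi|\widetilde{\cP}]]$ by \eqref{eq:cond_expec}, and $M_\mu[V\,M_\mu[\varphi|\widetilde{\cP}]]=M_\nu[V\,M_\mu[\varphi|\widetilde{\cP}]]$ since both $V$ and $M_\mu[\varphi|\widetilde{\cP}]$ are $\widetilde{\cP}$-measurable and non-negative. Hence $C=M_\mu[\varphi|\widetilde{\cP}]\ast\nu$, as claimed.

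I expect the main obstacle to be purely bookkeeping: being careful that $M_\mu[\varphi|\widetilde{\cP}]$ is legitimately defined (which is why the statement is phrased as an equivalence — one cannot assume local integrability of one side while invoking the conditional expectation built from the other), and correctly invoking the characterization of the compensator of a random measure in terms of the Dol\'eans/projection identity rather than re-deriving it. The measure-theoretic step of passing from test functions $V$ of product indicator form to all bounded $\widetilde{\cP}$-measurable $V$ via monotone class is routine but should be mentioned. No genuinely hard estimate is involved; the content is entirely in correctly chaining $M_\mu[\cdot]=M_\nu[\cdot]$ on $\widetilde{\cP}$-measurable integrands with the definition of conditional expectation.
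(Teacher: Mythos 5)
Your proof is correct and follows essentially the same route as the paper: localize, test the defining property \eqref{eq:cond_expec} against the bounded $\widetilde{\cP}$-measurable indicators $\ind_{\dbra{0,\tau}\times E}$, and use that the $\mu$- and $\nu$-integrals of non-negative $\widetilde{\cP}$-measurable functions have equal expectation; the only cosmetic difference is in the last step, where the paper shows $\varphi\ast\mu-M_{\mu}[\varphi|\widetilde{\cP}]\ast\nu$ is a martingale by optional stopping (citing \cite[Theorem I.3.18]{jashi03}), while you verify the duality characterization of the dual predictable projection against non-negative predictable test processes after splitting $\varphi=\varphi^+-\varphi^-$. One small caution: the parenthetical test functions $\ind_A(\omega)\ind_{[0,t]}(s)$ with arbitrary $A\in\cF$ are in general not $\widetilde{\cP}$-measurable and so cannot be inserted into \eqref{eq:cond_expec}, but they are also not needed, since the predictable indicators of deterministic and stochastic intervals you use elsewhere already suffice.
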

\begin{proof}
Suppose there exists a sequence $\{\tau_n\}_{n\in\N}$  of stopping times increasing a.s. to infinity as $n\to+\infty$ such that $\EE[(|\varphi|\ast\mu)_{\tau_n}]<+\infty$, for all $n\in\N$. 
For each $n\in\N$, the function $H_n:\Omega\times\R_+\times E\to\{0,1\}$ defined by $H_n:=\ind_{\dbra{0,\tau_n}\times E}$ is $\widetilde{\cP}$-measurable and bounded. Therefore, property \eqref{eq:cond_expec} implies that
\[
\EE\bigl[(|\varphi|\ast\mu)_{\tau_n}\bigr]
= M_{\mu}\bigl[|\varphi|H_n\bigr]
= M_{\mu}\bigl[M_{\mu}[|\varphi||\widetilde{\cP}]H_n\bigr]
= \EE\bigl[(M_{\mu}[|\varphi||\widetilde{\cP}]\ast\mu)_{\tau_n}\bigr]
= \EE\bigl[(M_{\mu}[|\varphi||\widetilde{\cP}]\ast\nu)_{\tau_n}\bigr],
\]
for all $n\in\N$, where in the last equality we made use of \cite[Theorem II.1.8]{jashi03}. This shows that the process $M_{\mu}[|\varphi||\widetilde{\cP}]\ast\nu$ is locally integrable. The converse implication can be shown in the same way.
To prove the second part of the lemma, by localization we can assume that $\EE[(|\varphi|\ast\mu)_{\infty}]<+\infty$. Let $\tau$ be an arbitrary stopping time. Similarly as above, letting the function $H$ be defined by $H:=\ind_{\dbra{0,\tau}\times E}$,
\[
\EE[(\varphi\ast\mu)_{\tau}]
= M_{\mu}[\varphi H]
= M_{\mu}\bigl[M_{\mu}[\varphi|\widetilde{\cP}] H\bigr]
= \EE\bigl[(M_{\mu}[\varphi|\widetilde{\cP}]\ast\mu)_{\tau}\bigr]
= \EE\bigl[(M_{\mu}[\varphi|\widetilde{\cP}]\ast\nu)_{\tau}\bigr],
\]
thus implying that the process $\varphi\ast\mu-M_{\mu}[\varphi|\widetilde{\cP}]\ast\nu$ is a martingale. In view of \cite[Theorem I.3.18]{jashi03}, this suffices to deduce that $M_{\mu}[\varphi|\widetilde{\cP}]\ast\nu$ is the compensator of $\varphi\ast\mu$.
\end{proof}

\section{Locally Lipschitz and locally bounded functions}\label{app-functions}

In this appendix, we collect some technical results on locally Lipschitz and locally bounded functions that are used in Section \ref{sec:SPDE}. In the following, we denote by $X,Y,Z$ some generic normed spaces. Moreover, we call $(X,m)$ a commutative algebra if $m : X \times X \to X$ is a continuous symmetric bilinear operator.
We denote by $\Lip(X,Y)$ the space of all Lipschitz continuous functions from $X$ to $Y$ and by $\B(X,Y)$ the space of all bounded functions from $X$ to $Y$.

\begin{definition}	\label{def:lcl_Lip}
A function $f : X \to Y$ is said to be \emph{locally Lipschitz} if there exists a function $L_f : \R_+ \to \R_+$ such that, for every $r \in \R_+$, we have
\[
\| f(x_1) - f(x_2) \| \leq L_f(r) \| x_1 - x_2 \|, 
\qquad \text{ for all $x_1,x_2 \in X$ with $\| x_1 \|\vee \| x_2 \| \leq r$.}
\]
We denote by $\Lip^{\loc}(X,Y)$ the space of all locally Lipschitz functions $f : X \to Y$.
\end{definition}

We call the function $L_f$ appearing in Definition \ref{def:lcl_Lip} a \emph{Lipschitz function} of $f$. Without loss of generality, we can assume that $L_f$ is increasing.
If the function $L_f$ is bounded, then it can be chosen constant and in this case the function $f$ is Lipschitz continuous.

\begin{definition}	\label{def:bdd}
A function $f : X \to Y$ is said to be \emph{locally bounded} if there exists a function $B_f : \R_+ \to \R_+$ such that, for every $r \in \R_+$, we have
\[
\| f(x) \| \leq B_f(r), 
\qquad \text{ for all $x \in X$ with $\| x \| \leq r$.}
\]
We denote by $\B^{\loc}(X,Y)$ the space of all locally bounded functions $f : X \to Y$.
\end{definition}

We call the function $B_f$ appearing in Definition \ref{def:bdd} a \emph{boundedness function} of $f$. Without loss of generality, we can assume that $B_f$ is increasing.
If the function $B_f$ is bounded, then it can be chosen constant and in this case the function $f$ is bounded.

\begin{definition}
A function $f : X \to Y$ is said to satisfy the \emph{linear growth condition} if there exists a constant $C \in \R_+$ such that
\[
\| f(x) \| \leq C (1 + \| x \|), 
\qquad \text{for all $x \in X$.}
\]
We denote by $\LG(X,Y)$ the space of all functions $f : X \to Y$ satisfying the linear growth condition. 
\end{definition}

The following lemma recalls a well-known property of locally Lipschitz functions.

\begin{lemma}\label{lemma-loc-Lip-loc-bdd}
It holds that $\Lip^{\loc}(X,Y) \subset \B^{\loc}(X,Y)$.
\end{lemma}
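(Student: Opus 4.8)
The plan is to prove the inclusion $\Lip^{\loc}(X,Y) \subset \B^{\loc}(X,Y)$ directly from the definitions, exhibiting an explicit boundedness function. First I would fix an arbitrary $f \in \Lip^{\loc}(X,Y)$ and let $L_f : \R_+ \to \R_+$ be a Lipschitz function of $f$, which (as noted after Definition \ref{def:lcl_Lip}) may be taken increasing. The key observation is that if $X = \{0\}$ the statement is trivial, while otherwise we may fix some reference point --- and the most natural choice is $0$ if $0 \in X$, or any fixed $x_0 \in X$ in general. I would then define the candidate boundedness function $B_f : \R_+ \to \R_+$ by $B_f(r) := \| f(x_0) \| + L_f(r \vee \| x_0 \|)(r + \| x_0 \|)$.

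The verification is then a one-line triangle-inequality estimate: for any $x \in X$ with $\| x \| \leq r$, both $x$ and $x_0$ have norm at most $r \vee \| x_0 \|$, so the local Lipschitz property gives
\[
\| f(x) \| \leq \| f(x_0) \| + \| f(x) - f(x_0) \| \leq \| f(x_0) \| + L_f(r \vee \| x_0 \|)\,\| x - x_0 \| \leq B_f(r),
\]
using $\| x - x_0 \| \leq \| x \| + \| x_0 \| \leq r + \| x_0 \|$. This shows $f \in \B^{\loc}(X,Y)$ with the boundedness function $B_f$, which is moreover increasing since $L_f$ is. Since $f$ was arbitrary, the inclusion follows.

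There is no real obstacle here; the only mild subtlety is the degenerate case of a normed space reduced to $\{0\}$ (handled trivially) and, more importantly, the fact that in the intended applications of this appendix $X$ is a genuine vector space containing $0$, so one can simply take $x_0 = 0$ and the formula simplifies to $B_f(r) = \| f(0) \| + r\, L_f(r)$. I would present the argument in this cleaner form, since all uses of Lemma \ref{lemma-loc-Lip-loc-bdd} in Section \ref{sec:SPDE} involve Hilbert spaces. The proof is short enough that it can be stated in full without grinding through anything routine.

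\begin{proof}
Let $f \in \Lip^{\loc}(X,Y)$ and let $L_f : \R_+ \to \R_+$ be an increasing Lipschitz function of $f$, which exists by the remark following Definition \ref{def:lcl_Lip}. Define $B_f : \R_+ \to \R_+$ by $B_f(r) := \| f(0) \| + r\,L_f(r)$, which is clearly increasing. For every $x \in X$ with $\| x \| \leq r$, both $x$ and $0$ have norm at most $r$, so by the local Lipschitz property of $f$ and the triangle inequality,
\[
\| f(x) \| \leq \| f(0) \| + \| f(x) - f(0) \| \leq \| f(0) \| + L_f(r)\,\| x \| \leq \| f(0) \| + r\,L_f(r) = B_f(r).
\]
Hence $B_f$ is a boundedness function of $f$, and therefore $f \in \B^{\loc}(X,Y)$.
\end{proof}
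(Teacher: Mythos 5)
Your proof is correct and is exactly the standard argument the paper leaves implicit (the lemma is stated as well-known, without proof): bound $\|f(x)\|$ by $\|f(0)\|$ plus the local Lipschitz estimate on the ball of radius $r$, giving the boundedness function $B_f(r)=\|f(0)\|+rL_f(r)$. The only superfluous part is your worry about whether $0\in X$: since $X$ is a normed space it is in particular a vector space, so $0\in X$ always and the general base-point $x_0$ and the degenerate-case discussion are unnecessary.
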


\begin{lemma}\label{lemma-Lip-loc-algebra}
Let $(Y,m)$ be a commutative algebra. Let $f,g \in \Lip^{\loc}(X,Y)$ be arbitrary and denote by $fg = f \cdot g : X \to Y$ the product $(fg)(x) = m(f(x),g(x))$, for $x \in X$. Then, the following hold:
\begin{enumerate}
\item $fg \in \Lip^{\loc}(X,Y)$;
\item let $L_f,L_g,B_f,B_g : \R_+ \to \R_+$ be Lipschitz and boundedness functions of $f$ and $g$. Then, Lipschitz and boundedness functions of the product $fg$ are given by, for all $r\in\R_+$,
\begin{align*}
L_{fg}(r) &= \| m \| (L_f(r) B_g(r) + L_g(r) B_f(r)),
\\ B_{fg}(r) &= \| m \| B_f(r) B_g(r).
\end{align*}
\end{enumerate}
\end{lemma}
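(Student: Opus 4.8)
The statement to prove is Lemma \ref{lemma-Lip-loc-algebra}: that the pointwise product of two locally Lipschitz maps into a commutative algebra is again locally Lipschitz, with explicit Lipschitz and boundedness functions.

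\medskip

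The plan is to argue directly from the definitions. First I would fix $r \in \R_+$ and two points $x_1, x_2 \in X$ with $\|x_1\| \vee \|x_2\| \leq r$, and write the standard telescoping identity
\[
(fg)(x_1) - (fg)(x_2) = m\bigl(f(x_1) - f(x_2),\, g(x_1)\bigr) + m\bigl(f(x_2),\, g(x_1) - g(x_2)\bigr).
\]
Applying the triangle inequality together with the bilinearity estimate $\|m(u,v)\| \leq \|m\|\, \|u\|\, \|v\|$, I obtain
\[
\|(fg)(x_1) - (fg)(x_2)\| \leq \|m\|\, \|f(x_1) - f(x_2)\|\, \|g(x_1)\| + \|m\|\, \|f(x_2)\|\, \|g(x_1) - g(x_2)\|.
\]
Then I invoke Lemma \ref{lemma-loc-Lip-loc-bdd} to know that $f, g$ are locally bounded, so that $\|g(x_1)\| \leq B_g(r)$ and $\|f(x_2)\| \leq B_f(r)$, and the local Lipschitz bounds give $\|f(x_1) - f(x_2)\| \leq L_f(r)\|x_1 - x_2\|$ and similarly for $g$. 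Substituting yields the claimed Lipschitz function $L_{fg}(r) = \|m\|(L_f(r) B_g(r) + L_g(r) B_f(r))$, which simultaneously proves part (1). For the boundedness function, fix $x \in X$ with $\|x\| \leq r$ and estimate $\|(fg)(x)\| = \|m(f(x), g(x))\| \leq \|m\|\, \|f(x)\|\, \|g(x)\| \leq \|m\| B_f(r) B_g(r)$, giving $B_{fg}(r) = \|m\| B_f(r) B_g(r)$.

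\medskip

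There is essentially no obstacle here: the only point requiring a small amount of care is the appeal to Lemma \ref{lemma-loc-Lip-loc-bdd} to pass from local Lipschitz continuity to local boundedness (a locally Lipschitz function need not be globally bounded, but it is automatically locally bounded, since $\|f(x)\| \leq \|f(0)\| + L_f(r)\|x\|$ on the ball of radius $r$). Once that is in hand, the rest is the routine telescoping estimate above. I would present the two displays and the substitution in a few lines, noting that both the Lipschitz and boundedness functions can be taken increasing since $B_f, B_g, L_f, L_g$ may be assumed increasing.
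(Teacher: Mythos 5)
Your proof is correct and follows essentially the same argument as the paper: the telescoping decomposition of $f(x_1)g(x_1)-f(x_2)g(x_2)$ (the paper groups terms as $f(x_1)(g(x_1)-g(x_2)) + (f(x_1)-f(x_2))g(x_2)$, which is the same split up to symmetry), the bilinearity bound $\|m(u,v)\|\leq\|m\|\,\|u\|\,\|v\|$, and the local boundedness of $f$ and $g$ via Lemma \ref{lemma-loc-Lip-loc-bdd}. No gaps.
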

\begin{proof}
Let $r \in \R_+$ be arbitrary. Then, for all $x_1,x_2 \in X$ with $\| x_1 \|\vee \| x_2 \| \leq r$, we have that
\begin{align*}
&\| f(x_1) g(x_1) - f(x_2) g(x_2) \| \leq \| f(x_1) (g(x_1) - g(x_2)) \| + \| (f(x_1) - f(x_2)) g(x_2) \|
\\ &\leq \| m \| \, \| f(x_1) \| \, \| g(x_1) - g(x_2) \| + \| m \| \, \| f(x_1) - f(x_2) \| \, \| g(x_2) \|
\\ &\leq \| m \| (B_f(r) L_g(r) + L_f(r) B_g(r)) \| x_1 - x_2 \|.
\end{align*}
Furthermore, for all $x \in X$ with $\| x \| \leq r$, we have that
\[
\| f(x) g(x) \| \leq \| m \| \| f(x) \| \| g(x) \| \leq \| m \| B_f(r) B_g(r).
\]
\end{proof}

\begin{lemma}\label{lemma-Lip-loc-comp}
Let $f \in \Lip^{\loc}(X,Y)$ and $g \in \Lip^{\loc}(Y,Z)$ be arbitrary. Then the following hold:
\begin{enumerate}
\item $g \circ f \in \Lip^{\loc}(X,Z)$;
\item let $L_f,L_g,B_f,B_g : \R_+ \to \R_+$ be Lipschitz and boundedness functions of $f$ and $g$. Then, Lipschitz and boundedness functions of the composition $g \circ f$ are given by, for all $r\in\R_+$,
\begin{align*}
L_{g \circ f}(r) &= L_f(r) L_g(B_f(r)), 
\\ B_{g \circ f}(r) &= B_g(B_f(r)).
\end{align*}
\end{enumerate}
\end{lemma}
\begin{proof}
Let $r \in \R_+$. For all $x_1,x_2 \in X$ with $\| x_1 \|\vee \| x_2 \| \leq r$, we have $\| f(x_1) \|\vee \| f(x_2) \| \leq B_f(r)$. Therefore,
\[
\| g(f(x_1)) - g(f(x_2)) \| \leq L_g(B_f(r)) \| f(x_1) - f(x_2) \| \leq L_g(B_f(r)) L_f(r) \| x_1 - x_2 \|.
\]
Furthermore, for all $x \in X$ with $\| x \| \leq r$, we have $\| f(x) \| \leq B_f(r)$ and, hence, $\| g(f(x)) \| \leq B_g(B_f(r))$.
\end{proof}

\begin{lemma}\label{lemma-Bochner-Lip}
Let $(E,\cE,\mu)$ be a measure space, $Y$ a separable Banach space and $f : X \times E \to Y$ a $\cB(X) \otimes \cE$-measurable function. Suppose that the following conditions are satisfied:
\begin{enumerate}
\item $f(\cdot,z) \in \Lip^{\loc}(X,Y)$, for every $z \in E$;
\item for every $z \in E$, there exists a Lipschitz function $L_{f(\cdot,z)} : \R_+ \to \R_+$ of $f(\cdot,z)$ such that $z \mapsto L_{f(\cdot,z)}(r)$ belongs to $L^1(\mu)$, for every $r \in \R_+$;
\item for every $z \in E$, there exists a boundedness function $B_{f(\cdot,z)} : \R_+ \to \R_+$ of $f(\cdot,z)$ such that $z \mapsto B_{f(\cdot,z)}(r)$ belongs to $L^1(\mu)$, for every $r \in \R_+$.
\end{enumerate}
Then, the following hold:
\begin{enumerate}
\item the Bochner integrals
\[
g(x) := \int_E f(x,z) \mu(\ud z), 
\qquad \text{ for }x \in X,
\]
provide a well-defined function $g \in \Lip^{\loc}(X,Y)$;
\item Lipschitz and boundedness functions of $g$ are given by, for all $r\in\R_+$,
\begin{align*}
L_g(r) &= \int_E L_{f(\cdot,z)}(r) \mu(\ud z), 
\\ B_g(r) &= \int_E B_{f(\cdot,z)}(r) \mu(\ud z).
\end{align*}
\end{enumerate}
\end{lemma}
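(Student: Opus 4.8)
The plan is to verify the two claims directly from the definition of the Bochner integral, using the pointwise bounds supplied by the Lipschitz and boundedness functions of the sections $f(\cdot,z)$.

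First I would fix $x \in X$ and check that $z \mapsto f(x,z)$ is Bochner integrable, so that $g(x)$ is well-defined. Since $f$ is $\cB(X)\otimes\cE$-measurable, its section $z\mapsto f(x,z)$ is $\cE$-measurable; as $Y$ is separable, Pettis's measurability theorem upgrades this to strong measurability. For integrability, I use that $B_{f(\cdot,z)}$ is a boundedness function of $f(\cdot,z)$, so $\| f(x,z) \|_Y \leq B_{f(\cdot,z)}(\| x \|)$ for every $z \in E$, and hypothesis (3), applied with $r = \| x \|$, ensures that $z \mapsto B_{f(\cdot,z)}(\| x \|)$ lies in $L^1(\mu)$. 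Hence $\int_E \| f(x,z) \|_Y \, \mu(\ud z) < +\infty$ and the Bochner integral $g(x)$ exists.

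Next I would establish the two estimates, which simultaneously prove that $g \in \Lip^{\loc}(X,Y)$ and identify its Lipschitz and boundedness functions. For $x_1,x_2 \in X$ with $\| x_1 \| \vee \| x_2 \| \leq r$, the triangle inequality for Bochner integrals together with the local Lipschitz bound gives
\[
\| g(x_1) - g(x_2) \|_Y \leq \int_E \| f(x_1,z) - f(x_2,z) \|_Y \, \mu(\ud z) \leq \Big( \int_E L_{f(\cdot,z)}(r) \, \mu(\ud z) \Big) \| x_1 - x_2 \|_X ,
\]
where the integral on the right is finite by hypothesis (2). Similarly, for $x \in X$ with $\| x \| \leq r$,
\[
\| g(x) \|_Y \leq \int_E \| f(x,z) \|_Y \, \mu(\ud z) \leq \int_E B_{f(\cdot,z)}(r) \, \mu(\ud z) ,
\]
which is finite by hypothesis (3). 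This yields the asserted formulas for $L_g$ and $B_g$, and monotonicity in $r$ is inherited from the (without loss of generality) monotone choices of $L_{f(\cdot,z)}$ and $B_{f(\cdot,z)}$.

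I do not anticipate a genuine obstacle: the argument is a routine passage of Lipschitz/boundedness estimates through a Bochner integral. The only points needing a word of care are (i) deriving strong measurability of the sections from joint measurability of $f$, handled via separability of $Y$ and Pettis's theorem, and (ii) the finiteness of the dominating integrals $\int_E L_{f(\cdot,z)}(r)\,\mu(\ud z)$ and $\int_E B_{f(\cdot,z)}(r)\,\mu(\ud z)$, which is exactly what hypotheses (2) and (3) guarantee.
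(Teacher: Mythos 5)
Your proposal is correct and follows essentially the same route as the paper: push the Lipschitz and boundedness estimates through the Bochner integral via the triangle inequality, with hypotheses (2)--(3) supplying the finiteness of the dominating integrals. You are in fact slightly more careful than the paper, which does not spell out the strong measurability and integrability of the sections $z\mapsto f(x,z)$ that you handle via Pettis's theorem and hypothesis (3).
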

\begin{proof}
Let $r \in \R_+$. For each $x \in X$ with $\| x \| \leq r$, we have that
\[
\| g(x) \| \leq \int_E \| f(x,z) \| \mu(\ud z) \leq \int_E B_{f(\cdot,z)}(r) \mu(\ud z).
\]
Furthermore, for all $x_1,x_2 \in X$ with $\| x_1 \|\vee\| x_2 \| \leq r$, we have that
\[
\| g(x_1) - g(x_2) \| \leq \int_E \| f(x_1,z) - f(x_2,z) \| \mu(\ud z) \leq \bigg( \int_E L_{f(\cdot,z)}(r) \mu(\ud y) \bigg) \| x_1 - x_2 \|.
\]
\end{proof}

\section{Properties of multi-dimensional Filipovi\'{c} spaces}\label{app-space}

In this appendix, we collect some technical results on multi-dimensional Filipovi\'{c} spaces which are needed for the SPDE analysis of Section \ref{sec:SPDE}. Building on the previous results of \cite{F:01} and \cite{Tappe-Wiener}, we extend those results by considering locally Lipschitz and locally bounded functions in a multi-dimensional setting. 
We start by recalling that, for any $\rho > 0$, the Filipovi\'{c} space $H_{\rho}$ is the space of all absolutely continuous functions $h : \mathbb{R}_+ \rightarrow \mathbb{R}$ such that
\[
\| h \|_{\rho} := \bigg( |h(0)|^2 + \int_0^{\infty} |h'(x)|^2 e^{\rho x} \ud x \bigg)^{1/2} < +\infty.
\]

\begin{theorem}\cite[Section 5]{F:01}\label{thm-space-forward}
The following statements are true:
\begin{enumerate}
\item $(H_{\rho}, \| \cdot \|_{\rho})$ is a separable Hilbert space;
\item for each $x \in \R_+$, the point evaluation $h \mapsto h(x) : H_{\rho} \to \R$ is a continuous linear functional;
\item the translation semigroup $(\cS_t)_{t \geq 0}$ is a $C_0$-semigroup on $H_{\rho}$;
\item its generator $A$ is given by $Ah = h'$, for all $h \in \cD(A)$, and the domain is
\[
\cD(A) = \{ h \in H : h' \in H \};
\]
\item for each $h \in H_{\rho}$, the limit $h(\infty) := \lim_{x \to \infty} h(x)$ exists; 
\item $H_{\rho}^0 := \{ h \in H_{\rho} : h(\infty) = 0 \}$ is a closed subspace of $H_{\rho}$.
\end{enumerate}
\end{theorem}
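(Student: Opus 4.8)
The plan is to identify $H_{\rho}$ with a weighted $L^2$ space and then to extract all six statements from the single elementary bound
\[
\int_0^\infty |h'(s)|\,\ud s \;\le\; \Big(\int_0^\infty |h'(s)|^2 e^{\rho s}\,\ud s\Big)^{1/2}\Big(\int_0^\infty e^{-\rho s}\,\ud s\Big)^{1/2} \;\le\; \tfrac{1}{\sqrt{\rho}}\,\|h\|_{\rho},
\]
which holds for every $h\in H_{\rho}$ by the Cauchy--Schwarz inequality. For (1) I would observe that $\iota\colon h\mapsto (h(0),\,h'e^{\rho\,\cdot/2})$ is a linear isometry from $H_{\rho}$ onto $\R\times L^2(\R_+)$: the isometry property and injectivity are immediate from the definition of $\|\cdot\|_{\rho}$, and surjectivity follows by assigning to $(c,g)\in\R\times L^2(\R_+)$ the function $h(x):=c+\int_0^x g(s)e^{-\rho s/2}\,\ud s$, which is absolutely continuous (the integrand lies in $L^1(\R_+)$ by the same Cauchy--Schwarz estimate) and satisfies $h'=g e^{-\rho\,\cdot/2}$ a.e., hence $\iota(h)=(c,g)$. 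Since $\R\times L^2(\R_+)$ is a separable Hilbert space, so is $H_{\rho}$, with inner product $\langle g,h\rangle_{\rho}=g(0)h(0)+\int_0^\infty g'(x)h'(x)e^{\rho x}\,\ud x$.

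For (2), (5) and (6) I would combine the fundamental theorem of calculus $h(x)=h(0)+\int_0^x h'(s)\,\ud s$ with the displayed bound. Continuity of the point evaluation $h\mapsto h(x)$ follows from $|h(x)|\le |h(0)|+\int_0^x|h'(s)|\,\ud s\le (1+\rho^{-1/2})\|h\|_{\rho}$, which is moreover uniform in $x$. The same bound shows $h'\in L^1(\R_+)$, so $h(\infty):=h(0)+\int_0^\infty h'(s)\,\ud s$ exists and is finite, proving (5); furthermore $h\mapsto h(\infty)$ is then a bounded linear functional on $H_{\rho}$ (again $|h(\infty)|\le(1+\rho^{-1/2})\|h\|_{\rho}$), so $H_{\rho}^0=\ker(h\mapsto h(\infty))$ is closed, which is (6).

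For (3) I would first dispatch the purely algebraic facts $S_0=\Id$ and $S_sS_t=S_{s+t}$, which are read off from $(S_th)(x)=h(t+x)$, and note that each $S_t$ maps $H_{\rho}$ into itself and is bounded, via a direct estimate of $\|S_th\|_{\rho}^2=|h(t)|^2+e^{-\rho t}\int_t^\infty|h'(y)|^2e^{\rho y}\,\ud y$ using (2). The substantive point is strong continuity: here $\|S_th-h\|_{\rho}^2=|h(t)-h(0)|^2+\int_0^\infty|h'(t+x)-h'(x)|^2 e^{\rho x}\,\ud x$, the first term vanishing as $t\downarrow 0$ by continuity of $h$. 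For the second term I would set $\tilde g(y):=h'(y)e^{\rho y/2}\in L^2(\R_+)$, extended by $0$ to $\R$, so that $h'(t+x)e^{\rho x/2}=e^{-\rho t/2}\tilde g(t+x)$, whence the integral is bounded by $2|e^{-\rho t/2}-1|^2\|\tilde g\|_{L^2}^2+2\int_{\R}|\tilde g(t+x)-\tilde g(x)|^2\,\ud x$; the first summand tends to $0$ trivially and the second by strong continuity of the translation group on $L^2(\R)$. I expect this weight-shift reduction to be the main obstacle, since one must carefully separate the error due to the exponential weight from the genuine $L^2$-translation error.

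Finally, for (4) I would argue that if $h'\in H_{\rho}$ then $s\mapsto S_s h'=h'(s+\cdot)$ is a continuous $H_{\rho}$-valued map by (3), and applying the (continuous) point evaluations shows $S_th-h=\int_0^t S_sh'\,\ud s$ as an $H_{\rho}$-valued Bochner integral; hence $\big\|t^{-1}(S_th-h)-h'\big\|_{\rho}\le t^{-1}\int_0^t\|S_sh'-h'\|_{\rho}\,\ud s\to 0$, so $h\in\cD(A)$ with $Ah=h'$. Conversely, if $h\in\cD(A)$ with $Ah=g$, then applying the point evaluation at $x$ to $t^{-1}(S_th-h)\to g$ yields $t^{-1}(h(x+t)-h(x))\to g(x)$ for every $x\in\R_+$, so $h$ is (right, and by the same token left) differentiable with $h'=g\in H_{\rho}$. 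This identifies $\cD(A)=\{h\in H_{\rho}:h'\in H_{\rho}\}$ and completes the proof.
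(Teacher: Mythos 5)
Your proof is correct. Note that the paper does not prove this theorem at all: it is quoted verbatim with the citation \cite[Section 5]{F:01}, so there is no in-paper argument to compare against; what you have written is essentially the standard proof underlying that reference (Filipovi\'c treats more general weighted spaces $H_w$, of which $H_\rho$ is the exponential-weight special case). Your route — the isometric identification $h\mapsto (h(0),h'e^{\rho\cdot/2})$ with $\R\times L^2(\R_+)$ for (1), the uniform bound $|h(x)|\le(1+\rho^{-1/2})\|h\|_\rho$ for (2), (5), (6) (this is exactly the paper's Lemma \ref{lemma-H-bdd}), the weight-shift reduction of strong continuity to continuity of translations on $L^2(\R)$ for (3), and the Bochner-integral/difference-quotient argument for (4) — is the same circle of ideas as in the cited source, and your estimates all check out. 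Two small points deserve one extra sentence each: in the identity $S_th-h=\int_0^tS_sh'\,\ud s$ you should say explicitly that point evaluations are continuous and separate points of $H_\rho$, so verifying the identity pointwise suffices; and in the converse inclusion of (4), ``by the same token left differentiable'' needs the observation that the point-evaluation bound is uniform in $x$, so the forward difference quotients converge to $g$ uniformly, whence the left quotient at $x$ (which is the forward quotient at $x-t$) converges to $g(x)$ by continuity of $g$ — alternatively invoke the classical lemma that a continuous function with continuous right derivative is $C^1$. Neither is a gap, just a line of justification to add.
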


In the following, we fix two constants $\rho$ and $\rho'$ such that $0 < \rho < \rho'$.

\begin{lemma}\label{lemma-H-bdd}
It holds that $H_{\rho} \subset L^{\infty}(\R_+)$ and the embedding operator
\[
\Id : (H_{\rho},\| \cdot \|_{\rho}) \to (L^{\infty}(\R_+),\| \cdot \|_{\infty}) 
\]
is a bounded linear operator with $\| \Id \| \leq C_{\rho}$, where
\begin{align}\label{C-rho}
C_{\rho} := 1 + \frac{1}{\sqrt{\rho}}.
\end{align}
\end{lemma}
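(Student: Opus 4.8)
The plan is to prove the embedding $H_\rho \subset L^\infty(\R_+)$ with the stated operator norm bound by a direct estimate using the fundamental theorem of calculus together with the Cauchy--Schwarz inequality. This is a standard and short computation; the main subtlety is simply to organize the constants so as to obtain exactly $C_\rho = 1 + 1/\sqrt\rho$ rather than a looser bound.

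First I would fix $h \in H_\rho$ and $x \in \R_+$. Since $h$ is absolutely continuous, we may write $h(x) = h(0) + \int_0^x h'(s)\,\ud s$, so that
\[
|h(x)| \leq |h(0)| + \int_0^x |h'(s)|\,\ud s.
\]
To control the integral term, I would insert the weight $e^{\rho s}$ that appears in the norm: writing $|h'(s)| = |h'(s)|e^{\rho s/2} \cdot e^{-\rho s/2}$ and applying the Cauchy--Schwarz inequality gives
\[
\int_0^x |h'(s)|\,\ud s \leq \left(\int_0^x |h'(s)|^2 e^{\rho s}\,\ud s\right)^{1/2}\left(\int_0^x e^{-\rho s}\,\ud s\right)^{1/2} \leq \left(\int_0^\infty |h'(s)|^2 e^{\rho s}\,\ud s\right)^{1/2}\cdot\frac{1}{\sqrt\rho},
\]
using $\int_0^\infty e^{-\rho s}\,\ud s = 1/\rho$. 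Combining the two displays and using the elementary bound $|h(0)| + \frac{1}{\sqrt\rho}\big(\int_0^\infty |h'(s)|^2 e^{\rho s}\,\ud s\big)^{1/2} \leq \big(1 + \tfrac{1}{\sqrt\rho}\big)\|h\|_\rho$ (which follows because each of the two summands is bounded by the corresponding coordinate of the vector $(|h(0)|, (\int_0^\infty |h'(s)|^2 e^{\rho s}\,\ud s)^{1/2})$ whose Euclidean norm is $\|h\|_\rho$, and $a + b \le \sqrt{2}\sqrt{a^2+b^2}$ would be too weak, so instead I bound $|h(0)| \le \|h\|_\rho$ and $(\int \cdots)^{1/2} \le \|h\|_\rho$ separately and add). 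Taking the supremum over $x \in \R_+$ yields $\|h\|_\infty \leq C_\rho \|h\|_\rho$ with $C_\rho = 1 + 1/\sqrt\rho$, which shows both that $h \in L^\infty(\R_+)$ and that the embedding operator is bounded with $\|\Id\| \le C_\rho$.

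I do not anticipate any real obstacle here; the only point requiring a moment's care is getting the constant $C_\rho$ exactly as claimed, which is handled by estimating $|h(0)|$ and the weighted $L^2$ norm of $h'$ separately (each bounded by $\|h\|_\rho$) rather than via a single Cauchy--Schwarz step on the pair, the latter producing the suboptimal factor $\sqrt{2}\max(1,1/\sqrt\rho)$. Linearity of $\Id$ is immediate, so nothing further is needed.
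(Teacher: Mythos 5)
Your argument is correct and gives exactly the bound $\|h\|_\infty \le (1+1/\sqrt{\rho})\,\|h\|_\rho$ via the fundamental theorem of calculus and a weighted Cauchy--Schwarz estimate, with the two terms $|h(0)|$ and the weighted $L^2$-norm of $h'$ each bounded by $\|h\|_\rho$. This is essentially the same approach as the paper, which simply invokes inequality (5.4) of Filipovi\'c's monograph with weight $w(x)=e^{\rho x}$ (yielding $\|\Id\|\le 1+\|w^{-1}\|_{L^1}^{1/2}=1+1/\sqrt{\rho}$); your proof is just the self-contained version of that cited estimate.
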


\begin{proof}
Let $w : \R_+ \to [1,\infty)$ be the weight function given by $w(x) = e^{\rho x}$, for $x \in \R_+$. By inequality (5.4) in \cite{F:01} it holds that $H_{\rho} \subset L^{\infty}(\R_+)$ and $\| \Id \| \leq 1 + C_1$, where the constant $C_1 > 0$ is given by
\[
C_1 = \| w^{-1} \|_{L^1(\R_+)}^{1/2} = \bigg( \int_0^{\infty} e^{-\rho x} \ud x \bigg)^{1/2} = \frac{1}{\sqrt{\rho}}.
\]
\end{proof}

\begin{lemma}\cite[Lemma 4.2]{Tappe-Wiener}\label{lemma-H-algebra}
The pair $(H_{\rho},m)$, where $m : H_{\rho} \times H_{\rho} \to H_{\rho}$ denotes the pointwise multiplication $m(h,g) := hg = h \cdot g$, is a commutative algebra. Furthermore, $m(H_{\rho}^0 \times H_{\rho}) = H_{\rho}^0$.
\end{lemma}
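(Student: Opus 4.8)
The plan is to verify the two assertions directly from the explicit description of the norm on $H_\rho$. Symmetry and bilinearity of $m$ are immediate, so for the algebra property the only substantive points are that $hg \in H_\rho$ whenever $h,g \in H_\rho$ and that the bilinear map $m$ is bounded, i.e. $\|hg\|_\rho \le C\|h\|_\rho\|g\|_\rho$ for some constant $C$; this is exactly the commutative algebra property in the sense of Appendix \ref{app-functions}. First I would invoke Lemma \ref{lemma-H-bdd}: every $h \in H_\rho$ is bounded with $\|h\|_\infty \le C_\rho\|h\|_\rho$, where $C_\rho$ is given by \eqref{C-rho}. Consequently, a product $hg$ of two elements of $H_\rho$ is the product of two locally absolutely continuous and bounded functions, hence is itself locally absolutely continuous with $(hg)' = h'g + hg'$ almost everywhere.

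Next I would estimate $\|hg\|_\rho$. From the definition of the norm one has $|h(0)| \le \|h\|_\rho$ and $|g(0)| \le \|g\|_\rho$, so the point-evaluation term contributes at most $\|h\|_\rho^2\|g\|_\rho^2$. For the weighted integral I would use the elementary bound $|h'g + hg'|^2 \le 2\|g\|_\infty^2|h'|^2 + 2\|h\|_\infty^2|g'|^2$ together with $\|h\|_\infty \le C_\rho\|h\|_\rho$ and $\|g\|_\infty \le C_\rho\|g\|_\rho$ to obtain
\[
\|hg\|_\rho^2 \le |h(0)|^2|g(0)|^2 + 2\|g\|_\infty^2\int_0^\infty |h'(x)|^2 e^{\rho x}\,\ud x + 2\|h\|_\infty^2\int_0^\infty |g'(x)|^2 e^{\rho x}\,\ud x \le (1 + 4C_\rho^2)\,\|h\|_\rho^2\|g\|_\rho^2 .
\]
This shows $hg \in H_\rho$ and $\|m\| \le \sqrt{1 + 4C_\rho^2}$, completing the proof that $(H_\rho,m)$ is a commutative algebra.

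For the identity $m(H_\rho^0 \times H_\rho) = H_\rho^0$, the forward inclusion follows because for $h \in H_\rho^0$ and $g \in H_\rho$ the limit $g(\infty) = \lim_{x\to\infty} g(x)$ exists by Theorem \ref{thm-space-forward}, so $(hg)(\infty) = h(\infty)g(\infty) = 0$ and therefore $hg \in H_\rho^0$ (it lies in $H_\rho$ by the first part). The reverse inclusion is trivial: the constant function equal to $1$ belongs to $H_\rho$ with norm $1$, and any $f \in H_\rho^0$ can be written as $f = f \cdot 1 \in m(H_\rho^0 \times H_\rho)$. I do not anticipate any real difficulty here; the one point deserving a line of justification is that the pointwise product of two elements of $H_\rho$ is again locally absolutely continuous and obeys the Leibniz rule for its derivative, which is precisely where the $L^\infty$-embedding of Lemma \ref{lemma-H-bdd} enters, both to legitimize the product rule and to close the norm estimate above. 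Everything else is routine bookkeeping with the explicit norm.
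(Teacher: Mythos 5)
Your proof is correct. Note that the paper itself does not prove this lemma but imports it verbatim from \cite[Lemma 4.2]{Tappe-Wiener}; your direct verification—Leibniz rule for (locally) absolutely continuous functions, the $L^\infty$-embedding of Lemma \ref{lemma-H-bdd} to get $\|hg\|_\rho^2\leq(1+4C_\rho^2)\|h\|_\rho^2\|g\|_\rho^2$, the existence of $h(\infty)$ from Theorem \ref{thm-space-forward} for the inclusion $m(H_\rho^0\times H_\rho)\subseteq H_\rho^0$, and $f=f\cdot 1$ for the reverse inclusion—is exactly the standard argument underlying the cited result, so nothing is missing.
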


\begin{lemma}\cite[Theorem 4.1]{Tappe-Wiener}\label{lemma-emb-H-H}
It holds that $H_{\rho'} \subset H_{\rho}$ and the embedding operator
\[
\Id : (H_{\rho'},\| \cdot \|_{\rho'}) \to (H_{\rho},\| \cdot \|_{\rho})
\]
is a bounded linear operator with $\| \Id \| \leq 1$.
\end{lemma}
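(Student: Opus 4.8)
The plan is to prove Lemma \ref{lemma-emb-H-H}, i.e., the continuous embedding $H_{\rho'}\hookrightarrow H_{\rho}$ with operator norm at most $1$, directly from the definition of the norms $\|\cdot\|_{\rho}$ and $\|\cdot\|_{\rho'}$. Fix $h\in H_{\rho'}$. Since $0<\rho<\rho'$, for every $x\in\R_+$ we have $e^{\rho x}\leq e^{\rho' x}$, because the exponential is increasing and $\rho x\leq \rho' x$ for $x\geq 0$. This pointwise inequality between the two weight functions is the only real ingredient.

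**First I would** write out both norms explicitly. We have
\[
\|h\|_{\rho}^2 = |h(0)|^2 + \int_0^{\infty}|h'(x)|^2 e^{\rho x}\,\ud x,
\qquad
\|h\|_{\rho'}^2 = |h(0)|^2 + \int_0^{\infty}|h'(x)|^2 e^{\rho' x}\,\ud x.
\]
Then, using $e^{\rho x}\leq e^{\rho' x}$ and the monotonicity of the Lebesgue integral,
\[
\int_0^{\infty}|h'(x)|^2 e^{\rho x}\,\ud x \leq \int_0^{\infty}|h'(x)|^2 e^{\rho' x}\,\ud x < +\infty,
\]
so that adding $|h(0)|^2$ to both sides gives $\|h\|_{\rho}^2\leq\|h\|_{\rho'}^2$, hence $\|h\|_{\rho}\leq\|h\|_{\rho'}$. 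In particular $h\in H_{\rho}$, which shows $H_{\rho'}\subset H_{\rho}$; and since $h$ is already absolutely continuous (being an element of $H_{\rho'}$), it belongs to $H_{\rho}$ as claimed. The inequality $\|h\|_{\rho}\leq\|h\|_{\rho'}$ is precisely the statement that the (clearly linear) embedding operator $\Id:(H_{\rho'},\|\cdot\|_{\rho'})\to(H_{\rho},\|\cdot\|_{\rho})$ is bounded with $\|\Id\|\leq 1$.

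**There is essentially no obstacle here** — the result is a soft consequence of the weight ordering $e^{\rho x}\leq e^{\rho' x}$ on $\R_+$, and the whole argument is a two-line estimate. The only points worth a moment of care are: (i) noting that membership in $H_{\rho'}$ already carries the absolute-continuity requirement needed for membership in $H_{\rho}$, so no regularity is lost; and (ii) observing that the value $h(0)$ appearing in both norms is the same, so it contributes identically and does not affect the comparison. Since this is cited verbatim as \cite[Theorem 4.1]{Tappe-Wiener}, one could alternatively just invoke that reference, but the self-contained argument above is immediate.
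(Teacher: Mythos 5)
Your argument is correct: the pointwise weight comparison $e^{\rho x}\leq e^{\rho' x}$ for $x\geq 0$ immediately gives $\|h\|_{\rho}\leq\|h\|_{\rho'}$, hence the inclusion and the bound $\|\Id\|\leq 1$. The paper itself gives no proof, simply citing \cite[Theorem 4.1]{Tappe-Wiener}, and your self-contained estimate is the standard direct argument for that fact, so there is nothing to correct.
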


Let $H_{\rho}^1 := \{ h \in H_{\rho} : h(0) = 0 \}$, which is a closed subspace of $H_{\rho}$, because the point evaluation at zero is a continuous linear functional. Moreover, we define the integral operator $\cI$ by $\cI h := \int_0^{\cdot} h(\eta) \ud \eta$.

\begin{lemma}\cite[Lemma 4.3]{Tappe-Wiener}\label{lemma-int-operator}
It holds that $\cI \in L(H_{\rho'}^0,H_{\rho}^1)$ with $\| \cI \| \leq C_{\rho,\rho'}$, where
\begin{align}\label{C-rho-rho}
C_{\rho,\rho'} := \sqrt{ \frac{1}{\rho' (\rho' - \rho)} }.
\end{align}
\end{lemma}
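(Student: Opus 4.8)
The plan is to reduce the statement to a single weighted $L^2$ estimate on the derivative. First I would observe that, for $h \in H_{\rho'}^0$, the function $\cI h = \int_0^{\cdot} h(\eta)\,\ud\eta$ is absolutely continuous with $(\cI h)(0) = 0$ and $(\cI h)' = h$ a.e.; hence, by the very definition of the norm on $H_\rho$,
\[
\|\cI h\|_\rho^2 = |(\cI h)(0)|^2 + \int_0^\infty |(\cI h)'(x)|^2 e^{\rho x}\,\ud x = \int_0^\infty |h(x)|^2 e^{\rho x}\,\ud x .
\]
So the whole claim — linearity being trivial — comes down to bounding the right-hand side by $C_{\rho,\rho'}^2\,\|h\|_{\rho'}^2$, which in particular shows $\cI h \in H_\rho$ and therefore $\cI h \in H_\rho^1$.

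The key step is to exploit the condition $h(\infty) = 0$, valid because $h \in H_{\rho'}^0$ (the limit at infinity exists by Theorem \ref{thm-space-forward}). This yields the backward representation $h(x) = -\int_x^\infty h'(\eta)\,\ud\eta$ for every $x \geq 0$ (the integral converging absolutely after inserting $e^{\rho'\eta/2}e^{-\rho'\eta/2}$ and applying Cauchy--Schwarz, using $h \in H_{\rho'}$). Applying Cauchy--Schwarz once more gives
\[
|h(x)|^2 \leq \Bigl(\int_x^\infty |h'(\eta)|^2 e^{\rho'\eta}\,\ud\eta\Bigr)\Bigl(\int_x^\infty e^{-\rho'\eta}\,\ud\eta\Bigr) \leq \|h\|_{\rho'}^2\,\frac{e^{-\rho' x}}{\rho'},
\]
and integrating against the weight $e^{\rho x}$,
\[
\int_0^\infty |h(x)|^2 e^{\rho x}\,\ud x \leq \frac{\|h\|_{\rho'}^2}{\rho'}\int_0^\infty e^{-(\rho'-\rho)x}\,\ud x = \frac{\|h\|_{\rho'}^2}{\rho'(\rho'-\rho)} = C_{\rho,\rho'}^2\,\|h\|_{\rho'}^2 .
\]
Combined with the identity from the first paragraph, this gives $\|\cI h\|_\rho \leq C_{\rho,\rho'}\|h\|_{\rho'}$, completing the proof.

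I do not expect any real obstacle here; the only point requiring mild care is the justification of the backward representation $h(x) = -\int_x^\infty h'$, which follows from the fundamental theorem of calculus on $[x,R]$ together with $h(R) \to h(\infty) = 0$ and the absolute convergence of $\int_x^\infty h'$ noted above. Everything else is a one-line application of Cauchy--Schwarz and an elementary exponential integral.
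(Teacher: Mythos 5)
Your proof is correct. Note that the paper does not prove this lemma at all — it is imported from \cite[Lemma 4.3]{Tappe-Wiener} — but your argument (reducing $\| \cI h \|_{\rho}$ to the weighted $L^2$ norm of $h$ via $(\cI h)(0)=0$, $(\cI h)'=h$, and then estimating $|h(x)|^2 \leq \|h\|_{\rho'}^2 \, e^{-\rho' x}/\rho'$ from $h(\infty)=0$ and Cauchy--Schwarz before integrating against $e^{\rho x}$) is precisely the standard derivation that yields the constant $C_{\rho,\rho'}=\sqrt{1/(\rho'(\rho'-\rho))}$, so it matches the cited proof in substance and in the constant.
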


We then consider the mapping $\mathbf{S}$ given by $\mathbf{S} h := h \cdot \cI h$.

\begin{lemma}\cite[Corollary 5.1.2]{F:01}\label{lemma-mapping-S}
It holds that $\mathbf{S} \in \Lip^{\loc}(H_{\rho}^0,H_{\rho}^0)$ and there exists a constant $C > 0$ such that $\| \mathbf{S} h \|_{\rho} \leq C \| h \|_{\rho}^2$, for all $h \in H_{\rho}^0$.
\end{lemma}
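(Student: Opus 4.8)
The plan is to realize $\cS$ as the diagonal restriction of a continuous symmetric bilinear map and to deduce both assertions from a single bilinear estimate. Define $B : H_\rho^0 \times H_\rho^0 \to H_\rho^0$ by $B(h,k) := h \cdot \cI k$, where $(\cI k)(x) = \int_0^x k(\eta)\,\ud\eta$. Granting that $B$ is well defined and bounded, say $\|B(h,k)\|_\rho \le C_0 \|h\|_\rho \|k\|_\rho$, the lemma follows immediately: since $\cS h = B(h,h)$ we get $\|\cS h\|_\rho \le C_0 \|h\|_\rho^2$, and from the algebraic identity $\cS h_1 - \cS h_2 = B(h_1, h_1 - h_2) + B(h_1 - h_2, h_2)$ we obtain $\|\cS h_1 - \cS h_2\|_\rho \le C_0(\|h_1\|_\rho + \|h_2\|_\rho)\|h_1 - h_2\|_\rho$, so $\cS \in \Lip^{\loc}(H_\rho^0,H_\rho^0)$ with Lipschitz function $L_{\cS}(r) = 2 C_0 r$.

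The first step is a decay estimate that holds on $H_\rho^0$ but not on $H_\rho$: for $h \in H_\rho^0$ one has $h(x) = -\int_x^\infty h'(\eta)\,\ud\eta$, and Cauchy--Schwarz gives $|h(x)| \le \rho^{-1/2} e^{-\rho x/2}\|h\|_\rho$ for all $x \ge 0$. In particular $h \in L^1(\R_+)$ with $\|h\|_{L^1(\R_+)} \le 2\rho^{-3/2}\|h\|_\rho$, so $\cI k$ is bounded, with $\|\cI k\|_\infty \le \|k\|_{L^1(\R_+)}$, and the limit $(\cI k)(\infty) = \int_0^\infty k(\eta)\,\ud\eta$ exists and is finite. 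Note that $\cI k$ itself need not belong to $H_\rho$ (which is exactly why Lemma~\ref{lemma-int-operator} is stated from the smaller space $H_{\rho'}^0$); the idea is that multiplying by the decaying factor $h$ restores membership in $H_\rho^0$.

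The second step is the bilinear estimate. For $h,k \in H_\rho^0$ the function $B(h,k)$ is (locally) absolutely continuous with $B(h,k)(0) = h(0)(\cI k)(0) = 0$ and $B(h,k)(\infty) = h(\infty)(\cI k)(\infty) = 0$ by the previous step, so $B(h,k) \in H_\rho^0$ as soon as its norm is finite. Using $(B(h,k))' = h' \cdot \cI k + h \cdot k$ a.e.\ we get
\[
\|B(h,k)\|_\rho^2 \le 2\int_0^\infty |h'(x)|^2 |(\cI k)(x)|^2 e^{\rho x}\,\ud x + 2\int_0^\infty |h(x)|^2 |k(x)|^2 e^{\rho x}\,\ud x .
\]
The first integral is at most $2\|\cI k\|_\infty^2 \|h\|_\rho^2 \le 8\rho^{-3}\|h\|_\rho^2\|k\|_\rho^2$; for the second, applying the decay estimate to both factors gives $|h(x)|^2 |k(x)|^2 e^{\rho x} \le \rho^{-2} e^{-\rho x} \|h\|_\rho^2 \|k\|_\rho^2$, which integrates to $\rho^{-3}\|h\|_\rho^2\|k\|_\rho^2$. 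Hence $\|B(h,k)\|_\rho \le C_0 \|h\|_\rho \|k\|_\rho$ with $C_0 = \sqrt{10}\,\rho^{-3/2}$, and the reduction of the first paragraph finishes the proof.

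The only genuine obstacle is the one flagged above: the integral operator $\cI$ does not map $H_\rho^0$ into $H_\rho$, so neither the commutative-algebra structure of Lemma~\ref{lemma-H-algebra} nor Lemma~\ref{lemma-int-operator} can be applied directly. The remedy is to argue by hand with the exponential-decay bound characteristic of $H_\rho^0$, which must be exploited for \emph{both} factors in the term $\int_0^\infty |h|^2 |k|^2 e^{\rho x}\,\ud x$; everything else is a routine product-rule computation.
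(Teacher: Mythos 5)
Your proof is correct. Note that the paper does not prove this lemma at all: it is quoted directly from \cite[Corollary 5.1.2]{F:01}, so there is no internal argument to compare with, and your bilinear-map computation is essentially the standard proof behind that citation. The decisive points are all handled properly: the decay bound $|h(x)|\le \rho^{-1/2}e^{-\rho x/2}\|h\|_{\rho}$, which is valid on $H_{\rho}^0$ but not on $H_{\rho}$ and which you correctly exploit for \emph{both} factors in $\int_0^\infty |h|^2|k|^2 e^{\rho x}\,\ud x$; the finiteness of $(\cI k)(\infty)$, which gives $B(h,k)(\infty)=h(\infty)(\cI k)(\infty)=0$ and hence membership in $H_{\rho}^0$; and the decomposition $\cS h_1-\cS h_2=B(h_1,h_1-h_2)+B(h_1-h_2,h_2)$, which turns the single bound $\|B(h,k)\|_{\rho}\le C_0\|h\|_{\rho}\|k\|_{\rho}$ into both the quadratic growth estimate and the local Lipschitz property. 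Your observation that $\cI$ does not map $H_{\rho}^0$ into $H_{\rho}$ (the reason Lemma \ref{lemma-int-operator} needs the larger weight $\rho'$) is also accurate and correctly explains why one cannot simply combine Lemmas \ref{lemma-H-algebra} and \ref{lemma-int-operator} here.
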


\begin{lemma}\label{lemma-composition-in-H}
Let $\varphi : \R \to \R$ be a function of class $C^1$. Then, for every $h \in H_{\rho}$, we have $\varphi \circ h \in H_{\rho}$.
In particular, for every $h \in H_{\rho}$, we have $\exp(h) \in H_{\rho}$.
\end{lemma}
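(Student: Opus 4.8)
The plan is to exploit the fact that elements of $H_{\rho}$ are bounded functions, so that the $C^1$ map $\varphi$ can be replaced by a Lipschitz map on the relevant range. First I would fix $h \in H_{\rho}$. By Lemma \ref{lemma-H-bdd} we have $h \in L^{\infty}(\R_+)$ with $\| h \|_{\infty} \leq C_{\rho} \| h \|_{\rho} =: M$, so $h$ takes values in the compact interval $[-M,M]$. Since $\varphi \in C^1(\R)$, the derivative $\varphi'$ is continuous and hence bounded on $[-M,M]$; set $L := \sup_{|y| \leq M} |\varphi'(y)| < +\infty$. In particular, the restriction $\varphi|_{[-M,M]}$ is $L$-Lipschitz.

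Next I would check that $\varphi \circ h$ is absolutely continuous on every compact interval $[0,N]$. This holds because $h$ is absolutely continuous and $\varphi|_{[-M,M]}$ is Lipschitz, and the composition of a Lipschitz function with an absolutely continuous function is absolutely continuous. Moreover, by the chain rule (valid for a.e. $x$, since $\varphi$ is differentiable and $h$ is absolutely continuous), we have $(\varphi \circ h)'(x) = \varphi'(h(x))\, h'(x)$ for almost every $x \in \R_+$.

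It then remains to estimate the norm. We have
\[
\int_0^{\infty} |(\varphi \circ h)'(x)|^2 e^{\rho x} \ud x
= \int_0^{\infty} |\varphi'(h(x))|^2 \, |h'(x)|^2 e^{\rho x} \ud x
\leq L^2 \int_0^{\infty} |h'(x)|^2 e^{\rho x} \ud x
\leq L^2 \| h \|_{\rho}^2 < +\infty,
\]
and $|(\varphi \circ h)(0)| = |\varphi(h(0))| < +\infty$. Hence $\varphi \circ h \in H_{\rho}$, with $\| \varphi \circ h \|_{\rho}^2 \leq |\varphi(h(0))|^2 + L^2 \| h \|_{\rho}^2$. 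The ``in particular'' statement follows by taking $\varphi = \exp \in C^1(\R)$.

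I expect the only delicate point to be the absolute continuity of the composition $\varphi \circ h$: absolute continuity is not preserved under composition of two arbitrary absolutely continuous functions, and one has to invoke that the outer map is Lipschitz on the bounded range of $h$ — which is exactly what the embedding $H_{\rho} \hookrightarrow L^{\infty}(\R_+)$ of Lemma \ref{lemma-H-bdd} guarantees. The remaining steps are a routine application of the chain rule and the definition of the norm on $H_{\rho}$.
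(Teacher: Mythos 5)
Your proof is correct and takes essentially the same route as the paper: use the embedding $H_{\rho}\hookrightarrow L^{\infty}(\R_+)$ from Lemma \ref{lemma-H-bdd} so that $\varphi$ is Lipschitz on the (bounded) range of $h$, deduce absolute continuity of $\varphi\circ h$ and the a.e.\ chain rule, and then bound $\|\varphi\circ h\|_{\rho}^2$ by $|\varphi(h(0))|^2+L^2\|h\|_{\rho}^2$. Your remark on why the Lipschitz property on the bounded range is the key to absolute continuity of the composition is a correct and slightly more explicit version of the step the paper states in one line.
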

\begin{proof}
By Lemma \ref{lemma-H-bdd} the function $h$ is bounded. Furthermore, the function $\varphi$ is locally Lipschitz, and hence $\varphi \circ h$ is absolutely continuous. Since $h$ is bounded, there exists a constant $M \geq 0$ such that $|\varphi'(h(x))| \leq M$, for all $x \in \R_+$. Therefore, we obtain
\[
\| \varphi \circ h \|_{\rho}^2 = |\varphi(h(0))|^2 + \int_{\R_+} | \varphi'(h(x)) h'(x) |^2 e^{\rho x} \ud x
\leq |\varphi(h(0))|^2 + M^2 \| h \|_{\rho}^2 < +\infty, 
\]
thus proving that $\varphi \circ h \in H_{\rho}$.
\end{proof}

\begin{lemma}\label{lemma-exp}
Let $\lambda > 0$ and define the function $e_{\lambda} : H_{\rho}^1 \to H_{\rho}$ by
\[
e_{\lambda}(h) := 1 - \lambda \exp(h), 
\quad\text{ for all } h \in H_{\rho}^1.
\]
Then, the following hold:
\begin{enumerate}
\item $e_{\lambda} \in \Lip^{\loc}(H_{\rho}^1, H_{\rho})$;
\item there exists a constant $K > 0$, not depending on $\lambda$, such that Lipschitz and boundedness functions of $e_{\lambda}$ are given by, for all $r\in\R_+$,
\begin{align*}
L_{e_{\lambda}}(r) &= K \lambda (1+r) \exp(C_{\rho}r), 
\\ B_{e_{\lambda}}(r) &= 1 + \lambda + \lambda r \exp(C_{\rho} r).
\end{align*}
\end{enumerate}
\end{lemma}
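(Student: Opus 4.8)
The plan is to reduce both assertions to estimates on the auxiliary map $h \mapsto \exp(h)$ from $H_\rho^1$ into $H_\rho$, which is well-defined by Lemma \ref{lemma-composition-in-H}, and then to use the decomposition $e_\lambda(h) = \mathbf{1} - \lambda\exp(h)$, where $\mathbf{1}$ denotes the constant function equal to $1$ (an element of $H_\rho$ with $\|\mathbf{1}\|_\rho = 1$). The key elementary fact throughout is that, since $h(0)=0$ for $h \in H_\rho^1$, one has $\exp(h)(0)=1$, so that by the definition of the norm
\[
\|\exp(h)\|_\rho^2 = 1 + \int_0^\infty |\exp(h(x))\, h'(x)|^2 e^{\rho x}\,\ud x .
\]
First I would handle the boundedness function: for $\|h\|_\rho \le r$, Lemma \ref{lemma-H-bdd} gives $\|h\|_\infty \le C_\rho r$, hence $|\exp(h(x))| \le \exp(C_\rho r)$ uniformly in $x$; substituting into the display above yields $\|\exp(h)\|_\rho^2 \le 1 + \exp(2C_\rho r)\|h\|_\rho^2 \le (1 + r\exp(C_\rho r))^2$, and therefore $\|e_\lambda(h)\|_\rho \le \|\mathbf{1}\|_\rho + \lambda\|\exp(h)\|_\rho \le 1 + \lambda + \lambda r \exp(C_\rho r)$, which is exactly the claimed $B_{e_\lambda}$.

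For the local Lipschitz estimate, given $h,g \in H_\rho^1$ with $\|h\|_\rho \vee \|g\|_\rho \le r$, I note $(\exp(h)-\exp(g))(0)=0$, so the point-evaluation term drops out and
\[
\|\exp(h)-\exp(g)\|_\rho^2 = \int_0^\infty |\exp(h(x))h'(x) - \exp(g(x))g'(x)|^2 e^{\rho x}\,\ud x ,
\]
and I would split the integrand as $\exp(h(x))\bigl(h'(x)-g'(x)\bigr) + \bigl(\exp(h(x))-\exp(g(x))\bigr)g'(x)$. Using again $\|h\|_\infty, \|g\|_\infty \le C_\rho r$, the bound $|\exp(h(x))| \le \exp(C_\rho r)$, the fact that $\exp$ is Lipschitz with constant $\exp(C_\rho r)$ on $[-C_\rho r,\,C_\rho r]$, and Lemma \ref{lemma-H-bdd} to pass from $\|h-g\|_\infty$ to $C_\rho\|h-g\|_\rho$, the two summands are pointwise bounded by $\exp(C_\rho r)|h'(x)-g'(x)|$ and $\exp(C_\rho r)C_\rho\|h-g\|_\rho\,|g'(x)|$. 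Applying Minkowski's inequality in $L^2(e^{\rho x}\ud x)$ together with the trivial bounds $\bigl(\int |h'-g'|^2 e^{\rho x}\ud x\bigr)^{1/2}\le \|h-g\|_\rho$ and $\bigl(\int |g'|^2 e^{\rho x}\ud x\bigr)^{1/2}\le \|g\|_\rho \le r$ gives $\|\exp(h)-\exp(g)\|_\rho \le \exp(C_\rho r)(1+C_\rho r)\|h-g\|_\rho$, hence $\|e_\lambda(h)-e_\lambda(g)\|_\rho \le \lambda\exp(C_\rho r)(1+C_\rho r)\|h-g\|_\rho$. Since $C_\rho \ge 1$, we have $1+C_\rho r \le C_\rho(1+r)$, so the choice $K := C_\rho$, which depends only on $\rho$ and not on $\lambda$, delivers the stated Lipschitz function $L_{e_\lambda}(r) = K\lambda(1+r)\exp(C_\rho r)$ and in particular shows $e_\lambda \in \Lip^{\loc}(H_\rho^1, H_\rho)$.

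The computation is essentially routine; the only point requiring care is to organize the Lipschitz bound so that the exponential blow-up in $r$ is absorbed exactly into the factor $\exp(C_\rho r)$ while the polynomial part stays linear, which is achieved by systematically replacing the sup-norm with $C_\rho$ times the $H_\rho$-norm and by exploiting that $\exp(h)-\exp(g)$ vanishes at the origin. One could alternatively try to assemble the result from the composition and product lemmas of Appendix \ref{app-functions}, but since $\exp$ is not globally Lipschitz the direct estimate sketched here is more transparent.
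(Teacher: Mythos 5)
Your proposal is correct and follows essentially the same route as the paper's proof: a direct estimate of $\|\exp(h)-\exp(g)\|_\rho$ via the sup-norm embedding of Lemma \ref{lemma-H-bdd} and the local Lipschitz property of the exponential, plus the same computation for the boundedness function. The only cosmetic differences are that you split the integrand as $\exp(h)(h'-g')+(\exp(h)-\exp(g))g'$ and use Minkowski's inequality (yielding the explicit constant $K=C_\rho$ without the factor $\sqrt{2}$), whereas the paper uses the symmetric split and the inequality $(a+b)^2\leq 2(a^2+b^2)$.
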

\begin{proof}
By Lemma \ref{lemma-composition-in-H}, the mapping $e_{\lambda} : H_{\rho}^1 \to H_{\rho}$ is well-defined. Let $r \in \R_+$ and $h,g \in H_{\rho}^1$ with $\| h \|_{\rho}\vee \| g \|_{\rho} \leq r$. Using Lemma \ref{lemma-H-bdd} we obtain
\begin{align*}
\| \exp(h) - \exp(g) \|_{\rho}^2 &= \int_0^{\infty} | h'(x) \exp(h(x)) - g'(x) \exp(g(x)) |^2 e^{\rho x} \ud x
\\ &\leq 2 \int_0^{\infty} | h'(x) ( \exp(h(x)) - \exp(g(x)) ) |^2 e^{\rho x} \ud x
\\ &\quad + 2 \int_0^{\infty} | ( h'(x) - g'(x) ) \exp(g(x)) |^2 e^{\rho x} \ud x
\\ &\leq 2 \int_0^{\infty} | \exp(C_{\rho} r) (h(x) - g(x)) |^2 |h'(x)|^2 e^{\rho x} \ud x
\\ &\quad + 2 \int_0^{\infty} | ( h'(x) - g'(x) ) \exp(C_{\rho} r) |^2 e^{\rho x} \ud x
\\ &\leq 2 \exp(C_{\rho} r)^2 C_{\rho}^2 r^2 \| h-g \|_{\rho}^2 + 2 \exp(C_{\rho} r)^2 \| h-g \|_{\rho}^2,
\end{align*}
and, therefore,
\[
\| \exp(h) - \exp(g) \|_{\rho} \leq \sqrt{2} (C_{\rho} r + 1) \exp(C_{\rho} r) \| h-g \|_{\rho}.
\]
Let $h \in H_{\rho}^1$ with $\| h \|_{\rho} \leq r$, for some $r\in\R_+$. By Lemma \ref{lemma-H-bdd} we have that
\begin{align*}
\| \exp(h) \|_{\rho}^2 &= 1 + \int_0^{\infty} | \exp(h(x)) h'(x) |^2 e^{\rho x} \ud x
\\ &\leq 1 + \exp(C_{\rho} r)^2 \int_0^{\infty} | h'(x) |^2 e^{\rho x} \ud x \leq 1 + r^2 \exp(C_{\rho} r)^2,
\end{align*}
and, hence,
\[
\| \exp(h) \|_{\rho} \leq 1 + r \exp(C_{\rho} r).
\]
Therefore, it follows that
\[
\| 1 - \lambda \exp(h) \|_{\rho} \leq 1 + \lambda(1 + r \exp(C_{\rho} r)) = 1 + \lambda + \lambda r \exp(C_{\rho} r).
\]
\end{proof}

For what follows, let $X$ be a normed space.

\begin{proposition}\label{prop-beta}
Let $\beta \in \Lip^{\loc}(X, H_{\rho}^0)$ be such that, for some constant $K > 0$, we have
\[
\| \beta(h) \|_{\rho} \leq K \sqrt{1 + \| h \|_X}, \quad \text{ for all }h \in X.
\]
Then, the product $\alpha := \mathbf{S} \circ \beta = \beta \cdot \cI \beta$ satisfies $\alpha \in \Lip^{\loc}(X, H_{\rho}^0) \cap \LG(X, H_{\rho}^0)$.
\end{proposition}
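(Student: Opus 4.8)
The plan is to read off both assertions from the two structural facts already at our disposal, namely the quadratic estimate for the map $\cS$ (Lemma \ref{lemma-mapping-S}) and the stability of locally Lipschitz maps under composition (Lemma \ref{lemma-Lip-loc-comp}). Write $\alpha = \cS \circ \beta$, where $\cS : H_{\rho}^0 \to H_{\rho}^0$ is given by $\cS h = h \cdot \cI h$, and note that $\cS \circ \beta$ indeed equals the pointwise product $\beta \cdot \cI \beta$. The local Lipschitz property of $\alpha$ will come from composing two locally Lipschitz maps, while the linear growth will come from combining the bound $\| \cS h \|_{\rho} \leq C \| h \|_{\rho}^2$ with the square-root growth of $\beta$, the exponents $2$ and $1/2$ multiplying to $1$.

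First I would observe that $\beta \in \Lip^{\loc}(X, H_{\rho}^0)$ by hypothesis and that $\cS \in \Lip^{\loc}(H_{\rho}^0, H_{\rho}^0)$ by Lemma \ref{lemma-mapping-S}; in particular $\beta$ is locally bounded by Lemma \ref{lemma-loc-Lip-loc-bdd}. An application of Lemma \ref{lemma-Lip-loc-comp} then yields $\alpha = \cS \circ \beta \in \Lip^{\loc}(X, H_{\rho}^0)$, with $r \mapsto L_{\beta}(r)\, L_{\cS}(B_{\beta}(r))$ serving as a Lipschitz function of $\alpha$, where $L_{\beta}, B_{\beta}$ denote a Lipschitz and a boundedness function of $\beta$ and $L_{\cS}$ a Lipschitz function of $\cS$. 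This settles membership in $\Lip^{\loc}(X, H_{\rho}^0)$.

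For the linear growth, I would fix $h \in X$ and simply estimate, using Lemma \ref{lemma-mapping-S} and the growth hypothesis on $\beta$,
\[
\| \alpha(h) \|_{\rho} = \| \cS(\beta(h)) \|_{\rho} \leq C \, \| \beta(h) \|_{\rho}^2 \leq C K^2 \bigl( 1 + \| h \|_X \bigr),
\]
which shows $\alpha \in \LG(X, H_{\rho}^0)$ with linear growth constant $C K^2$. Combined with the previous step, this gives $\alpha \in \Lip^{\loc}(X, H_{\rho}^0) \cap \LG(X, H_{\rho}^0)$.

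There is no genuine obstacle in this argument: the only point deserving attention is the compatibility of the codomains, i.e.\ that $\cS$ maps $H_{\rho}^0$ into itself, so that $\cI\beta(h)$ and the product $\beta(h) \cdot \cI\beta(h)$ are well-defined elements of $H_{\rho}^0$ — precisely what Lemma \ref{lemma-mapping-S} guarantees. Once this is in place, both claims follow at once from the cited lemmas and the fact that a quadratic bound composed with square-root growth produces exactly linear growth.
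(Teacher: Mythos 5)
Your proof is correct and follows essentially the same route as the paper, which deduces the result directly from Lemma \ref{lemma-Lip-loc-comp} and Lemma \ref{lemma-mapping-S}; you have simply made explicit the composition step for the locally Lipschitz part and the observation that the quadratic bound $\|\cS h\|_{\rho}\leq C\|h\|_{\rho}^2$ combined with the square-root growth of $\beta$ yields linear growth.
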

\begin{proof}
The result follows as a direct consequence of Lemma \ref{lemma-Lip-loc-comp} and Lemma \ref{lemma-mapping-S}.
\end{proof}

Let us now introduce the constant
\[
K_{\rho,\rho'} := C_{\rho} C_{\rho,\rho'} = \bigg( 1 + \frac{1}{\sqrt{\rho}} \bigg) \sqrt{\frac{1}{\rho'(\rho' - \rho)}},
\]
where we recall that $C_{\rho}$ and $C_{\rho,\rho'}$ are given by \eqref{C-rho} and \eqref{C-rho-rho}, respectively. For a constant $K > 0$, we introduce the strictly increasing function
\[
V_K : \R_+ \to \R_+, \quad V_K(r) := r (1+r) \exp(Kr),
\]
and we denote by $W_K : \R_+ \to \R_+$ its inverse. Note that $V_K(r) \geq r$ for all $r \in \R_+$, which implies $W_K(r) \leq r$ for all $r \in \R_+$.

\begin{proposition}\label{prop-gamma-1}
Let $\gamma \in \Lip^{\loc}(X, H_{\rho'}^0)$ and $\lambda > 0$. We define the product 
\[
\alpha_{\lambda} := \gamma  \bigl(1 - \lambda \exp(-\cI \gamma)\bigr).
\] 
Then, the following hold:
\begin{enumerate}
\item $\alpha_{\lambda} \in \Lip^{\loc}(X, H_{\rho}^0)$;
\item there exists a constant $K > 0$ such that Lipschitz and boundedness functions of $\alpha_{\lambda}$ are given by
\begin{align*}
L_{\alpha_{\lambda}}(r) &= K L_{\gamma}(r) \big( (1 + \lambda) + \lambda V_{K_{\rho,\rho'}}(B_{\gamma}(r)) \big),
\\ B_{\alpha_{\lambda}}(r) &= K \big( B_{\gamma}(r) + \lambda V_{K_{\rho,\rho'}}(B_{\gamma}(r)) \big),
\end{align*}
for all $r\in\R_+$,
where $L_{\gamma}$ and $B_{\gamma}$ are any Lipschitz and boundedness functions of $\gamma$.
\end{enumerate}
\end{proposition}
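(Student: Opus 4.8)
The plan is to write $\alpha_{\lambda}$ as a pointwise product of two locally Lipschitz maps and then apply the calculus of locally Lipschitz and locally bounded functions collected in Appendix~\ref{app-functions}, keeping track of the Lipschitz and boundedness functions at each step. Concretely, I would write $\alpha_{\lambda} = \gamma \cdot g$, where $g := 1 - \lambda \exp(-\cI \gamma) = e_{\lambda} \circ (-\cI \gamma)$ and $e_{\lambda}(h) = 1 - \lambda \exp(h)$ is the map of Lemma~\ref{lemma-exp}.

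First I would analyse the factor $g$. Viewing $\gamma$ as an element of $\Lip^{\loc}(X, H_{\rho}^0)$ via the embedding $H_{\rho'}^0 \hookrightarrow H_{\rho}^0$ of Lemma~\ref{lemma-emb-H-H} (which has operator norm $\le 1$ and hence does not change $L_{\gamma}, B_{\gamma}$), and recalling from Lemma~\ref{lemma-int-operator} that $\cI \in L(H_{\rho'}^0, H_{\rho}^1)$ with $\| \cI \| \le C_{\rho,\rho'}$, the map $-\cI \gamma$ lies in $\Lip^{\loc}(X, H_{\rho}^1)$ with Lipschitz function $C_{\rho,\rho'} L_{\gamma}$ and boundedness function $C_{\rho,\rho'} B_{\gamma}$. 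Composing with $e_{\lambda}$ and applying the composition rule of Lemma~\ref{lemma-Lip-loc-comp} together with the explicit bounds of Lemma~\ref{lemma-exp}, I get $g \in \Lip^{\loc}(X, H_{\rho})$ with
\[
L_g(r) \le C_{\rho,\rho'} K_0\, \lambda\, L_{\gamma}(r) \bigl( 1 + C_{\rho,\rho'} B_{\gamma}(r) \bigr) \exp\bigl( K_{\rho,\rho'} B_{\gamma}(r) \bigr), \qquad B_g(r) \le 1 + \lambda + \lambda\, C_{\rho,\rho'} B_{\gamma}(r) \exp\bigl( K_{\rho,\rho'} B_{\gamma}(r) \bigr),
\]
where $K_0$ is the universal constant of Lemma~\ref{lemma-exp}. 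The crucial bookkeeping point is that the exponent produced by Lemma~\ref{lemma-exp} is $C_{\rho} \cdot (C_{\rho,\rho'} B_{\gamma}(r)) = K_{\rho,\rho'} B_{\gamma}(r)$, exactly the exponent occurring in $V_{K_{\rho,\rho'}}$.

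Next I would handle the product. By Lemma~\ref{lemma-H-algebra}, $(H_{\rho}, m)$ is a commutative algebra with $m(H_{\rho}^0 \times H_{\rho}) = H_{\rho}^0$, so $\alpha_{\lambda} = \gamma \cdot g$ indeed takes values in $H_{\rho}^0$, and the product rule of Lemma~\ref{lemma-Lip-loc-algebra} gives $\alpha_{\lambda} \in \Lip^{\loc}(X, H_{\rho}^0)$ with $L_{\alpha_{\lambda}}(r) = \| m \|\bigl( L_{\gamma}(r) B_g(r) + L_g(r) B_{\gamma}(r) \bigr)$ and $B_{\alpha_{\lambda}}(r) = \| m \| B_{\gamma}(r) B_g(r)$. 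Substituting the bounds above and writing $b := B_{\gamma}(r)$, the elementary inequalities $b^2 \le b(1+b)$, $(1+b)\exp(K_{\rho,\rho'} b) \ge 1$ and $b \le b(1+b)\exp(K_{\rho,\rho'} b) = V_{K_{\rho,\rho'}}(b)$ reduce each resulting term to a constant multiple of $1$, of $b$, or of $V_{K_{\rho,\rho'}}(b)$ --- the extra factor $B_{\gamma}(r)$ coming from the product rule being precisely what absorbs the stray $(1 + C_{\rho,\rho'} b)$ in $L_g$. This yields constants $K', K'' > 0$ depending only on $\rho, \rho'$ (through $\| m \|$, $C_{\rho}$, $C_{\rho,\rho'}$, $K_0$), hence not on $\lambda$, such that $L_{\alpha_{\lambda}}(r) \le K' L_{\gamma}(r) \bigl( (1+\lambda) + \lambda V_{K_{\rho,\rho'}}(B_{\gamma}(r)) \bigr)$ and $B_{\alpha_{\lambda}}(r) \le K'' \bigl( B_{\gamma}(r) + \lambda V_{K_{\rho,\rho'}}(B_{\gamma}(r)) \bigr)$; since Lipschitz and boundedness functions can always be enlarged, setting $K := \max\{K', K''\}$ gives the asserted formulas.

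No step here is deep; the main obstacle is purely the constant-chasing, and specifically verifying that the factor $B_{\gamma}(r)$ introduced by the product rule, combined with the exponential of argument $C_{\rho} C_{\rho,\rho'} B_{\gamma}(r)$ arising from composing $e_{\lambda}$ with $-\cI \gamma$, reassembles exactly into $V_{K_{\rho,\rho'}}(B_{\gamma}(r))$ --- which is the very reason the constant $K_{\rho,\rho'} = C_{\rho} C_{\rho,\rho'}$ and the function $V_K$ were introduced before the proposition.
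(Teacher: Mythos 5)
Your proposal is correct and follows essentially the same route as the paper: the decomposition $\alpha_{\lambda} = \gamma \cdot \bigl(e_{\lambda} \circ (-\cI\gamma)\bigr)$, the use of Lemmas \ref{lemma-int-operator}, \ref{lemma-exp}, \ref{lemma-Lip-loc-comp}, \ref{lemma-Lip-loc-algebra}, \ref{lemma-H-algebra} and \ref{lemma-emb-H-H}, and the final absorption of the product-rule terms into $V_{K_{\rho,\rho'}}(B_{\gamma}(r))$ via $K_{\rho,\rho'} = C_{\rho}C_{\rho,\rho'}$ are exactly the paper's argument, with the same constant bookkeeping.
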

\begin{proof}
By Lemma \ref{lemma-int-operator} we have that $\cI \in \Lip(H_{\rho'}^0,H_{\rho}^1)$, a Lipschitz constant is given by $L_{\cI} = C_{\rho,\rho'}$ and a boundedness function is given by $B_{\cI}(r) = C_{\rho,\rho'} r$, for all $r \in \R_+$. Let us define $\Gamma := -\cI \circ \gamma$. By Lemma \ref{lemma-Lip-loc-comp}, we have that $\Gamma \in \Lip^{\loc}(X,H_{\rho}^1)$ and Lipschitz and boundedness functions are given by
\begin{align*}
L_{\Gamma}(r) &= L_{\gamma}(r) L_{\cI} = C_{\rho,\rho'} L_{\gamma}(r),
\\ B_{\Gamma}(r) &= B_{\cI}(B_{\gamma}(r)) = C_{\rho,\rho'} B_{\gamma}(r),
\end{align*}
for all $r\in\R_+$. Let us then define $e_{\lambda} : H_{\rho}^1 \to H_{\rho}$ by
\[
e_{\lambda}(h) := 1 - \lambda \exp(h), 
\qquad\text{ for all } h \in H_{\rho}^1.
\]
By Lemma \ref{lemma-exp}, it holds that $e_{\lambda} \in \Lip^{\loc}(H_{\rho}^1, H_{\rho})$ and there exists a constant $K_1 > 0$, not depending on $\lambda$, such that Lipschitz and boundedness functions of $e_{\lambda}$ are given by
\begin{align*}
L_{e_{\lambda}}(r) &= K_1 \lambda (1+r) \exp(C_{\rho}r), 
\\ B_{e_{\lambda}}(r) &= 1 + \lambda + \lambda r \exp(C_{\rho} r).
\end{align*}
By Lemma \ref{lemma-Lip-loc-comp}, we have $e_{\lambda} \circ \Gamma \in \Lip^{\loc}(X,H_{\rho})$ and Lipschitz and boundedness functions are given by
\begin{align*}
L_{e_{\lambda} \circ \Gamma}(r) &= L_{\Gamma}(r) L_{e_{\lambda}}(B_{\Gamma}(r)) = K_1 C_{\rho,\rho'} L_{\gamma}(r) \lambda (1+B_{\Gamma}(r)) \exp(C_{\rho} B_{\Gamma}(r))
\\ &= K_1 C_{\rho,\rho'} L_{\gamma}(r) \lambda (1+C_{\rho,\rho'} B_{\gamma}(r)) \exp(K_{\rho,\rho'} B_{\gamma}(r)),
\\ B_{e_{\lambda} \circ \Gamma}(r) &= B_{e_{\lambda}}(B_{\Gamma}(r)) = 1 + \lambda + \lambda B_{\Gamma}(r) \exp(C_{\rho} B_{\Gamma}(r))
\\ &= 1 + \lambda + \lambda C_{\rho,\rho'} B_{\gamma}(r) \exp(K_{\rho,\rho'} B_{\gamma}(r)),
\end{align*}
for all $r\in\R_+$.
Therefore, there exists a constant $K_2 > 0$, only depending on $\rho$ and $\rho'$, such that Lipschitz and boundedness functions of $e_{\lambda} \circ \Gamma$ are given by
\begin{align*}
L_{e_{\lambda} \circ \Gamma}(r) &= K_2 \lambda L_{\gamma}(r) (1+B_{\gamma}(r)) \exp(K_{\rho,\rho'} B_{\gamma}(r)),
\\ B_{e_{\lambda} \circ \Gamma}(r) &= K_2 \big( 1 + \lambda + \lambda B_{\gamma}(r) \exp(K_{\rho,\rho'} B_{\gamma}(r)) \big).
\end{align*}
Combining Lemma \ref{lemma-Lip-loc-algebra} and Lemmata \ref{lemma-H-algebra}, \ref{lemma-emb-H-H}, it follows that $\alpha_{\lambda} = \gamma \cdot (e_{\lambda} \circ \Gamma) \in \Lip^{\loc}(X, H_{\rho}^0)$ and that Lipschitz and boundedness functions are given by, for all $r\in\R_+$,
\begin{align*}
L_{\alpha_{\lambda}}(r) &= \| m \| ( L_{\gamma}(r) B_{e_{\lambda} \circ \Gamma}(r) + B_{\gamma}(r) L_{e_{\lambda} \circ \Gamma}(r) )
\\ &= \| m \| \big( L_{\gamma}(r) K_2 ( 1 + \lambda + \lambda B_{\gamma}(r) \exp(K_{\rho,\rho'} B_{\gamma}(r)) )
\\ &\quad + B_{\gamma}(r) K_2 \lambda L_{\gamma}(r) (1+B_{\gamma}(r)) \exp(K_{\rho,\rho'} B_{\gamma}(r)) \big)
\\ &= \| m \| L_{\gamma}(r) K_2 \big( (1 + \lambda) + ( \lambda B_{\gamma}(r) + \lambda B_{\gamma}(r) ( 1 + B_{\gamma}(r)) ) \exp(K_{\rho,\rho'} B_{\gamma}(r)) \big)
\\ &\leq \| m \| L_{\gamma}(r) K_2 \big( (1 + \lambda) + 2 \lambda V_{K_{\rho,\rho'}}(B_{\gamma}(r)) \big),
\\ B_{\alpha_{\lambda}}(r) &= \| m \| B_{\gamma}(r) B_{e_{\lambda} \circ \Gamma}(r)
\\ &= \| m \| B_{\gamma}(r) K_2 \big( 1 + \lambda + \lambda B_{\gamma}(r) \exp(K_{\rho,\rho'} B_{\gamma}(r)) \big).
\end{align*}
\end{proof}

\begin{proposition}\label{prop-gamma}
Let $(E,\cE,F)$ be a measure space and $(Z,\cZ)$ a measurable space. Let $\gamma : X \times E \to H_{\rho'}^0$ be a $\cB(X) \otimes \cE$-measurable function and $\lambda : Z \times E \to (0,\infty)$ a $\cZ \otimes \cE$-measurable function. Suppose that there exist a nonnegative function $\kappa \in L^1(F) \cap L^2(F) \cap L^3(F)$, an increasing function $M : \R_+ \to \R_+$, a constant $N \in \R_+$ and a function $\Lambda : Z \to \R_+$ such that the following conditions are satisfied:
\begin{enumerate}
\item $\gamma(\cdot,x) \in \Lip^{\loc}(X, H_{\rho'}^0)$, for every $x \in E$;
\item for every $x \in E$, there exists a Lipschitz function $L_{\gamma(\cdot,x)}$ of $\gamma(\cdot,x)$ such that
\begin{align}\label{L-gamma-kappa-M}
L_{\gamma(\cdot,x)}(r) \leq \kappa(x) M(r), 
\qquad \text{ for all }r \in \R_+;
\end{align}
\item for every $x \in E$, there exists a boundedness function $B_{\gamma(\cdot,x)}$ of $\gamma(\cdot,x)$ such that
\begin{align}\label{B-w-est}
B_{\gamma(\cdot,x)}(r) \leq W_{K_{\rho,\rho'}}(N \kappa(x)(1+r)),
\qquad \text{ for all }r \in \R_+;
\end{align}
\item it holds that
\begin{align}\label{lambda-prod}
|\lambda(z,x)| \leq \Lambda(z) \kappa(x), 
\qquad \text{ for all $z \in Z$ and $x \in E$.}
\end{align}
\end{enumerate}
Then, the following hold:
\begin{enumerate}
\item the Bochner integrals
\begin{align}\label{alpha-z-h-def}
\alpha(z,h) := \int_E \gamma(h,x) \cdot \big( 1 - \lambda(z,x) \exp(- \cI \gamma(h,x)) \big) F(\ud x), 
\qquad\text{ for } (z,h) \in Z \times X,
\end{align}
provide a well-defined $\cZ \otimes \cB(X)$-measurable function $\alpha : Z \times X \to H_{\rho}^0$;
\item there exists an increasing function $L_1 : \R_+ \to \R_+$ such that, for all $r \in \R_+$ and $h,g \in X$ with $\| h \|_X\vee \| g \|_X \leq r$, it holds that
\begin{align}\label{alpha-3-1}
\int_E \| \gamma(h,x) - \gamma(g,x) \|_{\rho}^2 F(\ud x) &\leq L_1(r) \| h-g \|_X^2,
\\ \label{alpha-3-2} \| \alpha(z,h) - \alpha(z,g) \|_{\rho} &\leq L_1(r) ( 1 + \Lambda(z) ) \| h-g \|_X, \qquad\text{ for all } z \in Z;
\end{align}
\item there exists a constant $L_2 \in \R_+$ such that, for all $h,g \in X$, it holds that
\begin{align}\label{alpha-3-1-b}
\int_E \| \gamma(h,x) \|_{\rho}^2 F(\ud x) &\leq L_2 (1 + \| h \|_X^2),
\\ \label{alpha-3-3} \| \alpha(z,h) \|_{\rho} &\leq L_2 ( 1 + \Lambda(z) ) (1 + \| h \|_X), 
\qquad\text{ for all } z \in Z.
\end{align}
\end{enumerate}
\end{proposition}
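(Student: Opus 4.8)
The plan is to reduce the statement to the pointwise‑in‑$x$ result of Proposition~\ref{prop-gamma-1} and then to integrate over $E$ by means of Lemma~\ref{lemma-Bochner-Lip}. First I would fix $z\in Z$ and $x\in E$. Since $\gamma(\cdot,x)\in\Lip^{\loc}(X,H_{\rho'}^0)$ and $\lambda(z,x)>0$, Proposition~\ref{prop-gamma-1} applies to the map $h\mapsto\gamma(h,x)\bigl(1-\lambda(z,x)\exp(-\cI\gamma(h,x))\bigr)$, which is therefore in $\Lip^{\loc}(X,H_{\rho}^0)$ and admits, for a universal constant $K$, Lipschitz and boundedness functions dominated respectively by $K\,L_{\gamma(\cdot,x)}(r)\bigl((1+\lambda(z,x))+\lambda(z,x)\,V_{K_{\rho,\rho'}}(B_{\gamma(\cdot,x)}(r))\bigr)$ and $K\bigl(B_{\gamma(\cdot,x)}(r)+\lambda(z,x)\,V_{K_{\rho,\rho'}}(B_{\gamma(\cdot,x)}(r))\bigr)$. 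Here one uses that $\gamma(h,x)\in H_{\rho'}^0\subset H_{\rho}^0$ (Lemma~\ref{lemma-emb-H-H}) and that $m(H_{\rho}^0\times H_{\rho})=H_{\rho}^0$ (Lemma~\ref{lemma-H-algebra}), so that the product indeed takes values in the closed subspace $H_{\rho}^0$.

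The key reduction is an elementary fact about $V_K$ and $w_K$: since $w_K=W_K\wedge\id$ and $V_K$ is strictly increasing with $V_K\circ W_K=\id$, we have $w_K(s)\le s$ and $V_{K_{\rho,\rho'}}\bigl(w_{K_{\rho,\rho'}}(s)\bigr)\le s$ for all $s\ge0$. Combining this with the bound (\ref{B-w-est}) on $B_{\gamma(\cdot,x)}$ gives, for all $x$ and $r$, both $B_{\gamma(\cdot,x)}(r)\le\kappa(x)(1+r)$ and $V_{K_{\rho,\rho'}}\bigl(B_{\gamma(\cdot,x)}(r)\bigr)\le\kappa(x)(1+r)$. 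Inserting these together with (\ref{L-gamma-kappa-M}) and (\ref{lambda-prod}) into the pointwise bounds of the previous paragraph, the $x$-dependent Lipschitz function is dominated by $K\kappa(x)M(r)\bigl(1+\Lambda(z)\kappa(x)+\Lambda(z)(1+r)\kappa(x)^2\bigr)$ and the $x$-dependent boundedness function by $K\kappa(x)(1+r)\bigl(1+\Lambda(z)\kappa(x)\bigr)$. As functions of $x$ these are dominated by linear combinations of $\kappa$, $\kappa^2$ and $\kappa^3$, and hence belong to $L^1(F)$ precisely because $\kappa\in L^1(F)\cap L^2(F)\cap L^3(F)$; this is the point at which all three moment assumptions enter.

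With these integrability properties established, Lemma~\ref{lemma-Bochner-Lip} (applied for each fixed $z$) shows that the Bochner integral (\ref{alpha-z-h-def}) is well-defined, that $\alpha(z,\cdot)\in\Lip^{\loc}(X,H_{\rho}^0)$, and that Lipschitz and boundedness functions of $\alpha(z,\cdot)$ are obtained by integrating over $E$ the pointwise ones. Carrying out the integration and factoring out the scalar $1+\Lambda(z)$ (using $1\le1+\Lambda(z)$ and $\Lambda(z)\le1+\Lambda(z)$), one obtains $L_{\alpha(z,\cdot)}(r)\le(1+\Lambda(z))L_1(r)$ and $B_{\alpha(z,\cdot)}(r)\le(1+\Lambda(z))L_2(1+r)$, where $L_1(r):=K M(r)\bigl(\|\kappa\|_{L^1(F)}+\|\kappa\|_{L^2(F)}^2+(1+r)\|\kappa\|_{L^3(F)}^3\bigr)$ is increasing and $L_2:=K\bigl(\|\kappa\|_{L^1(F)}+\|\kappa\|_{L^2(F)}^2\bigr)$; evaluating at $r=\|h\|_X$ gives (\ref{alpha-3-2}) and (\ref{alpha-3-3}). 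The remaining estimates are more direct: using $\|\gamma(h,x)-\gamma(g,x)\|_{\rho}\le\|\gamma(h,x)-\gamma(g,x)\|_{\rho'}\le\kappa(x)M(r)\|h-g\|_X$ and $\|\gamma(h,x)\|_{\rho}\le\kappa(x)(1+\|h\|_X)$, squaring and integrating against $F$ with $\kappa\in L^2(F)$ yields (\ref{alpha-3-1}) and (\ref{alpha-3-1-b}) (for the latter bounding $(1+\|h\|_X)^2\le2(1+\|h\|_X^2)$). Joint $\cZ\otimes\cB(X)$-measurability of $\alpha$ follows from the measurability of $\gamma$ and $\lambda$ together with the continuity of $\cI$, of $h'\mapsto\exp(h')$ and of the multiplication $m$, by the usual approximation argument (simple functions in $x$, continuity in $h$, measurability in $z$), and is preserved under Bochner integration over $x$.

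The step requiring the most care is the second paragraph: one has to exploit the interplay between $V_K$, its inverse $W_K$ and $w_K=W_K\wedge\id$ in order to linearize the exponential nonlinearity $1-\lambda\exp(-\cI\gamma)$ before integrating in $x$. Without the estimate $V_{K_{\rho,\rho'}}(B_{\gamma(\cdot,x)})\le\kappa(x)(1+r)$ the integrand would grow exponentially in $\kappa(x)$ and no finite moment of $\kappa$ would make the Bochner integral converge; everything else is careful bookkeeping of constants.
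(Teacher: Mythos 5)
Your proposal is correct and follows essentially the same route as the paper's proof: pointwise application of Proposition \ref{prop-gamma-1} to $\bar{\alpha}(z,h,x)=\gamma(h,x)\cdot(1-\lambda(z,x)\exp(-\cI\gamma(h,x)))$, the key linearization via $w_{K_{\rho,\rho'}}(s)\leq s$ and $V_{K_{\rho,\rho'}}(w_{K_{\rho,\rho'}}(s))\leq s$ combined with \eqref{L-gamma-kappa-M}, \eqref{B-w-est}, \eqref{lambda-prod}, integration over $E$ via Lemma \ref{lemma-Bochner-Lip} using $\kappa\in L^1(F)\cap L^2(F)\cap L^3(F)$, and the direct derivation of \eqref{alpha-3-1}, \eqref{alpha-3-1-b} from the embedding $H_{\rho'}\subset H_{\rho}$. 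The bookkeeping of constants and the measurability argument match the paper's, so nothing further is needed.
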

\begin{proof}
Let $r \in \R_+$. By Lemma \ref{lemma-emb-H-H} and \eqref{L-gamma-kappa-M}, for all $h,g \in X$ with $\| h \|_X\vee \| g \|_X \leq r$, it holds that
\begin{align*}
\int_E \| \gamma(h,x) - \gamma(g,x) \|_{\rho}^2 F(\ud x) &\leq \bigg( \int_E L_{\gamma(\cdot,x)}(r)^2 F(\ud x) \bigg) \| h - g \|_X^2
\\ &\leq M(r)^2 \bigg( \int_E \kappa(x)^2 F(\ud x) \bigg) \| h - g \|_X^2,
\end{align*}
thus showing \eqref{alpha-3-1}. Let $h \in X$ be arbitrary and set $r := \| h \|_X$. Using Lemma \ref{lemma-emb-H-H}, estimate \eqref{B-w-est} and the inequality $W_{K_{\rho,\rho'}}(r) \leq r$, we obtain
\begin{align*}
\int_E \| \gamma(h,x) \|_{\rho}^2 F(\ud x) &\leq \int_E B_{\gamma(\cdot,x)}(r)^2 F(\ud x) \leq \int_E W_{K_{\rho,\rho'}}(N \kappa(x) (1 + r))^2 F(\ud x)
\\ &\leq N^2 (1+r)^2 \int_E \kappa(x)^2 F(\ud x) \leq 2 N^2 \int_E \kappa(x)^2 F(\ud x) (1 + \| h \|_X^2),
\end{align*}
which proves \eqref{alpha-3-1-b}. In view of Proposition \ref{prop-gamma-1}, we can define the mapping $\bar{\alpha} : Z \times X \times E \to H_{\rho}^0$ by
\[
\bar{\alpha}(z,h,x) := \gamma(h,x) \cdot \big( 1 - \lambda(z,x) \exp(- \cI \gamma(h,x)) \big), \quad (z,h,x) \in Z \times X \times E.
\]
Note that $\bar{\alpha}$ is $\cZ \otimes \cB(X) \otimes \cE$-measurable, because $\gamma$ is $\cB(X) \otimes \cE$-measurable and $\lambda$ is $\cZ \otimes \cE$-measurable. Let $z \in Z$ and $x \in E$ be arbitrary. Taking into account \eqref{lambda-prod}, Proposition \ref{prop-gamma-1} implies that $\bar{\alpha}(z,\cdot,x) \in \Lip^{\loc}(X, H_{\rho'}^0)$ and there exists a constant $K > 0$ such that Lipschitz and boundedness functions of $\bar{\alpha}(z,\cdot,x)$ are given by
\begin{align*}
L_{\bar{\alpha}(z,\cdot,x)}(r) &= K L_{\gamma(\cdot,x)}(r) \big( 1 + \Lambda(z) \kappa(x) + \Lambda(z) \kappa(x) V_{K_{\rho,\rho'}} (B_{\gamma(\cdot,x)}(r)) \big),
\\ B_{\bar{\alpha}(z,\cdot,x)}(r) &= K \big( B_{\gamma(\cdot,x)}(r) + \Lambda(z) \kappa(x) V_{K_{\rho,\rho'}} (B_{\gamma(\cdot,x)}(r)) \big), 
\end{align*}
for all $r\in\R_+$.
By \eqref{L-gamma-kappa-M}, \eqref{B-w-est} and the relations $V_{K_{\rho,\rho'}}(W_{K_{\rho,\rho'}}(r)) = r$ and $W_{K_{\rho,\rho'}}(r) \leq r$, we obtain
\begin{align*}
L_{\bar{\alpha}(z,\cdot,x)}(r) &\leq K \kappa(x) M(r) \big( 1 + \Lambda(z) \kappa(x) + N \Lambda(z) \kappa(x)^2 (1+r) \big),
\\ B_{\bar{\alpha}(z,\cdot,x)}(r) &\leq K \big( N \kappa(x)(1+r) + N \Lambda(z) \kappa(x)^2 (1+r) \big) = K N \big( \kappa(x) + \Lambda(z) \kappa(x)^2 \big) (1+r),
\end{align*}
for all $r\in\R_+$.
The Bochner integrals \eqref{alpha-z-h-def} are given by
\[
\alpha(z,h) = \int_E \bar{\alpha}(z,h,x) F(\ud x), 
\quad \text{ for all }(z,h) \in Z \times X.
\]
Therefore, by Lemma \ref{lemma-Bochner-Lip} the Bochner integrals \eqref{alpha-z-h-def} provide a well-defined function $\alpha : Z \times X \to H_{\rho'}^0$ and $\alpha(z,\cdot) \in \Lip^{\loc}(X, H_{\rho'}^0)$, for every $z \in Z$. Furthermore, the mapping $\alpha$ is $\cZ \otimes \cB(X)$-measurable, because $\bar{\alpha}$ is $\cZ \otimes \cB(X) \otimes \cE$-measurable. Moreover, by Lemma \ref{lemma-Bochner-Lip}, for all $z \in Z$, $r \in \R_+$ and $h,g \in X$ with $\| h \|_X\vee \| g \|_X \leq r$, it holds that
\begin{align*}
&\| \alpha(z,h) - \alpha(z,g) \|_{\rho} \leq \bigg( \int_E L_{\bar{\alpha}(z,\cdot,x)}(r) F(\ud x) \bigg) \| h-g \|_X
\\ &\leq K M(r) \bigg( \int_E \kappa(x) F(\ud x) + \Lambda(z) \int_E \kappa(x)^2 F(\ud x) + N \Lambda(z) (1+r) \int_E \kappa(x)^3 F(\ud x) \bigg) \| h-g \|_X,
\end{align*}
thus proving \eqref{alpha-3-2}. Finally, let $z \in Z$ and $h \in X$ and set $r := \| h \|_X$. Then, \eqref{alpha-3-3} follows by noting that, as a consequence of Lemma \ref{lemma-Bochner-Lip}, we have
\begin{align*}
\| \alpha(z,h) \|_{\rho} &\leq \bigg( \int_E B_{\bar{\alpha}(z,\cdot,x)}(r) F(\ud x) \bigg)
\\ &\leq K N \bigg( \int_E \kappa(x) F(\ud x) + \Lambda(z) \int_E \kappa(x)^2 F(\ud x) \bigg) (1+\| h \|_X).
\end{align*}
\end{proof}

\end{appendix}

\bibliographystyle{abbrv}

\bibliography{biblio_multicurve_modifications}

\end{document}